\numberwithin{equation}{section}
\DeclareFontFamily{U}{BOONDOX-calo}{\skewchar\font=45 }
\DeclareFontShape{U}{BOONDOX-calo}{m}{n}{
  <-> s*[1.05] BOONDOX-r-calo}{}
\DeclareFontShape{U}{BOONDOX-calo}{b}{n}{
  <-> s*[1.05] BOONDOX-b-calo}{}
\DeclareMathAlphabet{\mathcalboondox}{U}{BOONDOX-calo}{m}{n}
\SetMathAlphabet{\mathcalboondox}{bold}{U}{BOONDOX-calo}{b}{n}
\DeclareMathAlphabet{\mathbcalboondox}{U}{BOONDOX-calo}{b}{n}
\newcommand{\mcb}[1]{{\mathcalboondox #1}}
\tikzset{
    place/.style={
        circle,
        thick,
        draw=black,
        fill=gray!50,
        minimum size=20mm,
    },
        state/.style={
        circle,
        thick,
        draw=blue!75,
        fill=blue!20,
        minimum size=20mm,
    },
}
\tikzset{
    cross/.pic = {
    \draw[rotate = 45] (-0.2,0) -- (0.2,0);
    \draw[rotate = 45] (0,-0.2) -- (0, 0.2);
    }
}
\newtheorem{thm}{Theorem}[section]
\newtheorem{lem}[thm]{Lemma}
\newtheorem{cor}[thm]{Corollary}
\newtheorem{prop}[thm]{Proposition}
\newtheorem{definition}[thm]{Definition}
\newtheorem{rem}[thm]{Remark}
\newcommand\ve{\varepsilon}
\newcommand{\bb}[1]{{\mathbb #1}}
\title[{Hydrodynamic behavior of long-range symmetric exclusion with a slow barrier}
]{Hydrodynamic behavior of  long-range symmetric\\ exclusion with a slow barrier: superdiffusive regime}
\author{Pedro Cardoso, Patr\'icia   Gon\c calves, Byron Jim\'enez-Oviedo}
\begin{document}
\subjclass[2010]{60K35, 35R11, 35S15}
\begin{abstract}
We analyse the hydrodynamical behavior of the long jumps   symmetric exclusion process
in the presence of a slow barrier. The jump rates are  given by a symmetric  transition probability $p(\cdot)$ with infinite variance. When jumps occur from $\mathbb{Z}_{-}^{*}$ to $\mathbb N$ the rates are slowed down by a factor $\alpha n^{-\beta}$ (with $\alpha>0$ and $\beta\geq 0$). We obtain  several partial differential equations given in terms of the regional fractional Laplacian on $\mathbb R^*$ and with different boundary conditions. Surprisingly, in opposition to the diffusive regime, we get different regimes depending on whether $\alpha=1$ (all bonds with the same rate) or $\alpha\neq 1$. 
\end{abstract}
\maketitle

\section{Introduction}

In the field of Statistical Mechanics, it is common to derive the macroscopic properties of some fluids from the microscopic interactions of its molecules. This procedure is usually done by using interacting particle systems (IPS), which were introduced in the mathematics community  in \cite{spitzer}, and modelling the particles' movement under these systems, the evolution of each particle is assumed to be stochastic. Typically, in many problems, there is a very large number of particles, placed on certain sites of a lattice, that evolve according to some stochastic rule and whose dynamics conserves one or more quantities; and the goal, then consists of studying the temporal evolution of the conserved quantities.  One of the most classical IPS is the exclusion process, where particles obey an \textit{exclusion rule} that allows at most one particle per site. A particularly interesting macroscopic characterization of the exclusion process  is its \textit{hydrodynamic limit}, where one derives one or several PDEs that describe the space/time evolution of a physical quantity (for instance, the density of particles). 

The hydrodynamic limit of the symmetric exclusion process has been studied in a variety of different settings. In \cite{kipnis1998scaling}, we see the case when particles perform nearest-neighbor jumps; in \cite{tertuaihp, franco2015phase}, the case in which particles perform nearest-neighbor jumps, but the  jump rate at the bond $\{-1,0\}$ is slowed down with respect to the jump rate in all the other bonds; and in \cite{jara2009hydrodynamic}, the case when particles perform long jumps according to the transition probability given in \eqref{eq:trans_prob} with $\gamma<2$, i.e. with infinite variance, was considered. Recently, in \cite{byrondif} and \cite{casodif}, the case of the exclusion process with long  jumps given by the transition probability described in \eqref{eq:trans_prob} with $\gamma>2$, so that it has finite variance, was studied. In all the previous models when the transition probability has finite variance, the hydrodynamic limit can be derived by speeding up the process in the diffusive time scale $tn^2$  and  the hydrodynamic limit is described by the heat equation, given in terms  of the usual Laplacian,  which is a local operator. However,  when the transition probability has an infinite variance, i.e. when $\gamma<2$, the hydrodynamic limit has a different behavior, as is seen in \cite{jara2009hydrodynamic}, \cite{byronsdif} and \cite{stefano}. In all the aforementioned  articles, the hydrodynamic limit was obtained by speeding up the process in the super-diffusive time scale $tn^{\gamma}$, and the hydrodynamic equation is given in terms of the fractional Laplacian or the regional fractional Laplacian but on a bounded domain.

In this article, we are mainly inspired by putting together the two works  \cite{jara2009hydrodynamic} and \cite{franco2015phase}. Our  model is also an exclusion process where particles move on a infinite lattice as in \cite{jara2009hydrodynamic}, but we introduce slow bonds connecting negative to positive sites, with a crossing rate given  by  $~\frac{\alpha}{2 n^{\beta}}p(\cdot)$, where $\alpha >0$ and $\beta \geq 0$, see Figure \ref{fig:dynamics}. The presence of these slow bonds will hinder the transport of mass between $(-\infty,0)$ and $[0, \infty)$ and this will have an impact  at the  macroscopic level as we will see below.

\begin{figure}[h]
    \centering
    \includegraphics{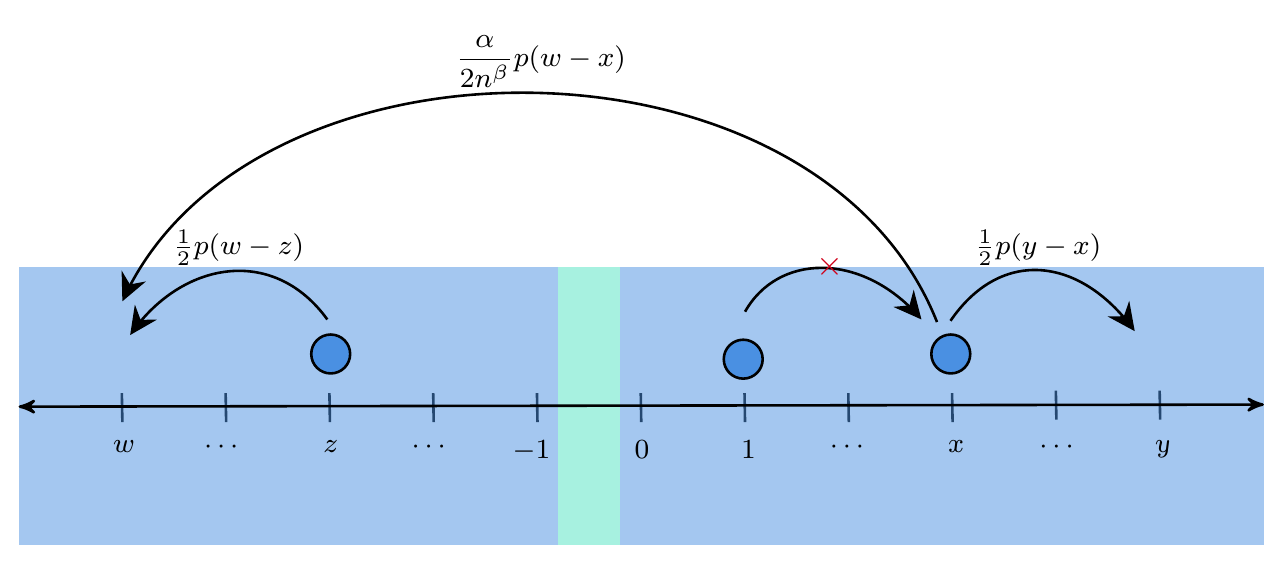}\caption{Dynamics of the long-jumps symmetric exclusion  with slow bonds.}\label{fig:dynamics}
\end{figure}

 In the companion article \cite{CGJ2},  we analysed the case of $p(\cdot)$ with finite variance, and by speeding up the process in the diffusive time scale, we derived the heat equation with different boundary conditions depending on the regime of $\beta$. More precisely, when $\beta>1$ we obtained Neumann boundary conditions; when $\beta=1$ we obtained Robin boundary conditions and when $\beta<1$ we do not see any boundary effect at the macroscopic level. These results are in accordance with what was obtained in \cite{tertuaihp,franco2015phase} for the nearest-neighbor case. 

In this article we analyse the case of $p(\cdot)$ with an infinite variance. Since the jump rate from negative sites to positive sites in the lattice is slowed down with respect to all the other rates, by speeding up the process in the super-diffusive time scale we get an  hydrodynamic equation
given in terms of  the (regional) fractional Laplacian, as described in \eqref{deflapfrac} and \eqref{deflapfracreg}, which is a nonlocal operator and is not as classical, in the PDE literature, as the usual Laplacian.

Our method of proof is entropy method of \cite{GPV}. It consists of showing tightness of the sequence of empirical measures associated to the density; and then characterizing uniquely the limiting point by showing that it is  a Dirac measure on the trajectory of measures absolutely continuous with respect to the Lebesgue measure, whose density is a weak solution to the hydrodynamic equation. If the uniqueness of the weak solution is proved, then the convergence of the whole sequence to its limiting point follows. 

We observe that since  the slowed jump rates are given by  $\frac{\alpha}{2 n^{\beta}}p(\cdot)$, we obtain  Neumann boundary conditions or Robin boundary conditions, depending on the value of $\beta$ and $\gamma$. More precisely, if $(\alpha, \beta)=(1,0)$ we trivially recover the setting of \cite{jara2009hydrodynamic}, where the fractional  heat equation   without boundary conditions was derived; and the same behavior is obtained if we have "few" slow bonds, since their macroscopic effect is negligible in the limit. This is a sort of a geometric condition which is stated more precisely in \eqref{defsigmas}. On the other hand, the majority of the difficulties in our work are related to  the more interesting case where all the bounds that connect $\mathbb{Z}_{-}^{*}$ to $\mathbb{N}$ are slow, creating thus a slow barrier from negative sites to positive sites in lattice.  In this setting, we obtain  hydrodynamic equations which have similarities to those obtained in  \cite{stefano}. The hydrodynamic equations are written in terms of the regional fractional Laplacian restricted to $\mathbb R^*$ and the boundary conditions are given in terms of fractional derivatives as stated in \cite{reflected} and \cite{Guan2006}. We highlight that the appearance of the regional fractional Laplacian is not only restricted to the slow case, but also when $\beta=0$ but $\alpha \neq 1$, where we find a superposition of the fractional Laplacian and of the regional fractional Laplacian, which is rather surprising. In this regime, the rates from negative to positive sites differ with respect to all other rates by only a constant, i.e. there is no difference with respect to the size of the system $n$. Even so, the impact of this change at the macroscopic level is to really change the nature of the PDE from the fractional heat equation (i.e. when $\beta=0$ and $\alpha=1$) to a PDE given by a mixture of the fractional Laplacian and the regional fractional Laplacian on $\mathbb R^*$. 

Let us now  discuss the slow case, i.e. $\beta >0$. If  $\gamma \in (0,1]$ or $\gamma \in (1,2)$ and $\beta > \gamma-1$, we get fractional Neumann boundary conditions; while if $\gamma \in (1,2)$ and $\beta = \gamma-1$, we get fractional Robin boundary conditions.  Finally, if $\gamma \in (1,2)$ and $\beta \in (0, \gamma-1)$, we have a  hydrodynamic equation also written in terms of the regional fractional Laplacian. In all the case, but the last, we were able to prove the uniqueness of our weak solutions, which, as we mentioned above, is a crucial ingredient to prove our results. In the last case,  unfortunately, we were not able to prove the uniqueness of weak solutions, despite our best efforts. This equation has similarities  to the one obtained in \cite{casodif} for $\beta \in (0,1)$, but since we do not have a fractional version of Proposition B.3 in \cite{casodif}, we were not able to apply Oleinik's trick to prove such uniqueness result.  Nevertheless, with our arguments, we not only prove the existence of weak solutions to that equation, but we also characterize them in a very reasonable way, by showing that despite the equation being given in terms of the regional fractional Laplacian on $\mathbb R^*$, the solution is continuous at the origin and has some other properties, as described in Remark \ref{rem:cons}.

Now we  discuss the main difficulties in the development of our proofs. In the microscopic setting, in the critical case $\gamma \in (1,2)$ and $\beta=\gamma-1$, a technical replacement lemma  is needed in order to obtain fractional Robin boundary conditions. In order to prove the required  two-blocks estimate, we make use of a moving particle lemma, reminiscent of the work developed  in \cite{stefano}, but the arguments involved are quite technical due to the presence of slow bonds. Regarding the macroscopic setting, our main difficulty is to deal with PDEs given in terms of  a regional fractional Laplacian defined on unbounded domains. This operator is not  discussed in the literature as the regional fractional Laplacian defined on bounded domains (as was obtained in \cite{stefano}). Therefore, we did not find results regarding the uniqueness of weak solutions of any of the  PDEs that we obtain, and we had to derive them by ourselves. Moreover, we did not find results ensuring that the regional fractional Laplacian defined on the semi-line  is well-defined on our set of test functions, nor that we can apply the integration by parts formula as stated in \cite{reflected} (which deals with the regional fractional Laplacian on bounded intervals). Again,  we had to derive these proofs to fit our  setting. Finally, since we cover all the range of the parameters, the passage from the microscopic to the macroscopic level uses several results which need very precise estimates that  had to be carefully done as, for example,  Proposition \ref{lemconvneum}.

Finally, the critical  case $\gamma=2$ (for which the transition probability has an infinite variance) is not considered in this article, because, when speeding up the process in the $\frac{n^2}{\log(n)}$ time scale,  at the macroscopic level, we  obtain the usual heat equation instead of a fractional diffusion equation. This is similar to what has been done in \cite{stefano_patricia}. We leave this for future work.

Here follows an outline of this article. In Section \ref{sec:model} we introduce our models, we define all the notions of weak solutions that we derive and we state our main result, the hydrodynamic limit. In Section \ref{sectight} we prove tightness of the sequence of empirical measures. In Section \ref{seccharac} we characterize the limiting point by showing that it is concentrated on a Dirac measure of a trajectory of measures which are absolutely continuous with respect to the Lebesgue measure, whose density is a weak solution of the corresponding hydrodynamic equation. In Section \ref{secenerest} we show some properties of the weak solutions, and Section \ref{secheurwithout} is devoted to the proof of several estimates which are needed from the microscopic system in order to  characterize the solutions. We complement the article with two appendices: Appendix \ref{secuseres} is concerned with all the convergences from the discrete system to the macroscopic quantities, and Appendix \ref{secuniq} is devoted to exploration of tools from analysis to deal with fractional Sobolev spaces and the fractional operators that we deal with, and also with the proof of uniqueness of our weak solutions. 

\section{The model and hydrodynamics}\label{sec:model_hydro}

\subsection{Long-range exclusion with slow bonds in $\mathbb{Z}$}
\label{sec:model}

Before describing our model we describe some sets that are used along the article as $\mathbb{Z}:=\{\ldots, -1, 0, 1, \ldots \}$, $\mathbb{Z}_{-}^{*}:=\{-1,-2,\ldots \}$, $\mathbb{N}:=\{0,1,2,\ldots,\}$,  $\mathbb{R}_{-}^{*}:= (- \infty, 0)$, $\mathbb{R}_{+}^{*}:= (0, \infty)$ , $\mathbb{R}_{+}:=[0, \infty)$ and $\mathbb{R}^{*} = \mathbb{R}_{-}^{*} \cup \mathbb{R}_{+}^{*}=\mathbb{R}\setminus\{0\}$.  Now we describe the dynamics of our  model. All the exchanges of particles  occur at  all bonds $\{x,y \in \mathbb{Z}: x \neq y\}$ and we make the identification of $\{x,y\}$ and $\{y, x\}$. We denote the set of all bonds by $\mcb B$ and the elements of the lattice are called \textit{sites} and are denoted by Latin letters such as $x,y,z$. 

The exclusion process has state space given by  $\Omega:=\{0,1\}^{\mathbb{Z}}$ and the elements of $\Omega$ are  configurations and we denote them  by Greek letters such as $\eta, \xi$. Given  a configuration $\eta$ and a site $x$, we denote the number of particles at $x$ by $\eta(x)$.
Given a bond $\{x,y\}$, a particle can only move between $x$ and $y$ if $\eta(x) \neq \eta(y)$ and in this case, $\eta(x)$ and $\eta(y)$ exchange their values and  produce a new configuration $\eta^{x,y} \in \Omega$ given by:   
\begin{equation*}
\eta^{x,y}(z) = \eta(y)\mathbbm{1}_{z=x}+\eta(x)\mathbbm{1}_{z=y}+ \eta(z)\mathbbm{1}_{z\neq x,y}.
\end{equation*}
The exchange between the sites  $x$ and $y$ occurs with probability $p(x-y)$, where $p: \mathbb{Z} \rightarrow \mathbb{R}$ is given by
\begin{align}\label{eq:trans_prob}
p(z) =\mathbbm{1}_{z \neq 0}c_{\gamma}|z|^{-\gamma-1}  
\end{align}
where $ \gamma \in (0,2)$ and $c_{\gamma}$ is a normalizing constant. Observe that the variance of $p(\cdot)$ is infinite since $\gamma<2$. When $\gamma\in(0,1)$ the first moment  of $p(\cdot)$ is infinite and for $\gamma  \in (1,2)$ we denote it by
$m: = \sum_{z \in \mathbb{N}} z p(z)  < \infty.
$
Now we  consider a set of slow bonds 
\begin{equation*}
    \mcb S \subset \mcb S_0:=\big\{\{x,y\} \in \mcb B: x <0, y \geq 0 \big\}
\end{equation*}
and  the complement of $\mcb S$ with respect to $\mcb B$ will be denoted by $\mcb F$, the set of  fast bonds. Let $n$ be a positive integer and we fix the parameters $\alpha >0$  and $\beta \geq 0$. Given a configuration $\eta$, the rate of the transition at the bond $\{x,y\}$  is denoted by $\xi_{x,y}^n(\eta)$ and it is given by 
\begin{align*}
\xi_{x,y}^n(\eta) =
\begin{cases}
\alpha n^{-\beta}  [\eta(x)(1-\eta(y)) +\eta(y)(1-\eta(x))], \{x,y\} \in \mcb S, \\
 [\eta(x)(1-\eta(y)) + \eta(y)(1-\eta(x))], \{x,y\} \in \mcb F.
 \end{cases}
\end{align*} 
Note that from the previous definition since  $\beta\geq 0$ all the bonds in $\mcb S$ are called \textit{slow bonds}. In the particular case $(\alpha, \beta)=(1,0)$ all the bonds produce the same rate and we say by convention that $\mcb S= \varnothing$.

A function $f: \Omega \rightarrow \mathbb{R}$ is  said to be {local} if there exists a finite set $\Lambda \subset \mathbb{Z}$ such that $f$ is determined by  the value of $ \eta(x)$ for $ x \in \Lambda$.  The infinitesimal generator $\mcb {L}_{n}$ of our process  is defined on local functions $f:\Omega\to\mathbb{R}$ by 
\begin{align*}
\mcb{L}_{n}f(\eta) :=& \frac{1}{2}  \sum_{\{x, y\} \in \mcb B} p(x-y) \xi^{n}_{x,y}(\eta)[f(\eta^{x,y})-f(\eta)].
\end{align*}
Let $a \in (0,1)$. The Bernoulli product measure  in $\Omega$ is denoted by  $\nu_{a}$ and has marginals given by $\nu_{a}\{\eta \in \Omega: \eta(x)=1\} = a, \forall x \in \mathbb{Z}$. Under this measure, the random variables $\{ \eta(x): x \in \mathbb{Z} \}$ are independent and identically distributed with Bernoulli distribution of parameter  $a$. A simple computation, based on the symmetry of $p(\cdot)$, shows that the measure $\nu_a$ is reversible with respect to $\mcb{L}_{n}$.

We will say that we have a \textit{thin barrier} blocking the movement between $\mathbb{Z}_{-}^{*}$ and $\mathbb{N}$ if
\begin{equation} \label{defsigmas}
 \exists \delta \in [0,1] \cap (\gamma-1, \infty): \sum_{ \{x,y  \} \in \mcb S} |y-x|^{\delta} p(y-x) < \infty.
\end{equation}
On the other hand, we will say that we have a \textit{ thick barrier} if $\mcb S=\mcb S_0$.
An example of a model with a thin barrier can be obtained by taking $\mcb S:=\{(x,0): x\in\mathbb Z\}$ and for $\gamma\in(1,2)$ we choose $\delta =1$, for $\gamma=1$ we choose $\delta=\frac{1}{2}$ and   when $\gamma\in (0,1)$ we choose $\delta =0$.

From here on, we fix $T>0$ and a finite time horizon $[0,T]$. Moreover, we denote $\eta_{t}^{n}(x):= \eta_{t n^{\gamma}}(x)$, so that $\eta_t^n$ has  infinitesimal generator $n^{\gamma} \mcb{L}_{n}$. Let  $\mcb D([0,T],\Omega)$ be the space of c\`adl\`ag trajectories (right-continuous and with left limits everywhere) $f:[0,T] \rightarrow \Omega$ endowed  with the Skorohod topology. 

Let $\mcb {M}^+(\mathbb{R})$ be the space of non-negative Radon measures on $\mathbb{R}$ equipped with the weak topology. For   $\eta \in \Omega$, the empirical measure associated to the density is denoted by  $\pi^{n}(\eta,du)$ and it is defined by 
\begin{equation*}\label{MedEmp}
\pi^{n}(\eta, du):=\dfrac{1}{n}\sum _{x }\eta(x)\delta_{\frac{x}{n}}\left( du\right) \in \mcb{M}^+(\mathbb{R}).
 \end{equation*}
Above  $\delta_{b}$ is a Dirac measure on $b \in \mathbb{R}$. For  $G: \mathbb{R} \rightarrow \mathbb{R}$,  $\langle \pi^n, G \rangle$ denotes the integral of $G$ with respect to $\pi^n(\eta,du)$.

Let $\mcb g: \mathbb{R} \rightarrow [0,1]$ be a measurable function. We say that a sequence $(\mu_n)_{n \geq 1}$ of probability measures on $\Omega$  is  {associated to the profile $\mcb g$ if  for every function $G \in C_c^0(\mathbb{R})$ and for every $\delta >0$, it holds
\begin{equation*}
\lim_{n \rightarrow \infty} \mu_n \left( \eta \in \Omega: \Big| \langle \pi^n, G \rangle - \int_{\mathbb{R}} G(u) \mcb g(u) du \Big| > \delta \right) =0. 
\end{equation*}

Above $C^0_c(\mathbb R)$ denotes the space of continuous functions with compact support.  For every $n \geq 1$,  $\mathbb{P} _{\mu_{n}}$ is the probability measure on $\mcb D([0,T],\Omega)$ induced by the Markov process $\{\eta_{t}^{n};{t\geq 0}\}$ and by the initial configuration $\eta_0^n$ with distribution $\mu_{n}$ and $\mathbb{E}_{\mu_{n}}$  is  the expectation with respect to $\mathbb{P}_{\mu_{n}}$. Let  $\pi^{n}_{t}(\eta, dq):=\pi^{n}(\eta^{n}_{t}, dq) $. Let  $\mcb D([0,T], \mcb{M}^+(\mathbb{R}))$ be the space of c\`adl\`ag  trajectories $f:[0,T] \rightarrow \mcb{M}^+(\mathbb{R})$ with the Skorohod topology. In particular, $(\pi_{t}^{n})_{0 \leq t \leq T} \in \mcb D([0,T],\mcb M^+(\mathbb{R}))$. Finally, we define $(\mathbb{Q}_{n})_{n \geq 1}$ as the sequence of probability measures on $\mcb D([0,T],\mcb M^+(\mathbb{R}))$ induced by the Markov process $(\pi_{t}^{n})_{0 \leq t \leq T}$ and by the initial configuration $\eta_0^n$ with distribution $\mu_{n}$.

When we have a thick barrier (i.e. $\mcb S= \mcb S_0$) separating $\mathbb{Z}_{-}^{*}$ and $\mathbb{N}$, at the microscopic level we expect to have boundary conditions that mimic a macroscopic blockage of mass between $\mathbb{R}_{-}^{*}$ and $\mathbb{R}_{+}$. Because of this, in some cases it will be convenient to deal with functions which \textit{may be} discontinuous at the origin but have smooth restrictions in $\mathbb{R}_{-}$ and $\mathbb{R}_{+}$.  Before defining the space of test functions, we present the notation for functions depending only on the space variable. For an interval $I$ in $\mathbb{R}$ and a number $r \in  \mathbb{N}$, we denote by $C^{r} \left(  I \right)$ the set of functions defined on $I$ that are $r$ times differentiable. Moreover, $C^{\infty} (I):= \cap_{r=1}^{\infty} C^{r}(I)$. We also consider the set  $C_{c}^{r} ( \mathbb{R} )$ of functions $G \in C^{r}\left( \mathbb{R} \right)$ such that $G$ has a compact support that may include $0$. Finally, we denote by $C_{c0}^{r} ( \mathbb{R} )$ the subset of $C_{c}^{r} ( \mathbb{R} )$ of functions with a compact support which \textit{does not include}   the origin; more exactly, if $G \in C_{c0}^{r} ( \mathbb{R} )$, there exist $0< \bar{b} < b$ such that $G(u)=0$ if $|u| \leq \bar{b}$ or $|u| \geq b$. We say that $G \in C_c^{\infty}(\mathbb{R}^{*})$ if there exist $G_{-},G_{+} \in C_c^{\infty} (\mathbb{R})$ such that $G(u)=G_{-}(u)\mathbbm{1}_{u <0}+G_{+}(u)\mathbbm{1}_{ u \geq 0}$.

\subsection{Fractional Sobolev spaces and Fractional operators}

Regardless of the measure space $X$, we  denote the Lebesgue measure in $X$ by $\mu$ so that $L^2( X):=L^2 (X, d \mu)$ is the space of functions $f: X \rightarrow \mathbb{R}$ such that $\int_X |f|^2 d \mu < \infty$ and its norm is denoted by $\| \cdot \|_{2,X}$. Also, $L^{\infty}( X):=L^{\infty} (X, d \mu)$ is the space of functions $f: X \rightarrow \mathbb{R}$ with finite essential supremum and its norm is denoted by $\| \cdot \|_{\infty}$. 

Following Section 2 of \cite{hitchhiker}, we define the fractional Sobolev spaces where our solutions belong to. Given an open interval $I$, the Sobolev space $ H^{\frac{\gamma}{2}}(I)$ is the set of functions $f \in L^2(I)$ such that 
\begin{align*}
\iint_{I^2}  \frac{|f(u)-f(v)|^2}{|u-v|^{\gamma+1}}  du dv < \infty. 
\end{align*}
With an abuse of notation, we will say that $f \in \mcb{H}^{\frac{\gamma}{2}} \left( \mathbb{R}^* \right)$ if  $f_{-}:=f|_{ \mathbb{R}_{-}^{*}} \in  \mcb{H}^{\frac{\gamma}{2}} \left( \mathbb{R}_{-}^{*} \right)$ and $f_{+}:= f|_{ \mathbb{R}_{+}^{*}} \in  \mcb{H}^{\frac{\gamma}{2}} \left( \mathbb{R}_+^{*} \right)$.  Now we  recall Theorem 8.2 of \cite{hitchhiker}:
\begin{prop} \label{holderrep}
Let $\gamma \in (1,2)$ and $f \in \mcb H^{\frac{\gamma}{2}}(I)$. Then there exists (exactly) one function $\tilde{f} \in C^{\frac{\gamma-1}{2}}(\bar{I}\,)$ such that $f = \tilde{f}$ almost everywhere in $I$. Above, $\bar{I}$ denotes the closure of $I$.
\end{prop}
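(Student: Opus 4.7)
The plan is to establish the pointwise Hölder estimate
\begin{equation*}
|f(x)-f(y)| \le C\,|x-y|^{(\gamma-1)/2}\,[f]_{H^{\gamma/2}(I)}
\end{equation*}
for every pair of Lebesgue points $x,y \in I$, where $[f]_{H^{\gamma/2}(I)}^{2}:=\iint_{I^{2}}|f(u)-f(v)|^{2}|u-v|^{-\gamma-1}\,du\,dv$ is the Gagliardo seminorm in the definition of $H^{\gamma/2}(I)$. Once this is in hand, the desired $\tilde f \in C^{(\gamma-1)/2}(\bar I)$ arises by extending $f$ from its full-measure set of Lebesgue points to $\bar I$ by uniform continuity, and uniqueness is automatic, since any two continuous functions on $\bar I$ agreeing almost everywhere must coincide pointwise.

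To obtain the Hölder estimate I would compare $f$ with its local averages. Define $g_r(x) := \tfrac{1}{r}\int_x^{x+r} f(z)\,dz$ whenever $[x,x+r]\subset I$; by the Lebesgue differentiation theorem, $g_r(x)\to f(x)$ as $r\to 0^+$ at every Lebesgue point $x$. The first step is to bound $g_r$ at two separated points: writing
\begin{equation*}
g_r(x)-g_r(y)=\frac{1}{r^{2}}\int_x^{x+r}\!\!\int_y^{y+r}(f(z)-f(w))\,dz\,dw,
\end{equation*}
Cauchy--Schwarz together with the trivial bound $|z-w|\le|x-y|+r$ on the domain of integration yields
\begin{equation*}
|g_r(x)-g_r(y)|^{2}\le\frac{(|x-y|+r)^{\gamma+1}}{r^{2}}\,[f]_{H^{\gamma/2}(I)}^{2},
\end{equation*}
and choosing $r=|x-y|=:h$ produces $|g_h(x)-g_h(y)|\le C\,h^{(\gamma-1)/2}\,[f]_{H^{\gamma/2}(I)}$.

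The second step is to control $|f(x)-g_h(x)|$ by the same quantity. Using the elementary identity $g_r(x)-g_{r/2}(x)=\tfrac{1}{2}\bigl(g_{r/2}(x+r/2)-g_{r/2}(x)\bigr)$ together with the first-step bound applied at scale $r/2$ gives $|g_r(x)-g_{r/2}(x)|\le C\,r^{(\gamma-1)/2}\,[f]_{H^{\gamma/2}(I)}$. Telescoping along the dyadic scales $r_k:=h/2^{k}$ and summing the resulting geometric series yields $|f(x)-g_h(x)|\le C\,h^{(\gamma-1)/2}\,[f]_{H^{\gamma/2}(I)}$ at every Lebesgue point $x$, and similarly for $y$; the triangle inequality then delivers the Hölder estimate.

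The hard part will be handling boundary points of $I$, where the one-sided averages $g_r(x)$ fail to be defined for small $r>0$. I would resolve this either by flipping to left-sided averages $g_r(x-r)$ near the right endpoint, or, more robustly, by first extending $f$ to a function in $H^{\gamma/2}$ on a strictly larger open set via a standard reflection-and-cutoff construction, which preserves the Gagliardo seminorm up to a multiplicative constant. Either way the Hölder estimate holds on the full-measure set of Lebesgue points of $I$, and its continuous extension to $\bar I$ is then immediate.
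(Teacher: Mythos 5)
Your proof is correct, but it is ``genuinely different'' from the paper's in a trivial sense: the paper supplies no proof at all, instead citing Theorem~8.2 of the survey by Di~Nezza, Palatucci, and Valdinoci (the $W^{s,p}(\Omega)\hookrightarrow C^{0,\alpha}(\bar\Omega)$ Morrey embedding for $sp>n$ on extension domains). Your argument reconstructs the one-dimensional, $p=2$ case from scratch by dyadic averaging: Cauchy--Schwarz with weight $|z-w|^{\pm(\gamma+1)/2}$ gives $|g_r(x)-g_r(y)|\le r^{-1}(|x-y|+r)^{(\gamma+1)/2}[f]_{H^{\gamma/2}(I)}$, the identity $g_r(x)-g_{r/2}(x)=\tfrac12\big(g_{r/2}(x+r/2)-g_{r/2}(x)\big)$ is easily verified, and the telescope converges precisely because $(\gamma-1)/2>0$. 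The boundary concern you flag is real but can be dispatched more cleanly than either of your suggestions: compare both $f(x)$ and $f(y)$ to the single average $\tfrac{1}{|x-y|}\int_x^y f$ and telescope one-sidedly into each endpoint, so that every integral in the argument is carried over subintervals of $[x,y]\subset I$ and no one-sided switching, reflection, or extension is ever invoked. What your route buys is self-containment and an explicit Hölder constant in terms of $[f]_{H^{\gamma/2}(I)}$ alone; what the paper's citation buys is brevity, plus the general extension-domain formulation, which is in any case more than is needed here since $I$ is always an interval.
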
 
From last proposition and for $\gamma \in (1,2)$, we know that $f_{-}$ and $f_{+}$ have continuous representatives $\tilde{f}_{-}$ and $\tilde{f}_{+}$ in $(-\infty,0]$ and $[0, \infty)$, respectively. If there exists $a \in \mathbb{R}$ such that $g:=f-a \in \mcb{H}^{\frac \gamma 2} \left( \mathbb{R}^* \right)$,  we denote 
\begin{align*}
f(0^{+}) := \lim_{\varepsilon \rightarrow 0^{+}} \frac{1}{\varepsilon} \int_0^{\varepsilon} f(u) du =a+  \lim_{\varepsilon \rightarrow 0^{+}} \frac{1}{\varepsilon} \int_0^{\varepsilon} \tilde{g}_{+}(u) du =a+  \tilde{g}_{+}(0) ; \\
  f(0^{-}) := \lim_{\varepsilon \rightarrow 0^{+}} \frac{1}{\varepsilon} \int_{-\varepsilon}^{0} f(u) du =a+ \lim_{\varepsilon \rightarrow 0^{+}} \frac{1}{\varepsilon} \int_{-\varepsilon}^{0} \tilde{g}_{-}(u) du =a+ \tilde{g}_{-}(0). 
\end{align*}
Above, $\tilde{g}_{-}$ and $\tilde{g}_{+}$ are the continuous representatives of $g_{-}:=g|_{ \mathbb{R}_{-}^{*}}$ and $g_{+}:= g|_{ \mathbb{R}_{+}^{*}}$ in $(-\infty,0]$ and $[0, \infty)$, respectively.
\begin{rem} \label{Hcont}
 Observe that $\mcb{H}^{\frac{\gamma}{2}}(\mathbb{R}) \subset \mcb{H}^{\frac{\gamma}{2}} \left( \mathbb{R}^* \right)$ since $f|_{ \mathbb{R}_{-}^{*}} \in  \mcb{H}^{\frac{\gamma}{2}} \left( \mathbb{R}_{-}^{*} \right)$ and $f|_{ \mathbb{R}_{+}^{*}} \in  \mcb{H}^{\frac{\gamma}{2}} \left( \mathbb{R}_+^{*} \right)$ for every $f \in \mcb{H}^{\frac{\gamma}{2}}(\mathbb{R})$.
\end{rem}
We note that when $f \in \mcb{H}^{\frac{\gamma}{2}}(\mathbb{R})$, since it has a  unique representative which is continuous, then $\tilde{f}$ coincides with $\tilde{f}_{-}$ and $\tilde{f}_{+}$ on $(-\infty,0]$ and $[0, \infty)$, respectively, and 
$
f(0^{+}) = f_{+}(0) = f(0) = f_{-}(0) =  f(0^{-}).
$
But, when $f \in \mcb{H}^{\frac{\gamma}{2}} \left( \mathbb{R}^* \right)$, we do not have a global representative in $\mathbb{R}$, and therefore it is not guaranteed that $f(0^{+})=f(0^{-})$.

\subsubsection{Introducing the time dependence}

Similarly  to Definition 23.1 of \cite{MR1033497}, we say that $\varrho: [0,T ] \rightarrow L^2(I)$ belongs to   $L^2 \big(0, T ; \mcb{H}^{\frac{\gamma}{2}} (I ) \big)$ if $\varrho(t,\cdot) \in \mcb{H}^{\frac{\gamma}{2}} ( I )$ for a.e. $t\in[0,T]$ and
\begin{align*}
\|\varrho\|^2_{L^2 \big(0, T ; \mcb{H}^{\frac{\gamma}{2}} ( I ) \big)} :=  \int_0^T \|\varrho(t, \cdot) \|^2_{\mcb{H}^{\frac{\gamma}{2}} \left( I \right)} dt < \infty.
\end{align*}
As above, given  $\varrho \in L^2 \big(0, T ; \mcb{H}^{\frac{\gamma}{2}} ( I) \big)$, we say that $\tilde{\varrho}$ is the continuous representative of $\varrho$ if $\tilde{\varrho}(t,\cdot) \in \mcb{H}^{\frac{\gamma}{2}} ( I )$ and $\tilde{\varrho}(t,\cdot) \in C^0( \bar{I} )$ for a.e. $t\in[0,T]$.  This means that  $\varrho = \tilde{\varrho}$ almost everywhere on $[0,T] \times \mathbb{R}$. Moreover, $\tilde{\varrho}$ is unique. We say that $\varrho \in L^2 \left(0, T ; \mcb{H}^{\frac{\gamma}{2}} \left( \mathbb{R}^{*} \right) \right) $ if $\varrho(t,\cdot) \in \mcb{H}^{\frac{\gamma}{2}} (\mathbb{R}^{*} )$ for a.e. $t\in[0,T]$ and we note that from  Remark \ref{Hcont}, it holds  $L^2 \left(0, T ; \mcb{H}^{\frac{\gamma}{2}} \left( \mathbb{R} \right) \right) \subset L^2 \left(0, T ; \mcb{H}^{\frac{\gamma}{2}} \left( \mathbb{R}^{*} \right) \right)$.

Now we define our space of  test functions. As in  \cite{MR1033497}, given a metric space $(N, \| \cdot \|_{N} )$, we say that $P ( [0,T],  N )$ is the space of all polynomials $G: [0,T] \rightarrow N$, i.e., there exists $k \in \mathbb{N}$ such that $G(t)=a_0 + a_1 t + \ldots + a_k t^k$, with $a_j \in N, \forall j=0,1,\ldots,k$, $\forall t \in [0,T]$.
We say that $G \in P \big( [0,T],  C_c^{\infty}(\mathbb{R}^{*}) \big)$ if there exists $k \in \mathbb{N}$ such that $G(t)=a_0 + a_1 t + \ldots + a_k t^k$, with $a_j \in C_c^{\infty}(\mathbb{R}^{*}), \forall j=0,1,\ldots,k$, $\forall t \in [0,T]$. 
 Note  that for every $G \in P \big( [0,T],  C_c^{\infty}(\mathbb{R}^{*}) \big)$, there exist $G_{-},G_{+} \in P \big( [0,T],  C_c^{\infty}(\mathbb{R}) \big)$ such that
$
G (t,u) =\mathbbm{1}_{u\in(-\infty,0)}G_{-} (t,u)+\mathbbm{1}_{u\in [0, \infty)} G_{+} (t,u).
$
It is simple to check  that $ P \big( [0,T],  C_c^{\infty}(\mathbb{R}) \big) \subset P \big( [0,T],  C_c^{\infty}(\mathbb{R}^*) \big)$.
Moreover, given $G \in  P \big( [0,T],  C_c^{\infty}(\mathbb{R} ^{*}) \big)$, there exists $b>0: G(s,u)=0$ when $|u|  \geq b$, for every $s \in [0,T]$. We denote
\begin{align}\label{eq:B_g}
b_G:= \min \big \{b \in \mathbb{N}: G(s,u)=0 \quad\textrm{for}\quad (s,u) \in [0,T] \times \big(  ( - \infty, -b ] \cup [b, \infty) \big) \big \}.
\end{align}
Moreover, if $G \in  P \big( [0,T],  C_{c0}^{\infty}(\mathbb{R}) \big)$, there exists $\bar{b}>0: G(s,u)=0$ when $|u|  \leq \bar{b}$, for every $s \in [0,T]$. We denote
\begin{align}\label{eq:barB_g}
\bar{b}_G= \sup \big \{b \in [ 0, b_G] : G(s,u)=0 \; \text{for} \; (s,u) \in [0,T] \times [-b,b] \big \}.
\end{align}
Finally we will introduce some fractional operators that will be relevant in the statement of the hydrodynamic equations. We begin with the fractional Laplacian.  

\subsubsection{Fractional Laplacian}

Let $G: \mathbb{R} \rightarrow \mathbb{R}$. For every $\varepsilon >0$, we define
\begin{equation} \label{limlapfrac}
[-(- \Delta)^{\frac{\gamma}{2}} G ]_{\varepsilon} (u ) := c_{\gamma}   \int_{- \infty}^{\infty} \mathbbm{1}_{\{ |u - v| \geq \varepsilon \}} \frac{G(v) - G(u)}{ |u-v|^{\gamma+1}} dv, \forall u \in \mathbb{R}. 
\end{equation}
We define the fractional Laplacian $-(- \Delta)^{\frac{\gamma}{2}}$ of exponent $\frac{\gamma}{2}$ on the set of functions $G: \mathbb{R} \rightarrow \mathbb{R}$ such that $$\int_{\mathbb{R}} \frac{|G(u)|}{\left( 1 + |u| \right)^{\gamma+1}} du < \infty$$ by
\begin{equation} \label{deflapfrac}
[-(- \Delta)^{\frac{\gamma}{2}} G](u) =  \lim_{\varepsilon \rightarrow 0^+}  [-(- \Delta)^{\frac{\gamma}{2}} G ]_{\varepsilon} (u ), \forall u \in \mathbb{R},
\end{equation}
provided the limit exists. From Proposition \ref{lapfracglob} we  know  that the fractional Laplacian is well-defined for functions $G \in C_c^2 (\mathbb{R})$.
Now we will introduce the regional fractional Laplacian defined on open intervals of the real line.

\subsubsection{Regional fractional Laplacian}

Let $I$ be an open interval of $\bb R$ and let  $G: \mathbb{R} \rightarrow \mathbb{R}$. For every $\varepsilon >0$, we define $[-(- \Delta)_I^{\frac{\gamma}{2}} G ]_{\varepsilon}: I \rightarrow \mathbb{R}$, on $u\in I$, by
\begin{equation} \label{limlapfracreg2}
[-(- \Delta)_I^{\frac{\gamma}{2}} G ]_{\varepsilon} (u ) := c_{\gamma}   \int_{I} \mathbbm{1}_{\{ |u - v| \geq \varepsilon \}} \frac{G(v) - G(u)}{ |u-v|^{\gamma+1}} dv. 
\end{equation} 
The regional fractional Laplacian $[-(- \Delta)_I^{\frac{\gamma}{2}} G ]: I \rightarrow \mathbb{R}$ of exponent $\frac{\gamma}{2}$ is defined on the set of functions $G: \mathbb{R} \rightarrow \mathbb{R}$ such that $$\int_{I} \frac{|G(u)|}{\left( 1 + |u| \right)^{\gamma+1}} du < \infty,$$ and on $u \in I,$ by
\begin{equation} \label{deflapfracreg}
[-(- \Delta)_{I}^{\frac{\gamma}{2}} G](u) =  \lim_{\varepsilon \rightarrow 0^+}  [-(- \Delta)_{I}^{\frac{\gamma}{2}} G ]_{\varepsilon} (u )
\end{equation}
provided the limit exists. We will be interested when $I=\mathbb{R}_{-}^{*}$ or $I=\mathbb{R}_+^{*}$. In these cases, from Proposition \ref{lapfracreg} we know  that the regional fractional Laplacian on $I$ is well-defined on functions $G \in C_c^2 (\mathbb{R}^{*})$ when $\gamma \in  (1,2)$ and it is well-defined on functions $G \in C_{c0}^2 (\mathbb{R})$ when $\gamma \in (0,1]$. We observe that $-(- \Delta)_I^{\frac{\gamma}{2}} G$ does not depend on the values of $G$ in $\mathbb{R} - I$ and for $I=\mathbb{R}$, the definitions of $[-(- \Delta)_I^{\frac{\gamma}{2}} G ]_{\varepsilon}$ and $[-(- \Delta)_I^{\frac{\gamma}{2}} G ]$ coincide with the definitions of $[-(- \Delta)^{\frac{\gamma}{2}} G ]_{\varepsilon}$ and $[-(- \Delta)^{\frac{\gamma}{2}} G ]$, respectively.   
If $G$ is such that $[-(- \Delta)_{\mathbb{R}_{-}^{*}}^{\frac{\gamma}{2}} G]$ and $[-(- \Delta)_{\mathbb{R}_{+}^{*}}^{\frac{\gamma}{2}} G]$ are well-defined, with an abuse of notation, we define the regional fractional Laplacian on $\mathbb{R}^{*}$ $-(- \Delta)_{\mathbb{R}^{*}}^{\frac{\gamma}{2}} G: \mathbb{R} \rightarrow \mathbb{R}$ by
\begin{equation} \label{lapfracreg3}
[-(- \Delta)_{\mathbb{R}^{*}}^{\frac{\gamma}{2}} G](x)=\mathbbm{1}_{x>0 } [-(- \Delta)_{\mathbb{R}_{+}^{*}}^{\frac{\gamma}{2}} G](x)+\mathbbm{1}_{x<0 } [-(- \Delta)_{\mathbb{R}_{-}^{*}}^{\frac{\gamma}{2}} G](x) 
\end{equation}
Inspired from \cite{Guan2006}, we define   the fractional derivatives
\begin{align*}
D^{\gamma} f(0^+) :=-  \lim_{u \rightarrow 0^{+}} f'(u) u^{2-\gamma} \quad \textrm{and}\quad   D^{\gamma} f(0^-) :=- \lim_{u \rightarrow 0^{-}} f'(-u) u^{2-\gamma},
\end{align*}
when the limits exist. We observe that $D^{\gamma} f(0^+)=D^{\gamma} f(0^-)=0$ when $f \in C_{c}^1(\mathbb{R}^*)$.


\subsection{Hydrodynamic equations}
Now we define the notions of weak solution of the hydrodynamic equations that we obtain. 
\begin{definition}\label{eq:dif}
Let $\mcb g: \mathbb{R} \rightarrow [0,1]$ be a measurable function. We say that $\varrho :[0,T] \times \mathbb{R} \rightarrow [0,1]$ is a weak solution of the fractional diffusion equation in $\mathbb{R}$ with initial condition $\mcb g$ 
\begin{equation} \label{eqhydfracdifreal}
\begin{cases}
\partial_t \rho(t,u) = \big[-(- \Delta)^{\frac{\gamma}{2}} \rho \big] (t,u), (t,u) \in [0,T] \times \mathbb{R}, \\
\varrho(0,u) =\mcb g(u), u \in \mathbb{R} 
\end{cases}
\end{equation}
if the following two conditions hold:
\begin{enumerate}
\item
for every $t \in [0,T]$ and for every $G \in \mcb S_{\textrm {Dif}}:=P \big( [0,T] , C_c^{\infty}(\mathbb{R}) \big)$, it holds  $F_{\textrm{FrDif}}(t, \varrho,G, \mcb g)=0$, where
\begin{align*}
F_{\textrm{FrDif}}(t, \varrho,G, \mcb g):= & \int_{\mathbb{R}} \varrho(t,u) G(t,u) du - \int_{\mathbb{R}} \mcb g(u) G(0,u) du 
-  \int_0^t \int_{\mathbb{R}} \varrho(s,u) \big[ [-(- \Delta)^{\frac{\gamma}{2}}  ] + \partial_s \big] G(s,u) du ds; 
\end{align*}
\item 
there exists $a \in (0,1)$ such that $\bar{ \varrho} \in L^2 \left(0, T ; \mcb{H}^{\frac{\gamma}{2}} ( \mathbb{R} ) \right)$, where $\bar{ \varrho}:=\varrho-a$.
\end{enumerate}
\end{definition}
\begin{definition}
Let  $\kappa \geq 0$ and $\mcb g: \mathbb{R} \rightarrow [0,1]$ a measurable function. We say that $\varrho:[0,T] \times \mathbb{R} \rightarrow [0,1]$ is a weak solution of the fractional diffusion equation in $\mathbb{R}^{*}$ with fractional Robin boundary conditions and initial condition $\mcb g$ 
\begin{equation} \label{eqhydfracdifrob}
\begin{cases}
\partial_t \rho(t,u) = \big[-(- \Delta)_{\mathbb{R}^{*}}^{\frac{\gamma}{2}} \rho \big] (t,u), (t,u) \in [0,T] \times \mathbb{R}^{*}, \\
D^{\gamma} \rho(t,0^+) = D^{\gamma} \rho(t,0^-)= \frac{\kappa}{C_{\gamma}}  [ \rho(t,0^+) - \rho(t,0^{-})], t \in [0,T], \\
\varrho(0,u) =\mcb  g(u), u \in \mathbb{R} 
\end{cases}
\end{equation}
if the following two conditions hold:
\begin{enumerate}
\item
for every $t \in [0,T]$ and for every $G \in \mcb S_{\gamma}$, it holds $F_{\textrm{FrRob}}(t, \varrho,G,\mcb g,\kappa)=0$, where 
\begin{align*}
F_{\textrm{FrRob}}(t, \varrho,G, \mcb g, \kappa):= & \int_{\mathbb{R}} \varrho(t,u) G(t,u) du - \int_{\mathbb{R}} \mcb g(u) G(0,u) du \\-&  \int_0^t \int_{\mathbb{R}} \varrho(s,u) \big[ [-(- \Delta)_{\mathbb{R}^{*}}^{\frac{\gamma}{2}}  ] + \partial_s \big] G(s,u) du ds  \\
+ & \kappa \mathbbm{1}_{\gamma \in (1,2)} \int_0^t  [   \varrho(s,0^{+})  - \varrho(s,0^{-}) ]  [  G(s,0^{+})    -  G(s,0^{-})  ] ds;  
\end{align*}
\item 
there exists $a \in (0,1)$ such that $\bar{ \varrho} \in L^2 \left(0, T ; \mcb{H}^{\frac{\gamma}{2}} ( \mathbb{R}^* ) \right)$, where $\bar{ \varrho}:=\varrho-a$.
\end{enumerate}
Above, $C_{\gamma}$ is the constant produced taking $\alpha=\gamma$ in equation (3.8) of \cite{Guan2006}. Moreover, $\mcb S_{\gamma} = \mcb S_{Rob}:= P \big( [0,T] , C_{c}^{\infty}(\mathbb{R}^{*}) \big)$ if $\gamma \in (1,2)$ and $\mcb S_{\gamma} = \mcb S_{Neu}:= P \big( [0,T] , C_{c0}^{\infty}(\mathbb{R}) \big)$ if $\gamma \in (0,1]$.
\end{definition}

\begin{rem}
Taking  $\kappa=0$ in last definition, we denote $F_{\textrm{FrNeu}}(t, \varrho,G, \mcb g)=F_{\textrm{FrRob}}(t, \varrho,G, \mcb g,0)$ and we say that $\varrho$ is a weak solution to the fractional diffusion equation with fractional Neumann boundary conditions. 
\end{rem}

\begin{definition}
Let $\kappa \in (0,1) \cup (1, \infty)$ and $\mcb g: \mathbb{R} \rightarrow [0,1]$  a measurable function. We say that $\varrho:[0,T] \times \mathbb{R} \rightarrow [0,1]$ is a weak solution of the fractional diffusion equation in $\mathbb{R}^{*}$ with initial condition $\mcb g$
\begin{equation} \label{eqhydfracbetazero}
\begin{cases}
\partial_t \rho(t,u) = \big[ \big( \kappa [-(- \Delta)^{\frac{\gamma}{2}}]  + (1 - \kappa) [-(- \Delta)_{\mathbb{R}^{*}}^{\frac{\gamma}{2}}] \big) \rho  \big] (t,u), (t,u) \in [0,T] \times \mathbb{R}^{*}, \\
D^{\gamma} \rho(t,0^+) = D^{\gamma} \rho(t,0^-), t \in [0,T], \\
\varrho(0,u) =\mcb  g(u), u \in \mathbb{R}, 
\end{cases}
\end{equation}
if the following two conditions hold:
\begin{enumerate}
\item
for every $t \in [0,T]$ and for every $G \in \mcb S_{\gamma}$, it holds $F_{FrDif2}(t, \varrho,G,\mcb g, \kappa)=0$, where 
\begin{align*}
F_{FrDif2}(t, \varrho,G, \mcb g, \kappa):= & \int_{\mathbb{R}} \varrho(t,u) G(t,u) du - \int_{\mathbb{R}} \mcb g(u) G(0,u) du \\
-&  \int_0^t \int_{\mathbb{R}} \varrho(s,u) \big[  \kappa [-(- \Delta)^{\frac{\gamma}{2}}  ] +  (1-\kappa) [-(- \Delta)_{\mathbb{R}^{*}}^{\frac{\gamma}{2}}  ] + \partial_s \big] G(s,u) du ds;  
\end{align*}
\item 
there exists $a \in (0,1)$ such that $\bar{ \varrho} \in L^2 \left(0, T ; \mcb{H}^{\frac{\gamma}{2}} ( \mathbb{R}) \right)$, where $\bar{ \varrho}:=\varrho-a$.
\end{enumerate}
Above, $\mcb S_{\gamma} = \mcb S_{Dif}$ if $\gamma \in (1,2)$, and $\mcb S_{\gamma} = \mcb S_{Neu}$ if $\gamma \in (0,1]$.
\end{definition}

\begin{definition}
Let  $\mcb g: \mathbb{R} \rightarrow [0,1]$ be a measurable function. We say that $\varrho:[0,T] \times \mathbb{R} \rightarrow [0,1]$ is a weak solution of the fractional diffusion equation in $\mathbb{R}^{*}$ with initial condition $\mcb g$ 
\begin{equation} \label{eqhydfracdifnotuniq}
\begin{cases}
\partial_t \rho(t,u) = \big[-(- \Delta)_{\mathbb{R}^{*}}^{\frac{\gamma}{2}} \rho \big] (t,u), (t,u) \in [0,T] \times \mathbb{R}^{*}, \\
D^{\gamma} \rho(t,0^+) = D^{\gamma} \rho(t,0^-) , t \in [0,T], \\
\varrho(0,u) =\mcb  g(u), u \in \mathbb{R} 
\end{cases}
\end{equation}
if the following conditions hold:
\begin{enumerate}
\item
$F_{\textrm{FrRob}}(t, \varrho,G,\mcb g, 0)=0$, for every $t \in [0,T]$, for every $G \in \mcb S_{\textrm {Rob}0}$ where 
\begin{equation}\label{SROB0}
\mcb S_{\textrm {Rob}0}:=\{G \in \mcb S_{\textrm {Rob}}: G_{-}(0)=G_{+}(0) \};
\end{equation}
\item 
there exists $a \in (0,1)$ such that $\bar{ \varrho} \in L^2 \big(0, T ; \mcb{H}^{\frac{\gamma}{2}} ( \mathbb{R}^* ) \big)$,  where $\bar{ \varrho}:=\varrho-a$;
\item
 $\rho(s,0^{+})=\rho(s,0^{-})$, for a.e. $s\in [0,T]$;  
\end{enumerate}
\end{definition}
\begin{rem}\label{rem:cons}
We prove in Proposition \ref{4condition} that from $(2)$ and $(3)$  we get that
\begin{enumerate}
\item  [a)] $\rho$ has a continuous representative $\tilde{\rho}$ such that  $\tilde{\rho}(s, \cdot)$ is $\frac{\gamma-1}{2}$-H\"older continuous, for a.e. $s\in [0,T]$; 
\end{enumerate} 
and in Proposition \ref{5condition} we prove  that from $(2)$ and $a)$ we get that
\begin{enumerate}
\item  [b)] there exists $a \in (0,1)$ such that $\bar{ \varrho} \in L^2 \big(0, T ; \mcb{H}^{\frac{\gamma}{2}-\delta} ( \mathbb{R} ) \big), \forall \delta \in (0, \frac{\gamma-1}{2})$,  where $\bar{ \varrho}:=\varrho-a$.
\end{enumerate} 
 We observe that from $b)$ there exists $(G_k)_{k \geq 1} \subset  \mcb S_{\textrm {Dif}} \subsetneq \mcb S_{\textrm {Rob}0} $ such that for every $\delta \in (0, \frac{\gamma-1}{2})$, it holds
\begin{align*}
lim_{k \rightarrow \infty} \|G_k - \bar{\rho}\|_{L^2 \big(0,T; \mcb{H}^{\frac{\gamma}{2}-\delta} ( \mathbb{R}_{-}^{*} ) \big) } = lim_{k \rightarrow \infty} \|G_k - \bar{\rho}\|_{L^2 \big(0,T; \mcb{H}^{\frac{\gamma}{2}-\delta} ( \mathbb{R}_{+}^{*} ) \big) }=0.
\end{align*}
Now we will state a similar (and very desirable) condition that, unfortunately, we were not able to prove, but that would be enough to prove the uniqueness of weak solutions of \eqref{eqhydfracdifnotuniq}:
\begin{equation} \label{conduniq}
\exists (G_k)_{k \geq 1} \subset \mcb S_{\textrm {Rob}0}: lim_{k \rightarrow \infty} \|G_k - \bar{\rho}\|_{L^2 \big(0,T; \mcb{H}^{\frac{\gamma}{2}} ( \mathbb{R}_{-}^{*} ) \big) } = lim_{k \rightarrow \infty} \|G_k - \bar{\rho}\|_{L^2 \big(0,T; \mcb{H}^{\frac{\gamma}{2}} ( \mathbb{R}_{+}^{*} ) \big) }=0.
\end{equation}
\end{rem}
The uniqueness of weak solutions of \eqref{eqhydfracdifreal}, \eqref{eqhydfracdifrob} and \eqref{eqhydfracbetazero} is proved in Appendix \ref{secuniq}, where we also prove the uniqueness of weak solutions of \eqref{eqhydfracdifnotuniq} assuming \eqref{conduniq}.
\subsection{The main result}

Now we will enunciate the hydrodynamic limit of our model. Hereinafter we write $f(n) \lesssim g(n)$ if there exists a constant $C$ independent of $n$ such that $f(n) \leq C g(n)$ for every $n \geq 1$.   
 \begin{thm} \textbf{(Hydrodynamic Limit)} \label{hydlim}
Let $\mcb g: \mathbb{R} \rightarrow [0,1]$ be a measurable function. Let $(\mu_n)_{n \geq 1}$ be a sequence of probability measures in $\Omega$ associated to the profile $\mcb g$ such that $H( \mu_n | \nu_a) \lesssim  n$, for some $a \in (0,1)$. Then, for any $0 \leq t \leq T$, any $G \in C_c^0(\mathbb{R})$ and any $\delta >0$,
\begin{align*}
\lim_{n \rightarrow \infty} \mathbb{P}_{\mu_{n}} \Big( \eta^n_\cdot \in  \mcb D([0,T],\Omega) : \Big| \langle \pi^n_t, G \rangle - \int_{\mathbb{R}} G(u) \varrho(t,u) du \Big| > \delta \Big) =0, 
\end{align*}
where $\varrho$ is the unique weak solution of
\begin{align*}
\begin{cases}
\eqref{eqhydfracdifreal},  &\quad  \text{if} \;  \eqref{defsigmas}  \; \text{holds}; \\
\eqref{eqhydfracbetazero}, \quad \textrm{with} \quad  \kappa= \alpha &\quad  \text{if} \;  \mcb S = \mcb S_{0} \; \text{and} \;  \beta =0; \\ 
\eqref{eqhydfracdifrob}, \quad \textrm{with} \quad  \kappa= 0 &\quad  \text{if} \; \mcb S = \mcb S_{0}, \;  \gamma \in (0,1] \; \text{and} \;  \beta >0; \\
\eqref{eqhydfracdifrob}, \quad \textrm{with} \quad  \kappa= \alpha m & \quad  \text{if} \; \mcb S = \mcb S_{0}, \;  \gamma \in (1,2) \; \text{and} \;  \beta = \gamma-1; \\
\eqref{eqhydfracdifrob}, \quad \textrm{with} \quad  \kappa= 0 & \quad  \text{if} \; \mcb S = \mcb S_{0}, \;   \gamma \in (1,2) \; \text{and} \;  \beta > \gamma-1.
\end{cases}
\end{align*}
In the regime  $\mcb S = \mcb S_{0}$,  $\gamma \in (1,2)$ and $\beta \in (0, \gamma-1)$, we have that $(\mathbb{Q}_n)_{n \geq 1}$ is tight and all limit points are concentrated on trajectories of the form $\pi_t(du)=\varrho_t(u)du$, where $\varrho$ is  a weak solution of \eqref{eqhydfracdifnotuniq}.
\end{thm}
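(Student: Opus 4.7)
The plan is to follow the entropy method of \cite{GPV}. First, the hypothesis $H(\mu_n|\nu_a)\lesssim n$ combined with the reversibility of $\nu_a$ with respect to $\mcb L_n$ gives us the control on the Dirichlet form that is needed throughout. We begin by proving tightness of $(\mathbb Q_n)_{n\geq 1}$ on $\mcb D([0,T],\mcb M^+(\mathbb R))$ in Section \ref{sectight}. As is standard, it suffices to check tightness of the real-valued processes $(\langle \pi^n_t,G\rangle)_{t\in[0,T]}$ for $G$ in a dense countable family of test functions, and this is done via Aldous' criterion after writing $\langle \pi^n_t,G\rangle$ as its initial value plus a martingale $M^{n,G}_t$ plus the time integral of $n^\gamma \mcb L_n \langle \pi^n,G\rangle$. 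The quadratic variation of $M^{n,G}_t$ is $O(n^{-1})$ uniformly in time, and the integrand of the drift term is uniformly bounded (exploiting the $0$-$1$ values of $\eta$ and the summability of $p(\cdot)$ against $G$ together with the compact support of $G$); these two facts give both the compact containment and the oscillation control required by Aldous.

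Next, in Section \ref{seccharac}, we characterize every limit point $\mathbb Q^*$ of $(\mathbb Q_n)$. By the standard argument, $\mathbb Q^*$-almost every trajectory is concentrated on measures absolutely continuous with respect to Lebesgue, with density bounded between $0$ and $1$, whose value at time $0$ is $\mcb g$. The heart of the matter is to show that this density $\varrho$ annihilates the appropriate functional $F_{\textrm{FrDif}}$, $F_{\textrm{FrRob}}$, $F_{\textrm{FrDif}2}$, or $F_{\textrm{FrRob}}(\cdot,0)$, depending on the regime. Starting from the Dynkin martingale, we need to show that for each fixed test function $G$ in the appropriate class $\mcb S_\gamma$, the quantity $n^\gamma\mcb L_n \langle \pi^n,G(s,\cdot)\rangle$ converges (after time integration and in $L^1(\mathbb P_{\mu_n})$) to $\int_{\mathbb R}\varrho(s,u)[-(-\Delta)_{\bullet}^{\gamma/2}G](s,u)du$ plus the appropriate boundary term. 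The discrete generator acting on the empirical measure produces a symmetric double sum that we split into a ``bulk'' part, involving fast bonds only, and a ``slow'' part, involving $\mcb S$. For the bulk part the convergence is obtained by Riemann-sum arguments and by the convergence results collected in Appendix \ref{secuseres} (in particular Proposition \ref{lemconvneum}). For the slow part the analysis bifurcates according to the regime:

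\begin{itemize}
\item If \eqref{defsigmas} holds, the slow contribution is negligible in the limit thanks to the integrability provided by the exponent $\delta$, yielding \eqref{eqhydfracdifreal}.
\item If $\mcb S=\mcb S_0$ and $\beta=0$, the factor $\alpha$ contributes an explicit coefficient in front of the ``crossing'' bonds, and rearranging the double sum as (full fractional Laplacian) $\;-\;(1-\alpha)\cdot$(crossing part) produces the superposition of $-(-\Delta)^{\gamma/2}$ and $-(-\Delta)_{\mathbb R^*}^{\gamma/2}$ appearing in \eqref{eqhydfracbetazero}.
\item If $\mcb S=\mcb S_0$ and $\beta>0$, the slow crossing sum is multiplied by $n^{\gamma-1-\beta}$ after the time rescaling. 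When $\gamma\in(0,1]$ or $\beta>\gamma-1$, the prefactor vanishes and we are left with the pure regional Laplacian, i.e.\ Neumann-type boundary conditions. In the critical case $\gamma\in(1,2)$, $\beta=\gamma-1$, the prefactor is of order one and a replacement lemma is needed to replace the microscopic occupation variables at sites close to $0$ on each side by their empirical density $\varrho(s,0^\pm)$; combined with the exact identity $\sum_{z\geq 1}zp(z)=m$ this produces the boundary term $\alpha m [\varrho(s,0^+)-\varrho(s,0^-)][G(s,0^+)-G(s,0^-)]$.
\end{itemize}

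The decisive technical step is this last replacement in the critical regime, which is carried out in Section \ref{secheurwithout} and relies on a one-block/two-block scheme together with a moving-particle lemma in the spirit of \cite{stefano}; the presence of slow bonds is what makes it delicate, because currents across the barrier must be controlled separately from the bulk currents. In addition, Section \ref{secenerest} provides the energy estimate showing that the limiting density $\varrho-a$ lies in $L^2(0,T;\mcb H^{\gamma/2}(\mathbb R))$ or $L^2(0,T;\mcb H^{\gamma/2}(\mathbb R^*))$ as required by the second condition in each definition of weak solution; this comes from bounding the Dirichlet form of $\eta^n_\cdot$ using $H(\mu_n|\nu_a)\lesssim n$ and passing to the limit. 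In the regime $\gamma\in(1,2)$, $\beta\in(0,\gamma-1)$, the same arguments still identify $\varrho$ as a weak solution of \eqref{eqhydfracdifnotuniq}, including property (3) of that definition, which is obtained by controlling explicitly the discrete flux across the barrier using the slowness factor $n^{-\beta}$.

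Once each limit point is known to be concentrated on (Dirac measures on) weak solutions of the corresponding equation, the uniqueness results proven in Appendix \ref{secuniq} for \eqref{eqhydfracdifreal}, \eqref{eqhydfracdifrob} and \eqref{eqhydfracbetazero} imply that $(\mathbb Q_n)$ converges to the Dirac measure concentrated on the trajectory $\varrho(t,u)du$, which by standard arguments yields the stated convergence in probability. In the non-unique regime only tightness and the characterization of limit points can be asserted, which is exactly the statement of the last sentence of the theorem. The main obstacle is, as indicated, the two-blocks/moving-particle step in the critical $\beta=\gamma-1$ case, together with ensuring that all convergences of discrete sums to continuous integrals involving the regional fractional Laplacian are uniform in the relevant test function — these are precisely the careful estimates assembled in Appendix \ref{secuseres}.
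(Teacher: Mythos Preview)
Your sketch is correct and follows essentially the same route as the paper: tightness via the Aldous-type criterion (Proposition~4.1.6 of \cite{kipnis1998scaling}) applied to the Dynkin decomposition, characterization of limit points by splitting the generator into fast/slow contributions and applying the discrete-to-continuous convergences of Appendix~\ref{secuseres}, replacement lemmas (one-block/two-blocks with a moving-particle lemma) for the critical boundary term, energy estimates for the Sobolev regularity, and uniqueness from Appendix~\ref{secuniq}. Two minor inaccuracies worth flagging: the quadratic variation of the martingale is $O(\max\{n^{\gamma-2},n^{-1}\})$ rather than $O(n^{-1})$ (cf.\ Proposition~\ref{tightcond2dif}), and property~(3) in the regime $\beta\in(0,\gamma-1)$ is obtained not by a direct flux estimate but via the replacement lemma (Lemma~\ref{replemma}, specifically the last display, which needs $\beta<\gamma-1$ so that crossings through the slow barrier are still fast enough) combined with the H\"older regularity from the energy estimate.
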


We observe that we have a \textit{static} behavior when \eqref{defsigmas} holds, since in this case we do not have a sufficiently large number of slow bonds in order to block the passage of particles between $\mathbb{Z}_{-}^{*}$ and $\mathbb{N}$. On the other hand, when $ \mcb S  = \mcb S_0$ we  have a \textit{phase transition} depending on the values of $\beta$ and $\gamma$, similarly to Theorem 4.1 of \cite{stefano}.
Indeed, if $\gamma \in (1,2)$ there are three possibilities: if $\beta \in (0, \gamma-1)$ the profile is almost everywhere  H\"older-continuous; 
 for $\beta > \gamma- 1$ there is no transport of mass between $\mathbb{R}_{-}^{*}$ and $\mathbb{R}_{+}$; and finally, in the critical case $\beta= \gamma- 1$, we have  fractional Robin boundary conditions that depend on the value of $\alpha$. On the other hand, for $\gamma \in (0,1]$ and $\beta >0$, the profile can be approximated by test functions in $\mcb S_{Neu}$ and, macroscopically, we do not have mass flowing through the origin. Most remarkably, even when $\beta=0$, we have a particular phase (both for $\gamma \in (0,1]$ and $\gamma \in (1,2)$), which is quite different from the diffusive behavior (i.e. when $\gamma>2$). Heuristically, this happens because for $\mcb S = \mcb S_0$, the flow of mass through the slow bonds is significantly faster when $\gamma < 2$ compared to $\gamma >2$. Macroscopically, this different behavior can be justified by the presence of the fractional Laplacian, which is a non-local operator.

The rest of this article is devoted to the proof of last theorem. From here on we fix $a\in(0,1)$ such that $H( \mu_n | \nu_a) \lesssim  n$. In Section \ref{sectight}, we prove that the sequence $(\mathbb{Q}_n)_{n \geq 1}$ is tight with respect to the Skorohod topology of $\mcb D \big( [0,T], \mcb{M}^+(\mathbb{R}) \big)$ and therefore it has at least one limiting point $\mathbb{Q}$. We prove that any limit point $\mathbb{Q}$ is concentrated on trajectories that satisfy the first (resp. second) condition of weak solutions of the corresponding hydrodynamic equations from the results of Section  \ref{seccharac} and Section \ref{secheurwithout} (resp. Section \ref{secenerest}). The necessary replacement lemmas are proved in Section \ref{secheurwithout} and the uniqueness of the hydrodynamic equations is explained in Appendix \ref{secuniq}. Finally, we present some auxiliary results in Appendix \ref{secuseres}. The uniqueness of the weak solutions of the corresponding hydrodynamic equation implies in the uniqueness of the limit point $\mathbb{Q}$ and in the convergence of the sequence $(\mathbb{Q}_n)_{n \geq 1}$, leading to the desired result.
\begin{rem}
We comment a bit on statement of the previous theorem in  the case $\gamma \in (1,2)$ and $\beta \in (0, \gamma-1)$.  Since we were only able to prove the uniqueness of weak solutions of \eqref{eqhydfracdifnotuniq}  assuming \eqref{conduniq}, we could not obtain the convergence as in the other regimes.  In case uniqueness can be proved  without assuming \eqref{conduniq}, then  convergence would follow. 
\end{rem}
\section{Tightness} \label{sectight}

This section is devoted to the proof of tightness of the sequence $(\mathbb{Q}_{n})_{ n \geq 1 }$. To that end we invoke  Proposition 4.1.6  of \cite{kipnis1998scaling}, which tells us that it is enough to show that, for every $\varepsilon >0$,
\begin{equation} \label{T1sdif}
\displaystyle \lim _{\delta \rightarrow 0^+} \limsup_{n \rightarrow\infty} \sup_{\tau  \in \mathcal{T}_{T},\bar\tau \leq \delta} {\mathbb{P}}_{\mu _{n}}\Big[\eta_{\cdot}^{n}\in \mcb D ( [0,T], \Omega) :\left\vert \langle\pi^{n}_{\tau+ \bar\tau},G\rangle-\langle\pi^{n}_{\tau},G\rangle\right\vert > \ve \Big]  =0, 
\end{equation}
for any function $G\in C_c^2(\mathbb{R})$. In last display $\mathcal{T}_T$ represents  the set of stopping times bounded by $T$, which means $\tau+\bar{\tau}$ should be read as $\min \{ \tau + \bar{\tau}, T \}$. Actually, we claim that in order to conclude tightness, it is enough to show
last result but for $G \in C_{c0}^2(\mathbb{R})$.
Indeed, given $G \in C_{c}^2(\mathbb{R})$, there exists $(G_k)_{k \geq 1} \subset  C_{c0}^2(\mathbb{R})$ such that $\lim_{k \rightarrow \infty} \|G-G_k \|_{1, \mathbb{R}}=0$ and since the probability in \eqref{T1sdif} is bounded by
\begin{align*}
&  {\mathbb{P}}_{\mu _{n}}\Big[\eta_{\cdot}^{n}\in \mcb D ( [0,T], \Omega) :\left\vert \langle\pi^{n}_{\tau+ \bar\tau},G_k\rangle-\langle\pi^{n}_{\tau},G_k\rangle\right\vert > \frac{\ve}{2}  \Big] \\
+ & {\mathbb{P}}_{\mu _{n}}\Big[\eta_{\cdot}^{n}\in \mcb D ( [0,T], \Omega) :\left\vert \langle\pi^{n}_{\tau+ \bar\tau},G - G_k \rangle-\langle\pi^{n}_{\tau}, G - G_k \rangle\right\vert > \frac{\ve}{2}  \Big],
\end{align*}  
we can conclude the claim.

 From Dynkin's formula, see, for example  Appendix 1 of \cite{kipnis1998scaling},  for every $t \geq 0$ we have that
\begin{equation} \label{defMnt}
\mcb M_{t}^{n}(G) = \langle \pi_{t}^{n},G(t, \cdot) \rangle - \langle \pi_{0}^{n},G (0, \cdot) \rangle  - \int_0^t ( n^{\gamma} \mcb {L}_n + \partial_s) \langle  \pi_{s}^{n},G (s, \cdot) \rangle ds,
\end{equation}
is a martingale for every $G$ sufficiently smooth.  We observe that \eqref{T1sdif} is a direct consequence of  the next result combined with  Markov's inequality.

\begin{prop} \label{tightcond1dif}
For $G \in C_{c0}^2(\mathbb{R})$, it holds
\begin{align*}
\lim_{\delta \rightarrow 0^+} \limsup_{n \rightarrow \infty} \sup_{\tau \in \mathcal{T}_T, \bar{\tau} \leq \delta} \mathbb{E}_{\mu_n} \left[ \Big| \int_{\tau}^{\tau+ \bar{\tau}} n^{\gamma}  \mcb L_{n}\langle \pi_{s}^{n},G\rangle ds \Big| \right] = 0.
\end{align*}
\end{prop}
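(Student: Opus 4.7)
The plan is to reduce the proposition to the single uniform bound
$$\sup_{n \geq 1} \sup_{\eta \in \Omega} \big| n^\gamma \mcb L_n \langle \pi^n(\eta, du), G \rangle \big| \leq C_G,$$
where $C_G$ depends only on $G$ (and on $\alpha, \beta, \gamma$). Once this is in hand, the triangle inequality inside the expectation gives
$$\mathbb{E}_{\mu_n} \Big[ \Big| \int_\tau^{\tau+\bar\tau} n^\gamma \mcb L_n \langle \pi^n_s, G \rangle\, ds \Big| \Big] \leq \bar\tau \cdot C_G \leq \delta\, C_G,$$
which vanishes as $\delta \to 0^+$ uniformly in $n$ and $\tau \in \mathcal{T}_T$.

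To establish the uniform bound, I first compute $n^\gamma \mcb L_n \langle \pi^n, G \rangle$ explicitly. Using the identity $\xi^n_{x,y}(\eta)[\eta(y) - \eta(x)] = c^n_{x,y}[\eta(y) - \eta(x)]$, with $c^n_{x,y}$ equal to $1$ on $\mcb F$ and to $\alpha n^{-\beta}$ on $\mcb S$, together with the symmetries of $p$ and of $c^n$, I obtain
$$n^\gamma \mcb L_n \langle \pi^n, G \rangle = \frac{n^{\gamma-1}}{2} \sum_{y \in \mathbb{Z}} \eta(y) \sum_{x \neq y} p(x-y)\, c^n_{x,y}\, [G(x/n) - G(y/n)].$$
I then write $c^n_{x,y} = 1 - (1 - \alpha n^{-\beta})\mathbbm{1}_{\{x,y\} \in \mcb S}$ and decompose the right-hand side into a \emph{fast} part (with $c^n_{x,y}$ replaced by $1$) and a \emph{slow-bond correction} supported on $\mcb S \subset \mcb S_0 = \{\{x,y\}: x < 0 \leq y\}$. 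The bounds $|\eta(y)| \leq 1$ and $|1 - \alpha n^{-\beta}| \leq 1 + \alpha$ reduce the task to two purely deterministic estimates.

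For the fast part, the symmetry of $p$ gives, for each fixed $y$,
$$T_y := \sum_{x \neq y} p(x-y)[G(x/n) - G(y/n)] = c_\gamma \sum_{k \geq 1} k^{-\gamma-1} \Big[ G\Big( \tfrac{y+k}{n} \Big) + G\Big( \tfrac{y-k}{n} \Big) - 2 G\Big( \tfrac{y}{n} \Big) \Big],$$
whose second difference is bounded by $\min\{4\|G\|_\infty,\,\|G''\|_\infty k^2/n^2\}$ and, via the integral remainder form of Taylor's theorem, vanishes whenever $[(y-k)/n, (y+k)/n]$ is disjoint from $\mathrm{supp}\,G$. Splitting the sum at $k = n$ and combining with this support control yields $\sum_y |T_y| \lesssim n^{1-\gamma}$, in agreement with the Riemann-sum heuristic $T_y \approx n^{-\gamma}[-(-\Delta)^{\gamma/2} G](y/n)$ (the latter being well-defined and integrable by Proposition \ref{lapfracglob}). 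For the slow-bond correction, the vanishing of $G$ on $[-\bar b_G, \bar b_G]$ combined with $\mcb S \subset \mcb S_0$ forces every nonzero term to satisfy $y \geq \bar b_G n$ or $-x \geq \bar b_G n$; the crude bound $|G(x/n) - G(y/n)| \leq 2 \|G\|_\infty$ together with $\sum_{|k| \geq \bar b_G n} c_\gamma |k|^{-\gamma-1} \lesssim (\bar b_G n)^{-\gamma}$ produces a contribution of order $n^{1-\gamma}$ (with an additional factor $n^{-\beta}$ when $\beta > 0$). After multiplying by $n^{\gamma-1}/2$, both parts are $O(1)$. The hard part will be the fast-bond estimate: a naive triangle inequality applied before symmetrization produces $\sum_y |T_y| \gtrsim n^{1-\gamma} \log n$ at $\gamma = 1$ and worse for $\gamma > 1$ with $y$ outside $\mathrm{supp}\,G$, so both the cancellation in the symmetrized second difference and its vanishing outside $\mathrm{supp}\,G$ must be carefully exploited.
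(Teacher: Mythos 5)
Your reduction of the proposition to the deterministic uniform bound $\sup_n\sup_\eta|n^\gamma\mcb L_n\langle\pi^n,G\rangle|\lesssim 1$, followed by the crude $\bar\tau$-integration, is exactly the paper's first step, and the identity you derive for $n^\gamma\mcb L_n\langle\pi^n,G\rangle$ is correct. Where you diverge is in how the uniform bound is established. The paper splits into the regimes of Theorem \ref{hydlim} (the thin-barrier case, using \eqref{princsdif}--\eqref{extrasdif}, and the thick-barrier case, using \eqref{sneuterm}--\eqref{srobterm}) and in each case invokes the appendix convergence results (Propositions \ref{convdisc}, \ref{neum1}, \ref{lemconvneum}, \ref{convslow}) that compare the discrete generator to the continuous fractional Laplacians; boundedness then follows because the limits are integrable. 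You instead work with a single, regime-free decomposition $c^n_{x,y}=1-(1-\alpha n^{-\beta})\mathbbm{1}_{\mcb S}$ and bound both pieces directly at the discrete level: the symmetrized second-difference structure together with the support control $\mathrm{supp}\,G\subset[-b_G,-\bar b_G]\cup[\bar b_G,b_G]$ gives $\sum_y|T_y|\lesssim n^{1-\gamma}$ for the fast part (splitting at $k\asymp n$ and using a window/telescoping estimate for $|y|>2b_Gn$), and the vanishing of $G$ on $[-\bar b_G,\bar b_G]$ forces every nonzero slow-bond term to satisfy $|y-x|\gtrsim n$, giving the same $n^{1-\gamma}$ bound there. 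Your route is more self-contained and avoids the case distinction entirely, because you exploit $G\in C_{c0}^2(\mathbb R)$ uniformly for the slow bonds rather than using condition \eqref{defsigmas} in one regime and the regional operator in the other; the paper's route is set up to reuse machinery (the appendix propositions) that is in any case needed to characterize the limit points, so no effort is duplicated there. One small inaccuracy: in your decomposition the coefficient in front of the slow-bond correction is $(1-\alpha n^{-\beta})$, which is $O(1)$ for all $\beta\ge 0$, so the remark about ``an additional factor $n^{-\beta}$ when $\beta>0$'' does not correspond to your own splitting (it would if you had instead split $c^n_{x,y}=\mathbbm{1}_{\mcb F}+\alpha n^{-\beta}\mathbbm{1}_{\mcb S}$); this does not affect the conclusion since $O(1)$ is all you need.
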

\begin{prop} \label{tightcond2dif}
Let $G \in \mcb S_{Rob}$ if $\mcb S = \mcb S_0$, $\gamma \in (1,2)$ and $\beta \in [\gamma-1,\infty)$, and let $G \in \mcb S_{Dif}$ otherwise. We have
\begin{align*}
\lim_{\delta \rightarrow 0^+} \limsup_{n \rightarrow \infty} \sup_{\tau \in \mathcal{T}_T, \bar{\tau} \leq \delta} \mathbb{E}_{\mu_n} \left[ \left( \mcb M_{\tau}^{n}(G) -  \mcb M_{\tau+\bar{\tau}}^{n}(G) \right)^2 \right] = 0.
\end{align*}
\end{prop}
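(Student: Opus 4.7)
The plan is to control the $L^2$ increment of the Dynkin martingale by its predictable quadratic variation. Since $\mcb M^n_t(G)^2 - \langle \mcb M^n(G) \rangle_t$ is a martingale vanishing at $t = 0$, applying Doob's optional sampling theorem at the bounded stopping times $\tau$ and $\tau + \bar\tau$ (the latter capped at $T$) yields
\begin{align*}
\mathbb E_{\mu_n}\!\bigl[\bigl(\mcb M^n_{\tau + \bar\tau}(G) - \mcb M^n_\tau(G)\bigr)^2\bigr] = \mathbb E_{\mu_n}\!\bigl[\langle \mcb M^n(G) \rangle_{\tau + \bar\tau} - \langle \mcb M^n(G) \rangle_\tau\bigr].
\end{align*}
A direct computation of the carr\'e du champ of $\langle \pi^n_s, G(s, \cdot) \rangle$ under the accelerated generator $n^\gamma \mcb L_n$ gives
\begin{align*}
\langle \mcb M^n(G) \rangle_t = \!\int_0^t\! \frac{n^{\gamma - 2}}{2} \!\!\sum_{\{x, y\} \in \mcb B}\!\!\!\! p(x - y)\, \xi^n_{x, y}(\eta^n_s) \bigl(G(s, y/n) - G(s, x/n)\bigr)^2\!\bigl(\eta^n_s(x) - \eta^n_s(y)\bigr)^2\! ds.
\end{align*}
Since the integrand is controlled pointwise in $\eta$ by a deterministic expression, it is enough to establish an $s$-uniform bound of order $O(n^{-1})$ for the double sum, as the outer $\bar\tau \leq \delta$ then produces the required vanishing.

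Splitting $\mcb B = \mcb F \cup \mcb S$ and using $\xi^n_{x, y} \leq 1$ on $\mcb F$, the fast-bond part is controlled by noting that $G(s, \cdot)$ is smooth with compact support on each of $\mathbb R_-^*$ and $\mathbb R_+^*$. The classical two-scale estimate $\bigl(G(s, y/n) - G(s, x/n)\bigr)^2 \leq \min\bigl\{\|\partial_u G\|_\infty^2 |y - x|^2 / n^2,\, 4 \|G\|_\infty^2\bigr\}$ (valid for pairs on the same side of $0$, and trivially for pairs where one endpoint is outside $\mathrm{supp}(G)$) combined with $p(z) = c_\gamma |z|^{-\gamma - 1}$ and the compact support of $G$ gives, after a routine splitting at $|y-x| = n$,
\begin{align*}
\sum_{\{x, y\} \in \mcb F} p(x - y) \bigl(G(s, y/n) - G(s, x/n)\bigr)^2 \lesssim n^{1 - \gamma},
\end{align*}
so the fast-bond contribution to the integrand is $O(n^{-1})$.

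For the slow-bond part we use $\xi^n_{x, y} \leq \alpha n^{-\beta}$. When $G \in \mcb S_{Dif}$, the global smoothness of $G$ allows the same two-scale estimate, producing a contribution of order $n^{-1 - \beta}$. When $G \in \mcb S_{Rob}$ (which, in this proposition, forces $\gamma \in (1, 2)$ and $\beta \geq \gamma - 1$), $G$ may jump at $0$ and we employ the crude bound $\bigl(G(s, y/n) - G(s, x/n)\bigr)^2 \leq 4 \|G\|_\infty^2$; using $\gamma > 1$ one obtains, for $G$ supported in $[-b_G, b_G]$ and with $z = y - x$,
\begin{align*}
\sum_{\substack{-b_G n \leq x < 0 \\ 0 \leq y \leq b_G n}} p(y - x) \leq c_\gamma \sum_{z = 1}^{2 b_G n} z \cdot z^{-\gamma - 1} = O(1),
\end{align*}
yielding a slow-bond contribution of order $n^{\gamma - 2 - \beta} \leq n^{-1}$. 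Combining both estimates gives $\mathbb E_{\mu_n}\!\bigl[(\mcb M^n_{\tau + \bar\tau}(G) - \mcb M^n_\tau(G))^2\bigr] \leq \delta \cdot O(n^{-1})$, which vanishes as $n \to \infty$ uniformly in $\tau \in \mathcal{T}_T$ and $\bar\tau \leq \delta$. The only delicate point is the slow-bond estimate for $\mcb S_{Rob}$ test functions: the hypothesis $\beta \geq \gamma - 1$ is precisely what is needed so that the $n^{-\beta}$ decay of the slow-bond rate absorbs the $O(1)$ cost of the possibly discontinuous jump of $G$ at the origin.
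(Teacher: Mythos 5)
Your argument follows essentially the same path as the paper: after computing the quadratic variation via Dynkin's martingale, you split the bond sum into fast and slow parts and produce a deterministic $o(1)$ bound for the integrand, where $\beta \ge \gamma-1$ absorbs the slow-bond cost and the smoothness of $G$ (on each side of the origin) handles the fast bonds --- the paper packages your two-scale estimate into Proposition \ref{tight2condaux}, obtaining $\max\{n^{\gamma-2},n^{-1}\}$ rather than your slightly sharper $n^{-1}$, but this is immaterial. One small imprecision to flag: in the $\mcb S_{\mathrm{Rob}}$ slow-bond estimate you restrict to pairs with both $-b_Gn\le x<0$ and $0\le y\le b_Gn$, whereas the summand is also nonzero when only one endpoint lies in the support of $G$; the conclusion is unaffected since the unrestricted slow-bond sum is already bounded by $4\|G\|_\infty^2\, m$ with $m=\sum_{z\ge1}z\,p(z)<\infty$ (which is precisely the bound the paper invokes, using $\gamma>1$).
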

\begin{rem}
The statement of the previous result also includes functions $G$ which are time dependent and may be discontinuous at the origin. This general result  is not necessary here but it will be useful ahead. 
\end{rem}
Before presenting the proof of last results we first obtain the action of the generator on the empirical measure and we state it as a proposition since it will be used several times along the article.  Its proof is a simple  a long computation that is left to the reader. 
\begin{prop} \label{gensdif}
For any $G$, it holds   
\begin{align}
  \int_{0}^{t} n^{\gamma} \mcb L_{n}\langle \pi_{s}^{n},G(s, \cdot) \rangle ds   =&  \int_{0}^{t} \frac{1}{n} \sum _{x } n^{\gamma}  \mcb {K}_n G \left(s, \tfrac{x}{n} \right)  \eta_{s}^{n}(x) ds \label{princsdif}   \\
+&  \int_{0}^{t} \frac{n^{\gamma-1}}{2} \Big( 1 - \frac{\alpha}{n^{\beta}} \Big)  \sum_{\{x,y \} \in \mcb S }  [G(s,\tfrac{y}{n}) - G(s,\tfrac{x}{n})] p(y-x)  [\eta_s^n(y)-\eta_s^n(x)] ds \label{extrasdif} \\
=&  \int_{0}^{t}\sum_{ \{x,y\} \in \mcb F} n^{\gamma-1} [G(s, \tfrac{y}{n}) - G(s, \tfrac{x}{n}) ]  p(y-x)  \eta_s^n(x) ds \label{sneuterm} \\
+&  \int_{0}^{t} \alpha n^{\gamma-1-\beta}  \sum_{ \{x,y\} \in \mcb S}  [G(s, \tfrac{y}{n}) - G(s, \tfrac{x}{n}) ]  p(y-x)  \eta_s^n(x) ds, \label{srobterm}
\end{align}
where  
\begin{equation}\label{op_Kn}
\mcb{K}_n G \left( \tfrac{x}{n} \right): = \sum_{y} \left[ G( \tfrac{y}{n}) -G( \tfrac{x}{n}) \right] p(y-x) = \sum_{z } \left[ G( \tfrac{z+x}{n}) -G( \tfrac{x}{n})\right] p(z) .
\end{equation}
\end{prop}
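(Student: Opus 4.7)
The identity is a direct computation of the action of $\mcb L_n$ on the linear functional $\langle \pi_s^n, G(s,\cdot)\rangle$, and I would organize it in three short steps, interpreting $\sum_{\{x,y\}}$ throughout as a double sum over ordered pairs $(x,y)$ with $x\neq y$, so that the factor $\tfrac12$ in the definition of the generator corrects the double counting.

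First, since $\langle \pi_s^n, G(s,\cdot)\rangle = \tfrac1n\sum_x G(s,x/n)\eta(x)$ is linear in $\eta$, I apply $\mcb L_n$ site by site. Using $\eta^{x_0,z}(x_0)-\eta(x_0)=\eta(z)-\eta(x_0)$ together with the symmetry of the summand under $x\leftrightarrow y$, the generator yields
\begin{equation*}
\mcb L_n\eta(x_0) = \sum_{z\neq x_0}p(x_0-z)\,\xi^n_{x_0,z}(\eta)\,[\eta(z)-\eta(x_0)].
\end{equation*}
The key algebraic observation is that for $\eta(x),\eta(y)\in\{0,1\}$,
\begin{equation*}
[\eta(x)(1-\eta(y))+\eta(y)(1-\eta(x))]\,[\eta(y)-\eta(x)] = \eta(y)-\eta(x),
\end{equation*}
so $\xi^n_{x,z}(\eta)[\eta(z)-\eta(x)]$ equals $\eta(z)-\eta(x)$ on fast bonds and $(\alpha/n^\beta)[\eta(z)-\eta(x)]$ on slow bonds; the exclusion factor disappears entirely from the density evolution.

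Second, I apply the standard swap of dummy indices $x\leftrightarrow z$ together with the symmetry $p(x-z)=p(z-x)$ to transfer the increment from $\eta$ onto $G$: for any set $A\subseteq\mcb B$,
\begin{equation*}
\sum_x G(\tfrac{x}{n})\sum_{z:\{x,z\}\in A}p(x-z)\,[\eta(z)-\eta(x)] = \sum_x \eta(x)\sum_{z:\{x,z\}\in A}[G(\tfrac{z}{n})-G(\tfrac{x}{n})]\,p(x-z).
\end{equation*}
Applying this to $A=\mcb F$ with prefactor $n^{\gamma-1}$ and to $A=\mcb S$ with prefactor $\alpha n^{\gamma-1-\beta}$, and then integrating over $[0,t]$, already produces the representation \eqref{sneuterm}+\eqref{srobterm}.

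Finally, for the alternative representation \eqref{princsdif}+\eqref{extrasdif}, I regroup the two pieces by first summing over \emph{all} bonds at rate $1$, which by the definition \eqref{op_Kn} of $\mcb K_n$ gives the term \eqref{princsdif}, and then correcting on slow bonds by the factor $\alpha/n^\beta-1$. One more application of the $x\leftrightarrow z$ swap converts the asymmetric correction $\sum_x\eta(x)\sum_{z:\{x,z\}\in\mcb S}[G(z/n)-G(x/n)]p(x-z)$ into the antisymmetric expression $\tfrac12\sum_{x,z:\{x,z\}\in\mcb S}[G(z/n)-G(x/n)]p(x-z)[\eta(x)-\eta(z)]$; folding the sign of $(\alpha/n^\beta-1)$ into the increment produces precisely \eqref{extrasdif}, with its prefactor $\tfrac{n^{\gamma-1}}{2}(1-\alpha/n^\beta)$ and antisymmetric term $[\eta_s^n(y)-\eta_s^n(x)]$. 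The whole computation is mechanical bookkeeping; the only point requiring care is the sign juggling and the factor $\tfrac12$ arising from the final symmetrization, so I do not foresee any real obstacle.
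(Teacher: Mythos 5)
Your proof is correct and takes the only natural route; the paper explicitly leaves this computation to the reader (``Its proof is a simple a long computation''), and your three steps---apply $\mcb L_n$ to the linear functional, invoke the identity $[\eta(x)(1-\eta(y))+\eta(y)(1-\eta(x))][\eta(y)-\eta(x)]=\eta(y)-\eta(x)$ to strip the exclusion factor, then perform the $x\leftrightarrow z$ discrete summation by parts and regroup by $\mcb F$ and $\mcb S$---are exactly the intended bookkeeping. Your interpretation of $\sum_{\{x,y\}}$ as a sum over ordered pairs (with the $\tfrac12$ in $\mcb L_n$ correcting double counting) is the one that makes the stated coefficients consistent, and the final symmetrization step you describe, converting $\sum_x\eta(x)\sum_{z}[G(z/n)-G(x/n)]p(x-z)$ into the antisymmetric form $\tfrac12\sum_{x,z}[G(z/n)-G(x/n)]p(x-z)[\eta(x)-\eta(z)]$ before folding the sign of $(\alpha n^{-\beta}-1)$ into the increment, reproduces \eqref{extrasdif} with the correct prefactor.
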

Last identity has been written  in two different ways since we  will use \eqref{princsdif} and \eqref{extrasdif} when \eqref{defsigmas} holds; and \eqref{srobterm} and \eqref{sneuterm} otherwise.

\begin{proof}[Proof of Proposition \ref{tightcond1dif}]

The proof follows if we show that,   $\sup_{s \in [0,T]} |n^{\gamma}  \mcb L_{n}\langle \pi_{s}^{n},G\rangle| \lesssim 1.$

\textbf{I.)} If \eqref{defsigmas} holds, since $|\eta_{s}^{n}(x)| \leq 1$, then from   \eqref{princsdif} and \eqref{extrasdif}, we  have that
\begin{align*}
 |n^{\gamma}  \mcb L_{n}\langle \pi_{s}^{n},G\rangle| \leq& \frac{1}{n} \sum _{x } | n^{\gamma}  \mcb {K}_n G \left( \tfrac{x}{n} \right) -[-(- \Delta)^{\frac{\gamma}{2}}   G]  (\tfrac{x}{n} ) |  + \frac{1}{n} \sum _{x } | [-(- \Delta)^{\frac{\gamma}{2}}   G]  (\tfrac{x}{n} ) |  \\
 +& \frac{n^{\gamma-1}}{2} \Big( 1 - \frac{\alpha}{n^{\beta}} \Big)  \sum_{\{x,y \} \in \mcb S }  |G(s,\tfrac{y}{n}) - G(s,\tfrac{x}{n})| p(y-x).
\end{align*}
Now, from Proposition \ref{convdisc} and Proposition \ref{neum1}, we conclude that
\begin{align*}
|n^{\gamma}  \mcb L_{n}\langle \pi_{s}^{n},G\rangle| \lesssim \int_{\mathbb{R}} | [-(- \Delta)^{\frac{\gamma}{2}}   G]  (u ) | du \lesssim 1,
\end{align*}
and this ends the proof in case \eqref{defsigmas} holds. 
 
\textbf{II.)} If  \eqref{defsigmas} does not  hold, since $|\eta_{s}^{n}(x)| \leq 1$, from \eqref{srobterm} and \eqref{sneuterm}, we can write
\begin{align*}
& |n^{\gamma}  \mcb L_{n}\langle \pi_{s}^{n},G\rangle|  \\
\leq & \frac{1}{n}   \sum_{x}  \Big|n^{\gamma} \sum_{y: \{x,y \} \in \mcb F }  [G(\tfrac{y}{n}) - G(\tfrac{x}{n})] p(y-x) -  [-(- \Delta)_{\mathbb{R}^{*}}^{\frac{\gamma}{2}}   G]  (\tfrac{x}{n} )    \Big|   \\
+ & \alpha \frac{1}{n}   \sum_{x}  \Big|n^{\gamma-\beta} \sum_{y: \{x,y \} \in \mcb S }  [G(\tfrac{y}{n}) - G(\tfrac{x}{n})] p(y-x) - \mathbbm{1}_{\beta=0} \big( [-(- \Delta)^{\frac{\gamma}{2}}   G]  (\tfrac{x}{n} ) - [-(- \Delta)_{\mathbb{R}^{*}}^{\frac{\gamma}{2}}   G]  (\tfrac{x}{n} )   \big) \Big|   \\
 +&  \frac{1}{n} \sum _{x } \Big|  [-(- \Delta)_{\mathbb{R}^{*}}^{\frac{\gamma}{2}}   G]  (\tfrac{x}{n} ) \Big|   + \alpha \mathbbm{1}_{\beta=0}  \frac{1}{n} \sum _{x }\Big | [-(- \Delta)^{\frac{\gamma}{2}}   G]  (\tfrac{x}{n} ) - [-(- \Delta)_{\mathbb{R}^{*}}^{\frac{\gamma}{2}}   G]  (\tfrac{x}{n} ) \Big| . 
\end{align*}
Then from Proposition \ref{lemconvneum} and Proposition \ref{convslow},  we have
\begin{align*}
|n^{\gamma}  \mcb L_{n}\langle \pi_{s}^{n},G\rangle| \lesssim \alpha \int_{\mathbb{R}} | [-(- \Delta)^{\frac{\gamma}{2}}   G]  (u ) | du + (\alpha + 1) \int_{\mathbb{R}} | [-(- \Delta)_{\mathbb{R}^{*}}^{\frac{\gamma}{2}}   G]  (u ) | du \lesssim 1.
\end{align*}
\end{proof}
\begin{proof}[Proof of Proposition \ref{tightcond2dif}]
From Dynkin's formula
\begin{align*}
\mathbb{E}_{\mu_n} \left[ \left( \mcb M_{\tau}^{n}(G) -  \mcb M_{\tau+\bar{\tau}}^{n}(G) \right)^2 \right] =
 \mathbb{E}_{\mu_n} \Big[ \int_{\tau}^{\tau+\bar{\tau}}n^{\gamma} \left(\mcb L_{n}[\langle \pi_{s}^{n},G (s, \cdot) \rangle]^{2}-2\langle \pi_{s}^{n},G(s, \cdot) \rangle \mcb L_{n} \langle \pi_{s}^{n},G (s, \cdot) \rangle\right) ds \Big].
\end{align*}
Simple computations show that for $G\in \mcb S_{Rob}$  (which also includes $G\in \mcb S_{Dif}$),  the integrand function in last display is equal to 
\begin{align*}
& \frac{n^{\gamma-2}}{2}  \sum_{\{w,z \} \in \mcb F } \big[ G\left(s, \tfrac{w}{n}\right) - G\left(s, \tfrac{z}{n}\right) ]{^2}  p(w-z) [\eta(w)-\eta(z)]^2 \\
+&  \frac{\alpha n^{\gamma - 2}}{2n^{\beta}} \sum_{\{w,z \} \in \mcb S } \big[ G\left(s, \tfrac{w}{n}\right) - G\left(s, \tfrac{z}{n}\right) ]^2  p(w-z) [\eta(w)-\eta(z)]^2.
\end{align*}
Since $ |\eta_s^n(z)|\leq 1, \; \forall z \in \mathbb{Z}$,  if we  assume $G \in \mcb S_{{Dif}}$, then last display can be bounded from above by
\begin{align*}
 \frac{\alpha +1}{2} n^{\gamma-2} \sum_{w,z }     \big[G\left(s,\tfrac{z}{n}\right)-G\left(s,\tfrac{w}{n}\right)\big ]^2 p(z-w)
\end{align*}
and from  Proposition \ref{tight2condaux} the proof ends. Finally, if we assume  $G \in \mcb S_{{Rob}}$ and  $\mcb S = \mcb S_0$, $\gamma \in (1,2)$ and $\beta \geq \gamma-1$, then the integrand function can be bounded from above by 
\begin{align*}
&  \frac{1}{2} \sum_{w=0}^{\infty} \sum_{z=0}^{\infty}  \big[ G_{+} \left(s, \tfrac{w}{n}\right) - G_{+}\left(s,\tfrac{z}{n}\right) ]^2  p(w-z)  
 + \frac{1}{2} \sum_{w=-\infty}^{-1} \sum_{z=-\infty}^{-1}  \big[ G_{-} \left(s, \tfrac{w}{n}\right) - G_{-} \left(s,\tfrac{z}{n}\right) ]^2  p(w-z)  \\
+&  \frac{\alpha}{2n} \sum_{\{w,z \} \in \mcb S } \big[ G\left(s, \tfrac{w}{n}\right) - G\left(s, \tfrac{z}{n}\right) ]^2  p(w-z)  \lesssim \max\{ n^{\gamma-2}, n^{-1} \} +  n^{-1} \lesssim \max\{ n^{\gamma-2}, n^{-1} \}. 
\end{align*}
Above we applied Proposition \ref{tight2condaux}  for $G_{+}$ and $G_{-}$; furthermore, we note the sum over $\{w,z \} \in \mcb S$ in  last line is bounded by $m (2 \| G \|_{\infty})^2$, since $\gamma >1$. 
This ends the proof. 
\end{proof}
\section{Characterization of limit points} \label{seccharac}
In this section we characterize the limit point $\mathbb Q$ of the sequence $(\mathbb{Q}_{n})_{ n \geq 1 }$, whose existence is a consequence  of  the results  of last section. We first observe that since we deal with an exclusion process, according to \cite{kipnis1998scaling}, $\mathbb{Q}$ is concentrated on trajectories $\pi_t(du)$ which are absolutely continuous with respect to the Lebesgue measure, that is $\pi_t(du)= \varrho(t,u)du$.
Now we want  to show that $\mathbb{Q}$ is concentrated on trajectories $\varrho$ satisfying the first condition of weak solutions of our hydrodynamic equations. Before doing that, we first show that  the third condition of weak solutions of \eqref{eqhydfracdifnotuniq} holds for $\gamma \in (1,2)$ and $\beta \in [0, \gamma - 1)$. 
\begin{prop} \label{dircond}
Assume $\gamma \in (1,2)$ and $\beta \in [0, \gamma - 1)$. Under the limit point  $\mathbb{Q}$ we have
\begin{align*}
\mathbb{Q} \Big( \pi_{ \cdot} \in \mcb D \big([0,T], \mcb{M}^+(\mathbb{R}) \big):  \int_0^t [\varrho(s,0^{+}) - \varrho(s,0^{-})] ds = 0,  \forall t \in [0,T]   \Big) = 1.
\end{align*}
\end{prop}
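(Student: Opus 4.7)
The plan is to test Dynkin's martingale identity against a time-independent $G\in \mcb S_{Rob}$ with a unit jump at the origin, concretely $G_-\equiv 0$ and $G_+\in C_c^\infty(\mathbb{R})$ with $G_+(0)=1$, and then exploit the fact that for $\beta<\gamma-1$ the prefactor $\alpha n^{\gamma-1-\beta}$ in front of the slow-bond sum \eqref{srobterm} blows up while every other term in the identity remains of order one.

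Using Proposition \ref{gensdif} I write
\begin{align*}
\langle \pi_t^n, G\rangle - \langle \pi_0^n, G\rangle = \mcb M_t^n(G) + \int_0^t \mathrm{Fast}_n(s)\,ds + \alpha n^{\gamma-1-\beta}\int_0^t \mathrm{Slow}_n(s)\,ds,
\end{align*}
where $\mathrm{Fast}_n$ and $\mathrm{Slow}_n$ are the sums displayed in \eqref{sneuterm} and \eqref{srobterm}, with their explicit prefactors removed. The boundary terms are bounded by $\|G\|_\infty\cdot|\mathrm{supp}(G)|$ uniformly in $n$; $\mathbb{E}[(\mcb M_t^n(G))^2]\lesssim t\max\{n^{\gamma-2},n^{-1}\}\to 0$, by the quadratic-variation computation in the proof of Proposition \ref{tightcond2dif}; and $\int_0^t|\mathrm{Fast}_n(s)|\,ds$ is uniformly bounded by $t\,\|[-(-\Delta)_{\mathbb{R}^*}^{\gamma/2}G]\|_{L^1(\mathbb{R}^*)}$ thanks to Proposition \ref{lemconvneum}. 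Therefore the slow-bond integral
\[
I_n(t):=\int_0^t\sum_{\{x,y\}\in\mcb S_0}[G(y/n)-G(x/n)]\,p(y-x)\,\eta_s^n(x)\,ds
\]
satisfies $\mathbb{E}_{\mu_n}|I_n(t)|=O(n^{1+\beta-\gamma})\to 0$, since $\beta+1-\gamma<0$.

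The key remaining step is to show that $I_n(t)$ simultaneously converges to a positive multiple of $\int_0^t[\varrho(s,0^-)-\varrho(s,0^+)]\,ds$. After symmetrizing with $p(\cdot)=p(-\cdot)$, the summand becomes $\sum_{x<0,\,y\geq 0}G_+(y/n)\,p(y-x)\,[\eta_s^n(x)-\eta_s^n(y)]$. Because $\gamma>1$, one has $\sum_{x<0,\,y\geq 0}p(y-x)=\sum_{w\geq 1}w\,p(w)=m<\infty$, so the weights localize near the origin. Using the moving-particle/replacement tools of Section \ref{secheurwithout} (already required in the critical regime $\beta=\gamma-1$, where this same limit produces the constant $\kappa=\alpha m$ of Theorem \ref{hydlim}), I replace the occupation variables $\eta_s^n(-k)$ and $\eta_s^n(j)$ at sites near the origin by the boundary traces $\varrho(s,0^-)$ and $\varrho(s,0^+)$, which are well defined for a.e.\ $s$ by Proposition \ref{holderrep}. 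The outcome is
\[
\lim_{n\to\infty}\mathbb{E}_{\mu_n}\Big|I_n(t)-m\int_0^t[\varrho(s,0^-)-\varrho(s,0^+)]\,ds\Big|=0,
\]
which combined with the preceding step (and $m>0$) yields $\int_0^t[\varrho(s,0^-)-\varrho(s,0^+)]\,ds=0$ for a.e.\ $t$; continuity in $t$ of the integrated trace upgrades this to every $t\in[0,T]$, which is the statement.

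The main obstacle is the identification of the limit of $I_n(t)$: since $p(y-x)\asymp|y-x|^{-\gamma-1}$ is heavy-tailed, a purely local block-average replacement cannot be used directly, and one must invoke the full moving-particle estimate from Section \ref{secheurwithout}. Once this input is available, the remainder of the argument reduces to the scaling book-keeping described above, together with a standard passage from $L^1(\mathbb{P}_{\mu_n})$-convergence to the $\mathbb Q$-a.s.\ identity.
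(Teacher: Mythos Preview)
Your Dynkin-scaling idea—that $\alpha n^{\gamma-1-\beta}\to\infty$ forces the slow-bond integral to vanish—is correct, but the writeup has a genuine gap: you conflate the microscopic and macroscopic levels of the argument.

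The displayed identity $\lim_{n\to\infty}\mathbb{E}_{\mu_n}\big|I_n(t)-m\int_0^t[\varrho(s,0^-)-\varrho(s,0^+)]\,ds\big|=0$ is ill-posed: the traces $\varrho(s,0^\pm)$ are defined only under the limit point $\mathbb{Q}$, not under $\mathbb{P}_{\mu_n}$. The replacement tools of Section~\ref{secheurwithout} let you replace $\eta_s^n(x)$ by the \emph{empirical averages} $\eta_s^{\to\varepsilon n}(0)$, $\eta_s^{\leftarrow\varepsilon n}(0)$, not by $\varrho(s,0^\pm)$. What your two steps actually yield, once written correctly, is
\[
\limsup_{\varepsilon\to 0^+}\limsup_{n\to\infty}\mathbb{E}_{\mu_n}\Big[\sup_{t\leq T}\Big|\int_0^t[\eta_s^{\to\varepsilon n}(0)-\eta_s^{\leftarrow\varepsilon n}(0)]\,ds\Big|\Big]=0.
\]
From here you still have to transfer the conclusion to $\mathbb{Q}$, and this passage is \emph{not} ``standard'': neither $I_n(t)$ nor the block averages are continuous functionals of $\pi^n$, so Portmanteau does not apply directly. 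One needs (i) the energy estimate (Proposition~\ref{estenergcombarlenfor}) together with Proposition~\ref{holderrep} to make sense of $\varrho(s,0^\pm)$ and to show $\varrho(s,0^\pm)\approx\langle\pi_s,\iota_\varepsilon^{0^\pm}\rangle$, and then (ii) an approximation of $\iota_\varepsilon^{0^\pm}$ by continuous functions before invoking Portmanteau. This is precisely the content of the paper's proof and cannot be waved away.

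The paper's route is also more direct at the microscopic level. Rather than going through Dynkin's formula with a discontinuous test function, it applies the last two items of Lemma~\ref{replemma}: for $\beta\in[0,\gamma-1)$ both $\eta_s^{\to\varepsilon n}(0)$ and $\eta_s^{\leftarrow\varepsilon n}(0)$ may be replaced by $\eta_s^n(0)$, hence their difference vanishes. The restriction $\beta<\gamma-1$ enters in the left-side replacement, where one must move mass across the slow barrier. Your Dynkin argument is a legitimate alternative way to reach the same microscopic estimate (and it has the virtue of making the role of the blow-up of $n^{\gamma-1-\beta}$ transparent), but it does not bypass the macroscopic machinery you dismissed.
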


To simplify notation in all what follows, we erase $\pi_{\cdot}$ from the sets where we  look at.
\begin{proof}
In order to prove the proposition, it is enough to verify, for any $\delta >0$
\begin{align} \label{3conda}
\mathbb{Q} \Big( \sup_{t \in [0,T]} \Big| \int_0^t [\varrho(s,0^{+}) - \varrho(s,0^{-})] ds \Big| > \delta  \Big) = 0.
\end{align}
For every $\varepsilon>0$ we define two approximations of the identity given by
\begin{equation} \label{aproxiden}
\iota_{\varepsilon}^{0^+}(u):= \frac{1}{\varepsilon} \mathbbm{1}_{(0, \varepsilon]}(u); \;  \quad \textrm{and}\quad \; \iota_{\varepsilon}^{0^-}(u):= \frac{1}{\varepsilon} \mathbbm{1}_{[-\varepsilon,0)}(u).
\end{equation}
 From Proposition \ref{estenergcombarlenfor} and Proposition \ref{holderrep}, there exists $C>0$ such that $\sup_{s \in [0,T]}| \rho(s,u) -\rho(s,v)| \leq C |u-v|^{\frac{\gamma-1}{2}}$ for $u, v$ such that $u v \geq 0$. This leads to
\begin{align*}
& \lim_{\varepsilon \rightarrow 0^{+}} \mathbb{Q} \Big( \sup_{t \in [0,T]} \Big| \int_0^t  \int_0^{\varepsilon} \frac{[\varrho(s,0^{+})-\varrho(s,u)]}{\varepsilon} du ds \Big| > \frac{\delta}{3}  \Big) \\
=& \lim_{\varepsilon \rightarrow 0^{+}} \mathbb{Q} \Big( \sup_{t \in [0,T]} \Big| \int_0^t  \int_{-\varepsilon}^{0} \frac{[\varrho(s,0^{-})-\varrho(s,u)]}{\varepsilon} du ds \Big| > \frac{\delta}{3}  \Big) = 0.
\end{align*}
To get \eqref{3conda} it is enough to prove that
\begin{align*}
\lim_{\varepsilon \rightarrow 0^{+}} \mathbb{Q} \Big( \sup_{t \in [0,T]} \Big| \int_0^t \int_{\mathbb{R}}  \varrho(s,u) [  \iota_{\varepsilon}^{0^+}(u) - \iota_{\varepsilon}^{0^-}(u) ] du ds \Big| > \frac{\delta}{3}  \Big) = 0.
\end{align*}
Since $\iota_{\varepsilon}^{0^+}$ and $\iota_{\varepsilon}^{0^-}$ are not continuous functions, we cannot use Portmanteau's Theorem directly. However, we can approximate these functions by a continuous function  in such a way that the error vanishes. When dealing with the continuous function we apply Portmanteau's theorem and then, by approximating again,  we go back to our original functions. Doing so,    we are reduced to prove that
\begin{align*}
\lim_{\varepsilon \rightarrow 0^{+}} \liminf_{n \rightarrow \infty} \mathbb{Q}_n \Big( \sup_{t \in [0,T]} \Big| \int_0^t \int_{\mathbb{R}}    [  \langle \pi_s^n, \iota_{\varepsilon}^{0^+} \rangle  -  \langle \pi_s^n, \iota_{\varepsilon}^{0^-} \rangle ]  du ds \Big| > \frac{\delta}{12}  \Big) = 0.
\end{align*}
For   $\ell \geq 1$we define  the empirical averages on a box of size $\ell$ around $0$ as 
\begin{equation} \label{medemp}
\eta^{\rightarrow \ell}(0):=\frac{1}{\ell} \sum_{y=1}^{\ell} \eta(y)   \; \; \text{and} \; \; \eta^{\leftarrow \ell}(0):=\frac{1}{\ell} \sum_{y=-\ell}^{-1} \eta(y).
\end{equation}
Hereinafter we interpret $\varepsilon n$ as $\lfloor \varepsilon n\rfloor$ and with this notation, the last double limit can be rewritten as
\begin{align*}
\lim_{\varepsilon \rightarrow 0^{+}} \liminf_{n \rightarrow \infty} \mathbb{P}_{\mu_n} \Big( \sup_{t \in [0,T]} \Big| \int_0^t  [ \eta_s^{\rightarrow \varepsilon n}(0) - \eta_s^{\leftarrow \varepsilon n}(0)]  ds \Big| > \frac{\delta}{12}  \Big) = 0.
\end{align*}
From $\beta \in [0, \gamma-1)$, Markov's inequality and taking $F(s)=1$ for every $s \in [0,T]$ in  Lemma \ref{replemma}, we get the result.
\end{proof}
Now we prove that in the case 
\begin{prop} \label{caraclimsembarlen}
Assume \eqref{defsigmas}. Then, the limit point is concentrated on trajectories of measures satisfying the first condition of Definition \ref{eq:dif}.
\begin{align*}
\mathbb{Q} \Big(\pi_{ \cdot} \in \mcb D([0,T], \mathcal{M}^+(\mathbb{R})): F_{FrDif}(t,\rho,G, \mcb g) = 0, \forall t \in [0,T], \forall G \in \mcb S_{Dif}   \Big) = 1.
\end{align*}
\end{prop}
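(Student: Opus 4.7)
The plan is to fix $G \in \mcb S_{\textrm{Dif}}$, $t \in [0,T]$, and $\delta > 0$, and to prove that
\begin{equation*}
\mathbb{Q}\big(\big|F_{\textrm{FrDif}}(t, \varrho, G, \mcb g)\big| > \delta\big) = 0.
\end{equation*}
Simultaneous validity for all $t \in [0,T]$ and all $G \in \mcb S_{\textrm{Dif}}$ (with $\mathbb{Q}$-probability one) then follows by restricting $t$ to a countable dense subset of $[0,T]$ and $G$ to a countable dense subset of $\mcb S_{\textrm{Dif}}$, and extending by the c\`adl\`ag property of $\pi_\cdot$ together with the continuity of the involved functionals in $G$. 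To estimate the displayed probability I would smooth the indicator of the event and apply Portmanteau's theorem, reducing the task to showing
\begin{equation*}
\lim_{n \to \infty} \mathbb{P}_{\mu_n}\big(|F_{\textrm{FrDif}}(t, \pi^n, G, \mcb g)| > \delta \big) = 0,
\end{equation*}
where $F_{\textrm{FrDif}}(t, \pi^n, G, \mcb g)$ denotes the natural substitution of $\pi_s^n(du)$ for $\varrho(s,u)\,du$ in the expression of Definition \ref{eq:dif}.

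Dynkin's formula \eqref{defMnt} together with Proposition \ref{gensdif} (specifically the identity \eqref{princsdif}--\eqref{extrasdif}) decomposes $F_{\textrm{FrDif}}(t, \pi^n, G, \mcb g)$ as the sum of four pieces: the martingale $\mcb M_t^n(G)$; the initial-data discrepancy $\langle \pi_0^n, G(0,\cdot)\rangle - \int_{\mathbb{R}} \mcb g(u) G(0,u)\,du$; a bulk approximation error
\begin{equation*}
\int_0^t \frac{1}{n} \sum_{x \in \mathbb{Z}} \Big(n^\gamma \mcb{K}_n G(s, \tfrac{x}{n}) - \big[-(-\Delta)^{\frac{\gamma}{2}} G\big](s, \tfrac{x}{n})\Big) \eta_s^n(x)\, ds,
\end{equation*}
measuring the mismatch between the discrete and the continuous fractional Laplacian; and the slow-bond correction given by \eqref{extrasdif}. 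The plan is then to prove that each of these four pieces tends to zero in probability as $n \to \infty$.

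The martingale vanishes in $L^2$ by Proposition \ref{tightcond2dif} applied with $\tau = 0$ and $\bar\tau = t$. The initial-data term vanishes in probability by the hypothesis that $(\mu_n)_{n \geq 1}$ is associated to $\mcb g$. The bulk error is controlled by Proposition \ref{convdisc}, which provides the needed sharp control on $n^\gamma \mcb{K}_n G - [-(-\Delta)^{\frac{\gamma}{2}} G]$; together with $|\eta_s^n(x)| \leq 1$, this guarantees that the Riemann-type sum above tends to zero deterministically. The slow-bond term is bounded in absolute value by a constant multiple of
\begin{equation*}
n^{\gamma-1}\Big(1 + \frac{\alpha}{n^\beta}\Big) \sum_{\{x,y\} \in \mcb S} |G(s, \tfrac{y}{n}) - G(s, \tfrac{x}{n})| \, p(y-x),
\end{equation*}
and Proposition \ref{neum1}, combined with the geometric hypothesis \eqref{defsigmas}, forces this bound to vanish.

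The main obstacle I foresee is the Portmanteau step itself: the functional $\pi_\cdot \mapsto F_{\textrm{FrDif}}(t, \pi, G, \mcb g)$ fails to be continuous on $\mcb D([0,T], \mcb M^+(\mathbb{R}))$ at trajectories having a jump at time $t$. This is circumvented by restricting $t$ to the (co-countable) set of $\mathbb{Q}$-continuity times of $s \mapsto \pi_s$, proving the claim for these $t$, and then extending to every $t \in [0,T]$ using the absolute continuity $\pi_t(du) = \varrho(t,u)\,du$ (which holds by the standard argument for exclusion processes, see \cite{kipnis1998scaling}) together with the continuity in $t$ of the remaining integral terms in $F_{\textrm{FrDif}}$.
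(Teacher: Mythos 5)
The decomposition you propose---martingale, initial-data discrepancy, bulk discretization error, slow-bond correction---is exactly the one the paper uses, obtained by combining Dynkin's formula \eqref{defMnt} with the split \eqref{princsdif}--\eqref{extrasdif}, and the supporting results you cite (Proposition \ref{tightcond2dif} for the martingale, the hypothesis on $(\mu_n)$ for the initial term, Proposition \ref{convdisc} for the bulk error, Proposition \ref{neum1} for the slow-bond piece) are the correct ones; the paper cites the derived Corollaries \ref{pdbgeq} and \ref{convbound}, which are just these propositions combined with $|\eta_s^n(x)| \leq 1$ and stated with $\sup_{t}$ inside the expectation. Your choice to fix $t$ first and extend by density afterwards is a workable alternative, but the paper instead places $\sup_{t \in [0,T]}$ inside the $\mathbb{Q}$-probability from the outset, which is more economical because the quoted corollaries already carry the $\sup_t$ for you and because it sidesteps the density-in-$t$ and density-in-$G$ bookkeeping you would otherwise have to make precise.

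The genuine gap is in the Portmanteau step. You correctly flag the discontinuity of $\pi_\cdot \mapsto \langle \pi_t, G(t,\cdot)\rangle$ at trajectories jumping at time $t$, but you miss a second and more substantive obstruction: the map $\pi_\cdot \mapsto \int_0^t \langle \pi_s, [-(-\Delta)^{\frac{\gamma}{2}} G](s,\cdot)\rangle\,ds$ is not continuous for the vague topology on $\mcb M^+(\mathbb{R})$, because $[-(-\Delta)^{\frac{\gamma}{2}} G](s,\cdot)$ is not compactly supported (it decays only like $|u|^{-\gamma-1}$). Concretely, $\mu_k := \mu + k^{\gamma+1}\delta_k$ converges vaguely to $\mu$, yet $\int [-(-\Delta)^{\frac{\gamma}{2}} G](s,\cdot)\,d\mu_k$ does not converge to $\int [-(-\Delta)^{\frac{\gamma}{2}} G](s,\cdot)\,d\mu$. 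Merely ``smoothing the indicator of the event'' leaves this untouched, since the functional itself is discontinuous, not just its composition with an indicator. The paper's remedy is to note that $[-(-\Delta)^{\frac{\gamma}{2}} G](s,\cdot) \in L^1(\mathbb{R})$, approximate it in $L^1(\mathbb{R})$ by a sequence in $C_c^\infty(\mathbb{R})$, apply Portmanteau for the compactly supported approximants, and control the error using $0 \leq \varrho \leq 1$ (so that the $L^1$ approximation error bounds the integral error uniformly over trajectories). You need to insert this $L^1$-approximation step before passing from $\mathbb{Q}$ to $\liminf_n \mathbb{Q}_n$.
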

\begin{proof}
In order to prove the proposition, it is enough to verify for any $\delta >0$ and $G \in \mcb S_{Dif}$ that
\begin{align} \label{cond1}
\mathbb{Q} \Big(\sup_{t \in [0,T]} |F_{FrDif}(t, \rho, G, g)| > \delta  \Big) = 0.
\end{align}
 We can bound the probability in \eqref{cond1} by the sum of the following two terms:
\begin{align}
\mathbb Q\Big( \sup_{t \in [0,T]}  \Big| & \int_{\mathbb{R}} \varrho(t,u) G(t,u) du -\varrho(0,u) G(0,u) du 
-  \int_0^t \int_{\mathbb{R}} \varrho(s,u) \big[ [ -(- \Delta)^{\frac{\gamma}{2}} ]  + \partial_s \big] G(s,u) du ds   \Big| > \dfrac{\delta}{2} \Big) \label{f1term1sdif}
\end{align}
and
\begin{equation*} \label{f1term2sdif}
\mathbb{Q} \Big( \sup_{t \in [0,T]}   \Big| \int_{\mathbb{R}} \left[ \varrho(0,u) - \mcb g(u) \right] G(0,u) du \Big| > \dfrac{\delta}{2}  \Big).
\end{equation*}
Since $\mathbb{Q}$ is a limit point of $(\mathbb{Q}_n)_{n \geq 1 }$, which is induced by  $(\mu_n)_{n  \geq 1}$ associated to the profile $\mcb g$, then  the last display is equal to zero. Therefore, it remains only to treat \eqref{f1term1sdif}. We first  observe that $ [ -(- \Delta)^{\frac{\gamma}{2}}  G (s, \cdot)]$ is not in $C_c^{\infty}(\mathbb{R})$. However, since  $ [ -(- \Delta)^{\frac{\gamma}{2}} G (s, \cdot)]  \in L^1(\mathbb{R})$, we can approximate $ [ -(- \Delta)^{\frac{\gamma}{2}} G (s, \cdot)]$ in $L^1(\mathbb{R})$ by a sequence in $C_c^{\infty}(\mathbb{R})$ in order to make use of Portmanteau's Theorem as  in the  previous proof.  Doing so, it is enough to treat
\begin{align*}
\liminf_{n \rightarrow \infty} \mathbb{Q}_n \Big(\sup_{t \in [0,T]}  \Big| & \int_{\mathbb{R}} \varrho(t,u) G(t,u) du -\int_{\mathbb{R}} \varrho(0,u) G(0,u) du  
\\
-&  \int_0^t \int_{\mathbb{R}} \varrho(s,u) \big[ [ -(- \Delta)^{\frac{\gamma}{2}} ]  + \partial_s \big] G(s,u) du ds   \Big| > \dfrac{\delta}{8} \Big). 
\end{align*}
From \eqref{defMnt} and Proposition \ref{gensdif}, we can bound the last probability by the sum of the next three terms
\begin{equation} \label{f1term1bdif}
\liminf_{n \rightarrow \infty} \mathbb{P}_{\mu_n}  \Big(  \sup_{t \in [0,T]}  |\mcb M_{t}^{n}(G)| > \frac{\delta}{24} \Big),
\end{equation}
\begin{align}
& \liminf_{n \rightarrow \infty}  \mathbb{P}_{\mu_n} \Big(  \sup_{t \in [0,T]}   \Big| \int_0^t \Big\{ \frac{1}{n} \sum _{x} n^{\gamma} \mcb{K}_n G \left(s,\tfrac{x}{n} \right)  \eta_{s}^{n}(x) -  \langle \pi_s^n, -(- \Delta)^{\frac{\gamma}{2}} G(s, \cdot) \rangle \Big\} ds \Big| > \frac{\delta}{24} \Big) \label{f1term1cdif}
\end{align}
and
\begin{align}
& \liminf_{n \rightarrow \infty}  \mathbb{P}_{\mu_n} \Big(  \sup_{t \in [0,T]} \Big| \int_{0}^{t}   \frac{n^{\gamma-1}}{2} \Big( 1 - \frac{\alpha}{n^{\beta}} \Big)   \sum_{\{x,y\} \in \mcb S}  [G(s,\tfrac{y}{n}) - G(s,\tfrac{x}{n})] p(y-x)  [\eta_s^n(y)-\eta_s^n(x)] ds \Big| > \frac{\delta}{24} \Big). \label{f1term1ddif}
\end{align}
From Doob's inequality and Proposition \ref{tightcond2dif}, the limit in \eqref{f1term1bdif} is equal to zero. From Corollary \ref{pdbgeq}, Corollary \ref{convbound} and Markov's inequality,  then \eqref{f1term1cdif} and \eqref{f1term1ddif}, respectively, are also equal to zero. 
\end{proof}

\begin{prop} \label{caraclimbetazero}
Assume $\mcb S = \mcb S_0$,  $\beta=0$ and $\alpha \neq 1$. Then
\begin{align*}
\mathbb{Q} \Big(\pi_{ \cdot} \in \mcb  D([0,T], \mcb{M}^+(\mathbb{R})): F_{FrDif2}(t,\rho,G, \mcb g, \alpha) = 0, \forall t \in [0,T], \forall G \in \mcb S_{\gamma}   \Big) = 1,
\end{align*}
where $\mcb S_{\gamma}= \mcb S_{ \textrm{Dif} }$ if $\gamma \in (1,2)$ and $\mcb S_{\gamma}= \mcb S_{ \textrm{Neu} }$ if $\gamma \in (0,1]$.
\end{prop}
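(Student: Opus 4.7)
The plan is to mimic the proof of Proposition \ref{caraclimsembarlen}, but using the second decomposition of the generator action provided in Proposition \ref{gensdif} (namely \eqref{sneuterm} and \eqref{srobterm}) and invoking the convergence results for the regional fractional Laplacian rather than for the fractional Laplacian on all of $\mathbb{R}$. As in the previous proof, it suffices to show that for every $\delta>0$ and every $G\in \mcb S_\gamma$,
\begin{equation*}
\mathbb{Q}\Big(\sup_{t\in[0,T]}|F_{FrDif2}(t,\varrho,G,\mcb g,\alpha)|>\delta\Big)=0.
\end{equation*}
Splitting off the initial condition term as before, and using that $(\mu_n)_{n\geq 1}$ is associated to $\mcb g$, we reduce the problem to controlling the remaining expression under $\liminf_{n\to\infty}\mathbb{Q}_n$.

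Invoking Dynkin's formula via \eqref{defMnt}, the main term is written as a martingale $\mcb M_t^n(G)$ plus $\int_0^t n^{\gamma}\mcb L_n\langle \pi_s^n, G(s,\cdot)\rangle\,ds$, minus the corresponding explicit integrals. The martingale contribution is handled by Doob's inequality together with Proposition \ref{tightcond2dif} (which covers the test function classes $\mcb S_\gamma$ we are working with). Using \eqref{sneuterm}-\eqref{srobterm} with $\beta=0$ and $\mcb S=\mcb S_0$, the generator action becomes
\begin{align*}
\sum_{\{x,y\}\in\mcb F} n^{\gamma-1}[G(s,\tfrac{y}{n})-G(s,\tfrac{x}{n})]p(y-x)\eta_s^n(x) +\alpha\sum_{\{x,y\}\in\mcb S_0} n^{\gamma-1}[G(s,\tfrac{y}{n})-G(s,\tfrac{x}{n})]p(y-x)\eta_s^n(x),
\end{align*}
and the key algebraic observation is to write it as $\alpha$ times the sum over all bonds plus $(1-\alpha)$ times the sum over fast bonds, i.e. over pairs $\{x,y\}$ with $xy\geq 0$. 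The first contribution equals $\alpha\tfrac{1}{n}\sum_x n^{\gamma}\mcb K_nG(s,\tfrac{x}{n})\eta_s^n(x)$, while the second is the natural discretization of the regional fractional Laplacian restricted to $\mathbb{R}^*$.

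Next I replace the discrete operators by their continuous counterparts inside the space integrals. The discretization of the full fractional Laplacian is handled by Corollary \ref{pdbgeq} (with the underlying $L^1$ bound coming from Proposition \ref{convdisc}), which yields convergence in probability to $\alpha\int_{\mathbb{R}}\varrho(s,u)[-(-\Delta)^{\gamma/2}G](s,u)\,du$. The discretization over fast bonds is handled by the analogous argument based on Proposition \ref{lemconvneum} (and, for $\beta=0$, the identification in Proposition \ref{convslow}), producing the complementary contribution $(1-\alpha)\int_{\mathbb{R}}\varrho(s,u)[-(-\Delta)_{\mathbb{R}^*}^{\gamma/2}G](s,u)\,du$. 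Summing these two limits reproduces the operator appearing in $F_{FrDif2}$ with $\kappa=\alpha$.

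The main obstacle is the application of Portmanteau's theorem: the test functions $G(s,\cdot)$ lie in $C_c^\infty(\mathbb{R}^*)$ (possibly discontinuous at the origin when $\gamma\in(1,2)$), and the function $[-(-\Delta)_{\mathbb{R}^*}^{\gamma/2}G](s,\cdot)$ has a singularity of order $|u|^{-\gamma}$ at $u=0$. As in the proof of Proposition \ref{caraclimsembarlen}, we bypass this by approximating both $[-(-\Delta)^{\gamma/2}G(s,\cdot)]$ and $[-(-\Delta)_{\mathbb{R}^*}^{\gamma/2}G(s,\cdot)]$ in $L^1(\mathbb{R})$ by sequences of continuous compactly supported functions (taking advantage of the integrability furnished by Propositions \ref{neum1}, \ref{lemconvneum} and \ref{convslow}), passing to the weak limit on the approximations, and then taking the approximation error to zero. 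For $\gamma\in(0,1]$ the restriction $G\in\mcb S_{Neu}$ makes these $L^1$ bounds immediate since the test function vanishes near $0$; for $\gamma\in(1,2)$, the $\mcb S_{Dif}$ class is admissible for both operators, and the approximation argument goes through verbatim. Collecting the estimates concludes the proof.
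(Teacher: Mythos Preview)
Your proof is correct and follows the same overall strategy as the paper. The only notable difference is the algebraic decomposition: you rewrite the generator action as $\alpha\cdot(\text{all bonds})+(1-\alpha)\cdot(\text{fast bonds})$ and then invoke Corollary~\ref{pdbgeq} for the first piece and Corollary~\ref{corconvfast} (via Proposition~\ref{lemconvneum}) for the second, whereas the paper keeps the original fast/slow split and matches the fast-bond sum to the regional Laplacian (Corollary~\ref{corconvfast}) and the slow-bond sum to the difference $[-(-\Delta)^{\gamma/2}]-[-(-\Delta)_{\mathbb{R}^*}^{\gamma/2}]$ (Corollary~\ref{corconvslow}). Both routes are equivalent; yours has the mild advantage of reusing Corollary~\ref{pdbgeq} verbatim rather than needing the separate Corollary~\ref{corconvslow}. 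Two minor remarks: your references to Propositions~\ref{neum1} and~\ref{convslow} are not actually needed in your decomposition, and for $\gamma\in(1,2)$ the test functions here lie in $\mcb S_{\textrm{Dif}}=P([0,T],C_c^{\infty}(\mathbb{R}))$ and are smooth on all of $\mathbb{R}$, so the only obstacle to Portmanteau is the lack of compact support of the Laplacians (not discontinuity at the origin), which is handled by the $L^1$ approximation you describe.
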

\begin{proof}
As in the previous proof,  it is enough to verify for any $\delta >0$ and $G \in \mcb S_{\gamma}$ that
\begin{align} \label{cond2}
\mathbb{Q} \Big(\sup_{t \in [0,T]} |F_{FrDif2}(t, \rho, G, g, \alpha)| > \delta  \Big) = 0.
\end{align}
 Now we observe that $ [ -(- \Delta)^{\frac{\gamma}{2}}  G (s, \cdot)]$ and $ [ -(- \Delta)_{\mathbb{R}^{*}}^{\frac{\gamma}{2}}  G (s, \cdot)]$ are not in $C_c^{\infty}(\mathbb{R})$. However, since $ [ -(- \Delta)^{\frac{\gamma}{2}} G (s, \cdot)] $ and $ [ -(- \Delta)_{\mathbb{R}^{*}}^{\frac{\gamma}{2}}  G (s, \cdot)]$ are in $L^1(\mathbb{R})$, we can approximate them in $L^1(\mathbb{R})$ by sequences in $C_c^{\infty}(\mathbb{R})$ so that we can use  Portmanteau's Theorem as we did in the previous proof. Since $\mathbb{Q}$ is a limit point of $(\mathbb{Q}_n)_{n \geq 1 }$,  induced by  $(\mu_n)_{n  \geq 1}$ associated to the profile $\mcb g$, to conclude, it is enough to prove that 
\begin{align*}
\liminf_{n \rightarrow \infty} \mathbb{Q}_n \Big( \sup_{t \in [0,T]} \Big| & \int_{\mathbb{R}} \varrho(t,u) G(t,u) du -\int_{\mathbb{R}} \varrho(0,u) G(0,u) du  -  \int_0^t \int_{\mathbb{R}} \varrho(s,u) \partial_s G(s,u) du ds \nonumber
\\
-&  \int_0^t \int_{\mathbb{R}} \varrho(s,u) \big[ \alpha [ -(- \Delta)^{\frac{\gamma}{2}} ]  + ( 1 - \alpha ) [ -(- \Delta)_{\mathbb{R}^{*}}^{\frac{\gamma}{2}} ] \big] G(s,u) du ds   \Big| > \dfrac{\delta}{8} \Big)=0.
\end{align*}
From \eqref{defMnt} and Proposition \ref{gensdif}, we can bound the limit in the last display by the sum of the next three terms
\begin{equation} \label{f2term1bsdif}
\liminf_{n \rightarrow \infty} \mathbb{P}_{\mu_n}  \Big(  \sup_{t \in [0,T]} |\mcb M_{t}^{n}(G)| > \frac{\delta}{24} \Big),
\end{equation}
\begin{align}
 \liminf_{n \rightarrow \infty}  \mathbb{P}_{\mu_n} \Big(   \sup_{t \in [0,T]}  \Big| \int_0^t \Big\{ &  n^{\gamma-1}  \!\! \!\!\!\! \!\!\sum_{ \{x,y\} \in \mcb F } \!\! \!\![ G(s, \tfrac{y}{n} ) - G(s, \tfrac{x}{n} ) ] p(y-x) \eta_s^n(x) -   \langle \pi_s^n, [-(- \Delta)^{\frac{\gamma}{2}}   G] (s, \cdot) \rangle  \Big\} ds \Big| > \frac{\delta}{24} \Big) \label{f2term1csdif},
\end{align}
\begin{align}
 \liminf_{n \rightarrow \infty}  \mathbb{P}_{\mu_n} \Big(   \sup_{t \in [0,T]}  \Big| \int_0^t \Big\{ & \alpha n^{\gamma-1-\beta}  \sum_{ \{x,y\} \in \mcb S } [ G(s, \tfrac{y}{n} ) - G(s, \tfrac{x}{n} ) ] p(y-x) \eta_s^n(x)  \nonumber  \\
- & \alpha \big( \langle \pi_s^n, [-(- \Delta)^{\frac{\gamma}{2}}   G] (s, \cdot) \rangle - \langle \pi_s^n, [-(- \Delta)_{\mathbb{R}^{*}}^{\frac{\gamma}{2}}   G] (s, \cdot) \rangle  \big) \Big\} ds \Big| > \frac{\delta}{24} \Big). \label{f2term1dsdif}
\end{align}
From Doob's inequality and Proposition \ref{tightcond2dif}, the limit in  \eqref{f2term1bsdif} is equal to zero. From Corollary \ref{corconvfast}, Corollary \ref{corconvslow} and Markov's inequality, then \eqref{f2term1csdif} and \eqref{f2term1dsdif}, respectively, are also equal to zero. This ends the proof.
\end{proof}
\begin{prop} \label{caraclimcombarlen}
Assume that $\mcb S = \mcb S_0$ and $\beta > 0$. Then
\begin{align*}
\mathbb{Q} \Big(\pi_{ \cdot} \in \mcb D([0,T], \mcb{M}^+(\mathbb{R})): F_{FrRob}(t, \rho ,G, g, m \alpha \mathbbm{1}_{\gamma \in (1,2)} \mathbbm{1}_{\beta = \gamma-1}  ) = 0, \forall t \in [0,T], \forall G \in \mcb S_{\gamma} \Big) = 1,
\end{align*}
where $\mcb S_{\gamma}=\mcb S_{Neu}$ when $\gamma \in (0,1]$ and $\mcb S_{\gamma}=\mcb S_{Rob0}$ when $\gamma \in (1,2)$.
\end{prop}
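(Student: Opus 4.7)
Following the template of Propositions \ref{caraclimsembarlen} and \ref{caraclimbetazero}, the plan is to show that for every $\delta>0$ and every $G \in \mcb{S}_\gamma$,
\begin{equation*}
\mathbb{Q}\Big(\sup_{t\in[0,T]} |F_{\mathrm{FrRob}}(t,\varrho,G,\mcb{g},\kappa)| > \delta\Big) = 0,
\end{equation*}
with $\kappa = m\alpha\,\mathbbm{1}_{\gamma\in(1,2)}\mathbbm{1}_{\beta=\gamma-1}$. A convenient first observation is that, for $G$ in the admissible class $\mcb{S}_\gamma$, the Robin boundary contribution in $F_{\mathrm{FrRob}}$ is automatically zero: when $\gamma \in (0,1]$, $G \in \mcb{S}_{Neu}$ vanishes in a neighbourhood of the origin, so $G(s,0^{\pm})=0$; when $\gamma \in (1,2)$, $G \in \mcb{S}_{Rob0}$ satisfies $G_{+}(s,0)=G_{-}(s,0)$. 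Thus $\kappa$ plays no active role in this proposition, and the task reduces to verifying a Neumann-type identity for the regional fractional Laplacian on $\mathbb{R}^{*}$.

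Using Dynkin's formula \eqref{defMnt} together with the generator decomposition \eqref{sneuterm}--\eqref{srobterm} of Proposition \ref{gensdif}, I would split the probability into four contributions: the initial-condition piece is zero by the association of $(\mu_n)$ to $\mcb{g}$; the martingale piece vanishes by Doob's inequality and Proposition \ref{tightcond2dif}; the fast-bond piece converges to $\langle \pi_s^n, [-(-\Delta)_{\mathbb{R}^{*}}^{\gamma/2}]G(s,\cdot)\rangle$ via Corollary \ref{corconvfast} (analogously to Proposition \ref{caraclimbetazero}), after approximating $[-(-\Delta)_{\mathbb{R}^{*}}^{\gamma/2}]G(s,\cdot)$ in $L^1(\mathbb{R})$ by $C_c^{\infty}(\mathbb{R})$ functions so that Portmanteau's theorem becomes applicable. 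The structural reason why the \emph{regional} operator emerges here is that the fast-bond set $\mcb{F}$ excludes all bonds crossing the origin, since $\mcb{S}=\mcb{S}_{0}$.

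The heart of the argument will then be to show that the slow-bond contribution \eqref{srobterm}, prefactored by $\alpha n^{\gamma-1-\beta}$, is negligible in probability. When $\gamma \in (0,1]$ and $\beta > 0$, I would exploit that $G \in \mcb{S}_{Neu}$ is supported in $\{\bar{b}_G \leq |u| \leq b_G\}$, so only pairs $\{x,y\}\in\mcb{S}_{0}$ with $|y-x| \geq \bar{b}_G n$ contribute, and the number of such pairs with jump-length $z$ is $O(n)$; this yields
\begin{equation*}
n^{\gamma-1-\beta}\sum_{\{x,y\}\in\mcb{S}_{0}}|G(s,\tfrac{y}{n})-G(s,\tfrac{x}{n})|\,p(y-x) \lesssim n^{\gamma-\beta}\sum_{z \geq \bar{b}_G n} z^{-\gamma-1} \lesssim n^{-\beta}.
\end{equation*}
When $\gamma \in (1,2)$ and $\beta > 0$, the identity $G_{+}(s,0)=G_{-}(s,0)$ combined with smoothness of the restrictions delivers the bound $|G(s,y/n)-G(s,x/n)| \lesssim \min(|y-x|/n,\, 2\|G\|_{\infty})$; splitting the sum at $z=n$,
\begin{equation*}
n^{\gamma-1-\beta}\sum_{\{x,y\}\in\mcb{S}_{0}}|G(s,\tfrac{y}{n})-G(s,\tfrac{x}{n})|\,p(y-x) \lesssim n^{\gamma-2-\beta}\sum_{z=1}^{n} z^{1-\gamma} + n^{\gamma-1-\beta}\sum_{z > n} z^{-\gamma} \lesssim n^{-\beta}.
\end{equation*}
Markov's inequality then closes the step in both regimes.

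The principal obstacle is not in these slow-bond estimates but rather in the fast-bond convergence: because $\mcb{F}$ omits a non-local set of bonds, identifying the discrete sum as a Riemann sum of the regional fractional Laplacian on the unbounded domain $\mathbb{R}^{*}$ is delicate and rests on the sharp asymptotic estimates developed in Proposition \ref{lemconvneum} and the other preparatory results of Appendix \ref{secuseres}. Once that convergence is available, the remaining steps run in parallel with Propositions \ref{caraclimsembarlen} and \ref{caraclimbetazero}.
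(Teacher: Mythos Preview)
Your proof is correct and takes a more direct route than the paper's. Your key observation---that the Robin boundary term in $F_{\mathrm{FrRob}}$ vanishes identically for $G\in\mcb{S}_\gamma$ (since $G(s,0^{+})=G(s,0^{-})$ when $G\in\mcb{S}_{Rob0}$ and $G(s,0^{\pm})=0$ when $G\in\mcb{S}_{Neu}$)---lets you bypass the approximations of identity $\iota_\varepsilon^{0^{\pm}}$ altogether and reduce the slow-bond contribution to an elementary $O(n^{-\beta})$ estimate; this is essentially the content of Proposition~\ref{convslow}, though you reach it via a Lipschitz-plus-truncation bound rather than a H\"older exponent $\delta<\gamma-1$. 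The paper instead retains the Robin term explicitly, introduces $\iota_\varepsilon^{0^{\pm}}$ to express $\varrho(s,0^{\pm})$ as limits of empirical averages, and invokes Proposition~\ref{convrob} (which rests on the replacement-lemma machinery of Section~\ref{secheurwithout}) to match the slow-bond sum with the Robin correction. For the proposition \emph{as stated} this is more than needed, but it buys something: the paper's argument works verbatim for the larger class $G\in\mcb{S}_{Rob}$, which is the test space actually required by the weak formulation \eqref{eqhydfracdifrob} when $\beta\geq\gamma-1$; your shortcut does not extend there, since for $G\in\mcb{S}_{Rob}\setminus\mcb{S}_{Rob0}$ the slow-bond term genuinely produces a nonzero Robin contribution in the limit. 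One minor caveat you share with the paper: Proposition~\ref{tightcond2dif} is stated only for $G\in\mcb{S}_{Dif}$ when $\beta<\gamma-1$, so you should remark that its proof extends to $G\in\mcb{S}_{Rob0}$ (the slow-bond quadratic variation is $O(n^{-1-\beta})$ by the same Lipschitz-plus-truncation count).
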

\begin{proof}
As above, we are left to verify, for any $\delta >0$ and $G \in \mcb S_{\gamma}$, that
\begin{align*}
\mathbb{Q} \Big( \sup_{t \in [0,T]} |F_{FrRob}(t, \rho,G, g, m \alpha \mathbbm{1}_{\gamma \in (1,2)} \mathbbm{1}_{\beta = \gamma-1})| > \delta  \Big) = 0.
\end{align*}
Observe that, in the case $(\gamma, \beta) \in (1,2) \times \{ \gamma - 1 \}$, due to the boundary terms  $\varrho(s,0^-)$ and $\varrho(s,0^+)$ that appear in last display, we deal with sets which are not open in the Skorohod topology and, for that reason, we are not  able to use directly Portmanteau's Theorem. In order to avoid this problem, we will proceed in the same way we did in the proof of Proposition \ref{dircond}, making use of the two approximations of the identity $\iota_{\varepsilon}^{0^+}$ and $ \iota_{\varepsilon}^{0^-}$ defined in \eqref{aproxiden}.  Moreover, since $ [ -(- \Delta)_{\mathbb{R}^{*}}^{\frac{\gamma}{2}}  G (s, \cdot)]$ is not in $C_c^{\infty}(\mathbb{R})$, we will approximate it  in $L^1(\mathbb{R})$ by  a sequence in $C_c^{\infty}(\mathbb{R})$ in order to make use of Portmanteau's Theorem, exactly as we did in the proof of Proposition \ref{dircond}. Moreover, summing and subtracting to $\varrho(s,0^+)$ (resp., $\varrho(s,0^-)$) the mean $\langle \pi_s, \iota_{\varepsilon}^{0^+} \rangle$ (resp., $\langle \pi_s, \iota_{\varepsilon}^{0^-} \rangle$); recalling that $ \mathbb{Q}$ is a limit point of $(\mathbb{Q}_n)_{n \geq 1 }$,  induced by  $(\mu_n)_{n  \geq 1}$ associated to the profile $~\mcb g$;  and approximating $\iota_{\varepsilon}^{0^+}$ and $\iota_{\varepsilon}^{0^-}$ by continuous functions in such a way that the error vanishes as $\varepsilon \rightarrow 0^+$, it is enough to prove that 
\begin{align}
& \liminf_{n \rightarrow \infty} \mathbb{Q}_n \Big(  \sup_{t \in [0,T]} \Big| \int_{\mathbb{R}} \rho(t,u) G(t,u) du -\int_{\mathbb{R}} \rho(0,u) G(0,u) du  \nonumber  \\
- & \int_0^t \Big[ \int_{\mathbb{R}} \rho(s,u) \left(  \big[-(- \Delta)_{\mathbb{R}_{*}}^{\frac{\gamma}{2}} G \big] (s,u) + \partial_s G(s,u) \right)  du \Big] ds \nonumber  \\
+ & m \alpha \mathbbm{1}_{\gamma \in (1,2)} \mathbbm{1}_{\beta = \gamma-1} \int_0^t  [    \langle \pi_s, \iota_{\varepsilon}^{0^+} \rangle  -  \langle \pi_s, \iota_{\varepsilon}^{0^-} \rangle  ]  [  G(s,0^{+})    -  G(s,0^{-})  ] ds \Big| > \frac{\delta}{12} \Big)=0. \label{f2term1asdif}
\end{align}
From \eqref{defMnt} and Proposition \ref{gensdif}, we can bound the limit in  last display by the sum of the next three terms
\begin{equation} \label{f3term1bsdif}
\liminf_{n \rightarrow \infty} \mathbb{P}_{\mu_n}   \Big(  \sup_{t \in [0,T]} |\mcb M_{t}^{n}(G)| > \frac{\delta}{36} \Big),
\end{equation}
\begin{align}
 \liminf_{n \rightarrow \infty}  \mathbb{P}_{\mu_n} \Big(  \sup_{t \in [0,T]}  \Big| \int_0^t & \!\!\!\!\sum_{\{x,z\} \in \mcb F} \!\!\!\!n^{\gamma-1} [G(s,\tfrac{y}{n}) - G(s,\tfrac{x}{n}) ]  p(y-x)  \eta_s^n(x) 
 -  \langle \pi_s^n, -(- \Delta)_{\mathbb{R}_{*}}^{\frac{\gamma}{2}} G(s, \cdot) \rangle ds \Big| > \frac{\delta}{36} \Big), \label{f3term1dsdif}
\end{align}
\begin{align}
 \limsup_{\varepsilon \rightarrow 0^{+}} \liminf_{n \rightarrow \infty}  \mathbb{P}_{\mu_n} \Big( & \sup_{t \in [0,T]} \Big|\int_{0}^{t} \Big\{  \alpha n^{\gamma-1-\beta}  \sum_{\{x,z\} \in \mcb S}  [G(s,\tfrac{y}{n}) - G(s,\tfrac{x}{n}) ]  p(y-x)  \eta_s^n(x) \nonumber  \\
-&   m \alpha \mathbbm{1}_{\gamma \in (1,2)} \mathbbm{1}_{\beta = \gamma-1} \int_0^t  [  G(s,0^{-})    -  G(s,0^{+})  ]  [   \eta_s^{\rightarrow n \varepsilon}(0)  - \eta_s^{\leftarrow n \varepsilon}(0)  ] \Big\} ds \Big| > \frac{\delta}{36} \Big). \label{f3term1csdif}
\end{align}
From Doob's inequality and Proposition \ref{tightcond2dif}, we conclude that  \eqref{f3term1bsdif} is equal to zero. From Corollary \ref{corconvfast} and Markov's inequality, we conclude that \eqref{f3term1dsdif} is also equal to zero. Finally, from Corollary \ref{corconvslow}, Proposition \ref{convrob} and Markov's inequality, we conclude that \eqref{f3term1csdif} is also equal to zero. 
\end{proof}

\section{Energy estimates}  \label{secenerest}
In this section, our goal is to prove that $\varrho$ belongs to some fractional Sobolev space, i.e. that   $\varrho$ satisfies the second condition of weak solutions of our hydrodynamical equations. Hereinafter, we fix $C_a >0$ such that $H( \mu_n | \nu_a) \leq  C_a n, \forall n \geq 1$, where $H( \mu_n | \nu_a)$ is the entropy bound in the statement of Theorem \ref{hydlim}. Similarly to \cite{casodif}, we begin stating an important result which  does not depend on the dynamics but only on $H( \mu_n | \nu_a)$. 
Recall that $\bar{\varrho}:=\varrho-a$. We do not present the  proof  of next result since it is given in Section 5.1 of \cite{casodif}.

\begin{prop} \label{estenergstat0}
For all $\mcb S \subset \mcb S_0$, all $\gamma\in(0,2)$, for any $ \beta\geq 0$ and any  $\alpha>0$, we have that
\begin{align*} 
\mathbb{Q} \Big( \pi_{ \cdot} \in \mcb D \big([0,T], \mcb{M}^+(\mathbb{R}) \big): \int_0^T \int_{\mathbb{R}} [\bar{\varrho}(t,u) ]^2   du dt < \infty \Big) = 1.
\end{align*}
\end{prop}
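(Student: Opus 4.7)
The plan is to combine the entropy inequality, a variational characterization of the $L^2$ norm, and the invariance of $\nu_a$ to transfer the initial entropy bound into an almost sure bound on the $L^2$ norm of $\bar{\varrho}$. The key structural fact is that since $\nu_a$ is invariant under the dynamics generated by $n^{\gamma}\mcb L_n$, monotonicity of relative entropy under Markov semigroups yields $H(\mu_n^t \,|\, \nu_a) \le H(\mu_n \,|\, \nu_a) \le C_a n$ for every $t \in [0,T]$, where $\mu_n^t$ denotes the marginal at time $t$. This is the only place where the dynamics is used, and explains why the statement holds in complete generality over $\mcb S$, $\alpha$, $\beta$ and $\gamma$.

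For $G \in C_c^{\infty}([0,T] \times \mathbb R)$ set $L_n(G) := \int_0^T \tfrac{1}{n}\sum_{x} G(t,x/n)\,[\eta_t^n(x)-a]\,dt$. Applying the entropy inequality time-slice by time-slice to $\Psi_G^{n,t}(\eta) = \tfrac{1}{n}\sum_x G(t,x/n)(\eta(x)-a)$, and exploiting that under $\nu_a$ the variables $\{\eta(x)-a\}_{x\in\mathbb Z}$ are i.i.d.\ and bounded, Hoeffding's lemma yields
\[
\log \mathbb{E}_{\nu_a}\bigl[e^{\lambda \Psi_G^{n,t}}\bigr] \;\le\; \tfrac{\lambda^2}{8 n^2}\sum_x G(t,x/n)^2 \;\lesssim\; \tfrac{\lambda^2}{n}\,\|G(t,\cdot)\|_{2,\mathbb R}^{\,2}.
\]
Optimizing in $\lambda$ and using $H(\mu_n^t \,|\, \nu_a) \le C_a n$ gives $\mathbb{E}_{\mu_n^t}[\Psi_G^{n,t}] \lesssim \|G(t,\cdot)\|_{2,\mathbb R}$, and integrating in $t$ with Cauchy–Schwarz yields $\mathbb{E}_{\mu_n}[L_n(G)] \lesssim \|G\|_{2,[0,T]\times \mathbb R}$.

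To upgrade this expectation bound into an almost sure statement, use the variational identity
\[
\|\bar \varrho\|_{2,[0,T]\times\mathbb R}^{\,2} \;=\; \sup_{G}\Bigl\{\,2\!\int_0^T\!\!\int_{\mathbb R} \bar\varrho(t,u)\,G(t,u)\,du\,dt - \|G\|_{2,[0,T]\times\mathbb R}^{\,2}\,\Bigr\},
\]
where the supremum runs over a countable dense subset $\{G_k\}_{k \ge 1}$ of the unit ball in $C_c^{\infty}([0,T]\times\mathbb R) \subset L^2$. For every finite $N$, the elementary inequality $e^{\sup_{k\le N} x_k} \le \sum_{k\le N} e^{x_k}$ combined with the entropy inequality applied to $\sup_{k\le N}\{2 L_n(G_k) - \|G_k\|_2^2\}$ and the Hoeffding estimate above produces a uniform (in $N$ and $n$) upper bound for $\mathbb{E}_{\mu_n}\bigl[\sup_{k\le N}\{2 L_n(G_k) - \|G_k\|_2^2\}\bigr]$. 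Passing to the limit $n \to \infty$ using Portmanteau (after the standard continuous approximation of $G_k$), then $N \to \infty$ via Fatou, yields $\mathbb{E}_{\mathbb Q}[\|\bar \varrho\|_{2,[0,T]\times\mathbb R}^{\,2}] < \infty$, from which the claim follows.

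The main obstacle is the path-space exponential estimate needed when passing from the single-$G$ bound to the uniform-in-$k$ bound: the random variables $L_n(G_k)$ depend on the entire trajectory, not just on a fixed time slice, so one cannot directly invoke the product structure of $\nu_a$. The standard workaround, which is what makes the proof dynamics-free, is to apply Jensen's inequality to the time integral $L_n(G_k) = \tfrac{1}{T}\int_0^T T\,\Psi_{G_k}^{n,t}(\eta_t^n)\,dt$ and then use the stationarity of $\nu_a$ (so that each $\eta_t^n$ is again distributed as $\nu_a$) to reduce the computation to the product-measure Hoeffding bound at each time; alternatively, one decomposes into time slices and applies Fubini. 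The detailed bookkeeping, already carried out in Section~5.1 of \cite{casodif}, carries over verbatim to our setting since our generator shares the reversibility of $\nu_a$.
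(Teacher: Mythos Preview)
Your proposal is correct and follows exactly the approach the paper intends: the paper does not spell out the argument but simply cites Section~5.1 of \cite{casodif}, and what you have written is a faithful sketch of that standard entropy/variational argument, the crucial dynamics-independent input being the invariance of $\nu_a$ (so that $H(\mu_n^t|\nu_a)\le H(\mu_n|\nu_a)$, or equivalently $H(\mathbb P_{\mu_n}|\mathbb P_{\nu_a})=H(\mu_n|\nu_a)$ on path space).
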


Now we assume that  either \eqref{defsigmas} holds or $\beta=0$ and our goal is to prove that $\bar\rho$ belongs to the fractional  Sobolev space on the full line.  Our proof is strongly inspired by the strategy presented in Subsection 3.3 of \cite{byronsdif}.

\begin{prop} \label{estenergsembarlenforsdif}
Assume that \eqref{defsigmas} holds or $\beta=0$. Then
\begin{align*}
\mathbb{Q} \Big (\pi_{ \cdot} \in \mcb D([0,T], \mcb{M}^+(\mathbb{R})): \bar{\varrho} \in L^2 \big(0, T ; \mcb{H}^{\frac{\gamma}{2}}  ( \mathbb{R} ) \big)  \Big) = 1.
\end{align*}
\end{prop}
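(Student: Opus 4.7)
The plan is to establish the fractional Sobolev regularity via the variational/dual characterization of the $\mathcal{H}^{\gamma/2}$-seminorm. It suffices to exhibit a constant $C>0$ such that, $\mathbb{Q}$-almost surely,
\[
\sup_{H}\Big\{\int_0^T\iint_{\mathbb{R}^2}\frac{\bar\varrho(s,u)-\bar\varrho(s,v)}{|u-v|^{(\gamma+1)/2}}H(s,u,v)\,du\,dv\,ds - \int_0^T\iint_{\mathbb{R}^2} H(s,u,v)^2\,du\,dv\,ds\Big\}\le C,
\]
the supremum ranging over a countable dense family of antisymmetric $H\in C_c^\infty([0,T]\times\mathbb{R}^2)$. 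By Riesz representation applied pointwise in $s$, such a bound forces $\bar\varrho(s,\cdot)\in\mathcal{H}^{\gamma/2}(\mathbb{R})$ for a.e. $s\in[0,T]$ with uniform $L^2$-control in time, which is exactly the claim.

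I would prove this dual bound at the microscopic level and then pass to the limit. Introducing the discrete functional
\[
L_n(H,\eta):=\int_0^T\sum_{x\neq y}(\eta_s^n(y)-\eta_s^n(x))\,H\!\left(s,\tfrac{x}{n},\tfrac{y}{n}\right) w_n(x,y)\,ds,
\]
with weights $w_n(x,y)$ tuned so that $w_n(x,y)$ Riemann-approximates $n^{-2}|u-v|^{-(\gamma+1)/2}$ and so that $w_n(x,y)^2/p(y-x)$ scales correctly against the discrete Dirichlet form, Portmanteau gives (modulo an $\varepsilon$-truncation $|u-v|>\varepsilon$ whose error vanishes as $\varepsilon\to 0^+$) that the continuous linear term is the limit of $L_n$ under $\mathbb{Q}_n$. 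The entropy hypothesis $H(\mu_n|\nu_a)\le C_a n$, the entropy inequality, and Feynman--Kac then reduce the control of $\mathbb{E}_{\mu_n}[L_n - \|H\|_{L^2}^2]$ to a time integral of a Rayleigh quotient of the form $\int V_H f\,d\nu_a - n^\gamma\langle\sqrt f,-\mathcal{L}_n\sqrt f\rangle_{\nu_a}$ over densities $f$ on $\Omega$.

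The core estimate is the antisymmetrization/Cauchy--Schwarz trick: writing $(\eta(y)-\eta(x))f(\eta)=\tfrac12(\eta(y)-\eta(x))[f(\eta)+f(\eta^{x,y})]+\tfrac12(\eta(y)-\eta(x))[f(\eta)-f(\eta^{x,y})]$, the first term vanishes after $\nu_a$-integration by the involution $\eta\leftrightarrow\eta^{x,y}$, while in the second one applies $2ab\le \kappa a^2+b^2/\kappa$ with $a=\sqrt{f(\eta)}-\sqrt{f(\eta^{x,y})}$ and $\kappa$ a positive multiple of $\sqrt{p(y-x)}$. One resulting piece is absorbed by the fast-bond Dirichlet form (which is comparable to $n^\gamma\langle\sqrt f,-\mathcal{L}_n\sqrt f\rangle_{\nu_a}$ in the relevant regimes), and the other collapses, after summation and Riemann approximation, to $\|H\|_{L^2}^2$. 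Under \eqref{defsigmas}, the slow bonds contribute $o(1)$ to both $L_n$ and the Dirichlet form by the summability condition on $\mcb S$; when $\beta=0$, slow bonds only perturb constants by factors of $\alpha$; in either case the bound is uniform in $n$.

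Once the uniform microscopic bound is in hand, Fatou's lemma, lower semicontinuity of the $\mathcal{H}^{\gamma/2}$-seminorm, and a countable-family argument transfer the estimate from $\mathbb{Q}_n$ to $\mathbb{Q}$ and conclude the proof. I expect the main technical obstacle to be the calibration step: designing $w_n(x,y)$ so that it simultaneously Riemann-approximates $|u-v|^{-(\gamma+1)/2}$ and produces a remainder controllable by $\|H\|_{L^2}^2$ with the correct multiplicative constant, while handling the near-diagonal singularity via an $\varepsilon$-truncation whose contribution must vanish in the limit. A secondary subtlety, in the $\beta=0$ case, is that the fast-bond Dirichlet form alone (not the full one) must still provide enough coercivity to absorb the Cauchy--Schwarz residue across the origin.
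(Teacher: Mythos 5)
Your proposal follows essentially the same strategy as the paper: pass to an antisymmetric test function, use entropy/Jensen/Feynman--Kac to reduce to a variational problem, exploit the involution $\eta\leftrightarrow\eta^{x,y}$ plus Young's inequality with a parameter tuned to $|F^a|$ (equivalently your calibration of $w_n$ against $\sqrt{p(y-x)}$) to absorb the linear term into the Dirichlet form, treat the slow bonds as negligible under \eqref{defsigmas} or as an $O(\alpha^{-1})$ perturbation when $\beta=0$, insert the supremum by a countable-family argument, and conclude by Riesz representation and a monotone passage $\varepsilon\to 0^+$. Two small imprecisions in your sketch, both easily repaired exactly as the paper does: the Riesz step should be applied jointly in $(s,u,v)$ on $L^2\big((0,T)\times\mathbb{R}^2,\,dt\otimes d\mu_\varepsilon\big)$ rather than ``pointwise in $s$'' (since the test functions depend on time and only the time-integrated bound is available), and in the $\beta=0$ case one absorbs into the \emph{full} Dirichlet form (the slow bonds contribute with rate $\alpha$, producing the constant $4(1+\alpha^{-1})$), not the fast-bond part alone.
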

\begin{proof}
Let $\varepsilon >0$ and fix $F \in C_{c}^{0,\infty} \big( (0,T) \times \mathbb{R}^2 \big)$. We denote the antisymmetric (with respect to the space variable) part of $F$ as $F^a$, i.e, 
\begin{align*}
F^a(t,u,v) := \frac{ F(t,u,v) - F(t,v,u)}{2} , \forall t \in (0,T), \forall u,v \in \mathbb{R}.
\end{align*}
\textbf{I.)} First we assume \eqref{defsigmas}.  From this and the fact that there exists $C_{\delta}$ (independent of $t$) such that $|F^a(t,u,v)| \leq C_{\delta} |u-v|^{\delta}$ for every $u,v \in \mathbb{R}$, for every $t \in [0,T]$,  we get that
\begin{align*}
 \mathbb{E}_{\mu_n} \Big[ \Big|    \int_0^T \frac{n^\gamma}{n}  \sum_{\substack{\{x,y\} \in \mcb S\\ |x-y| \geq \varepsilon n}}   F^{a} \left(t, \tfrac{x}{n}, \tfrac{y}{n} \right) p(y-x) [\eta_t^n(y) -  \eta_t^n(x) ] dt  \Big|  \Big] \lesssim n^{\gamma-1-\delta} \sum _{\{y,z\} \in \mcb S}  p(y-z)  |y-x|^{\delta}.   
\end{align*}
Since $\delta > \gamma - 1$, taking $n \rightarrow \infty$ the left-hand side of the last display goes to zero and we have
\begin{align}
& \lim_{n \rightarrow \infty} \mathbb{E}_{\mu_n} \Big[   \int_0^T \dfrac{n^{\gamma} }{n}   \sum_{|x-y| \geq \varepsilon n}  F^{a} \left(t, \tfrac{x}{n}, \tfrac{y}{n} \right) p(y-x) [\eta_t^n(y) -  \eta_t^n(x) ] dt  \Big] \nonumber \\
=& \lim_{n \rightarrow \infty} \mathbb{E}_{\mu_n} \Big[    \int_0^T \dfrac{n^{\gamma} }{n}   \sum_{\substack{\{x,y\} \in \mcb F\\ |x-y| \geq \varepsilon n}}  F^{a} \left(t, \tfrac{x}{n}, \tfrac{y}{n} \right) p(y-x) [\eta_t^n(y) -  \eta_t^n(x) ] dt  \Big]. \label{eqlim}
\end{align}
Now we follow closely the proof of item i) of Theorem 3.2 in \cite{byronsdif}. By the entropy inequality, Jensen's inequality and Feynman-Kac's formula, we have
\begin{align*}
& \dfrac{1}{n} \mathbb{E}_{\mu_n} \Big[  \int_0^T   n^{\gamma}  \sum_{\substack{\{x,y\} \in \mcb F\\ |x-y| \geq \varepsilon n}}  F^a \left(t, \tfrac{x}{n}, \tfrac{y}{n} \right) p(y-x) [\eta_t^n(y) -  \eta_t^n(x) ] dt \Big] \\
\leq& C_a + \int_0^T    \sup_f \Big\{  n^{\gamma-1} \Big[  \sum_{\substack{\{x,y\} \in \mcb F\\ |x-y| \geq \varepsilon n}}  F^a \left(t, \tfrac{x}{n}, \tfrac{y}{n} \right) p(y-x)  \int [ \eta(y) - \eta(x) ]  d \nu_a  +   \langle \mcb L_n \sqrt{f} , \sqrt{f} \rangle_{\nu_a} \Big] \Big\}   dt,
\end{align*}
where the supremum is taken over all the densities $f$ on $\Omega$ with respect to $\nu_a$. Above, $\langle f, g\rangle_{\nu_a}$ is the scalar product between $f$ and $g$ in $L^2(\Omega, \nu_a)$, that is,
$\langle f, g\rangle _{\nu_a} := \int f(\eta) g(\eta) d \nu_a
$ 
and $\langle\mcb {L}_{n} \sqrt{f} , \sqrt{f} \rangle_{\nu_a}$ is the Dirichlet form. A simple computation shows that 
 \begin{align}\label{bound}
\langle \mcb {L}_{n} \sqrt{f} , \sqrt{f} \rangle_{\nu_a} = - \dfrac{1}{2}   D_n (\sqrt{f}, \nu_{a} ),
\end{align}
where 
$D_n (\sqrt{f}, \nu_a ) := D_n^{\mcb F} (\sqrt{f}, \nu_a ) + D_n^{\mcb S} (\sqrt{f}, \nu_a )$,
with
\begin{align} \label{DnF}
D_n^{\mcb F}  (\sqrt{f}, \nu_a ) := \frac{1}{2} \sum_{\{ x, y \} \in \mcb F} p(y-x) I_{x,y}  (\sqrt{f}, \nu_a ), \quad 
 D_n^{\mcb S}  (\sqrt{f}, \nu_a ) :=  \frac{\alpha}{2n^{\beta}}  \sum_{\{ x, y \} \in \mcb S} p(y-x) I_{x,y}  (\sqrt{f}, \nu_a ),
\end{align} and 
$I_{x,y}  (\sqrt{f}, \nu_a ) := \int [ \sqrt{f \left( \eta^{x,y} \right) } - \sqrt{f \left( \eta \right) } ]^2 d \nu_a$. From this, we get for every $t \in (0,T)$:
\begin{align}
&   \sum_{\substack{\{x,y\} \in \mcb F\\ |x-y| \geq \varepsilon n}}  F^a \left(t, \tfrac{x}{n}, \tfrac{y}{n} \right) p(y-x)  \int [\eta(y) - \eta(x) ]  d \nu_a  +    \langle \mcb L_n \sqrt{f} , \sqrt{f} \rangle_{\nu_a} \nonumber \\
\leq &  \sum_{\substack{\{x,y\} \in \mcb F\\ |x-y| \geq \varepsilon n}} p(y-x) \Big[ \big|  F^a \left(t, \tfrac{x}{n}, \tfrac{y}{n} \right) \big|  \Big|  \int  [\eta(y) - \eta(x) ]  d \nu_a  \Big|- \dfrac{  I_{x,y}   (\sqrt{f}, \nu_{a} )}{4} \Big]. \label{Axyestenerg}
\end{align}
Now observe that from a change of variables $\eta $ to $\eta^{x,y}$ and from  Young's inequality, there exists a positive constant $A_{x,y}$  such that
\begin{align} \label{young}
\Big| \int [\eta(x) - \eta(y)] f(\eta) d \nu_a \Big| \leq  \frac{ I_{x,y}  (\sqrt{f}, \nu_a )}{2A_{x,y}} + 2 A_{x,y}.
\end{align}
 Choosing $A_{x,y}:= 2  |  F^a \left(t, \tfrac{x}{n}, \tfrac{y}{n} \right) |$ in \eqref{young}, we can bound the expression in \eqref{Axyestenerg} from above by
\begin{align*}
\frac{ 4 c_{\gamma} }{n^2}  \sum_{\substack{\{x,y\} \in \mcb F\\ |x-y| \geq \varepsilon n}} [   F^a \left(t, \tfrac{x}{n}, \tfrac{y}{n} \right) ]^2  	\Big| \tfrac{x}{n} - \tfrac{y}{n} \Big|^{-1-\gamma},
\end{align*}
from where we conclude that
\begin{align*}
& \mathbb{E}_{\mu_n} \Big[    \int_0^T \frac{1}{n}  n^{\gamma}  \sum_{\substack{\{x,y\} \in \mcb F\\ |x-y| \geq \varepsilon n}}  F^a \left(t, \tfrac{x}{n}, \tfrac{y}{n} \right) p(y-x) [\eta_t^n(y) -  \eta_t^n(x) ] dt  \Big] \\
\leq&  C_a + \int_0^T  4 c_{\gamma} \frac{1}{n^2}  \sum_{\substack{\{x,y\} \in \mcb F\\ |x-y| \geq \varepsilon n}} [   F^a \left(t, \tfrac{x}{n}, \tfrac{y}{n} \right) ]^2  	\Big| \tfrac{x}{n} - \tfrac{y}{n} \Big|^{-1-\gamma}  dt, \forall n \geq 1.
\end{align*}
Let $Q_{\varepsilon} = \{ (u,v) \in \mathbb{R}^2 : |u-v| \geq \varepsilon \}$. Taking the limit when $n \rightarrow \infty$ and recalling \eqref{eqlim} we have
\begin{align*}
 &c_{\gamma} \mathbb{E}_{\mathbb{Q}} \Big[   \int_0^T \iint_{Q_{\varepsilon}} F^a(t,u,v) |u-v|^{-1-\gamma} [ \rho(t,v) - \rho(t,u) ] du dv dt \Big]  \\
=& \lim_{n \rightarrow \infty}  \mathbb{E}_{\mu_n} \Big[     \int_0^T \frac{1}{n}  n^{\gamma}  \sum_{\substack{\{x,y\} \in \mcb F\\ |x-y| \geq \varepsilon n}}  F^a \left(t, \tfrac{x}{n}, \tfrac{y}{n} \right) p(y-x) [\eta_t^n(y) -  \eta_t^n(x) ] dt \Big]  \\
\leq &C_a + \lim_{n \rightarrow \infty} \Big\{     \int_0^T  4 c_{\gamma} \frac{1}{n^2}   \sum_{\substack{\{x,y\} \in \mcb F\\ |x-y| \geq \varepsilon n}} [   F^a \left(t, \tfrac{x}{n}, \tfrac{y}{n} \right)]^2  	\Big| \tfrac{x}{n} - \tfrac{y}{n} \Big|^{-1-\gamma}  dt \Big\}  \\
\leq&  C_a + c_{\gamma}\mathbb{E}_{\mathbb{Q}} \Big[  \int_0^T  \iint_{Q_{\varepsilon}} 4  [ F^a(t,u,v) ]^2 |u-v|^{-1-\gamma} du dv dt \Big], 
\end{align*}
which is the same as
\begin{align} \label{ineqF}
\mathbb{E}_{\mathbb{Q}} \Big[  \int_0^T \iint_{Q_{\varepsilon}} \Big\{  \dfrac{ [ \rho(t,v) - \rho(t,u) ]F^a(t,u,v)}{|u-v|^{1 + \gamma}}   - C_{0}  \frac{ [ F^a(t,u,v) ]^2}{|u-v|^{1+ \gamma}} \Big\}  du dv dt \Big] \leq \frac{C_a}{c_{\gamma}},
\end{align}
for $C_{0}=4$. 

\textbf{II.)} Now assume  $\beta=0$. Repeating the previous steps we are lead to 
\begin{align*}
& \mathbb{E}_{\mu_n} \Big[    \int_0^T \frac{1}{n}  n^{\gamma}  \sum_{|x-y| \geq \varepsilon n}  F^a \left(t, \tfrac{x}{n}, \tfrac{y}{n} \right) p(y-x) [\eta_t^n(y) -  \eta_t^n(x) ] dt  \Big] \\
\leq&  C_a + \int_0^T  4 (1 + \alpha^{-1} ) c_{\gamma} \frac{1}{n^2}  \sum_{ |x-y| \geq \varepsilon n} [   F^a \left(t, \tfrac{x}{n}, \tfrac{y}{n} \right) ]^2  	\Big| \tfrac{x}{n} - \tfrac{y}{n} \Big|^{-1-\gamma}  dt, \forall n \geq 1,
\end{align*}
and we conclude that \eqref{ineqF} holds with $C_0 = 4(1+\alpha^{-1})$.  For every $F$ on $C_{c}^{0,\infty} \big( (0,T) \times \mathbb{R}^2 \big)$, we have that 
\begin{align*}
&\iint_{Q_{\varepsilon}} \Big\{  \frac{[ \rho(t,v) - \rho(t,u) ]F(t,u,v)}{|u-v|^{1 + \gamma}}   - C_0  \frac{ [ F(t,u,v) ]^2}{|u-v|^{1 + \gamma}} \Big\} du dv \\
\leq &\iint_{Q_{\varepsilon}} \Big\{  \frac{[ \rho(t,v) - \rho(t,u) ]F^a(t,u,v)}{|u-v|^{1 + \gamma}}   - C_0 \frac{ [ F^a(t,u,v) ]^2}{|u-v|^{1 + \gamma}} \Big\} du dv,
\end{align*}
which leads to
\begin{align} \label{ineqvar}
& \mathbb{E}_{\mathbb{Q}} \Big[ \sup_F \Big\{  \int_0^T \iint_{Q_{\varepsilon}} \Big\{  \frac{ [ \rho(t,v) - \rho(t,u) ]F(t,u,v)}{|u-v|^{1 + \gamma}}   - C_0  \frac{ [ F(t,u,v) ]^2}{|u-v|^{1 + \gamma}} \Big\}  du dv dt \Big\}  \Big] \leq \frac{C_a}{c_{\gamma}}.
\end{align}
Above we made use of Lemma 7.5 in \cite{supinsexp} to insert the supremum (which is carried over $F$ on $C_{c}^{0,\infty} \big( (0,T) \times \mathbb{R}^2 \big)$ inside the expectation. Now we consider the Hilbert space $L^2 \left( \mathbb{R}^2, d \mu_{\varepsilon} \right)$, where $\mu_{\varepsilon}$ is the measure whose density with respect to the Lebesgue measure is given by $(u,v) \in \mathbb{R}^2 \rightarrow \mathbbm{1}_{ \{|u-v| \geq \varepsilon \} }|u-v|^{-1-\gamma}$. We can define $ \Pi: (0,T) \times \mathbb{R}^2$ by $
\Pi (t,u,v) = \pi(t,v) - \pi(t,u), \forall t \in I, \forall u,v \in \mathbb{R}$.

From \eqref{ineqvar}, the density of $C_{c}^{0,\infty} \big( (0,T) \times \mathbb{R}^2 \big)$ in $L^2 \left( \mathbb{R}^2, d \mu_{\varepsilon} \right)$ and Riesz's Representation Theorem, there exists $C_1>0$ such that
\begin{align*}
&  \mathbb{E}_{\mathbb{Q}} \Big[  \int_0^T \iint_{Q_{\varepsilon}}  \dfrac{ [ \rho(t,v) - \rho(t,u) ]^2 }{|u-v|^{1 + \gamma}}   du dv dt  \Big]  \leq C_1.
\end{align*}
 Letting $\varepsilon \rightarrow 0^+$, we conclude, from the Monotone Convergence Theorem, that
\begin{align*}
&\mathbb{E}_{\mathbb{Q}} \Big[  \int_0^T \iint_{\mathbb{R}^2} \frac{[ \bar{\rho}(t,v) - \bar{\rho}(t,u) ]^2}{|u-v|^{1+\gamma}} du dv dt  \Big]  = \mathbb{E}_{\mathbb{Q}} \Big[  \int_0^T \iint_{\mathbb{R}^2} \frac{[ \rho(t,v) - \rho(t,u) ]^2}{|u-v|^{1+\gamma}} du dv dt  \Big] < \infty.
\end{align*}
Combining this with Proposition \ref{estenergstat0}, we have the desired result.
\end{proof}
The proof of the next result is completely analogous  to the previous one and for that reason it will be omitted.
\begin{prop}  \label{estenergcombarlenfor} 
For all $\mcb S \subset \mcb S_0$, all $\gamma\in(0,2)$, for any $ \beta\geq 0$ and any  $\alpha>0$, we have that
\begin{align*}
\mathbb{Q} \Big( \bar{\rho}|_{[0,T] \times \mathbb{R}_{+}^{*}} \in L^2 \big(0, T ; \mcb{H}^{\frac{\gamma}{2}}( \mathbb{R}_{+}^{*})  \big) \Big) = \mathbb{Q} \Big(  \bar{\rho}|_{[0,T] \times \mathbb{R}_{-}^{*}} \in L^2 \big(0, T ; \mcb {H}^{\frac{\gamma}{2}}( \mathbb{R}_{-}^{*})  \big) \Big) = 1.
\end{align*}
In particular,
$
\mathbb{Q} \Big(  \bar{\rho} \in L^2 \big(0, T ; \mcb {H}^{\frac{\gamma}{2}}( \mathbb{R}^{*})  \big) \Big) = 1.
$
\end{prop}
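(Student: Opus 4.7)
The plan is to mimic the proof of Proposition \ref{estenergsembarlenforsdif} but working separately on each half-line, which removes any interaction with the slow bonds and therefore frees us from the hypotheses \eqref{defsigmas} or $\beta=0$. Indeed, given a test function $F \in C_{c}^{0,\infty}\big((0,T)\times(\mathbb R_+^*)^2\big)$, any bond $\{x,y\}$ with both $x,y\in\mathbb N$ belongs to the fast set $\mcb F$, so the slow rate $\alpha n^{-\beta}$ never appears in the computation. The same observation holds on $\mathbb Z_-^*$, and this is exactly why the argument carries over verbatim.

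First I would fix $\varepsilon>0$ and a test function $F \in C_{c}^{0,\infty}\big((0,T)\times(\mathbb R_+^*)^2\big)$, define its antisymmetric part $F^a$ in the space variables, and study
\begin{align*}
\mathbb E_{\mu_n}\Big[\int_0^T \frac{n^\gamma}{n}\sum_{\substack{x,y\in\mathbb N\\ |x-y|\geq \varepsilon n}} F^a\big(t,\tfrac{x}{n},\tfrac{y}{n}\big) p(y-x)\big[\eta_t^n(y)-\eta_t^n(x)\big]\, dt\Big].
\end{align*}
Applying successively the entropy inequality $H(\mu_n\mid \nu_a)\leq C_a n$, Jensen's inequality and the Feynman-Kac formula exactly as in part \textbf{I.)} of the proof of Proposition \ref{estenergsembarlenforsdif}, this is bounded by $C_a$ plus a time integral of the supremum over densities $f$ of the difference between the linear term and the Dirichlet form $\tfrac12 D_n(\sqrt f,\nu_a)$. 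The crucial point is that the sum is restricted to bonds contained in $\mcb F$, so one controls the linear term using only $D_n^{\mcb F}(\sqrt f,\nu_a)$ via the Young's inequality step \eqref{young}, picking $A_{x,y}=2|F^a(t,\tfrac{x}{n},\tfrac{y}{n})|$.

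After passing to the limit $n\to\infty$ one obtains, with $Q_\varepsilon^+ := \{(u,v)\in(\mathbb R_+^*)^2:|u-v|\geq \varepsilon\}$,
\begin{align*}
\mathbb E_{\mathbb Q}\Big[\int_0^T \iint_{Q_\varepsilon^+} \Big\{ \frac{[\rho(t,v)-\rho(t,u)]F^a(t,u,v)}{|u-v|^{1+\gamma}} - 4\frac{[F^a(t,u,v)]^2}{|u-v|^{1+\gamma}}\Big\}\, du\,dv\,dt\Big] \leq \frac{C_a}{c_\gamma},
\end{align*}
and then, since the difference between the bracket evaluated at $F$ and at $F^a$ is non-positive, by Lemma 7.5 of \cite{supinsexp} the supremum in $F\in C_c^{0,\infty}((0,T)\times(\mathbb R_+^*)^2)$ of the left-hand side enters under the expectation. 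The density of $C_c^{0,\infty}((0,T)\times(\mathbb R_+^*)^2)$ in $L^2(\mathbb R_+^*\times\mathbb R_+^*, d\mu_\varepsilon)$ together with Riesz's representation theorem then yields a constant $C_1>0$ independent of $\varepsilon$ with
\begin{align*}
\mathbb E_{\mathbb Q}\Big[\int_0^T \iint_{Q_\varepsilon^+} \frac{[\rho(t,v)-\rho(t,u)]^2}{|u-v|^{1+\gamma}}\, du\,dv\,dt\Big] \leq C_1.
\end{align*}

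Letting $\varepsilon\to 0^+$ and invoking the monotone convergence theorem together with Proposition \ref{estenergstat0} gives $\mathbb Q$-a.s. $\bar\rho|_{[0,T]\times\mathbb R_+^*}\in L^2(0,T;\mcb H^{\gamma/2}(\mathbb R_+^*))$. Replacing $\mathbb R_+^*$ by $\mathbb R_-^*$ in the entire argument (again, bonds inside $\mathbb Z_-^*$ are fast) gives the corresponding statement on the negative half-line. Combining the two with the very definition of $\mcb H^{\gamma/2}(\mathbb R^*)$ yields $\bar\rho\in L^2(0,T;\mcb H^{\gamma/2}(\mathbb R^*))$ $\mathbb Q$-a.s. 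No genuine obstacle is expected; the one subtlety worth double-checking is that the restricted density statement (step using Riesz's theorem) goes through on unbounded half-lines with the same weight $|u-v|^{-1-\gamma}\mathbbm 1_{|u-v|\geq \varepsilon}$, which it does since we only need density of smooth compactly supported test functions in this weighted $L^2$ Hilbert space.
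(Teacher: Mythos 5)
Your proposal is correct and is exactly what the paper means when it says the proof is ``completely analogous'' to Proposition \ref{estenergsembarlenforsdif}: restricting $F$ to $C_c^{0,\infty}((0,T)\times(\mathbb R_+^*)^2)$ (resp. $(\mathbb R_-^*)^2$) forces all bonds appearing in the sum to lie entirely in $\mathbb N$ (resp.\ $\mathbb Z_-^*$), which are never slow since $\mcb S\subset\mcb S_0$, so the hypotheses \eqref{defsigmas} or $\beta=0$ are never invoked and the argument goes through with $C_0=4$ for every choice of $\mcb S$, $\beta$, $\alpha$. The only cosmetic remark is that you could emphasise that because $F$ vanishes near the origin, for $n$ large the sum is automatically over $x,y\geq \delta n>0$, which is exactly the restriction needed to drop the slow-bond term from the very start rather than estimating it away as in part I of the earlier proposition.
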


\section{Useful $L^1(\mathbb{P}_{\mu_n})$ estimates} \label{secheurwithout}

 In this section, we assume that $\mcb S = \mcb S_0$ and $\gamma \in (1,2)$ and show some convergences in $L^1(\mathbb{P}_{\mu_n}$) that were used along the article. Recall \eqref{medemp}. The proof of the next result is analogous to the proof of Lemma 6.1 and Lemma 6.2 in \cite{casodif} and for that reason it will be omitted.
\begin{lem} \textbf{(One-block estimate)} \label{obe}
Let $\gamma\in(1,2)$.
For  $ \varepsilon >0$ and $ n \geq 1$, let $\ell_0=\ell_0(\varepsilon,n):=  \varepsilon n^{\gamma-1}$. Let $F \in L^{\infty}([0,T])$ and $\theta \in L^1(\mathbb{Z})$. Then, for every $t \in [0,T]$, 
\begin{equation} \label{obeleft}
\limsup_{\varepsilon \rightarrow 0^+} \limsup_{n \rightarrow \infty} \mathbb{E}_{\mu_n} \Big[\sup_{t \in [0,T]} \Big|  \int_0^t F(s) \sum_{z =-\infty}^{-1}  \theta(z) [\eta_{s}^{n}(z) - \eta_s^{\leftarrow\ell_0 }(0) ] ds \Big|   \Big] = 0,
\end{equation}
\begin{equation} \label{oberight}
\limsup_{\varepsilon \rightarrow 0^+} \limsup_{n \rightarrow \infty} \mathbb{E}_{\mu_n} \Big[ \sup_{t \in [0,T]} \Big| \int_0^t F(s)  \sum_{z =0}^{\infty}  \theta(z) [\eta_{s}^{n}(z) - \eta_s^{\rightarrow  \ell_0}(0) ] ds  \Big|    \Big] = 0
\end{equation}
and
\begin{equation} \label{obepright} 
\limsup_{\varepsilon \rightarrow 0^+} \limsup_{n \rightarrow \infty} \mathbb{E}_{\mu_n} \Big[ \sup_{t \in [0,T]} \Big| \int_0^t F(s)   [ \eta_s^{\rightarrow \ell_0 }(0) - \eta_{s}^{n}(0)] ds   \Big|  \Big] = 0.
\end{equation}
Moreover if $\beta \in [0, \gamma-1)$, we have
\begin{equation} \label{obepleft} 
\limsup_{\varepsilon \rightarrow 0^+} \limsup_{n \rightarrow \infty} \mathbb{E}_{\mu_n} \Big[\sup_{t \in [0,T]}  \Big|  \int_0^t F(s)   [ \eta_s^{\leftarrow \ell_0 }(0) - \eta_{s}^{n}(0)] ds  \Big|   \Big] = 0.
\end{equation}
\end{lem}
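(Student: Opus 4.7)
I follow the entropy-plus-Feynman--Kac strategy used to prove Lemmas~6.1--6.2 of \cite{casodif}, adjusted to the super-diffusive scaling $n^\gamma$ and the slow barrier described in \eqref{DnF}. For each of the four quantities in the statement, the entropy inequality applied with $H(\mu_n|\nu_a) \leq C_a n$, the elementary bound $e^{|x|}\leq e^x+e^{-x}$, and Feynman--Kac's formula reduce the problem to controlling, uniformly in $t\in[0,T]$ and over all densities $f$ on $\Omega$ with respect to $\nu_a$, an expression of the form
\begin{equation*}
\sup_{f}\Big\{\,B\,\|F\|_\infty \!\int \Phi(\eta)\, f\, d\nu_a \;-\; \tfrac{n^{\gamma}}{2}\,D_n(\sqrt f,\nu_a)\,\Big\},
\end{equation*}
where $B>0$ is a free parameter optimised at the end and $\Phi(\eta)$ is the linear functional of $\eta$ appearing inside each of the integrals \eqref{obeleft}--\eqref{obepleft}.

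The three assertions \eqref{obeleft}, \eqref{oberight} and \eqref{obepright} only involve sites lying on a single half-line, so every bond that enters the replacement belongs to $\mcb F$. A change of variable $\eta\mapsto\eta^{x,y}$ combined with Young's inequality yields
\begin{equation*}
\Big|\int[\eta(x)-\eta(y)]f\,d\nu_a\Big| \;\leq\; \frac{1}{2A\,p(y-x)}\,D_n^{\mcb F}(\sqrt f,\nu_a) \;+\; 2A
\end{equation*}
for every $A>0$ and every $x,y$ on the same half-line. Writing the replacement error as $\frac{1}{\ell_0}\sum_{y=-\ell_0}^{-1}[\eta(z)-\eta(y)]$ (and analogously for the other cases), choosing $A$ so that the Dirichlet contribution is absorbed by $-\tfrac{n^\gamma}{2}D_n^{\mcb F}$, and using $\theta\in L^1(\mathbb Z)$ (the averages of the $p(z-y)^{-1}$ over $y$ in the box being integrable against $|\theta(z)|$ thanks to the super-diffusive speed-up), the remaining term is of order $\varepsilon\|\theta\|_1\|F\|_\infty$ and therefore vanishes in the iterated limit. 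The scale $\ell_0=\varepsilon n^{\gamma-1}$ is precisely the one at which a particle can be transported across a block of size $\ell_0$ in the time scale $n^\gamma$ at constant Dirichlet cost.

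The extra ingredient in \eqref{obepleft} is that site $0$ sits on the \emph{positive} half-line, so any replacement path used to reach a point of $\{-\ell_0,\dots,-1\}$ must cross at least one bond of $\mcb S$. For such a bond the Dirichlet form carries the prefactor $\alpha n^{-\beta}$, hence the analogue of the Young inequality above costs an additional factor of $n^\beta$ (with $D_n^{\mcb S}$ in place of $D_n^{\mcb F}$). Combining this single cross-barrier cost with a one-sided path of length at most $\ell_0$, the overall bound becomes of order $\varepsilon\,n^{\beta-(\gamma-1)}$, which vanishes if and only if $\beta<\gamma-1$; this is exactly the range of $\beta$ assumed in \eqref{obepleft}.

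The main obstacle is the careful bookkeeping of the exponents of $n$ in this last step: one must track how the slow prefactor $\alpha n^{-\beta}$, the Young parameter $A$, the entropy parameter $B$ and the scale $\ell_0=\varepsilon n^{\gamma-1}$ combine to produce the sharp threshold $\beta<\gamma-1$. This is where the argument substantively differs from its nearest-neighbour counterpart in \cite{casodif}, and where the specific structure of the slow Dirichlet form \eqref{DnF} is felt.
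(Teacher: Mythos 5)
You correctly identify the entropy--Feynman--Kac framework, which is the standard route (the paper omits the proof, pointing to Lemmas~6.1--6.2 of \cite{casodif}), and your final exponent $n^{\beta-(\gamma-1)}$ for the cross-barrier cost does match the stated threshold $\beta<\gamma-1$: a single slow bond $\{-1,0\}$ contributes a remainder of order $D\,n^{\beta}/n^{\gamma-1}$ once one accounts for the correct Dirichlet coefficient $n^{\gamma-1}/D$ coming from the entropy parameter $B\sim Dn$.

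However, the bond-by-bond Young step you describe has a genuine gap. You write the replacement error as $\tfrac{1}{\ell_0}\sum_{y}[\eta(z)-\eta(y)]$ and apply Young's inequality directly to each long-range bond $\{z,y\}$, absorbing $I_{z,y}$ against $p(z-y)^{-1}$. Carrying this out with the correct Dirichlet coefficient $n^{\gamma-1}/D$ gives a remainder of order $D\,\ell_0^{\gamma}/n^{\gamma-1}=D\,\varepsilon^{\gamma}\,n^{(\gamma-1)^{2}}$, which \emph{diverges} as $n\to\infty$ for every $\gamma\in(1,2)$ --- not the claimed $O(\varepsilon)$ --- because the averages of $p(z-y)^{-1}$ over the box are of order $\ell_0^{1+\gamma}$, not $O(1)$. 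The fix is to avoid paying the full long-jump cost for every pair $(z,y)$: either decompose $\eta(z)-\eta(y)$ along a nearest-neighbour telescope $\sum_w[\eta(w)-\eta(w+1)]$ (each bond has rate $p(1)=c_\gamma$, yielding a remainder of order $D\,\ell_0/n^{\gamma-1}=D\varepsilon$), or use the dyadic-scale telescoping together with the moving particle lemma (Lemma~\ref{movpartlem}) exactly as in the proof of the two-block estimate (Lemma~\ref{tbe}), which yields a remainder of order $D\,\ell_0^{\gamma-1}/n^{\gamma-1}=D\,\varepsilon^{\gamma-1}\,n^{(\gamma-1)(\gamma-2)}\to 0$. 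Either repair is essential, and this is precisely what makes the scale $\ell_0=\varepsilon n^{\gamma-1}$ work. Finally, note that your summary ``the overall bound becomes of order $\varepsilon\,n^{\beta-(\gamma-1)}$'' conflates the two contributions: the slow-bond term $Dn^{\beta-(\gamma-1)}$ and the fast one-block remainder are \emph{additive}, and $\varepsilon$ is not a multiplicative prefactor of the first.
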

At a first glance the reader might be asking about the restriction on the parameter  $\beta$ appearing in last display. This restriction comes from the fact that  \eqref{obepleft}  involves the exchange of particles from  a subset of $\mathbb Z_-^{*}$ to the site $x=0$ and to do that one has to use the slow bonds. For that reason the error depends on the value of $\beta$ and for $\beta<\gamma-1$ this exchange can still be done. 

Recall the definition of $D_n$ given in \eqref{bound} and of $I_{x,y}$ given in\eqref{DnF}. The following lemma can be proved similarly to the proof of  Lemma 5.8 in \cite{stefano}.  
\begin{lem} \textbf{(Moving particle lemma)} \label{movpartlem}\\
Fix $n \geq 1$, $\ell_0 < n-1$ and $M \geq 1$ such that $2^{M} \ell_0 < n-1$. For $ i \in \{1, \ldots, M\}$, let $\ell_{i}:=2^{i} \ell_0$. Let $f$ be a density with respect to $\nu_{a}$. There exists $C_{mpl} >0$ such that
\begin{equation} \label{mpl1}
\sum_{i=1}^{M} \sum_{y=1}^{\ell_{i-1}} \frac{I_{y,y+\ell_{i-1}  }  (\sqrt{f}, \nu_{a} )}{(\ell_{i-1})^{\gamma}} \leq C_{mpl}  D_n  (\sqrt{f}, \nu_{a} )
\end{equation}
and
\begin{equation} \label{mpl2}
\sum_{i=1}^{M} \sum_{y=-\ell_{i-1}}^{-1} \frac{I_{y,y-\ell_{i-1}  }  (\sqrt{f}, \nu_{a} )}{(\ell_{i-1})^{\gamma}}  \leq C_{mpl}  D_n  (\sqrt{f}, \nu_{a} ).
\end{equation}
\end{lem}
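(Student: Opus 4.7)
The plan is to prove only \eqref{mpl1}, since \eqref{mpl2} follows by the reflection $x\mapsto -x$ which preserves $p(\cdot)$, $\nu_a$, and the bond classes $\mcb F$ and $\mcb S$. The cornerstone of the argument is the three-site exchange identity: for any triple of distinct sites $y,z,w$, a direct inspection of the values of $\eta$ at these three sites gives $\eta^{y,w} = ((\eta^{y,z})^{z,w})^{y,z}$. Writing the telescoping decomposition
\[
\sqrt{f(\eta^{y,w})} - \sqrt{f(\eta)} = a_{1}(\eta) + a_{2}(\eta) + a_{3}(\eta),
\]
applying the Cauchy--Schwarz inequality $(a_{1}+a_{2}+a_{3})^{2}\le 3(a_{1}^{2}+a_{2}^{2}+a_{3}^{2})$, integrating against $\nu_a$, and performing the $\nu_a$-invariant changes of variable $\eta\mapsto \eta^{y,z}$ and $\eta \mapsto (\eta^{y,z})^{z,w}$ in the two shifted integrals, I would obtain the pointwise bound
\[
I_{y,w}(\sqrt f,\nu_a) \le 6\, I_{y,z}(\sqrt f,\nu_a) + 3\, I_{z,w}(\sqrt f,\nu_a),
\]
valid for \emph{every} intermediate site $z$.

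Specializing to $w=y+\ell_{i-1}$, the next step is to average this pointwise bound over a carefully chosen set of intermediates $\mcb Z_{y,i}:=\{y+2\ell_{i-1}+1,\ldots,y+3\ell_{i-1}\}$, which has cardinality $\ell_{i-1}$. This choice is engineered so that (a) every $z\in \mcb Z_{y,i}$ satisfies $z>0$ whenever $y\ge 1$, so both intermediate bonds $\{y,z\}$ and $\{z,y+\ell_{i-1}\}$ remain inside $\mcb F$; and (b) both $|z-y|$ and $|z-(y+\ell_{i-1})|$ lie in the window $(\ell_{i-1},3\ell_{i-1}]$, so $p(|z-y|)$ and $p(|z-(y+\ell_{i-1})|)$ are comparable to $\ell_{i-1}^{-1-\gamma}$ up to constants depending only on $\gamma$. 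After averaging over $z\in\mcb Z_{y,i}$ and dividing by $\ell_{i-1}^{\gamma}$, the prefactor $\ell_{i-1}^{-1-\gamma}$ is absorbed exactly into the transition rates and produces
\[
\frac{I_{y,y+\ell_{i-1}}(\sqrt f,\nu_a)}{\ell_{i-1}^{\gamma}} \le C \sum_{z\in \mcb Z_{y,i}}\Bigl[ p(|z-y|)\, I_{y,z}(\sqrt f,\nu_a) + p(|z-(y+\ell_{i-1})|)\, I_{z,y+\ell_{i-1}}(\sqrt f,\nu_a)\Bigr]
\]
with an absolute constant $C$ depending only on $\gamma$.

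Summing over $y\in\{1,\ldots,\ell_{i-1}\}$ and $i\in\{1,\ldots,M\}$, the right-hand side becomes a weighted sum of $p(|b-a|)\,I_{a,b}(\sqrt f,\nu_a)$ over bonds $\{a,b\}\in\mcb F$ contained in $\mathbb N$. The key combinatorial observation is that, thanks to the dyadic structure $\ell_{i-1}=2^{i-1}\ell_{0}$, at scale $i$ the bonds of type $\{y,z\}$ have length in $(2\ell_{i-1},3\ell_{i-1}]$ with left endpoint $a=y\in[1,\ell_{i-1}]$, while those of type $\{z,y+\ell_{i-1}\}$ have length in $(\ell_{i-1},2\ell_{i-1}]$ with left endpoint $a=y+\ell_{i-1}\in[\ell_{i-1}+1,2\ell_{i-1}]$. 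A direct comparison of these (length, left-endpoint) ranges across scales shows that every fixed bond of $\mcb F$ is matched by at most one triple $(i,y,z)$ in the whole sum. Hence the triple sum is dominated by $\sum_{\{a,b\}\in \mcb F} p(b-a)\,I_{a,b}(\sqrt f,\nu_a)=2\,D_{n}^{\mcb F}(\sqrt f,\nu_a)\le 2\,D_{n}(\sqrt f,\nu_a)$, which yields \eqref{mpl1} with $C_{mpl}=2C$ independent of $M$. The second estimate \eqref{mpl2} is obtained identically with intermediates chosen in $\{y-3\ell_{i-1},\ldots,y-2\ell_{i-1}-1\}\subset \mathbb Z_{-}^{*}$.

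The step I expect to be the most delicate is precisely this bookkeeping: the intermediates must be kept on the same side of the origin, as otherwise the intermediate bonds would be slow, and their Dirichlet weight would carry the factor $\alpha n^{-\beta}$ which ruins the bound for $\beta>0$; simultaneously, the dyadic separation of the length-ranges must be verified carefully so that the multiplicity of each bond in the final sum is absolute, which is what keeps $C_{mpl}$ from degrading with $M$. Any less careful choice of $\mcb Z_{y,i}$ would break one of these two properties.
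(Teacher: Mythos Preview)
Your argument is correct and follows the same overall strategy as the paper: bound $I_{y,y+\ell_{i-1}}$ via intermediate sites, average over a family of intermediates of cardinality $\sim\ell_{i-1}$, and then verify that the resulting collection of auxiliary bonds is injective in $(i,y,z,\text{type})$ so that the double sum collapses into (a constant times) $D_n^{\mcb F}$. Two implementation choices differ from the paper. First, you use the universal identity $\eta^{y,w}=((\eta^{y,z})^{z,w})^{y,z}$, which yields $I_{y,w}\le 6I_{y,z}+3I_{z,w}$ for \emph{all} $\eta$ and replaces the paper's occupation-by-occupation case analysis; this is slightly cleaner at the cost of a worse constant. Second, you place the intermediates \emph{outside} the segment, in $\{y+2\ell_{i-1}+1,\ldots,y+3\ell_{i-1}\}$, whereas the paper places them inside, in $\{y+\tfrac{\ell_{i-1}}{2}+1,\ldots,y+\ell_{i-1}\}$; both choices keep all auxiliary bonds on the positive side (hence in $\mcb F$) and produce jump lengths comparable to $\ell_{i-1}$, and in either case the injectivity check reduces to the same dyadic separation of the (length, left-endpoint) windows that you sketch and the paper carries out explicitly.
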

\begin{proof}
We prove only \eqref{mpl1}, but we observe that the proof of \eqref{mpl2} is analogous. We can assume without loss of generality that $\ell_0$ is even (the argument is easy to extend to  $\ell_0$ odd) and as a consequence  $\ell_{i-1}$ is an even number for any $i \in \{1, \ldots,M\}$. Fix $i \in \{1, \ldots, M\}$. For every $y \in \{1, \ldots, \ell_{i-1}\}$ consider the $\frac{\ell_{i-1}}{2}$ possibilities for a  particle to jump from $y$ to $y + \ell_{i-1}$ with at most two steps. Hence for any $j \in \{1, \ldots, \frac{\ell_{i-1}}{2}\}$, define 
\begin{equation}
\label{eqzdefinition}
z_{0,j}:=z_{0,j}^i (y)=y,\quad z_{1,j}:=z_{1,j}^i (y) = y + \frac{\ell_{i-1}}{2}+j\quad z_{2,j}:=z_{2,j}^i (y)=y + \ell_{i-1}.
\end{equation} 
In order to simplify the notation, we will omit the index $i$ in $z_{0,j}$, $z_{1,j}$ and $z_{2,j}$. The sites defined in \eqref{eqzdefinition}
 correspond to one jump of length $\frac{\ell_{i-1}}{2}+j$ from $z_{0,j}=y$ to $z_{1,j}$  and one jump of length $\frac{\ell_{i-1}}{2}-j$ from $z_{1,j}$ to $z_{2,j}=y+\ell_{i-1}$. Observe that for $j=\frac{\ell_{i-1}}{2}$, there  is only one jump  from $y$ to $y+\ell_{i-1}$ so that $z_{1,j} = z_{2,j}$. 
Observe that
\begin{equation}\label{problem_cedric}
\sqrt{f \left( \eta^{y,y+\ell_{i-1}}  \right) } - \sqrt{f \left( \eta \right) } = \sqrt{f \left( \eta^{z_{0,j},z_{2,j}}  \right) } - \sqrt{f \left( \eta \right) }  
\end{equation}
is non zero if, and only if, $\eta(z_{0,j})\neq \eta(z_{2,j})$. We want to rewrite \eqref{problem_cedric} using the intermediate point $z_{1,j}$. To do that we consider separately the possible combinations of values of the $z_{q,j}$ with $p\in \{q,1,2\}$.  First, assume that $\eta(z_{0,j})=1$ and $\eta(z_{2,j})=0$. In this case we have two possibilities:
\begin{enumerate}
	\item[a)] when $\eta(z_{1,j})=0$, we observe that $\eta^{z_{0,j},z_{2,j}} = (\eta^{z_{0,j},z_{1,j}} )^{z_{1,j},z_{2,j}}$. So, in this particular case we can write \eqref{problem_cedric} as 
	\begin{equation}\label{list1}
 \Big[ \sqrt{f \big( (\eta^{z_{0,j},z_{1,j}} )^{z_{1,j},z_{2,j}} \big) } - \sqrt{f \left( \eta^{z_{0,j},z_{1,j}} \right) } \Big] + \Big[\sqrt{f (  \eta^{z_{0,j},z_{1,j}} ) } - \sqrt{f ( \eta ) } \Big].\end{equation}
	\item[b)] when $\eta(z_{1,j})=1$, we observe that $\eta^{z_{0,j},z_{2,j}} = (\eta^{z_{1,j},z_{2,j}} )^{z_{0,j},z_{1,j}}$. So, in this particular case we can write \eqref{problem_cedric} as
	 \begin{equation}\label{list2}
	 \Big[\sqrt{f \big( (\eta^{z_{1,j},z_{2,j}} )^{z_{0,j},z_{1,j}} \big) } - \sqrt{f \left(\eta^{z_{1,j},z_{2,j}} \right) } \Big] + \Big[\sqrt{f (  \eta^{z_{1,j},z_{2,j}} ) } - \sqrt{f ( \eta ) } \Big].\end{equation}
\end{enumerate} 
 We also have to consider the case $\eta(z_{1,j})=0$ and $\eta(z_{2,j})=1$. Reasoning similarly to what we did above, if  $\eta(z_{1,j})=0$ we can rewrite \eqref{problem_cedric} as \eqref{list2}, otherwise, if $\eta(z_{1,j})=1$ we can rewrite \eqref{problem_cedric} as \eqref{list1}. Let us now consider the following sets of  configurations:
\begin{equation*}
\begin{split}
\tilde \Omega_{i,y,j}^1 = \{ \eta \in \Omega: &\eta(z_{0,j})=1,\eta(z_{1,j})=0,\eta(z_{2,j})=0 
\text{ or } \eta(z_{0,j})=0,\eta(z_{1,j})=1,\eta(z_{2,j})=1\}
\end{split}
\end{equation*}
\begin{equation*}
\begin{split}
\tilde \Omega_{i,y,j}^2= \{\eta \in \Omega: &\eta(z_{0,j})=1,\eta(z_{1,j})=1,\eta(z_{2,j})=0 \text{ or } \eta(z_{0,j})=0,\eta(z_{1,j})=0,\eta(z_{2,j})=1\}.
\end{split}
\end{equation*}
 Observe that $\tilde \Omega_{i,y,j}^1$ and $ \tilde \Omega_{i,y,j}^2$ are disjoint sets. Now, thanks to the reasoning that we did above and using the inequality $(a+b)^2 \le 2(a^2+b^2)$, we can write
\begin{align*}
&I_{y,y+\ell_{i-1}}  (\sqrt{f}, \nu_{a} )= \int \big(\sqrt{f \left( \eta^{y,y+\ell_{i-1}} \right) } - \sqrt{f \left( \eta \right) } \big)^2 d \nu_{a}  \\
& \lesssim  \int_{\tilde \Omega_{i,y,j}^1}  \Big( \sqrt{f \big( (\eta^{z_{0,j},z_{1,j}} )^{z_{1,j},z_{2,j}} \big) } - \sqrt{f \left( \eta^{z_{0,j},z_{1,j}} \right) } \Big)^2 d\nu_{a}  + \int_{\tilde \Omega_{i,y,j}^1}\left(\sqrt{f (  \eta^{z_{0,j},z_{1,j}} ) } - \sqrt{f ( \eta ) } \right)^2 d\nu_{a}  \\
 &+  \int_{\tilde \Omega_{i,y,j}^2}  \Big( \sqrt{f \big( (\eta^{z_{1,j},z_{2,j}} )^{z_{0,j},z_{1,j}} \big) } - \sqrt{f \left( \eta^{z_{1,j},z_{2,j}} \right) } \Big)^2 d\nu_{a}  + \int_{\tilde \Omega_{i,y,j}^2}\left(\sqrt{f (  \eta^{z_{1,j},z_{2,j}} ) } - \sqrt{f ( \eta ) } \right)^2 d\nu_{a}  \\
 & \leq  \int  \Big( \sqrt{f \big( (\eta^{z_{0,j},z_{1,j}} )^{z_{1,j},z_{2,j}} \big) } - \sqrt{f \left( \eta^{z_{0,j},z_{1,j}} \right) } \Big)^2 d\nu_{a}  + \int\left(\sqrt{f (  \eta^{z_{0,j},z_{1,j}} ) } - \sqrt{f ( \eta ) } \right)^2 d\nu_{a}  \\
 &+  \int  \Big( \sqrt{f \big( (\eta^{z_{1,j},z_{2,j}} )^{z_{0,j},z_{1,j}} \big) } - \sqrt{f \left( \eta^{z_{1,j},z_{2,j}} \right) } \Big)^2 d\nu_{a}  + \int\left(\sqrt{f (  \eta^{z_{1,j},z_{2,j}} ) } - \sqrt{f ( \eta ) } \right)^2 d\nu_{a}. 
 \end{align*}
Since $\nu_a(\eta^{x,y})=\nu_a(\eta)$ for every $x,y \in \mathbb{Z}$ and every $\eta \in \Omega$, last display can be bounded from above by a constant times
 \begin{equation*}
I_{z_{1,j},z_{2,j}}  (\sqrt{f}, \nu_{a})+I_{z_{0,j},z_{1,j}}  (\sqrt{f}, \nu_{a} )+ I_{z_{0,j},z_{1,j}}  (\sqrt{f}, \nu_{a} )+ I_{z_{1,j},z_{2,j}}  (\sqrt{f}, \nu_{a}).
 \end{equation*}
 Observe now that by construction we have $[p (z_{k,j} - z_{k-1,j})]^{-1}  \lesssim \ell_{i-1}^{1+\gamma}$ (the longest possible jump has size at most  $\ell_{i-1}$). Hence, we have
\begin{equation*}
 I_{y,y+\ell_{i-1}}  (\sqrt{f}, \nu_{a} )  \lesssim \ell_{i-1}^{1+\gamma} \sum_{k=1}^2   p (z_{k,j} - z_{k-1,j})I_{z_{k-1,j},z_{k,j}}  (\sqrt{f}, \nu_{a}).
\end{equation*}
Since last inequality  is true for any $j \in \{1,\dots, \tfrac{\ell_{i-1}}{2}\}$,  we can write
\begin{equation*}
\ell_{i-1}I_{y,y+\ell_{i-1}}  (\sqrt{f}, \nu_{a} )	\lesssim  \ell_{i-1}^{1+\gamma}  \sum_{j=1}^{\ell_{i-1}/2} \sum_{k=1}^2   p (z_{k,j} - z_{k-1,j})I_{z_{k-1,j},z_{k,j}}  (\sqrt{f}, \nu_{a} ),
\end{equation*}
which implies that
\begin{equation}
\label{Iprev}
\sum_{i=1}^M \sum_{y=1}^{\ell_{i-1}}  \frac{I_{y,y+\ell_{i-1}}(\sqrt{f}, \nu_{a} )}{\ell_{i-1}^{\gamma}} \lesssim  \sum_{i=1}^M \sum_{y=1}^{\ell_{i-1}} \sum_{j=1}^{\ell_{i-1}/2} \sum_{k=1}^2   p (z_{k,j} - z_{k-1,j})I_{z_{k-1,j},z_{k,j}}  (\sqrt{f}, \nu_{a} ).
\end{equation}
Recall  that the $z_{k,j}$'s depend in fact on $i$ and $y$. We claim that when $i,y,j,k$ describe the sets involved in last sum, the pairs $(z_{k-1,j} , z_{k,j}):=(z_{k-1, j}^i (y), z_{k,j}^i (y) )$ are all different, i.e.
\begin{equation}
\label{eq:phi}
\Phi: (i,y,j,k) \to (z_{k-1, j}^i (y), z_{k,j}^i (y) ) \in \mathbb{Z}_{-}^{*}   \times \mathbb{Z}_{-}^{*} \quad \text{is injective.}
\end{equation}
Therefore, recalling \eqref{DnF}, we can bound from above the term on the right-hand side of \eqref{Iprev} by
\begin{equation}\label{64}
\underset{v\leq w}{ \sum_{v,w \in \mathbb{Z}_{-}^{*}}} p(w-v) I_{v,w} (\sqrt{f}, \nu_{\rho(\cdot)}^N ) \lesssim   D_n (\sqrt{f},\nu_{a}).
\end{equation}
Putting together \eqref{64} and \eqref{Iprev} we get the statement. We still have to prove \eqref{eq:phi} to conclude the proof. Let us assume that
$\Phi (i,y,j,k) = \Phi (i',y',j',k')
$
and let us prove that $(i,y,j,k)=(i',y',j',k')$. We distinguish four cases according to the values of $k$ and $k'$.

\begin{itemize}
\item [i)]$k=k'=1$: then $z_{0,j}^i (y)=z_{0,j'}^{i'} (y')$ and $z_{1,j}^i (y)=z_{1,j'}^{i'} (y')$ imply that 
$y=y', \quad \tfrac{\ell_{i-1}}{2} +j = \tfrac{\ell_{i'-1}}{2} +j' .$
Since $1 \le j \le \ell_{i-1} /2$ and $1 \le j' \le \ell_{i'-1}/2$ we have that
\begin{equation}
\label{eq:k11}
1 + \tfrac{\ell_{i-1}}{2} \le \tfrac{\ell_{i-1}}{2} +j \le \ell_{i-1}\quad \text{and}\quad 1 + \tfrac{\ell_{i'-1}}{2} \le \tfrac{\ell_{i'-1}}{2} +j' \le \ell_{i'-1}.
\end{equation}
If $i\le i' -1$ then $\ell_{i-1} \le \tfrac{\ell_{i' -1}}{2}<  \tfrac{\ell_{i' -1}}{2} +1$ and the equality $\tfrac{\ell_{i-1}}{2} +j = \tfrac{\ell_{i'-1}}{2} +j'$ is then in contradiction with \eqref{eq:k11}. If $i'\le i -1$ then $\ell_{i'-1} \le \tfrac{\ell_{i -1}}{2}<  1+ \tfrac{\ell_{i -1}}{2}$ the equality $\tfrac{\ell_{i-1}}{2} +j = \tfrac{\ell_{i'-1}}{2} +j'$ is then again in contradiction with \eqref{eq:k11}. Hence $i=i'$ and consequently $j=j'$ and we are done.

\item  [ii)] $k=1$ and $k'=2$: then $z_{0,j}^i (y)=z_{1,j'}^{i'} (y')$ and $z_{1,j}^i (y)=z_{2,j'}^{i'} (y')$ imply that 
$
y= y'+ \tfrac{\ell_{i'-1}}{2} +j', \quad  y+ \tfrac{\ell_{i-1}}{2} +j =y'+ \ell_{i'-1} ,
$
and hence by replacing $y$ in the second equality by $ y'+ \tfrac{\ell_{i'-1}}{2} +j'$, we get
$y= y'+ \tfrac{\ell_{i'-1}}{2} +j', \quad  \tfrac{\ell_{i-1}}{2} +j =\tfrac{\ell_{i'-1}}{2} -j'. $
Since $1\le y \le \ell_{i-1}$, $1\le y'\le \ell_{i'-1}$, $1 \le j \le \ell_{i-1} /2$ and $1 \le j' \le \ell_{i'-1}/2$ we have that
\begin{equation}
\label{eq:k12}
\begin{split}
& 1\le y \le \ell_{i-1} \quad \text{and} \quad 2 + \tfrac{\ell_{i'-1}}{2} \le y \le 2 \ell_{i'-1} , \\
& 1 + \tfrac{\ell_{i-1}}{2} \le \tfrac{\ell_{i-1}}{2} +j \le \ell_{i-1} \quad \text{and} \quad 0 \le  \tfrac{\ell_{i-1}}{2} +j \le \tfrac{\ell_{i'-1}}{2} -1 .
\end{split}
\end{equation}
If $i\le i'-1$ then $\ell_{i-1} \le \tfrac{\ell_{i' -1}}{2}<2 + \tfrac{\ell_{i'-1}}{2}$ and there is a contradiction with the  first line of \eqref{eq:k12}. If $i'\le i$ then $\ell_{i'-1} \le {\ell_{i -1}}$, hence $\tfrac{\ell_{i'-1}}{2} -1 < 1 + \tfrac{\ell_{i-1}}{2}$, which is in contradiction with the second line of \eqref{eq:k12}. Hence this case is not possible and we are done.

 \item [iii)] $k=2$ and $k'=1$: by symmetry this case is equivalent to the previous one.

 \item [iv)] $k'=2$ and $k=2$:  then $z_{1,j}^i (y)=z_{1,j'}^{i'} (y')$ and $z_{2,j}^i (y)=z_{2,j'}^{i'} (y')$ imply that 
$$\tfrac{\ell_{i-1}}{2} +j +y  = \tfrac{\ell_{i'-1}}{2} +j' +y', \quad {\ell_{i-1}}+y  = {\ell_{i'-1}} +y' .$$ 
The second equality and the fact that $1 \le y \le \ell_{i-1}$, resp. $1\le y' \le \ell_{i'-1}$ implies that
\begin{equation}
\label{eq:k22}
1+\ell_{i-1} \le y +\ell_{i-1} \le 2\ell_{i-1} \quad \text{and}\quad 1+\ell_{i'-1} \le y +\ell_{i-1} \le 2\ell_{i'-1}.
\end{equation}
If $i\le i'-1$ then $2\ell_{i-1} \le {\ell_{i' -1}}< {\ell_{i' -1}}+1 $ which is in contradiction with \eqref{eq:k22}. Similarly if $i'\le i-1$ then $2\ell_{i'-1} \le {\ell_{i -1}}< {\ell_{i -1}}+1 $ is in contradiction with \eqref{eq:k22}. Hence $i=i'$ and consequently we deduce that $y=y'$ and $j=j'$.
\end{itemize}
This concludes the proof of the lemma.
\end{proof}

\begin{lem} \textbf{(Two-blocks estimate)} \label{tbe} Let $\gamma>1$.
For $\varepsilon >0$ and for $ n \geq 1$ let $\ell_0=\ell_0(\varepsilon,n):=  \varepsilon n^{\gamma-1},$  and $F \in L^{\infty}([0,T])$. For every $t \in [0,T]$, 
\begin{equation} \label{rlleft2}
\limsup_{\varepsilon \rightarrow 0^+} \limsup_{n \rightarrow \infty} \mathbb{E}_{\mu_n} \Big[ \sup_{t \in [0,T]}  \Big|  \int_0^t F(s)  [\eta_s^{\leftarrow\ell_0 }(0) - \eta_s^{\leftarrow \varepsilon n }(0)  ] ds  \Big|   \Big] = 0,
\end{equation}
\begin{equation} \label{rlright2}
\limsup_{\varepsilon \rightarrow 0^+} \limsup_{n \rightarrow \infty} \mathbb{E}_{\mu_n} \Big[ \sup_{t \in [0,T]}  \Big| \int_0^t F(s)   [ \eta_s^{\rightarrow  \ell_0}(0) - \eta_s^{\rightarrow  \varepsilon n}(0)  ] ds  \Big|    \Big] = 0.
\end{equation}
	\end{lem}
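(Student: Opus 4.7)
I prove only \eqref{rlleft2}; the proof of \eqref{rlright2} is entirely symmetric, using \eqref{mpl1} instead of \eqref{mpl2}. Set $V(\eta):=\eta^{\leftarrow \ell_0}(0)-\eta^{\leftarrow\varepsilon n}(0)$. The starting point is the by now standard combination of the entropy inequality with bound $H(\mu_n|\nu_a)\le C_a n$, a splitting $V=V^{+}-V^{-}$ followed by Feynman--Kac and a Kipnis--Varadhan type bound on $\sup_{t\le T}|\int_0^t F(s)V(\eta^n_s)ds|$, together with the entropy-production inequality $\int_0^T D_n(\sqrt{f_s},\nu_a)\, ds \lesssim C_a/n^{\gamma-1}$, where $f_s$ is the density of the law of $\eta_s^n$ with respect to $\nu_a$. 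This reduces the statement to uniformly estimating, over densities $f$ with respect to $\nu_a$, the variational expression
\[
\Big|\int V(\eta)\, f(\eta)\,d\nu_a\Big| - \frac{n^{\gamma-1}}{A}\, D_n(\sqrt f,\nu_a),
\]
where $A>0$ will be tuned at the end.

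\textbf{Dyadic decomposition.} Let $M$ be the largest integer such that $2^M\ell_0\le \varepsilon n$; since $\varepsilon n/\ell_0 = n^{2-\gamma}$, we have $M\sim (2-\gamma)\log_2 n$. Putting $\ell_i:=2^i\ell_0$, the identity
$\eta^{\leftarrow\ell_i}(0)=\tfrac12\bigl[\eta^{\leftarrow\ell_{i-1}}(0)+\eta^{\leftarrow\ell_{i-1}}(-\ell_{i-1})\bigr]$
together with a shift of index produces the telescoping identity
\[
\eta^{\leftarrow\ell_0}(0) - \eta^{\leftarrow\ell_M}(0) = \sum_{i=1}^{M}\frac{1}{2\ell_{i-1}}\sum_{y=-\ell_{i-1}}^{-1} [\eta(y)-\eta(y-\ell_{i-1})].
\]
The left-over piece $\eta^{\leftarrow\ell_M}(0)-\eta^{\leftarrow\varepsilon n}(0)$ compares two consecutive box sizes within a factor of at most~2 and is handled as a single further dyadic-type step, contributing a negligible correction of the same order as one term of the telescoping.

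\textbf{Young's inequality, moving particles and conclusion.} For every $A_{i,y}>0$, Young's inequality (cf.~\eqref{young}) gives
\[
\Big|\int [\eta(y)-\eta(y-\ell_{i-1})]\,f\,d\nu_a\Big| \le \frac{I_{y,y-\ell_{i-1}}(\sqrt f,\nu_a)}{2A_{i,y}} + 2 A_{i,y}.
\]
Choose $A_{i,y}:=c\,\ell_{i-1}^{\gamma-1}$ with $c>0$ free. Summing the Young bounds, inserting the factor $\frac{1}{2\ell_{i-1}}$ from the telescoping, and invoking Lemma~\ref{movpartlem}, equation \eqref{mpl2}, gives
\[
\Big|\int V\,f\,d\nu_a\Big|\le \frac{C_{mpl}}{4c}\,D_n(\sqrt f,\nu_a) + c\sum_{i=1}^{M}\ell_{i-1}^{\gamma-1}.
\]
Since $\gamma>1$, the geometric sum is dominated by its last term and $\sum_{i=1}^{M}\ell_{i-1}^{\gamma-1}\lesssim (\varepsilon n)^{\gamma-1}$. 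Substituting back, choosing $A$ so that the Dirichlet contribution is absorbed by the entropy-production term, and using $\int_0^T D_n(\sqrt{f_s},\nu_a)\,ds\lesssim C_a/n^{\gamma-1}$ yields a bound of the form $C_1 c^{-1} n^{-(\gamma-1)} + C_2\, c\,(\varepsilon n)^{\gamma-1}\cdot T$. Optimizing in $c$ gives a total bound of order $\varepsilon^{(\gamma-1)/2}$, which vanishes when one first sends $n\to\infty$ and then $\varepsilon\to 0^{+}$ since $\gamma>1$.

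\textbf{Main obstacle.} The essential ingredient is Lemma~\ref{movpartlem}: its $\ell_{i-1}^{\gamma}$ normalization is exactly what is needed to absorb the $\ell_{i-1}^{-\gamma}$ weights produced by the Young estimate into the full Dirichlet form $D_n(\sqrt f,\nu_a)$, without any loss in $n$. A scale-by-scale bound without it would accumulate a factor that the entropy production $\sim n^{1-\gamma}$ could not compensate. A secondary simplification, in sharp contrast with \eqref{obepleft} of the one-block estimate, is that every exchange arising in this telescoping lies strictly inside $\mathbb Z_{-}^{*}$ (respectively $\mathbb N$ for \eqref{rlright2}), so no slow bond is crossed and the parameter $\beta$ plays no role — this is why the statement holds uniformly in $\beta\ge 0$.
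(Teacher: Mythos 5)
Your proof is correct and follows essentially the same route as the paper: reduce via the entropy inequality and Feynman--Kac to the variational expression involving $|\int V f\,d\nu_a| - \text{const}\cdot n^{\gamma-1}D_n(\sqrt f,\nu_a)$; telescope $\eta^{\leftarrow\ell_0}(0)-\eta^{\leftarrow\varepsilon n}(0)$ dyadically into jumps of size $\ell_{i-1}$ strictly inside $\mathbb{Z}_-^*$; apply Young's inequality with scale-dependent weight $A_{i,y}\propto\ell_{i-1}^{\gamma-1}$; absorb the Dirichlet contribution via Lemma \ref{movpartlem}; sum the geometric series (dominated by its last term since $\gamma>1$) to get $(\varepsilon n)^{\gamma-1}$; and optimize the free parameter to land on $\varepsilon^{(\gamma-1)/2}$ --- exactly as in the paper, which works with \eqref{rlright2} instead of \eqref{rlleft2}.

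Two small remarks on the preamble. First, you conflate three distinct tools --- Feynman--Kac, a Kipnis--Varadhan type $H^{-1}$ bound, and the entropy-production inequality $\int_0^T D_n(\sqrt{f_s},\nu_a)\,ds\lesssim C_a/n^{\gamma-1}$ --- whereas only the entropy inequality + Jensen + Feynman--Kac combination is needed to arrive at the variational expression with the Dirichlet form already built in; once you invoke that variational formula, the entropy-production bound is redundant. Second, the phrase ``splitting $V=V^+-V^-$'' is misleading: what is actually split is $\sup_t|\int_0^t FV\,ds|\le\sup_t\int_0^t FV\,ds+\sup_t\int_0^t(-F)V\,ds$, since the telescoping identity for $V$ does not commute with taking positive/negative parts of $V$ itself. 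Neither slip affects the core estimate. Your closing observation --- that the telescoping stays strictly inside one half-lattice, so no slow bond is crossed and the result holds uniformly in $\beta$ --- is correct and makes explicit something the paper leaves implicit, and is the genuine contrast with \eqref{obepleft} where a crossing is unavoidable.
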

\begin{proof}
We present the proof of \eqref{rlright2}, but 
 the proof of \eqref{rlleft2} is analogous. Recall that $C_a >0$ is such that $H( \mu_n | \nu_a) \leq  C_a n, \forall n \geq 1$. 
By the entropy and Jensen's inequalities, and Feynman-Kac's formula, we  bound the expectation in \eqref{rlright2} from above by
\begin{align} \label{suptbe}
& \frac{C_a}{D} +  T \sup_{f} \Big\{  \| F \|_{\infty}  | \langle  \eta^{\rightarrow  \ell_0}(0) -  \eta^{\rightarrow  \varepsilon n}(0) , f \rangle_{\nu_{a}} |  + \frac{n^{\gamma-1}}{D}  \langle \mcb L_n \sqrt{f}, \sqrt{f} \rangle_{\nu_{a}}  \Big\},  
\end{align}
where $D>0$ and the supremum is carried over all the densities $f$ with respect to $\nu_{a}$. Let  $\ell_i:=2^i \ell_0$ and $M=(2-\gamma) \frac{\log(n)}{\log(2)}$, so that $2^M\ell_0=\epsilon n$. We can write
\begin{align*} 
\eta^{\rightarrow  \ell_0}(0) -  \eta^{\rightarrow  \varepsilon n}(0) =& \sum_{i=1}^{M} \Big[  \frac{1}{\ell_{i-1}} \sum_{y=1}^{\ell_{i-1}} \eta(y) -  \frac{1}{\ell_{i}} \sum_{y=1}^{\ell_{i}} \eta(y)   \Big] = \sum_{i=1}^{M} \frac{1}{\ell_i} \sum_{y=1}^{\ell_{i-1}} [\eta(y) - \eta(y+\ell_{i-1})]. 
\end{align*}
From this we get 
\begin{align} \label{medempgen}
\langle  \eta^{\rightarrow  \ell_0}(0) -  \eta^{\rightarrow  \varepsilon n}(0) , f \rangle_{\nu_{a}} = \sum_{i=1}^{M} \frac{1}{\ell_i} \sum_{y=1}^{\ell_{i-1}} \int    [\eta(y) - \eta(y+\ell_{i-1})]  f(\eta) d \nu_{a}.
\end{align}
Recall \eqref{young}. For every $i=1, \ldots,M$ and every $y=1, \ldots, \ell_{i-1}$, we will choose
\begin{align*}
A_{y,y+\ell_{i-1}}= \frac{C_{mpl} D (\ell_{i-1})^{\gamma} \| F \|_{\infty} }{n^{\gamma-1} \ell_i} >0.
\end{align*}
From \eqref{medempgen}, \eqref{young} and Lemma \ref{movpartlem}, we can bound the expression inside the supremum in \eqref{suptbe} by
\begin{align*}
 & \| F \|_{\infty}   \sum_{i=1}^{M} \frac{1}{\ell_i} \sum_{y=1}^{\ell_{i-1}} \Big| \int    [\eta(y) - \eta(y+\ell_{i-1})]  f(\eta) d \nu_{a} \Big| - \frac{n^{\gamma-1}}{2D}  D_n(\sqrt{f}, \nu_{a} ) \\
\leq &\| F \|_{\infty}  \sum_{i=1}^{M} \frac{1}{\ell_i} \sum_{y=1}^{\ell_{i-1}} \Big[ \frac{n^{\gamma-1} \ell_i I_{y,y+\ell_{i-1}  }  (\sqrt{f}, \nu_{a} )}{2 C_{mpl} D (\ell_{i-1})^{\gamma} \| F \|_{\infty} } + 2 \frac{C_{mpl} D (\ell_{i-1} )^{\gamma} \| F \|_{\infty} }{n^{\gamma-1} \ell_i} \Big] - \frac{n^{\gamma-1}}{2D}  D_n(\sqrt{f}, \nu_{a} ) \\
\leq & \frac{C_{mpl} D \| F \|_{\infty}  (2^{M} \ell_0)^{\gamma-1}}{2^{\gamma} n^{\gamma-1}} =   \frac{C_{mpl} \| F \|_{\infty}  }{2^{\gamma}} D \varepsilon^{\gamma-1}.
\end{align*}
Choosing $D= \varepsilon^{\frac{1-\gamma}{2}} >0$, we can bound \eqref{suptbe} by  $( C_a + T\frac{C_{mpl} \| F \|_{\infty}  }{2^{\gamma}}) \varepsilon^{\frac{\gamma-1}{2}} $, that vanishes as $\epsilon \to 0$ since $\gamma>1$.
\end{proof}

Combining Lemma \ref{obe} and Lemma \ref{tbe} we get the next result. 
\begin{lem} \textbf{(Replacement Lemma)} \label{replemma} Let $\gamma>1$.
Let $F \in L^{\infty}([0,T])$ and $\theta \in L^1(\mathbb{Z})$. Then for every $t \in [0,T]$, 
\begin{equation*} 
\limsup_{\varepsilon \rightarrow 0^+} \limsup_{n \rightarrow \infty} \mathbb{E}_{\mu_n} \Big[ \sup_{t \in [0,T]}  \Big|  \int_0^t F(s) \sum_{z =-\infty}^{-1}  \theta(z) [\eta_{s}^{n}(z) - \eta_s^{\leftarrow\varepsilon n }(0) ] ds  \Big|   \Big] = 0,
\end{equation*}
\begin{equation*} 
\limsup_{\varepsilon \rightarrow 0^+} \limsup_{n \rightarrow \infty} \mathbb{E}_{\mu_n} \Big[ \sup_{t \in [0,T]}  \Big| \int_0^t F(s)  \sum_{z =0}^{\infty}  \theta(z) [\eta_{s}^{n}(z) - \eta_s^{\rightarrow  \varepsilon n}(0) ] ds  \Big|   \Big] = 0
\end{equation*}
and
\begin{equation*}  
\limsup_{\varepsilon \rightarrow 0^+} \limsup_{n \rightarrow \infty} \mathbb{E}_{\mu_n} \Big[ \sup_{t \in [0,T]}  \Big| \int_0^t F(s)  [ \eta_s^{\rightarrow \varepsilon n }(0) - \eta_{s}^{n}(0)] ds  \Big|  \Big] = 0.
\end{equation*}
Moreover if $\beta \in [0, \gamma-1)$, we have
\begin{equation*}  
\limsup_{\varepsilon \rightarrow 0^+} \limsup_{n \rightarrow \infty} \mathbb{E}_{\mu_n} \Big[  \sup_{t \in [0,T]} \Big|  \int_0^t F(s)   [ \eta_s^{\leftarrow \varepsilon n }(0) - \eta_{s}^{n}(0)] ds  \Big|  \Big] = 0.
\end{equation*}
\end{lem}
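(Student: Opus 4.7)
The plan is to combine the one-block estimate of Lemma \ref{obe} with the two-blocks estimate of Lemma \ref{tbe} through a triangle inequality, with the intermediate scale $\ell_0 = \varepsilon n^{\gamma-1}$ serving as the pivot between the ``point versus box of size $\ell_0$'' and ``box of size $\ell_0$ versus box of size $\varepsilon n$'' replacements.

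For the first claim, I add and subtract $\eta_s^{\leftarrow \ell_0}(0)$ to write
$$\sum_{z = -\infty}^{-1} \theta(z) [\eta_s^n(z) - \eta_s^{\leftarrow \varepsilon n}(0)] = \sum_{z=-\infty}^{-1} \theta(z) [\eta_s^n(z) - \eta_s^{\leftarrow \ell_0}(0)] + \Big( \sum_{z=-\infty}^{-1} \theta(z) \Big) [\eta_s^{\leftarrow \ell_0}(0) - \eta_s^{\leftarrow \varepsilon n}(0)].$$
Multiplying by $F(s)$, integrating over $[0,t]$, taking the supremum in $t$ and the expectation, I apply \eqref{obeleft} to the first piece, which vanishes in the iterated limit as $n\to\infty$ then $\varepsilon\to 0^+$. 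For the second piece, since $\theta \in L^1(\mathbb{Z})$ the constant $\sum_{z \leq -1} \theta(z)$ is finite, so the function $\tilde F(s) := F(s) \sum_{z \leq -1} \theta(z)$ lies in $L^\infty([0,T])$ and \eqref{rlleft2} applied to $\tilde F$ gives the conclusion. The second claim follows in exactly the same way, using \eqref{oberight} and \eqref{rlright2} with the partial sum $\sum_{z \geq 0}\theta(z)$, which is also finite by the hypothesis $\theta \in L^1(\mathbb Z)$.

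The third claim is a direct application of \eqref{obepright}, but I may equivalently obtain it by adding and subtracting $\eta_s^{\rightarrow \ell_0}(0)$ inside $\eta_s^{\rightarrow \varepsilon n}(0) - \eta_s^n(0)$: \eqref{obepright} eliminates the $\eta_s^{\rightarrow \ell_0}(0) - \eta_s^n(0)$ piece and \eqref{rlright2} eliminates the $\eta_s^{\rightarrow \varepsilon n}(0) - \eta_s^{\rightarrow \ell_0}(0)$ piece. The fourth claim, valid only for $\beta \in [0, \gamma-1)$, is obtained by the exact same decomposition on the negative side, invoking \eqref{obepleft} and \eqref{rlleft2}; the restriction on $\beta$ is inherited entirely from \eqref{obepleft} and reflects the fact that replacing $\eta_s^n(0)$ by $\eta_s^{\leftarrow \ell_0}(0)$ requires crossing slow bonds.

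There is no real obstacle: all the hard analysis is already encapsulated in Lemmas \ref{obe} and \ref{tbe}, and the replacement lemma simply packages these results into the form most convenient for applications in Section \ref{seccharac}. The only point requiring any care is making sure that the factors $\sum_{z<0}\theta(z)$ or $\sum_{z\geq 0}\theta(z)$ produced by pulling the block average out of the sum over $z$ remain bounded, which is exactly the summability hypothesis $\theta \in L^1(\mathbb{Z})$.
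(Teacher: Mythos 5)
Your argument is correct and is exactly the intended proof: the paper itself says only that the result follows by ``combining Lemma \ref{obe} and Lemma \ref{tbe}'' and omits the details, and the triangle-inequality decomposition through the pivot scale $\ell_0 = \varepsilon n^{\gamma-1}$, with the $L^1(\mathbb{Z})$ hypothesis ensuring the constants $\sum_{z<0}\theta(z)$ and $\sum_{z\geq 0}\theta(z)$ remain finite so that $\tilde F$ stays in $L^\infty([0,T])$, is precisely how that combination works. The one small slip is the phrase ``a direct application of \eqref{obepright}'' for the third claim: \eqref{obepright} alone compares $\eta_s^{\rightarrow \ell_0}(0)$ with $\eta_s^n(0)$ and cannot by itself reach $\eta_s^{\rightarrow \varepsilon n}(0)$, so the decomposition you describe immediately afterward is not an ``equivalent'' alternative but the actual argument; with that phrasing corrected, the proof is complete.
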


We end this section with an useful application of the Replacement Lemma, that will be useful to treat \eqref{srobterm}. The proof of the next result is analogous to the proof of Proposition 6.4 in \cite{casodif} and for that reason it will be omitted.
\begin{prop} \label{convrob}
Assume $\mcb S= \mcb S_0$, $\gamma \in (1,2)$ and $ \beta \geq \gamma - 1$. Let $t \in [0,T]$ and  $G \in \mcb S_{\textrm{Rob}}$. Then, 
\begin{align}
 \limsup_{\varepsilon \rightarrow 0^+} \limsup_{n \rightarrow \infty} \mathbb{E}_{\mu_n} \Big[& \sup_{t \in [0,T]} \Big| \int_{0}^{t} \Big\{  \alpha n^{\gamma-1-\beta}  \sum_{\{x,z\} \in \mcb S}  [G(s,\tfrac{x}{n}) - G(s,\tfrac{z}{n}) ]  p(x-z)  \eta_s^n(z) \nonumber \\
-& \mathbbm{1}_{\beta=\gamma-1} m \alpha [G(s,0^{-})- G(s,0^{+})]  [ \eta_s^{\rightarrow n \varepsilon}(0) - \eta_s^{\leftarrow n \varepsilon}(0)] \Big\}   ds \Big| \Big] =0. \label{eqconvrob01}
\end{align}
\end{prop}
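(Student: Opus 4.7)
The plan is to split the slow-bond sum according to which endpoint carries the occupation variable, perform a one-term Taylor expansion of $G$ around the one-sided limits $G(s,0^\pm)$, control the remainder by an elementary double-sum estimate, and finally apply the Replacement Lemma to recover the boundary term.

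First I would write the sum over $\mcb S = \mcb S_0$ as $A_1(s) + A_2(s)$, where $A_1$ ranges over $(z,x) \in \mathbb{Z}_{-}^{*} \times \mathbb{N}$ and $A_2$ over $(z,x) \in \mathbb{N} \times \mathbb{Z}_{-}^{*}$. Since $G = G_+ \mathbbm{1}_{u \geq 0} + G_- \mathbbm{1}_{u < 0}$ with $G_{\pm} \in P([0,T], C_c^\infty(\mathbb{R}))$, the mean value theorem gives $|G(s,u) - G(s,0^{\pm})| \lesssim |u|$ for $\pm u > 0$, uniformly in $s \in [0,T]$. Substituting $G(s,\tfrac{x}{n}) = G(s,0^+) + O(\tfrac{x}{n})$ and $G(s,\tfrac{z}{n}) = G(s,0^-) + O(\tfrac{|z|}{n})$ into $A_1$ yields a principal term proportional to $[G(s,0^+) - G(s,0^-)]$ plus a remainder bounded, using $|\eta_s^n| \leq 1$ and the compact support of $G$, by
\[
\alpha n^{\gamma-2-\beta} \sum_{z=1}^{b_G n} \sum_{x=0}^{b_G n} (x+z)^{-\gamma} \;\lesssim\; \alpha n^{\gamma-2-\beta} \cdot n^{2-\gamma} \;=\; \alpha n^{-\beta},
\]
which vanishes as $n \to \infty$ because $\beta \geq \gamma - 1 > 0$; the same estimate handles $A_2$.

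The surviving principal contribution reads
\[
\alpha n^{\gamma-1-\beta}\,[G(s,0^+) - G(s,0^-)] \,\Bigl[\, \sum_{k=1}^{\infty} \eta_s^n(-k)\,\theta^-(k) \,-\, \sum_{k=0}^{\infty} \eta_s^n(k)\,\theta^+(k) \,\Bigr],
\]
with $\theta^-(k) := \sum_{x \geq 0} p(x+k)$ and $\theta^+(k) := \sum_{y \geq 1} p(y+k)$. Swapping the order of summation and recalling $m = \sum_{j \geq 1} j\, p(j)$ gives $\sum_{k \geq 1} \theta^-(k) = \sum_{k \geq 0} \theta^+(k) = m$, and both sequences decay like $k^{-\gamma}$, so they belong to $L^1(\mathbb{Z})$ because $\gamma > 1$. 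In the critical regime $\beta = \gamma - 1$ the $n$-prefactor equals $1$, and I would apply Lemma \ref{replemma} twice with $F(s) := \alpha[G(s,0^+)-G(s,0^-)] \in L^\infty([0,T])$, replacing $\eta_s^n(-k)$ by $\eta_s^{\leftarrow \varepsilon n}(0)$ and $\eta_s^n(k)$ by $\eta_s^{\rightarrow \varepsilon n}(0)$. After replacement the two sums factor as $m\,\eta_s^{\leftarrow \varepsilon n}(0)$ and $m\,\eta_s^{\rightarrow \varepsilon n}(0)$, producing exactly $\alpha m [G(s,0^-) - G(s,0^+)]\,[\eta_s^{\rightarrow \varepsilon n}(0) - \eta_s^{\leftarrow \varepsilon n}(0)]$, which cancels the subtracted term in \eqref{eqconvrob01}. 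In the regime $\beta > \gamma - 1$ the indicator $\mathbbm{1}_{\beta=\gamma-1}$ vanishes and the principal term is directly controlled in $L^1(\mathbb{P}_{\mu_n})$ by $2T\alpha m \|G\|_\infty\, n^{\gamma-1-\beta} \to 0$, without invoking the Replacement Lemma.

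I expect the main obstacle to be the Taylor remainder estimate: the double sum $\sum_{z,x \leq b_G n} (z+x)^{-\gamma}$ must yield exactly $O(n^{2-\gamma})$ in order to cancel the $n^{\gamma-2-\beta}$ prefactor and produce the tight bound $O(n^{-\beta})$. This balance depends crucially on $\gamma \in (1,2)$, since $\gamma > 1$ makes $\theta^{\pm}$ summable (so the $L^1$ hypothesis of Lemma \ref{replemma} is satisfied) and $\gamma < 2$ provides the correct polynomial growth of the outer sum; it also depends on the compact support of $G$, which confines the summation to $[0, b_G n]$.
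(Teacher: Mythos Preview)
Your approach—Taylor expand $G$ around $0^\pm$, control the remainder, then invoke Lemma~\ref{replemma} on the principal term—is the natural one and matches the paper's intended argument (the paper omits the proof, pointing to the analogous Proposition~6.4 in \cite{casodif}). The identification $\sum_k\theta^\pm(k)=m$, the $L^1$ check on $\theta^\pm$ via $\gamma>1$, and the direct estimate in the supercritical regime $\beta>\gamma-1$ are all correct.

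There is, however, a gap in your remainder estimate. The compact support of $G$ does \emph{not} confine the double sum to $\{0\le x,|z|\le b_Gn\}$: once you extract the principal part $[G(s,0^+)-G(s,0^-)]$ over the \emph{full} range of $(x,z)$ (as your definition of $\theta^\pm$ requires), the remainder lives on the full range too, and the naive bound gives $\sum_{z\ge1}\sum_{x\ge0}(x+z)^{-\gamma}=\infty$ for $\gamma\in(1,2)$. Compact support only makes the \emph{original} summand $[G(s,\tfrac{x}{n})-G(s,\tfrac{z}{n})]$ vanish when \emph{both} $|x|,|z|>b_Gn$; this does not survive the split into principal plus remainder.

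The fix is routine: use the two-sided bound $|G(s,\tfrac{x}{n})-G(s,0^+)|\lesssim\min(\tfrac{x}{n},1)$ and split into the four regions determined by $x\lessgtr b_Gn$ and $|z|\lessgtr b_Gn$. On the square $[0,b_Gn]^2$ your estimate $n^{\gamma-2-\beta}\sum(x+z)^{-\gamma}\lesssim n^{-\beta}$ is correct. On each of the three tail regions the Taylor remainder is only $O(1)$, but the decay of $p$ compensates: for instance $\sum_{x>b_Gn}\sum_{1\le|z|\le b_Gn}(x+|z|)^{-\gamma-1}\lesssim b_Gn\cdot(b_Gn)^{-\gamma}=(b_Gn)^{1-\gamma}$, and with the prefactor this is again $n^{\gamma-1-\beta}\cdot n^{1-\gamma}=n^{-\beta}$. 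The region where both indices exceed $b_Gn$ is handled the same way. So the conclusion stands once you account for the tails.
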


	\appendix

	\section{Convergences at the discrete level} \label{secdiscconv}
	\label{secuseres}

In this section we present several propositions which were used along the article and which allowed us  treating  the integral term in Dynkin's martingale given in Proposition \ref{gensdif}.
The first two result we present are useful to treat \eqref{princsdif} and justify the choice  $n^{\gamma}$ for the time scale.
\begin{prop} \label{convdisc}
For every $\gamma \in (0,2)$ and $G \in \mcb S_{\textrm{Dif}}$, it holds
\begin{align*}
\lim_{n \rightarrow \infty} \frac{1}{n}   \sum_{x }  \sup_{s \in [0,T]}|n^{\gamma} \mcb{K}_n G \left(s,\tfrac{x}{n} \right)  -[-(- \Delta)^{\frac{\gamma}{2}}   G]  \left(s,\tfrac{x}{n} \right) | =0.
\end{align*}
\end{prop}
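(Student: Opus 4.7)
The plan is to recognize $n^{\gamma} \mcb K_n G(s, x/n)$ as a Riemann sum with mesh $1/n$ for $[-(-\Delta)^{\gamma/2} G](s, x/n)$. Indeed, since $p(z)=c_\gamma \mathbbm{1}_{z\neq 0}|z|^{-\gamma-1}$,
\begin{equation*}
n^{\gamma} \mcb K_n G(s, x/n) \;=\; c_\gamma \sum_{z \neq 0} \frac{G(s, x/n + z/n) - G(s, x/n)}{|z/n|^{\gamma+1}}\cdot \frac{1}{n},
\end{equation*}
which approximates $c_\gamma \int_{\mathbb{R}}[G(s, x/n+w)-G(s,x/n)]|w|^{-\gamma-1}\,dw$. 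Since $G\in P([0,T],C_c^{\infty}(\mathbb{R}))$, there exists $A>0$ such that $\operatorname{supp}G(s,\cdot)\subset [-A/2,A/2]$ uniformly in $s\in[0,T]$, and $\|G\|_\infty, \|\partial_u G\|_\infty, \|\partial_u^2 G\|_\infty$ are all finite uniformly in $s$. Denoting $D_n(s,x):=n^{\gamma}\mcb K_n G(s,x/n)-[-(-\Delta)^{\gamma/2}G](s,x/n)$, I split $\frac{1}{n}\sum_x \sup_s|D_n(s,x)|$ into the bulk $|x|\leq An$ and the tail $|x|>An$.

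For the tail, $x/n$ lies outside $\operatorname{supp}G(s,\cdot)$ so $G(s,x/n)=0$, and the only non-zero terms in $\mcb K_n G(s,x/n)$ correspond to $|y|\leq An/2$, for which $|y-x|\geq |x|/2$. Since there are $O(An)$ such $y$, we get the pointwise bound $|n^{\gamma}\mcb K_n G(s,x/n)|\lesssim (|x|/n)^{-\gamma-1}$, and identically $|[-(-\Delta)^{\gamma/2}G](s,x/n)|\lesssim (|x|/n)^{-\gamma-1}$. Hence
\begin{equation*}
\frac{1}{n}\sum_{|x|>An}\sup_s|D_n(s,x)|\;\lesssim\; \frac{1}{n}\sum_{|x|>An}(|x|/n)^{-\gamma-1},
\end{equation*}
which is a Riemann sum for $\int_{|u|>A}|u|^{-\gamma-1}du\sim A^{-\gamma}$, and can be made arbitrarily small by choosing $A$ large.

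For the bulk $|x|\leq An$, I symmetrize in $z\leftrightarrow -z$ (using $p(z)=p(-z)$) so that the summand becomes $[G(s,u+z/n)+G(s,u-z/n)-2G(s,u)]/|z/n|^{\gamma+1}$ summed over $z>0$, with $u=x/n$, and analogously for the integral. The numerator $N_u(w):=G(s,u+w)+G(s,u-w)-2G(s,u)$ satisfies $|N_u(w)|\leq \|\partial_u^2 G\|_\infty w^2$ uniformly in $u\in [-A,A]$ and $s\in[0,T]$. Hence the near-singular contribution (over $|w|\leq 1/n$, and over the $z=\pm 1$ term of the sum) is bounded by $C\int_0^{1/n}w^{1-\gamma}dw=O(n^{\gamma-2})$. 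On $|w|\geq 1/n$, the integrand $f(w):=c_\gamma N_u(w)/(2|w|^{\gamma+1})$ is $C^1$ with $|f'(w)|\lesssim |w|^{-\gamma}$ and has effective support in $|w|\leq 2A$, so the standard Riemann-sum error bound gives an error of order $(1/n)\int_{1/n}^{2A}|f'(w)|dw=O(n^{\gamma-2})$. Therefore $\sup_{|x|\leq An,\,s}|D_n(s,x)|\lesssim n^{\gamma-2}$, and the bulk contribution is $\frac{1}{n}\cdot O(n)\cdot O(n^{\gamma-2})=O(n^{\gamma-2})$, which vanishes because $\gamma<2$.

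The main obstacle is the bulk step, which requires comparing the discrete sum with a principal-value integral across the singularity at $w=0$: symmetrization is essential, converting the apparently $|w|^{-\gamma}$-singular integrand into one that is effectively $|w|^{1-\gamma}$-integrable, and making both the near-zero and the far-field quadrature errors scale like $n^{\gamma-2}$.
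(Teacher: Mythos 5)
Your overall strategy—split $\frac{1}{n}\sum_x$ into a bulk $|x|\le An$ and a tail $|x|>An$, and on the bulk symmetrize so the second-order difference $N_u(w)=G(s,u+w)+G(s,u-w)-2G(s,u)$ kills the singularity at $w=0$—is the same as the paper's. The bulk step is executed somewhat differently: the paper introduces an $\varepsilon$-truncation $[-(-\Delta)^{\gamma/2}G]_\varepsilon$, proves convergence of the discrete operator to it for each fixed $\varepsilon$, and then sends $\varepsilon\to 0^+$, while you instead split at $|w|=1/n$ and run a direct quadrature-error estimate using $|f'(w)|\lesssim |w|^{-\gamma}$. That works; the only cosmetic issue is that for $\gamma\le 1$ the rate is not $n^{\gamma-2}$ but $\log n/n$ (resp.\ $n^{-1}$) — still vanishing after multiplying by $O(An)$ terms, so the conclusion stands.

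The tail is where there is a genuine gap. You bound the discrete sum and the fractional Laplacian \emph{separately} by $(|x|/n)^{-\gamma-1}$, with the constant justified by the count of nonzero summands, which you state is $O(An)$. That makes the hidden constant $O(A)$, so the honest tail bound is
\[
\frac{1}{n}\sum_{|x|>An}\sup_s|D_n(s,x)|\;\lesssim\; A\cdot A^{-\gamma}\;=\;A^{1-\gamma},
\]
which does \emph{not} go to zero as $A\to\infty$ when $\gamma\le 1$; for $\gamma<1$ it even blows up. The claim "can be made arbitrarily small by choosing $A$ large" is therefore unjustified on half the parameter range. There are two standard fixes. One is to decouple the (fixed) support radius $r_G$ of $G$ from the cutoff $B$: the nonzero-term count is $O(r_G\,n)$, so the pointwise bound has a constant depending only on $G$, and summing over $|x|>Bn$ gives $\lesssim r_G\,B^{-\gamma}\to 0$ as $B\to\infty$ for every $\gamma>0$. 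The other is to do what the paper does: on the tail, compare the Riemann sum directly to the integral via the Mean Value Theorem applied to $u\mapsto G(s,u)[x/n-u]^{-\gamma-1}$, which produces an extra factor $1/n$ and makes the tail vanish as $n\to\infty$ for a \emph{fixed} cutoff $b\ge b_G$; no limit in the cutoff is then needed at all. Either repair is straightforward, but as written your tail estimate does not go through for $\gamma\in(0,1]$.
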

\begin{proof}
Recall \eqref{eq:B_g}. Let $b \geq \max \{1, b_G\}$. We prove the result in three steps. 

\textbf{I).} First we  claim that
\begin{equation} \label{lim1lapfrac}
\lim_{n \rightarrow \infty} \frac{1}{n} \sum_{|x| > 2bn}\sup_{s \in [0,T]} | n^{\gamma} \mcb{K}_n G \left(s, \tfrac{x}{n} \right) -  [-(- \Delta)^{\frac{\gamma}{2}} G ] (s, \tfrac{x}{n}) | =0.
\end{equation}
We prove last result when  the sum is  restricted to $x\geq 2bn+1$, but in the other case it is completely analogous. 
 Since $b \geq b_G$, then $\sup_{s \in [0,T]} G(s, u)=0$ if $|u| \geq b$ and
\begin{align} \label{lapfrac1}
& \frac{1}{n} \sum_{x=2bn+1}^{\infty} \sup_{s \in [0,T]}  | n^{\gamma} \mcb{K}_n G \left(s,\tfrac{x}{n} \right) -  [-(- \Delta)^{\frac{\gamma}{2}} G ] (s,\tfrac{x}{n}) | \nonumber \\
=& \frac{c_{\gamma}}{n} \sum_{x=2bn+1}^{\infty} \sup_{s \in [0,T]}  \Big| n^{\gamma} \sum_{y=-bn}^{bn-1} G  \left(s,\tfrac{y}{n} \right) [x-y]^{-\gamma-1}  - \int_{-b}^{b} G(s, u) [ \tfrac{x}{n} -u ]^{-\gamma-1} du  \Big|  \nonumber \\
=& \frac{c_{\gamma}}{n} \sum_{x=2bn+1}^{\infty} \sup_{s \in [0,T]} \Big|  \sum_{y =-bn}^{bn-1} \int_{\tfrac{y}{n}}^{\frac{y+1}{n}} [f_x^n(s,\tfrac{y}{n}) - f_x^n(s,v)]  dv \Big|,
\end{align}
where for $x \geq 2bn +1$, we define $f^n_x:[0,T]\times (-b-1, b+1) \rightarrow \mathbb{R}$ by
$
f_x^n(s,u):= G(s,u)[\tfrac{x}{n}-u]^{-\gamma-1}.
$
Since $b \geq 1$, for $x \geq 2 bn +1$, $- bn \leq y \leq bn-1$, $v \in ( \tfrac{y}{n} , \tfrac{y+1}{n})$, $s \in [0,T]$ and $u \in ( \tfrac{y}{n} , v)$, we get 
\begin{align*}
|\tfrac{x}{n}-u| \geq |\tfrac{x}{n}-b| \geq |\tfrac{2bn+1}{n}- b| \geq b \geq 1,
\end{align*}
which leads to
\begin{align*}
| \partial_u f_x^n(s,u) |=| \partial_u G(s,u)[\tfrac{x}{n}-u]^{-\gamma-1} - (\gamma+1) G(s,u)[\tfrac{x}{n}-u]^{-\gamma-2} | \lesssim [\tfrac{x}{n}-u]^{-\gamma-1}  \leq [\tfrac{x}{n}-b]^{-\gamma-1}. 
\end{align*}
Then Applying the Mean Value Theorem to $f_x^n$, we have
\begin{align} \label{MVTA1}
\sup_{s \in [0,T]} |f_x(s,\tfrac{y}{n}) - f_x(s,v)| \lesssim (v - \tfrac{y}{n})  [\tfrac{x}{n}-b]^{-\gamma-1},
\end{align}
when $ x \geq 2bn+1$, $-bn \geq y \leq bn-1$ and $v \in ( \tfrac{y}{n} , \tfrac{y+1}{n})$.

With the triangle inequality, we can bound the right-hand side of \eqref{lapfrac1} by a constant times
\begin{align*}
   \frac{1}{n} \sum_{x=2bn+1}^{\infty} [\tfrac{x}{n}-b]^{-\gamma-1}  \sum_{y =-bn}^{bn-1} \frac{1}{2n^2} \lesssim \frac{b}{n} \int_{b}^{\infty} u^{-\gamma-1} du = \frac{b^{1-\gamma}}{\gamma n}
\end{align*}
which goes to zero as $n \rightarrow \infty$, proving the first claim.

\textbf{II).} 
Now we will prove the same result as in \eqref{lim1lapfrac} but with the sum restricted to $|x|\leq 2bn$. This will be done in a two step procedure using  \eqref{limlapfrac} for  $\varepsilon >0$ fixed  and later on we will take $\varepsilon \rightarrow 0^+$. First, we claim that
\begin{equation} \label{limlapfrac2}
\lim_{\varepsilon \rightarrow 0^{+}}  \lim_{n \rightarrow \infty}  \frac{1}{n} \sum_{|x|\leq   2 bn} \sup_{s \in [0,T]} | n^{\gamma} \mcb {K}_{n} G (s,\tfrac{x}{n} ) - [-(- \Delta)^{\frac{\gamma}{2}} G ]_{\varepsilon} (s,\tfrac{x}{n} ) | =0.
\end{equation}
By defining for $u \in  \mathbb{R}$, the function $\theta_u: [0,T] \times \mathbb{R} \rightarrow \mathbb{R}$ 
\begin{align}~\label{eq:function_theta}
\theta_u(s,w) := G(s,u-w) + G(s,u+w) -2 G(s,u), \forall (s,w) \in [0,T] \times \mathbb{R},
\end{align} 
 assuming, without loss of generality,  $n > 2 \varepsilon^{-1}$ and recalling  \eqref{limlapfrac} and \eqref{op_Kn}, we can rewrite
\begin{align}
& [-(- \Delta)^{\frac{\gamma}{2}} G ]_{\varepsilon} (s,u ) =  c_{\gamma} \int_{ \varepsilon}^{\infty} \frac{\theta_u(s,w)}{ w^{\gamma+1}} dw, \label{lapfractheta}
\end{align}
and 
\begin{align*}
  n^{\gamma} \mcb {K}_{n} G (s,\tfrac{x}{n} )  = n^{\gamma} \sum_{t =1}^{\infty} p(t) \theta_{\tfrac{x}{n}} (s, \tfrac{t}{n} ) = n^{\gamma}  \sum_{t =1}^{n \varepsilon -1} p(t) \theta_{\tfrac{x}{n}} (s, \tfrac{t}{n} ) + n^{\gamma}  \sum_{t = n \varepsilon}^{\infty} p(t) \theta_{\tfrac{x}{n}} ( s,\tfrac{t}{n} ). 
\end{align*}
From this we have 
\begin{align}
& \frac{1}{n} \sum_{|x|\leq   2 bn} \sup_{s \in [0,T]} | n^{\gamma} \mcb{K}_n  G (s, \tfrac{x}{n} ) -  [-(- \Delta)^{\frac{\gamma}{2}} G ]_{\varepsilon} (s, \tfrac{x}{n} ) | \nonumber  \\ 
\leq & \frac{c_{\gamma}}{n} \sum_{|x|\leq   2 bn} \sup_{s \in [0,T]} \Big| n^{\gamma}  \sum_{t =1}^{ \varepsilon n -1} t^{-\gamma-1} \theta_{\tfrac{x}{n}} (s, \tfrac{t}{n} )  \Big| + \frac{c_{\gamma}}{n} \sum_{|x|\leq   2 bn}  \sup_{s \in [0,T]} \Big|  n^{\gamma}  \sum_{t =  \varepsilon n}^{\infty} t^{-\gamma-1} \theta_{\tfrac{x}{n}} (s, \tfrac{t}{n} ) -   \int_{ \varepsilon}^{\infty} \frac{\theta_{\tfrac{x}{n}}(s, w)}{ w^{\gamma+1}} dw \Big|. \label{eqlapfrac2}
\end{align}
Performing two Taylor expansions of second order on the function $\theta_u$, we get
\begin{equation} \label{thetabounddelta}
| \theta_z(s, w)| \leq w^2 \| \Delta G \|_{\infty}, \forall z,w \in \mathbb{R}, \forall s \in [0,T].
\end{equation}
The leftmost term in \eqref{eqlapfrac2} can be bounded from above by  a constant times 
$\| \Delta G \|_{\infty}\varepsilon^{2-\gamma} $, which vanishes as $\epsilon\to 0$, since $\gamma<2$.
It remains to  treat the rightmost term in \eqref{eqlapfrac2}. From the triangular inequality, we get
\begin{align}
&  \Big|  n^{\gamma}  \sum_{t =  \varepsilon n}^{\infty} t^{-\gamma-1} \theta_{\tfrac{x}{n}} (s, \tfrac{t}{n} ) -   \int_{ \varepsilon}^{\infty} \frac{\theta_{\tfrac{x}{n}}(s,w)}{ w^{\gamma+1}} dw \Big| 
    =  \Big|    \sum_{t =  \varepsilon n}^{\infty} \int_{ \tfrac{t}{n} }^{\tfrac{t+1}{n}} \Big[  \Big( \tfrac{t}{n} \Big)^{- \gamma -1}  \theta_{\tfrac{x}{n}} ( s,\tfrac{t}{n} )  -    w^{-\gamma-1} \theta_{\tfrac{x}{n}}(s,w) \Big] dw \Big| \nonumber \\
\leq &  \sum_{t =  \varepsilon n}^{\infty} \int_{ \tfrac{t}{n} }^{\tfrac{t+1}{n}} | \theta_{\tfrac{x}{n}} (s, \tfrac{t}{n} ) | \Big[  \Big( \tfrac{t}{n} \Big)^{- \gamma -1}    -    w^{-\gamma-1} \Big] dw   +   \sum_{t =  \varepsilon n}^{\infty} \int_{ \frac{t}{n} }^{\frac{t+1}{n}} w^{-\gamma-1} |   \theta_{\tfrac{x}{n}} (s, \tfrac{t}{n} ) -     \theta_{\tfrac{x}{n}}(s,w) | dw . \label{eqlapfrac3}
\end{align}
Now we observe that $| \theta_z(s,w)| \leq 4 \| G \|_{\infty}, \forall z,w \in \mathbb{R}$. Moreover, from the mean value Theorem we have that
$
| \theta_z(s,y) - \theta_z(s, w)| \leq 2 \| \partial_u G \|_{\infty} |y-w|, \forall z,w \in \mathbb{R}, \forall s \in [0,T],
$
and 
\begin{align} \label{mvt2}
 \int_{ \tfrac{t}{n} }^{\tfrac{t+1}{n}}  \Big[  \Big( \tfrac{t}{n} \Big)^{- \gamma -1}    -    w^{-\gamma-1} \Big] dw \leq ( \gamma + 1) n^{\gamma}  t^{-\gamma-2}, \forall t \in \mathbb{N}.
\end{align}
From this we bound \eqref{eqlapfrac3}  from above by
\begin{align*}
&  \sum_{t =  \varepsilon n}^{\infty} \int_{ \tfrac{t}{n} }^{\tfrac{t+1}{n}} 4 \|G \|_{\infty} \Big[  \Big( \tfrac{t}{n} \Big)^{- \gamma -1}    -    w^{-\gamma-1} \Big] dw   +  2 \| \partial_u G \|_{\infty} \sum_{t =  \varepsilon n}^{\infty} \int_{ \tfrac{t}{n} }^{\tfrac{t+1}{n}} w^{-\gamma-1}  \Big( \tfrac{t+1}{n} - \tfrac{t}{n}  \Big) dw \\
\leq & 4 \|G \|_{\infty}  \sum_{t =  \varepsilon n}^{\infty} ( \gamma + 1) n^{\gamma}  t^{-\gamma-2}+ \frac{2 \| \partial_u G \|_{\infty}}{n \gamma} \varepsilon^{- \gamma} \lesssim \tfrac{\varepsilon^{-\gamma-1}}{n} + \tfrac{\varepsilon^{- \gamma}}{n } .
\end{align*}
Therefore the rightmost term in \eqref{eqlapfrac2} is bounded from above by a constant times $\tfrac{\varepsilon^{-\gamma-1}}{n} + \tfrac{\varepsilon^{- \gamma}}{n } $ and when we take the limit $n\to+\infty$ this term vanishes. 
From  all this we conclude that  \eqref{limlapfrac2} holds, proving our claim. 
Finally, we claim that
\begin{equation} \label{convdisc5sdif}
 \lim_{\varepsilon \rightarrow 0^{+}} \limsup_{n \rightarrow \infty} \frac{1}{n} \sum_{|x|\leq  2 bn} \sup_{s \in [0,T]} | [-(- \Delta)^{\frac{\gamma}{2}} G ]_{\varepsilon} (s, \tfrac{x}{n} ) -  [-(- \Delta)^{\frac{\gamma}{2}} G ] (s, \tfrac{x}{n} ) | =0
\end{equation} and this will end the proof. 
Since $[-(- \Delta)^{\frac{\gamma}{2}} G ] (s, \tfrac{x}{n} )= \lim_{\varepsilon_1 \rightarrow 0^+} [-(- \Delta)^{\frac{\gamma}{2}} G ]_{\varepsilon_1} (s, \tfrac{x}{n} )$, recalling \eqref{lapfractheta} and \eqref{thetabounddelta},  we get
\begin{align*}
& \frac{1}{n} \sum_{|x| \leq 2 bn} \sup_{s \in [0,T]} | [-(- \Delta)^{\frac{\gamma}{2}} G ]_{\varepsilon} (s, \tfrac{x}{n}) -  [-(- \Delta)^{\frac{\gamma}{2}} G ] (s, \tfrac{x}{n} ) |  = \frac{c_{\gamma}}{n} \sum_{|x| \leq 2 bn} \sup_{s \in [0,T]} \Big|   \lim_{\varepsilon_1 \rightarrow 0^+}   \int_{ \varepsilon_1}^{ \varepsilon} \frac{\theta_{\frac{x}{n}}(s, w)}{ w^{\gamma+1}} dw \Big| \\
\leq & \frac{c_{\gamma}}{n} \sum_{|x| \leq 2 bn} \lim_{\varepsilon_1 \rightarrow 0^+}   \int_{ \varepsilon_1}^{ \varepsilon} \frac{\| \Delta G \|_{\infty} w^2 }{ w^{\gamma+1}} dw  \leq  \frac{5b c_{\gamma} \| \Delta G \|_{\infty}}{2 - \gamma} \varepsilon^{2-\gamma}, \forall n \geq 1.
\end{align*}
Since $\gamma < 2$, we get \eqref{convdisc5sdif}. 
\end{proof}
Since for all $s\in[0,T]$ we have $|\eta_s^n(x)| \leq 1$, the next result is a trivial consequence of the previous  one. 
\begin{cor} \label{pdbgeq}
If $G \in  \mcb S_{\textrm{Dif}}$, we have
\begin{align*}
\lim_{n \rightarrow \infty} \mathbb{E}_{\mu_n} \Big[ \sup_{t \in [0,T]} \Big| \int_0^t \Big\{ \frac{1}{n} \sum _{x } n^{\gamma} \mcb{K}_n G \left(s, \tfrac{x}{n} \right)  \eta_{s}^{n}(x) -  \langle \pi_s^n, [-(- \Delta)^{\frac{\gamma}{2}}   G] (s, \cdot) \rangle \Big\} ds \Big|  \Big] =0.
\end{align*}
\end{cor}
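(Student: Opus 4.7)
The plan is to show that this corollary is essentially a direct, deterministic consequence of Proposition \ref{convdisc}, using only the trivial bound $|\eta_s^n(x)| \leq 1$ to eliminate the dependence on the random configuration.

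First, I would rewrite the inner integrand so that the two terms share a common structure. Observe that by definition of the empirical measure,
\[
\langle \pi_s^n, [-(- \Delta)^{\frac{\gamma}{2}} G](s, \cdot) \rangle \;=\; \frac{1}{n} \sum_{x} [-(- \Delta)^{\frac{\gamma}{2}} G](s, \tfrac{x}{n}) \, \eta_s^n(x),
\]
so the bracketed expression under the time integral equals
\[
\frac{1}{n} \sum_{x} \bigl( n^{\gamma} \mcb{K}_n G(s, \tfrac{x}{n}) - [-(- \Delta)^{\frac{\gamma}{2}} G](s, \tfrac{x}{n}) \bigr) \eta_s^n(x).
\]

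Next, I would pass the supremum in $t$ outside the integral in the usual way: for any integrable $f:[0,T]\to\mathbb{R}$, $\sup_{t\in[0,T]} |\int_0^t f(s)\,ds| \leq \int_0^T |f(s)|\,ds$. Applying this and using $|\eta_s^n(x)|\leq 1$ together with the triangle inequality yields the pointwise (in $\omega$) bound
\[
\sup_{t \in [0,T]} \Big| \int_0^t \!\!\Big\{ \tfrac{1}{n}\! \sum_{x} n^{\gamma} \mcb{K}_n G(s,\tfrac{x}{n}) \eta_s^n(x) - \langle \pi_s^n, [-(- \Delta)^{\frac{\gamma}{2}} G](s,\cdot)\rangle \Big\} ds \Big| \leq T \cdot \tfrac{1}{n} \sum_{x} \sup_{s\in[0,T]} \bigl| n^{\gamma} \mcb{K}_n G(s,\tfrac{x}{n}) - [-(-\Delta)^{\frac{\gamma}{2}} G](s, \tfrac{x}{n}) \bigr|.
\]
The right-hand side is deterministic, so taking $\mathbb{E}_{\mu_n}$ leaves it unchanged.

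Finally, I would invoke Proposition \ref{convdisc}, which asserts precisely that the deterministic bound on the right-hand side tends to $0$ as $n \to \infty$. This gives the claimed convergence. There is no genuine obstacle here; all the analytic work (the careful splitting into regions $|x|>2b_G n$ and $|x|\leq 2b_G n$, the use of the Taylor expansion via the auxiliary function $\theta_u$, and the control of the principal value cutoff as $\varepsilon\to 0^+$) was already carried out in the proof of Proposition \ref{convdisc}, so the corollary follows by a one-line dominated argument.
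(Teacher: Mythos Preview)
Your proof is correct and matches the paper's approach exactly: the paper simply remarks that, since $|\eta_s^n(x)| \leq 1$ for all $s\in[0,T]$, the corollary is a trivial consequence of Proposition~\ref{convdisc}. You have merely written out this one-line observation in full detail.
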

With  the next two results we were able to treat \eqref{extrasdif}.  These results   motivated us to impose the condition  \eqref{defsigmas}, simplifying the proofs in some regimes.
\begin{prop} \label{neum1}
Assume \eqref{defsigmas}. For $G \in \mcb S_{\textrm{Dif}}$, it holds
\begin{align*}
 \limsup_{n \rightarrow \infty}    n^{\gamma-1} \sum _{\{y,z\} \in \mcb S} p(y-z)  \sup_{s \in [0,T]}| G(s, \tfrac{y}{n}) - G(s, \tfrac{z}{n}) |  = 0.
\end{align*}
\end{prop}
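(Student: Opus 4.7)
The plan is to use a straightforward interpolation: since $G \in P([0,T], C_c^{\infty}(\mathbb{R}))$, the functions $\|G\|_\infty$ and $\|\partial_u G\|_\infty$ are finite. Hence for any $\theta \in [0,1]$ and any $u,v \in \mathbb{R}$, I can estimate
\begin{equation*}
\sup_{s \in [0,T]} |G(s,u) - G(s,v)| \leq (2 \|G\|_\infty)^{1-\theta}\, \|\partial_u G\|_\infty^{\theta}\, |u-v|^{\theta},
\end{equation*}
by bounding the difference both by $2\|G\|_\infty$ and by the mean value theorem and then taking a geometric mean.

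The second step is to apply this inequality with $\theta = \delta$, where $\delta \in [0,1] \cap (\gamma - 1, \infty)$ is the exponent provided by \eqref{defsigmas}. Setting $u = y/n$, $v = z/n$, this gives
\begin{equation*}
\sup_{s \in [0,T]} \big|G(s, \tfrac{y}{n}) - G(s, \tfrac{z}{n})\big| \lesssim n^{-\delta} |y-z|^{\delta},
\end{equation*}
so that
\begin{equation*}
n^{\gamma-1} \sum_{\{y,z\} \in \mcb S} p(y-z)\, \sup_{s \in [0,T]} \big|G(s, \tfrac{y}{n}) - G(s, \tfrac{z}{n})\big| \lesssim n^{\gamma-1-\delta} \sum_{\{y,z\} \in \mcb S} |y-z|^{\delta} p(y-z).
\end{equation*}

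The final step is to invoke \eqref{defsigmas}, which ensures the sum on the right-hand side is finite, and to note that by hypothesis $\delta > \gamma - 1$, so $\gamma - 1 - \delta < 0$ and the factor $n^{\gamma-1-\delta}$ tends to $0$ as $n \to \infty$. There is no real obstacle here: the whole point of condition \eqref{defsigmas} is precisely to ensure a thin-barrier estimate of this form holds, and the H\"older interpolation of a smooth compactly supported function exactly meets the $\delta$-moment summability provided by the hypothesis.
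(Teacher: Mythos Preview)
Your proof is correct and follows essentially the same approach as the paper: both obtain a uniform $\delta$-H\"older bound on $G$ (you justify it explicitly via interpolation between the $L^\infty$ and Lipschitz bounds, while the paper simply asserts the existence of a constant $C_\delta$), and then use \eqref{defsigmas} together with $\delta > \gamma - 1$ to conclude.
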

\begin{proof}
Since $G \in \mcb S_{Dif}$, there exists a constant $C_{\delta}$ (independent of $s$) such that $|G(s,u) - G(s,v)| \leq C_{\delta} |u-v|^{\delta}$ for every $u,v \in \mathbb{R}$, for every $s \in [0,T]$. Since  \eqref{defsigmas} holds, we have $\delta > \gamma - 1$ and taking $n \rightarrow \infty$ we trivially  get the result.
\end{proof}
 Since for all $s\in[0,T]$ we have $|\eta_s^n(x)| \leq 1$, the next result is a direct consequence of the previous one. 
\begin{cor} \label{convbound}
Assume \eqref{defsigmas}. For $G \in \mcb S_{\textrm{Dif}}$, we have
\begin{align}
 \limsup_{\varepsilon \rightarrow 0^+} \limsup_{n \rightarrow \infty} \mathbb{E}_{\mu_n} \Big[& \sup_{t \in [0,T]} \Big| \int_{0}^{t} \Big\{ \frac{n^{\gamma-1}}{2} \sum _{ \{x, y \} \in \mcb S } \left[ G(s, \tfrac{y}{n}) - G(s, \tfrac{x}{n}) \right] p(y-x) [\eta_s^n(y)-\eta_s^n(x)] \Big\}   ds \Big| \Big] =0. \label{eqconvbound}
\end{align}
\end{cor}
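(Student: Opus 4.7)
The plan is to reduce the expectation in \eqref{eqconvbound} to a purely deterministic quantity already controlled by Proposition~\ref{neum1}, thereby making the corollary an immediate consequence of that proposition. The only ingredients needed are the exclusion bound $|\eta^n_s(z)| \leq 1$ and the elementary estimate $\sup_{t \in [0,T]} |\int_0^t f(s)\, ds| \leq T \sup_{s \in [0,T]} |f(s)|$; no replacement lemma, Dirichlet form argument, or fine cancellation is required.

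I would first bring the supremum over $t$ and the outer absolute value inside the bond sum using these two ingredients, together with the trivial bound $|\eta^n_s(y) - \eta^n_s(x)| \leq 2$. The expectation in \eqref{eqconvbound} is then dominated, up to an absolute multiplicative constant, by the deterministic quantity
\begin{align*}
T \cdot n^{\gamma-1} \sum_{\{x,y\} \in \mcb S} p(y-x)\, \sup_{s \in [0,T]} \Big| G\Big(s, \tfrac{y}{n}\Big) - G\Big(s, \tfrac{x}{n}\Big) \Big|.
\end{align*}
By Proposition~\ref{neum1} this deterministic bound tends to $0$ as $n \to \infty$, and the outer $\limsup_{\varepsilon \to 0^+}$ is vacuous because the integrand does not depend on $\varepsilon$ at all. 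There is no genuine obstacle to confront here: the real work has already been done in Proposition~\ref{neum1}, whose proof exploits the H\"older regularity of $G \in \mcb S_{\textrm{Dif}}$ together with the geometric condition \eqref{defsigmas} (specifically, the fact that the exponent $\delta$ one can extract exceeds $\gamma-1$) to guarantee that the slow-bond contributions disappear in the scaling limit.
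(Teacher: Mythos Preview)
Your proposal is correct and follows exactly the same approach as the paper: the authors state that, since $|\eta_s^n(x)| \leq 1$ for all $s \in [0,T]$, Corollary~\ref{convbound} is a direct consequence of Proposition~\ref{neum1}, which is precisely the deterministic reduction you describe.
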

Now we present a useful result to treat \eqref{srobterm}. This result is as in Proposition A.3. of \cite{casodif}, where it is stated for $\gamma>2$ but the result holds for $\gamma>1$ and the  proof can be easily  adapted from \cite{casodif}  to include this regime of  $\gamma$. 

\begin{prop} \label{lemconvrob}
Let $\gamma \in (1,2)$ and  $G \in \mcb S_{Rob}$. Then, 
\begin{equation} \label{lemrobzpos1}
\lim_{n \rightarrow \infty} \sup_{s \in [0,T]}  \Big|  \sum_{z=0}^{\infty}  \sum_{x = - \infty}^{-1} p(y-x) \big( [ G(s,\tfrac{y}{n}) - G(s,\tfrac{x}{n}) ] -[G(s,0^{-}) - G(s,0^{+}) ] \big)   \Big|  =0.
\end{equation}
By symmetry the same result is true if we exchange $x$ with $z$.
\end{prop}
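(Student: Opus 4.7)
My plan is to split the double sum at a mesoscopic scale $\varepsilon n$ and bound each piece separately, using the one-sided smoothness of $G$ at the origin for the near region and the polynomial tail of $p(\cdot)$ for the far region. I fix $\varepsilon>0$ and decompose $\{(z,x):z\ge 0,\ x\le -1\}$ into the near region $\mcb A_n^{\varepsilon}:=\{(z,x):0\le z \le \varepsilon n,\ -\varepsilon n \le x \le -1\}$ and its complement, on which necessarily $z-x>\varepsilon n$. All bounds below are uniform in $s\in[0,T]$ because $G\in P([0,T],C_c^\infty(\mathbb{R}^*))$ is polynomial in $s$ with coefficients and $u$-derivatives uniformly bounded; in particular the one-sided restrictions $G_\pm(s,\cdot)$ are globally Lipschitz on $\mathbb{R}_\pm$ with a constant $C_G$ independent of $s$.

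On $\mcb A_n^{\varepsilon}$ I will use the mean value theorem on each side of the origin separately: $|G(s,z/n)-G(s,0^+)|\le C_G\, z/n$ and $|G(s,x/n)-G(s,0^-)|\le C_G\, |x|/n$. Hence, once the constant $G(s,0^+)-G(s,0^-)$ has been subtracted and the two telescoped differences collected, the summand is bounded uniformly by $2C_G\varepsilon$. The total weight of $p$ over this near region is dominated by $\sum_{z\ge 0,\,x\le -1}p(z-x)=\sum_{w\ge 1}w\,p(w)=m<\infty$, which is finite precisely because $\gamma>1$. Therefore the near-region contribution is at most $2C_G m\varepsilon$, uniformly in $n$ and $s$.

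For the far region I will use only the crude bound $4\sup_{s\in[0,T]}\|G(s,\cdot)\|_\infty$ on the summand, and I compute the tail weight via the substitution $w=z-x$: for each $w\ge 1$ there are exactly $w$ admissible pairs $(z,x)$ with $z-x=w$, and thus
\begin{equation*}
\sum_{\substack{z\ge 0,\ x\le -1\\ z-x>\varepsilon n}} p(z-x)\;=\;c_\gamma\sum_{w>\varepsilon n}\frac{w}{w^{\gamma+1}}\;\lesssim\;(\varepsilon n)^{1-\gamma},
\end{equation*}
which tends to $0$ as $n\to\infty$ for any fixed $\varepsilon>0$, again using $\gamma>1$. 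Combining the two estimates yields $\sup_{s\in[0,T]}|\mcb I_n(s)|\lesssim \varepsilon+(\varepsilon n)^{1-\gamma}$ for the quantity $\mcb I_n(s)$ appearing in the proposition; sending $n\to\infty$ first and then $\varepsilon\to 0^+$ proves the claim, and the ``by symmetry'' statement follows by interchanging the roles of the two half-lines.

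I do not expect any serious obstacle here: the argument is an elementary mesoscopic splitting plus two modulus-of-continuity estimates. The only structural input is $\gamma\in(1,2)$, used twice---to ensure that the total weight $m$ is finite, and that the tail sum $\sum_{w>\varepsilon n}w^{-\gamma}$ decays with $n$. The uniformity in $s$ is automatic from the polynomial-in-time structure and costs nothing extra.
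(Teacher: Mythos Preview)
Your proof is correct and is precisely the standard mesoscopic-splitting argument one would expect here. The paper itself omits the proof of this proposition, pointing instead to Proposition~A.3 of \cite{casodif} (stated there for $\gamma>2$) and remarking that the argument adapts to $\gamma>1$; your write-up is exactly that adaptation. Your observation that the complement of $\mcb A_n^{\varepsilon}$ is contained in $\{z-x>\varepsilon n\}$ is the key geometric point, and both uses of $\gamma>1$---finiteness of $m=\sum_w w\,p(w)$ and decay of the tail $\sum_{w>\varepsilon n}w^{-\gamma}$---are identified correctly.
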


Finally we present two results that are useful to treat \eqref{sneuterm}.
\begin{prop} \label{lemconvneum}
Let $\mcb S = \mcb S_0$. Assume  $\gamma \in (1,2)$ and  $G \in \mcb S_{\textrm{Rob}}$ or  that $\gamma \in (0,1]$ and $G \in \mcb S_{\textrm{Neu}}$. It holds 
\begin{equation} \label{lemneupos} 
\lim_{n \rightarrow \infty}  \frac{1}{n}  \sum_{x} \sup_{s \in [0,T]} | n^{\gamma} {\sum_{ y: \{x, y\} \in \mcb F }}[ G(s, \tfrac{y}{n} ) - G(s, \tfrac{x}{n} ) ] p(y-x)  - [-(- \Delta)_{\mathbb{R}^{*}}^{\frac{\gamma}{2}} G ]  \left(s,\tfrac{x}{n} \right) | =0.
\end{equation}
\end{prop}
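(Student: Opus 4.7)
\textbf{Proof plan for Proposition \ref{lemconvneum}.} The argument adapts the three-step strategy used in the proof of Proposition \ref{convdisc} to the regional setting, with the extra care demanded by the boundary at $0$. The first key observation is that $\mcb F = \mcb B \setminus \mcb S_0$ contains no bond connecting $\mathbb{Z}_{-}^{*}$ to $\mathbb{N}$; hence for $x \geq 0$ the condition $\{x,y\}\in\mcb F$ forces $y \in \mathbb{N}\setminus\{x\}$, and symmetrically for $x<0$. Writing $G=G_+\mathbbm{1}_{u\geq 0}+G_-\mathbbm{1}_{u<0}$, this splitting matches exactly the decomposition of $[-(-\Delta)_{\mathbb{R}^{*}}^{\gamma/2}]G$ into its $\mathbb{R}_+^{*}$ and $\mathbb{R}_-^{*}$ parts. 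The case $x<0$ being analogous, we focus on $x\geq 0$ and compare
\[
A_n(s,x):=n^\gamma\!\!\!\sum_{y\in\mathbb{N},\,y\neq x}[G_+(s,\tfrac{y}{n})-G_+(s,\tfrac{x}{n})]\,p(y-x),\qquad B(s,x):=c_\gamma\,\mathrm{p.v.}\!\int_0^{\infty}\!\frac{G_+(s,v)-G_+(s,\tfrac{x}{n})}{|\tfrac{x}{n}-v|^{\gamma+1}}\,dv.
\]

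Set $b:=\max(1,b_G)$. For $x\geq 2bn$ we have $G_+(s,x/n)=0$ and both $A_n(s,x)$ and $B(s,x)$ only involve $v\in[0,b]$, where $|x/n-v|\geq b$; a Riemann sum comparison identical to Step I of the proof of Proposition \ref{convdisc} yields $\frac{1}{n}\sum_{x\geq 2bn}\sup_s|A_n-B|\lesssim n^{-1}\sum_{x\geq 2bn}(x/n-b)^{-\gamma-1}\to 0$. For the bulk $0\leq x\leq 2bn$, we introduce the $\varepsilon$-truncations $A_n^{\varepsilon}$ and $B^{\varepsilon}$ obtained by restricting the sum/integral to $|y-x|\geq\varepsilon n$ and $|v-x/n|\geq\varepsilon$ respectively. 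The outer part $|A_n^{\varepsilon}-B^{\varepsilon}|$ is a standard Riemann-sum error for an integrand whose derivatives are bounded by constants times $\varepsilon^{-\gamma-1}$, so $\frac{1}{n}\sum_{x\leq 2bn}\sup_s|A_n^{\varepsilon}-B^{\varepsilon}|\to 0$ as $n\to\infty$ for each fixed $\varepsilon>0$.

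The crux is the near-diagonal estimate: we must bound $\frac{1}{n}\sum_{x\leq 2bn}\sup_s\bigl(|A_n-A_n^{\varepsilon}|+|B-B^{\varepsilon}|\bigr)$ uniformly in $n$ by a quantity vanishing as $\varepsilon\to 0^+$. For $x$ with $x/n\geq\varepsilon$ the set $\{y\geq 0:|y-x|<\varepsilon n\}$ is symmetric around $x$ and lies entirely in $\mathbb{N}$; introducing $\theta_{x/n}(s,w):=G_+(s,x/n+w)+G_+(s,x/n-w)-2G_+(s,x/n)$ with $|\theta|\leq w^2\|\partial_u^2 G_+\|_{\infty}$, one recovers the $O(\varepsilon^{2-\gamma})$ bound exactly as in the proof of Proposition \ref{convdisc}. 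The same cancellation applies to the continuous side.

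The main obstacle is the boundary case $0\leq x/n<\varepsilon$. When $\gamma\in(0,1]$ and $G\in\mcb S_{\textrm{Neu}}$, the assumption that $G$ vanishes on $[-\bar b_G,\bar b_G]$ makes this region essentially trivial once $\varepsilon<\bar b_G$. The delicate case is $\gamma\in(1,2)$ and $G\in\mcb S_{\textrm{Rob}}$, where the neighborhood $\{v\geq 0:|v-x/n|<\varepsilon\}$ is asymmetric and no first-order Taylor cancellation is available. We split this neighborhood into a symmetric part $\{v:|v-x/n|\leq x/n\}$ (handled by the $\theta$-bound, giving a contribution of order $(x/n)^{2-\gamma}$) and a one-sided tail $\{v:v>2x/n,\ |v-x/n|<\varepsilon\}$, on which the first-order bound $|G_+(s,v)-G_+(s,x/n)|\leq\|\partial_uG_+\|_{\infty}|v-x/n|$ gives an estimate of order $(x/n)^{1-\gamma}$. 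The identical splitting and bounds apply to $A_n-A_n^{\varepsilon}$, and the discrete-continuous discrepancy is shown to be of lower order in $n$. Since $\frac{1}{n}\sum_{x=1}^{\varepsilon n}(x/n)^{1-\gamma}\to\int_0^{\varepsilon}u^{1-\gamma}du=\varepsilon^{2-\gamma}/(2-\gamma)$ as $n\to\infty$, the contribution of the boundary region to the target sum is $O(\varepsilon^{2-\gamma})$, which vanishes as $\varepsilon\to 0^+$ since $\gamma<2$. Combining all three steps yields the claim.
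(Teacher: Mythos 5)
Your plan matches the paper's proof in all essential respects: the same three-way split of the spatial sum (large $x\gtrsim 2bn$ via compact support of $G$, bulk $\varepsilon n\lesssim x\lesssim 2bn$ via $\varepsilon$-truncation with symmetric $\theta$-cancellation, and boundary $1\le x\lesssim \varepsilon n$ via the symmetric-plus-one-sided-tail decomposition), the same distinction between $\gamma\in(0,1]$ (exploiting that $G\in\mcb S_{Neu}$ vanishes near $0$) and $\gamma\in(1,2)$ (Lipschitz bound on the one-sided tail yielding $(x/n)^{1-\gamma}$, summed to $O(\varepsilon^{2-\gamma})$), and the same reduction to $x\ge 0$ by symmetry. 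The paper organizes the steps slightly differently — it treats the boundary region $1\le x\le \varepsilon n -1$ by showing $A_n$ and $[-(-\Delta)^{\gamma/2}_{\mathbb R_+^*}G]$ are \emph{individually} small there (equations \eqref{lim2lapfracreg} and \eqref{lim5lapfracreg}), rather than folding it into the truncation argument as you do — but this is cosmetic.

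One small omission you should make explicit: your comparison object $B(s,x)$ is not well defined at $x=0$ when $\gamma\in(1,2)$, since for $G\in\mcb S_{\textrm{Rob}}$ the one-sided integral $\mathrm{p.v.}\int_0^\infty(G_+(v)-G_+(0))\,|v|^{-\gamma-1}\,dv$ diverges at the boundary (the principal value removes a symmetric neighbourhood of the evaluation point, not of the endpoint $v=0$). Correspondingly, \eqref{lapfracreg3} assigns $[-(-\Delta)^{\gamma/2}_{\mathbb R^*}G](0)=0$ by fiat, and the $x=0$ term, as well as the $y=0$ terms in the $x\ge 1$ sum (which fall into $\mcb F$ but outside $\sum_{y\ge 1}$), must be estimated separately; this is what the paper's term \eqref{limlapregzero} does, giving $n^{\gamma-1-\delta}\sum_y y^{\delta-\gamma-1}\to 0$ for some $\delta\in(\gamma-1,\gamma)\cap[0,1]$. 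This is a bookkeeping step rather than a flaw in the approach, but as written your sketch silently includes $x=0$ in the range $0\le x\le 2bn$, where the comparison framework breaks down.
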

\begin{proof}
We start by decomposing the sum above taking into account the relative position of $\{x,y\}\in\mcb F$. Since $\mcb S = \mcb S_0$, the previous  limit  is bounded by
\begin{align}
& \lim_{n \rightarrow \infty}\frac{1}{n}  \sum_{x=1}^{\infty} \sup_{s \in [0,T]} | n^{\gamma} \sum_{ y = 1}^{\infty} [ G(s, \tfrac{y}{n} ) - G(s, \tfrac{x}{n} ) ] p(y-x)  - [-(- \Delta)_{\mathbb{R}_{+}^{*}}^{\frac{\gamma}{2}} G ]  \left(s,\tfrac{x}{n} \right) | \label{limlapregpos}  \\
+&\lim_{n \rightarrow \infty} \frac{1}{n}  \sum_{x=-\infty}^{-1} \sup_{s \in [0,T]} | n^{\gamma} \sum_{ y = -\infty}^{-1} [ G(s, \tfrac{y}{n} ) - G(s, \tfrac{x}{n} ) ] p(y-x)  - [-(- \Delta)_{\mathbb{R}_{-}^{*}}^{\frac{\gamma}{2}} G ]  \left(s,\tfrac{x}{n} \right) | \label{limlapregneg} \\
+ & \lim_{n \rightarrow \infty} n^{\gamma-1}    \sum_{y=1}^{\infty}  \sup_{s \in [0,T]} | G(s, \tfrac{y}{n} ) - G(s, \tfrac{0}{n} ) | p(y) \label{limlapregzero}.
\end{align}
We claim that the limits in \eqref{limlapregzero}, \eqref{limlapregpos} and \eqref{limlapregneg} are equal to zero. Let us begin with \eqref{limlapregzero}. Since $\gamma \in (0,2)$, we can choose $\delta \in (\gamma-1, \gamma) \cap [0,1]$. Since $G \in \mcb S_{Rob}$, there exists $C_1$ (independent of $s$) such that
$\sup_{s \in [0,T]}| G(s, \tfrac{y}{n} ) - G(s, \tfrac{0}{n} ) |  \leq C_1 y^{\delta} n^{-\delta}, \forall y \geq 1, \forall n \geq 1.
$
Then the limit in \eqref{limlapregzero} is bounded from above by
$n^{\gamma-1 - \delta}  C_1 c_{\gamma}  \sum_{y=1}^{\infty}  y^{\delta-\gamma-1} $, which vanishes as $n\to+\infty$.
Above, we used the fact that $\gamma-1 - \delta <0$ and that the sum over $y$ is convergent. It remains to deal with \eqref{limlapregpos} and \eqref{limlapregneg}. We will only prove \eqref{limlapregpos},  but the proof of  \eqref{limlapregneg} is analogous. The proof goes in two steps. First we treat the terms in the sum for large values of $x$ and then we treat at remaining terms. We split the proof now in these two cases. 

\textbf{I).} First step: treating terms with large values of $x$. Let $b \geq \max \{1, b_G\}$. We claim that
\begin{equation} \label{lim1lapfracreg}
\lim_{n \rightarrow \infty} \frac{1}{n} \sum_{x=2bn+1}^{\infty} \sup_{s \in [0,T]} | n^{\gamma} \sum_{ y = 1}^{\infty} [ G(s, \tfrac{y}{n} ) - G(s, \tfrac{x}{n} ) ] p(y-x)  - [-(- \Delta)_{\mathbb{R}_{+}^{*}}^{\frac{\gamma}{2}} G ]  \left(s,\tfrac{x}{n} \right) | =0.
\end{equation} The proof is very close to the proof of \eqref{lim1lapfrac} and for that reason we just give a sketch. 
Since $b \geq b_G$, we have $ \sup_{s \in [0,T]} G(s, u)=0$ if $|u| \geq b$ and the limit above can be written as
\begin{align} \label{lapfrac1reg}
\lim_{n \rightarrow \infty}  \frac{c_{\gamma}}{n} \sum_{x=2bn+1}^{\infty} \sup_{s \in [0,T]} \Big|  \sum_{y =1}^{bn} \int_{\tfrac{y-1}{n}}^{\frac{y}{n}} [g_x(s,\tfrac{y}{n}) - g_x(s,v)]  dv \Big|,
\end{align}
where $g^n_x: \in [0,T]\times  (0, b) \rightarrow \mathbb{R}$ is given by 
$
g_x^n(s,u):= G(s,u)[\tfrac{x}{n}-u]^{-\gamma-1}, \forall s \in [0,T], \forall u \in (0,b).
$
Since $b \geq 1$, with the same reasoning we did to produce \eqref{MVTA1}, we get
\begin{align*}
\sup_{s \in [0,T]} |g_x(s,\tfrac{y}{n}) - g_x(s,v)| \lesssim (v - \tfrac{y}{n})  [\tfrac{x}{n}-b]^{-\gamma-1} , \forall x \geq 2bn+1, \forall y: \ \leq y  \leq bn, \forall v \in ( \tfrac{y-1}{n} , \tfrac{y}{n}),
\end{align*}
and by the triangle inequality, we can bound  \eqref{lapfrac1reg} by a constant times
$\frac{b^{1-\gamma}}{n \gamma} $, which vanishes as $n\to+\infty$, proving \eqref{lim1lapfracreg}.

\textbf{II).} Second step: treating  terms with  small values of $x$.
First we fix $\varepsilon >0$ and later we take $\varepsilon \rightarrow 0^+$. We first claim that
\begin{equation} \label{lim2lapfracreg}
\lim_{\varepsilon \rightarrow 0^{+}} \lim_{n \rightarrow \infty} \frac{1}{n}    \sum_{x=1}^{\varepsilon n -1}  \sup_{s \in [0,T]} | n^{\gamma}  \sum_{y=1}^{\infty}    [G (s,\tfrac{y}{n} ) - G (s,\tfrac{x}{n} ) ]p(y-x) | =0.
\end{equation}
First note that  a Taylor expansion of second order in $G$ allows to conclude that 
\begin{align*}
&\frac{1}{n}    \sum_{x=1}^{\varepsilon n -1}  \sup_{s \in [0,T]} | n^{\gamma}  \sum_{y=1}^{2x-1}    [G (s,\tfrac{y}{n} ) - G (s,\tfrac{x}{n} ) ]p(y-x) |\\&\leq  \sum_{x=1}^{\varepsilon n -1} n^{\gamma-3}   \| \Delta G \|_{\infty}  \sum_{t=1}^{x-1} c_{\gamma} t^{1-\gamma}  =  \| \Delta G \|_{\infty}c_{\gamma} \frac{1}{n} \sum_{x=1}^{\varepsilon n -1} \frac{1}{n}  \sum_{t=1}^{x-1} \Big( \frac{t}{n} \Big)^{1-\gamma} 
\lesssim  \| \Delta G \|_{\infty}  \frac{1}{n} \sum_{x=0}^{\varepsilon n }  \Big( \frac{x}{n} \Big)^{2-\gamma} \lesssim 
\| \Delta G \|_{\infty} \varepsilon^{3-\gamma},
\end{align*}
and from this we get  
\begin{equation} \label{lim3lapfracreg}
\lim_{\varepsilon \rightarrow 0^{+}} \lim_{n \rightarrow \infty} \frac{1}{n}    \sum_{x=1}^{\varepsilon n -1}  \sup_{s \in [0,T]} | n^{\gamma}  \sum_{y=1}^{2x-1}    [G (s,\tfrac{y}{n} ) - G (s,\tfrac{x}{n} ) ]p(y-x) | =0.
\end{equation}
The proof of \eqref{lim2lapfracreg} is now a consequence of the next result:
\begin{equation} \label{lim4lapfracreg}
\lim_{\varepsilon \rightarrow 0^{+}} \lim_{n \rightarrow \infty} \frac{1}{n}    \sum_{x=1}^{\varepsilon n -1}  \sup_{s \in [0,T]} | n^{\gamma}  \sum_{y=2x}^{\infty}    [G (s,\tfrac{y}{n} ) - G (s,\tfrac{x}{n} ) ]p(y-x) | =0.
\end{equation} To prove it we split in two cases, either $\gamma \in (0,1]$ or $\gamma \in (1,2)$. We start with the former. 
If $\gamma \in (0,1]$, then $G \in \mcb S_{Neu}$ and $\sup_{s \in [0,T] }|G(s,u)|=0$ when $ u \in (0, \bar{b}_G)$. Then if $\varepsilon < \frac{\bar{b}_G}{2}$ we have
\begin{align*}
& \frac{1}{n}    \sum_{x=1}^{\varepsilon n -1}  \sup_{s \in [0,T]} | n^{\gamma}  \sum_{y=2x}^{\infty}    [G (s,\tfrac{y}{n} ) - G (s,\tfrac{x}{n} ) ]p(y-x) | \\
\leq & \frac{1}{n}    \sum_{x=1}^{\varepsilon n -1}  \sup_{s \in [0,T]} | n^{\gamma}  \sum_{y=2x}^{x+\varepsilon n}    [G (s,\tfrac{y}{n} ) - G (s,\tfrac{x}{n} ) ]p(y-x) |  + \frac{1}{n}    \sum_{x=1}^{\varepsilon n -1}  \sup_{s \in [0,T]} | n^{\gamma}  \sum_{y=x+ \varepsilon n +1}^{\infty}    [G (s,\tfrac{y}{n} ) - G (s,\tfrac{x}{n} ) ]p(y-x) | \\
= & \frac{1}{n}    \sum_{x=1}^{\varepsilon n -1}  \sup_{s \in [0,T]} | n^{\gamma}  \sum_{y=\bar{b}_G n }^{\infty}    [G (s,\tfrac{y}{n} ) ]p(y-x) | 
=\frac{ \| G \|_{\infty}}{n}    \sum_{x=1}^{\varepsilon n -1}  \frac{1}{n}  \sum_{y=\bar{b}_G n }^{\infty}   \Big( \tfrac{y}{n} - \tfrac{x}{n} \Big)^{-\gamma-1}  \\\lesssim & \frac{ \| G \|_{\infty}}{n}    \sum_{x=1}^{\varepsilon n }   \Big( \bar{b}_G - \tfrac{x}{n} \Big)^{-\gamma} \leq \| G \|_{\infty}   \Big( \tfrac{\bar{b}_G}{2}  \Big)^{-\gamma} \varepsilon,
\end{align*}
which shows  \eqref{lim4lapfracreg}. If $\gamma \in (1,2)$, a Taylor expansion of first order leads to
\begin{align*}
& \frac{1}{n}    \sum_{x=1}^{\varepsilon n -1}  \sup_{s \in [0,T]} | n^{\gamma}  \sum_{y=2x}^{\infty}    [G (s,\tfrac{y}{n} ) - G (s,\tfrac{x}{n} ) ]p(y-x) |  \\
\leq  &  n^{\gamma-2} \| \partial_u G  \|_{\infty} \sum_{x=1}^{\varepsilon n -1} \sum_{t=x}^{\varepsilon n}  c_{\gamma} t^{-\gamma} +   n^{\gamma-2}  \| \partial_u G  \|_{\infty} \sum_{x=1}^{\varepsilon n -1} \sum_{t=\varepsilon n+1}^{\infty}  c_{\gamma} t^{-\gamma} \\
\lesssim &  \| \partial_u G  \|_{\infty}   \frac{1}{n} \sum_{x=1}^{\varepsilon n} \frac{2}{\gamma-1}  \Big( \tfrac{x}{n} \Big)^{1-\gamma} +\| \partial_u G  \|_{\infty}    \varepsilon  \frac{1}{n} \sum_{t=\varepsilon n}^{\infty} \Big( \tfrac{t}{n} \Big)^{-\gamma} \lesssim \| \partial_u G  \|_{\infty} \varepsilon^{2-\gamma} + \| \partial_u G  \|_{\infty} \varepsilon^{2-\gamma}, 
\end{align*}
and this shows  $\eqref{lim4lapfracreg}$. 
Now we claim that
\begin{equation} \label{lim5lapfracreg}
\lim_{\varepsilon \rightarrow 0^{+}} \lim_{n \rightarrow \infty} \frac{1}{n}    \sum_{x=1}^{\varepsilon n -1}  \sup_{s \in [0,T]} | [ -(- \Delta)_{\mathbb{R}_{+}}^{\frac{\gamma}{2}} G](s, \tfrac{x}{n}) | =0.
\end{equation} As above, we split the proof into two cases, either $\gamma \in (0,1]$ or $\gamma \in (1,2)$. We start with the former. 
If $\gamma \in (0,1]$, then $G \in \mcb S_{Neu}$ and $\sup_{s \in [0,T] }|G(s,u)|=0$ when $ u \in (0, \bar{b}_G)$. Then, for $\varepsilon < \frac{\bar{b}_G}{2}$ the expression inside the double limit above can be rewritten as
\begin{align*}
& \frac{1}{n}    \sum_{x=1}^{\varepsilon n -1}  \sup_{s \in [0,T]} \Big| c_{\gamma}  \lim_{\varepsilon_1 \rightarrow 0^{+}}  \int_{0}^{\infty} \mathbbm{1}_{\{ |v - \frac{x}{n} | \geq \varepsilon_1 \}} \frac{G(s,u) - G(s,\tfrac{x}{n})}{ |u-\tfrac{x}{n}|^{\gamma+1}} du \Big| \\
 = & \frac{1}{n}    \sum_{x=1}^{\varepsilon n -1}  \sup_{s \in [0,T]} \Big| c_{\gamma}  \int_{\bar{b}_G}^{\infty}  \frac{G(s,u) }{ |u-\tfrac{x}{n}|^{\gamma+1}} du \Big| \lesssim  \frac{ \| G \|_{\infty}}{n}    \sum_{x=1}^{\varepsilon n }   \Big( \bar{b}_G - \frac{x}{n} \Big)^{-\gamma} \leq \| G \|_{\infty}   \Big( \frac{\bar{b}_G}{2}  \Big)^{-\gamma} \varepsilon,
\end{align*}
which leads to \eqref{lim5lapfracreg}. If $\gamma \in (1,2)$, choosing $\varepsilon_1, \varepsilon, n$ such that $\varepsilon_1 < \frac{1}{n} < \varepsilon$, we get
\begin{align*}
& \Big| c_{\gamma} \int_{\mathbb{R}_{+}} \mathbbm{1}_{\{ |\frac{x}{n} - v| \geq \varepsilon_1 \}} \frac{G(s, v) - G(s, \tfrac{x}{n})}{ |\tfrac{x}{n}-v|^{\gamma+1}} duv\Big|=  c_{\gamma}\Big| \int_{- \frac{x}{n}}^{\infty} \mathbbm{1}_{\{ |u | \geq \varepsilon_1 \}} \frac{G(s, u+ \tfrac{x}{n} ) - G(s, \tfrac{x}{n})}{ |u|^{\gamma+1}} du \Big| \\
=& c_{\gamma} \Big| \int_{- \frac{x}{n}}^{-\varepsilon_1}  \frac{G(s, u+ \tfrac{x}{n} ) - G(s, \tfrac{x}{n})}{ (-u)^{\gamma+1}} du  +  \int_{\varepsilon_1}^{1}  \frac{G(s, u+ \tfrac{x}{n} ) - G(s, \tfrac{x}{n})}{ u^{\gamma+1}} du +  \int_{1}^{\infty}  \frac{G(s, u+ \tfrac{x}{n} ) - G(s, \tfrac{x}{n})}{ u^{\gamma+1}} du \Big|.
\end{align*}
We know that if the spatial variable in the argument of $G$ is always positive, then $G$ can be replaced by $G_{+}$ for $\gamma \in (1,2)$. Performing a Taylor expansion of second order inside the first two integrals and a Taylor expansion of first order inside the last integral, we get 
\begin{align*}
&\sum_{x=1}^{\varepsilon n -1} n^{-1} \Big| c_{\gamma} \int_{\mathbb{R}_{+}} \mathbbm{1}_{\{ |\frac{x}{n} - v| \geq \varepsilon_1 \}} \frac{G(s, v) - G(s, \tfrac{x}{n})}{ |\tfrac{x}{n}-v|^{\gamma+1}} du \Big| \\
\leq & \sum_{x=1}^{\varepsilon n -1} n^{-1} c_{\gamma} \Big| \partial_uG(s, \tfrac{x}{n}) \Big[ \int_{\varepsilon_1 }^{\frac{x}{n}} (-u) u^{-\gamma-1} du + \int_{\varepsilon_1}^{1}  u u^{-\gamma-1} du \Big] \Big| + \sum_{x=1}^{\varepsilon n -1} n^{-1} c_{\gamma} \| \partial_u G \|_{\infty}  \int_{1}^{\infty}  u^{-\gamma} du   \\
+&\sum_{x=1}^{\varepsilon n -1} n^{-1} c_{\gamma} \| \Delta G \|_{\infty} \Big[  \int_{- \frac{x}{n}}^{-\varepsilon_1} (-u)^{1-\gamma}du  +  \int_{\varepsilon_1}^{1} u^{1-\gamma}du  \Big| \\
\leq & \sum_{x=1}^{\varepsilon n -1} n^{-1} c_{\gamma} | \partial_uG(s, \tfrac{x}{n})  |\int_{\frac{x}{n}}^{1}  u^{-\gamma} du + \sum_{x=1}^{\varepsilon n -1} n^{-1} c_{\gamma} \| \partial_u G \|_{\infty}  \int_{1}^{\infty}  u^{-\gamma} du + 2 \| \Delta G \|_{\infty} c_{\gamma} \varepsilon \int_{\varepsilon_1}^{1} u^{1-\gamma}du   \\
\leq &  \sum_{x=1}^{\varepsilon n -1} n^{-1} c_{\gamma} \| \partial_u G \|_{\infty}  \int_{\frac{x}{n}}^{\infty}  u^{-\gamma} du + 2 \| \Delta G \|_{\infty} c_{\gamma} \varepsilon \int_{\varepsilon_1}^{1} u^{1-\gamma}du \\
\lesssim & \| \partial_u G \|_{\infty} \frac{1}{n} \sum_{x=1}^{\varepsilon n } \Big( \tfrac{x}{n} \Big)^{1-\gamma} + \| \Delta G \|_{\infty}  \varepsilon \lesssim  \| \partial_u G \|_{\infty}  \varepsilon^{2-\gamma} +  \| \Delta G \|_{\infty}  \varepsilon,
\end{align*}
which leads to \eqref{lim5lapfracreg}. From \eqref{lim2lapfracreg} and \eqref{lim5lapfracreg}, we get
\begin{equation} \label{lim6lapfracreg}
\lim_{\varepsilon \rightarrow 0^{+}} \lim_{n \rightarrow \infty} \frac{1}{n} \sum_{x=1}^{\varepsilon n - 1} \sup_{s \in [0,T]} | n^{\gamma} \sum_{ y = 1}^{\infty} [ G(s, \tfrac{y}{n} ) - G(s, \tfrac{x}{n} ) ] p(y-x)  - [-(- \Delta)_{\mathbb{R}_{+}^{*}}^{\frac{\gamma}{2}} G ]  \left(s,\tfrac{x}{n} \right) | =0.
\end{equation}
\textbf{III).}  Third step: treating the remaining terms. Observe that we can write
\begin{align*}
& \frac{1}{n} \sum_{x=\varepsilon n}^{2bn} \sup_{s \in [0,T]} | n^{\gamma} \sum_{ y = 1}^{\infty} [ G(s, \tfrac{y}{n} ) - G(s, \tfrac{x}{n} ) ] p(y-x)  - [-(- \Delta)_{\mathbb{R}_{+}^{*}}^{\frac{\gamma}{2}} G ]  \left(s,\tfrac{x}{n} \right) | \\
\leq & \frac{1}{n} \sum_{x=\varepsilon n}^{2bn} \sup_{s \in [0,T]}| n^{\gamma} \sum_{ y = 1}^{\infty} [ G(s, \tfrac{y}{n} ) - G(s, \tfrac{x}{n} ) ] p(y-x) - [-(- \Delta)_{\mathbb{R}_{+}}^{\frac{\gamma}{2}} G ]_{\varepsilon} (s,\tfrac{x}{n} ) \Big|  \\
+& \frac{1}{n} \sum_{x=\varepsilon n}^{2bn} \sup_{s \in [0,T]}|[-(- \Delta)_{\mathbb{R}_{+}}^{\frac{\gamma}{2}} G ]_{\varepsilon} (s, \tfrac{x}{n} ) -  [-(- \Delta)^{\frac{\gamma}{2}} G ] (s, \tfrac{x}{n} ) |, 
\end{align*}
where for every $u>0$
\begin{align}
[-(- \Delta)_{\mathbb{R}_{+}}^{\frac{\gamma}{2}} G ]_{\varepsilon} (s, u ) 
= c_{\gamma} \int_{ \varepsilon}^{u} \frac{\theta_u(s, w)}{ w^{\gamma+1}} dw + c_{\gamma} \int_{ u}^{\infty}  \frac{G(s, u+w) - G(s, u)}{ w^{\gamma+1}} dw , \label{lapfracthetapos}
\end{align}
where the function $\theta$ has been defined in \eqref{eq:function_theta}.  Assuming  $n > 2 \varepsilon^{-1}$,  for $x \geq \varepsilon n$, we also can write
\begin{align*}
  n^{\gamma}  \sum_{y =1}^{\infty} p(y-x) [ G (s, \tfrac{y}{n} ) - G (s, \tfrac{x}{n} ) ] = n^{\gamma}  \sum_{t =1}^{x-1} p(t) \theta_{\tfrac{x}{n}} (s, \tfrac{t}{n} ) + n^{\gamma}  \sum_{t =x}^{\infty} p(t) [ G (s, \tfrac{x+t}{n}  ) - G (s, \tfrac{x}{n} ) ].
\end{align*}
From this we have 
\begin{align}
& \frac{1}{n} \sum_{x=\varepsilon n}^{2bn} |   n^{\gamma}  \sum_{y =1}^{\infty} p(y-x) [ G (s, \tfrac{y}{n} ) - G (s, \tfrac{x}{n} ) ] - [-(- \Delta)_{\mathbb{R}_{+}}^{\frac{\gamma}{2}} G ]_{\varepsilon} (s, \tfrac{x}{n} ) \Big|\nonumber  \\ 
\leq  & \frac{c_{\gamma}}{n} \sum_{x=\varepsilon n}^{2bn} \Big| n^{\gamma}  \sum_{t =1}^{\varepsilon n-1} t^{-\gamma-1} \theta_{\tfrac{x}{n}} (s, \tfrac{t}{n} )   \Big| + \frac{c_{\gamma}}{n} \sum_{x=\varepsilon n}^{2bn} \Big| n^{\gamma}  \sum_{t =\varepsilon n}^{x-1} t^{-\gamma-1} \theta_{\tfrac{x}{n}} (s, \tfrac{t}{n} ) -  \int_{ \varepsilon}^{\tfrac{x}{n}} \frac{\theta_{\tfrac{x}{n}}(s,w)}{ w^{\gamma+1}} dw  \Big| \label{eq:imp}\\
+ & \frac{c_{\gamma}}{n} \sum_{x=\varepsilon n}^{2bn} \Big|  n^{\gamma}  \sum_{t =x}^{\infty} t^{-\gamma-1} [ G (s, \tfrac{x+t}{n}  ) - G (s, \tfrac{x}{n} ) ]  -  \int_{ \tfrac{x}{n}}^{\infty}  \frac{G(s,\tfrac{x}{n}+w) - G(s,\tfrac{x}{n})}{ w^{\gamma+1}} dw \Big|\label{eq:imp1}.
\end{align}
From  \eqref{thetabounddelta},  the  leftmost term in \eqref{eq:imp} can be  bounded by
\begin{align*}
& \frac{1}{n} \sum_{x = \varepsilon n}^{2 bn} c_{\gamma}  \|  \Delta G \|_{\infty} \frac{1}{n}  \sum_{t =1}^{ \varepsilon n} \Big( \frac{t}{n} \Big)^{1 - \gamma} \leq  2b c_{\gamma}  \|  \Delta G \|_{\infty} \frac{1}{n}  \sum_{t =1}^{ \varepsilon n} \Big( \frac{t}{n} \Big)^{1 - \gamma} \lesssim  \|  \Delta G \|_{\infty} \varepsilon^{2-\gamma},
\end{align*}
and since $\gamma < 2$, it vanishes as $\epsilon\to 0$. Now we analyse the the  rightmost term in \eqref{eq:imp}. From the triangular inequality, we get
\begin{align}
&  \Big|  n^{\gamma}  \sum_{t =  \varepsilon n}^{x-1} t^{-\gamma-1} \theta_{\tfrac{x}{n}} (s, \tfrac{t}{n} ) -   \int_{ \varepsilon}^{\frac{x}{n}} \frac{\theta_{\tfrac{x}{n}}(s,w)}{ w^{\gamma+1}} dw \Big| \nonumber \\
    =&  \Big|    \sum_{t =  \varepsilon n}^{x-1} \int_{ \frac{t}{n} }^{\frac{t+1}{n}} \Big[  \Big( \frac{t}{n} \Big)^{- \gamma -1}  \theta_{\tfrac{x}{n}} ( s,\tfrac{t}{n} )  -    w^{-\gamma-1} \theta_{\tfrac{x}{n}}(s,w) \Big] dw \Big| \nonumber  \\
\leq &  \sum_{t =  \varepsilon n}^{x-1} \int_{ \frac{t}{n} }^{\frac{t+1}{n}} | \theta_{\tfrac{x}{n}} (s, \tfrac{t}{n} ) | \Big[  \Big( \frac{t}{n} \Big)^{- \gamma -1}    -    w^{-\gamma-1} \Big] dw   +   \sum_{t =  \varepsilon n}^{x-1} \int_{ \frac{t}{n} }^{\frac{t+1}{n}} w^{-\gamma-1} |   \theta_{\tfrac{x}{n}} (s, \tfrac{t}{n} ) -     \theta_{\tfrac{x}{n}}(s,w) | dw . \label{eqlapfrac3b}
\end{align}
Recall \eqref{thetabounddelta} and the application of the mean value theorem above \eqref{mvt2}. Putting this together with  \eqref{mvt2}, we bound  \eqref{eqlapfrac3b} from above by
\begin{align*}
&  \sum_{t =  \varepsilon n}^{x-1} \int_{ \frac{t}{n} }^{\frac{t+1}{n}} 4 \|G \|_{\infty} \Big[  \Big( \frac{t}{n} \Big)^{- \gamma -1}    -    w^{-\gamma-1} \Big] dw   +  2 \| \partial_u G \|_{\infty} \sum_{t =  \varepsilon n}^{x-1} \int_{ \frac{t}{n} }^{\frac{t+1}{n}} w^{-\gamma-1}  \Big( \frac{t+1}{n} - \frac{t}{n}  \Big) dw \\
\leq & 4 \|G \|_{\infty}  \sum_{t =  \varepsilon n}^{\infty} ( \gamma + 1) n^{\gamma}  t^{-\gamma-2}+ \frac{2 \| \partial_u G \|_{\infty}}{n \gamma} \varepsilon^{- \gamma} \lesssim \tfrac{\varepsilon^{-\gamma-1}}{n} + \tfrac{\varepsilon^{- \gamma}}{n }.
\end{align*}
From this we get that the rightmost term in the second line of last display
vanishes as $\epsilon \to 0.$

Now we analyse \eqref{eq:imp1}.
From the triangular inequality, we get
\begin{align}
&  \Big|  n^{\gamma}  \sum_{t =x}^{\infty} t^{-\gamma-1} [ G (s, \tfrac{x+t}{n}  ) - G (s, \tfrac{x}{n} ) ]  -  \int_{ \tfrac{x}{n}}^{\infty}  \frac{G(s, \tfrac{x}{n}+w) - G(s, \tfrac{x}{n})}{ w^{\gamma+1}} dw \Big| \nonumber  \\
   =&  \Big|    \sum_{t =  x}^{\infty} \int_{ \frac{t}{n} }^{\frac{t+1}{n}} \Big[  \Big( \tfrac{t}{n} \Big)^{- \gamma -1}  [ G (s, \tfrac{x+t}{n}  ) - G (s, \tfrac{x}{n} ) ]  -    w^{-\gamma-1} [G(s,\tfrac{x}{n}+w) - G(s,\tfrac{x}{n})] \Big] dw \Big| \nonumber  \\
\leq &  \sum_{t =  \varepsilon n}^{\infty} \int_{ \frac{t}{n} }^{\frac{t+1}{n}} |  G ( s,\tfrac{x+t}{n}  ) - G (s, \tfrac{x}{n} )  | \Big[  \Big( \tfrac{t}{n} \Big)^{- \gamma -1}    -    w^{-\gamma-1} \Big] dw   +   \sum_{t =  \varepsilon n}^{\infty} \int_{ \frac{t}{n} }^{\frac{t+1}{n}} w^{-\gamma-1} |  G (s, \tfrac{x+t}{n}  ) -    G(s,\tfrac{x}{n}+w) | dw . \label{eqlapfrac3c}
\end{align}
In the last line, we used that $x \geq \varepsilon n$. We observe that $| G(s,y) - G(s, w)| \leq  \| \partial_u G \|_{\infty} |y-w|, \forall y,w >0, \forall s \in [0,T]$.
Plugging this with \eqref{mvt2}, the expression in \eqref{eqlapfrac3c} is bounded from above by
\begin{align*}
&  \sum_{t =  \varepsilon n}^{\infty} \int_{ \frac{t}{n} }^{\frac{t+1}{n}} 4 \|G \|_{\infty} \Big[  \Big( \frac{t}{n} \Big)^{- \gamma -1}    -    w^{-\gamma-1} \Big] dw   +   \| \partial_u G \|_{\infty} \sum_{t =  \varepsilon n}^{\infty} \int_{ \frac{t}{n} }^{\frac{t+1}{n}} w^{-\gamma-1}  \Big( \frac{t+1}{n} - \frac{t}{n}  \Big) dw \\
\leq & 4 \|G \|_{\infty}  \sum_{t =  \varepsilon n}^{\infty} ( \gamma + 1) n^{\gamma}  t^{-\gamma-2}+ \frac{ \| \partial_u G \|_{\infty}}{n \gamma} \varepsilon^{- \gamma} \lesssim \tfrac{\varepsilon^{-\gamma-1}}{n} + \tfrac{\varepsilon^{- \gamma}}{n },
\end{align*}
so that \eqref{eqlapfrac3c}  vanishes as $n\to+\infty$.
Putting this all together we get
 \begin{equation} \label{lim10lapfracreg}
 \lim_{\varepsilon \rightarrow 0^{+}}  \limsup_{n \rightarrow \infty} \frac{1}{n} \sum_{x=\varepsilon n}^{2bn} |   n^{\gamma}  \sum_{y =1}^{\infty} p(y-x) [ G (s, \tfrac{y}{n} ) - G (s, \tfrac{x}{n} ) ] - [-(- \Delta)_{\mathbb{R}_{+}}^{\frac{\gamma}{2}} G ]_{\varepsilon} (s, \tfrac{x}{n} ) \Big| =0.
\end{equation}
Since $[-(- \Delta)_{\mathbb{R}_{+}}^{\frac{\gamma}{2}} G ] (s,\tfrac{x}{n} ) = \lim_{\varepsilon_1 \rightarrow 0^+} [-(- \Delta)_{\mathbb{R}_{+}}^{\frac{\gamma}{2}} G ]_{\varepsilon_1} (s,\tfrac{x}{n} )$, recalling \eqref{lapfracthetapos} and \eqref{thetabounddelta},  we get
\begin{align*}
& \frac{1}{n} \sum_{x =  \varepsilon n}^{2 bn} \sup_{s \in [0,T]}\Big| [-(- \Delta)_{\mathbb{R}_{+}^{*}}^{\frac{\gamma}{2}} G ]_{\varepsilon} (s, \tfrac{x}{n}) -  [-(- \Delta)_{\mathbb{R}_{+}^{*}}^{\frac{\gamma}{2}} G ] (s, \tfrac{x}{n} ) \Big|  = \frac{c_{\gamma}}{n} \sum_{x = \varepsilon n}^{2bn} \sup_{s \in [0,T]} \Big|   \lim_{\varepsilon_1 \rightarrow 0^+}   \int_{ \varepsilon_1}^{ \varepsilon} \frac{\theta_{\frac{x}{n}}(s, w)}{ w^{\gamma+1}} dw \Big| \\
\leq & \frac{c_{\gamma}}{n} \sum_{x =\varepsilon n}^{2bn} \lim_{\varepsilon_1 \rightarrow 0^+}   \int_{ \varepsilon_1}^{ \varepsilon} \frac{\| \Delta G \|_{\infty} w^2 }{ w^{\gamma+1}} dw  \leq  \frac{2b c_{\gamma} \| \Delta G \|_{\infty}}{2 - \gamma} \varepsilon^{2-\gamma}, \forall n \geq 1,
\end{align*}
and since  $\gamma < 2$, it vanishes as $\epsilon \to 0$. From this we conclude that 
 \begin{equation*} 
 \lim_{\varepsilon \rightarrow 0^{+}}  \limsup_{n \rightarrow \infty} \frac{1}{n} \sum_{x=\varepsilon n}^{2bn} |  n^{\gamma}  \sum_{y =1}^{\infty} p(y-x) [ G (s, \tfrac{y}{n} ) - G (s, \tfrac{x}{n} ) ] - [-(- \Delta)_{\mathbb{R}_{+}}^{\frac{\gamma}{2}} G ] (s, \tfrac{x}{n} ) \Big| =0.
\end{equation*}
Finally, combining the double limit above with \eqref{lim1lapfracreg} and \eqref{lim6lapfracreg}, we get \eqref{limlapregpos}. This ends the proof.
\end{proof}
Since $|\eta_s^n(x)| \leq 1$, the next result is a direct consequence of the previous one.
\begin{cor} \label{corconvfast}
Let $\mcb S = \mcb S_0$. Assume that $\gamma \in (1,2)$ and $G \in \mcb S_{\textrm{Rob}}$ or that $\gamma \in (0,1]$ and $G \in \mcb S_{\textrm{Neu}}$. It holds 
\begin{equation*}  
\lim_{n \rightarrow \infty} \mathbb{E}_{\mu_n} \Big[ \sup_{t \in [0,T]} \Big| \int_0^t \Big\{  \sum_{ \{x,y\} \in \mcb F } n^{\gamma-1} [ G(s, \tfrac{y}{n} ) - G(s, \tfrac{x}{n} ) ] p(y-x) \eta_s^n(x)   -  \langle \pi_s^n, [-(- \Delta)_{\mathbb{R}^{*}}^{\frac{\gamma}{2}}   G] (s, \cdot) \rangle \Big\} \Big| \Big]  =0.
\end{equation*}
\end{cor}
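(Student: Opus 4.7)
The plan is to observe that this corollary is a direct deterministic consequence of Proposition \ref{lemconvneum}, where the only stochastic input is the trivial bound $|\eta_s^n(x)| \leq 1$. First I would rewrite the bond sum as a site sum by unfolding $\sum_{\{x,y\}\in\mcb F}$ in the same way it is unfolded in the generator computation of Proposition \ref{gensdif}: for each $x$ the inner sum runs over $y$ such that $\{x,y\}\in\mcb F$, and the factor $n^{\gamma-1}$ absorbs a $1/n$. Concretely, setting
\begin{equation*}
H_n(s,\tfrac{x}{n}) := n^{\gamma}\!\!\!\sum_{y:\{x,y\}\in\mcb F}\!\!\![G(s,\tfrac{y}{n}) - G(s,\tfrac{x}{n})]\,p(y-x),
\end{equation*}
the first term inside the expectation equals $\frac{1}{n}\sum_x H_n(s,\tfrac{x}{n})\,\eta_s^n(x)$, while the second term equals $\frac{1}{n}\sum_x [-(-\Delta)_{\mathbb R^{*}}^{\frac{\gamma}{2}} G](s,\tfrac{x}{n})\,\eta_s^n(x)$ by the very definition of $\pi_s^n$.

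Then I would estimate pointwise in $(s,\eta)$, using $|\eta_s^n(x)|\leq 1$, to get
\begin{equation*}
\Big|\tfrac{1}{n}\sum_x\big\{H_n(s,\tfrac{x}{n}) - [-(-\Delta)_{\mathbb R^{*}}^{\frac{\gamma}{2}} G](s,\tfrac{x}{n})\big\}\eta_s^n(x)\Big| \leq \tfrac{1}{n}\sum_x\big|H_n(s,\tfrac{x}{n}) - [-(-\Delta)_{\mathbb R^{*}}^{\frac{\gamma}{2}} G](s,\tfrac{x}{n})\big|,
\end{equation*}
and bounding further by pulling the sup inside the sum over $x$ produces a deterministic upper bound
\begin{equation*}
\tfrac{1}{n}\sum_x \sup_{s\in[0,T]}\big|H_n(s,\tfrac{x}{n}) - [-(-\Delta)_{\mathbb R^{*}}^{\frac{\gamma}{2}} G](s,\tfrac{x}{n})\big|.
\end{equation*}

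Integrating in time on $[0,t]$ and then taking $\sup_{t\in[0,T]}$ contributes at most a factor $T$, still leaving a deterministic quantity; therefore the expectation $\mathbb E_{\mu_n}$ is just that same quantity. By Proposition \ref{lemconvneum}, this bound tends to zero as $n\to\infty$, which is exactly the desired conclusion. I do not anticipate any real obstacle: the only thing to double-check is that the convention for $\sum_{\{x,y\}\in\mcb F}$ in the statement of the corollary matches the site-indexed sum appearing in Proposition \ref{lemconvneum}, which it does by the same convention used in Proposition \ref{gensdif} when passing from \eqref{princsdif} to \eqref{sneuterm}.
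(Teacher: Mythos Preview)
Your proposal is correct and matches the paper's own argument exactly: the paper states that since $|\eta_s^n(x)| \leq 1$, the corollary is a direct consequence of Proposition~\ref{lemconvneum}. Your write-up simply spells out the deterministic bound that this one-line remark encodes.
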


We end this section with two results which are consequences of Proposition \ref{convdisc} and Proposition \ref{lemconvneum}. For the remainder of this section, we define $\mcb S_{\beta, \gamma}$ as
\begin{align*}
\mcb S_{\beta, \gamma} :=\mathbbm{1}_{\{\gamma \in (0,1]\}}\mcb S_{Neu}+\mathbbm{1}_{\{\gamma \in(1,2),\beta=0\}}\mcb S_{Dir}+\mathbbm{1}_{\{\gamma \in(1,2),\beta>0\}}\mcb S_{Rob0}.
\end{align*}
Now we present the final  results of this section, which were useful to treat \eqref{srobterm}.

\begin{prop} \label{convslow}
Let $\mcb S = \mcb S_0$. For every $G \in \mcb S_{\beta, \gamma}$ we have 
\begin{align*}  
\lim_{n \rightarrow \infty}   \frac{1}{n}  \sum_{x} \sup_{s \in [0,T]} \big|& n^{\gamma} \sum_{ \{x, y\} \in \mcb S } [ G(s, \tfrac{y}{n} ) - G(s, \tfrac{x}{n} ) ] p(y-x)  \\&- \mathbbm{1}_{\beta=0} \big(  [-(- \Delta)^{\frac{\gamma}{2}}   G] (s, \tfrac{x}{n} ) -  \sum_{x} [-(- \Delta)_{\mathbb{R}^{*}}^{\frac{\gamma}{2}}   G] (s, \tfrac{x}{n} )  \big) \big| =0.
\end{align*}
\end{prop}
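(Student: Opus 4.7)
The plan is to exploit the fact that for each site $x$ the bonds at $x$ partition into $\mcb F$ and $\mcb S$, yielding the pointwise identity
\begin{equation*}
n^{\gamma} \sum_{\{x,y\} \in \mcb S} [G(s,\tfrac{y}{n}) - G(s,\tfrac{x}{n})] p(y-x) = n^{\gamma} \mcb K_n G(s,\tfrac{x}{n}) - n^{\gamma} \sum_{\{x,y\} \in \mcb F} [G(s,\tfrac{y}{n}) - G(s,\tfrac{x}{n})] p(y-x).
\end{equation*}
Under the norm $\tfrac{1}{n}\sum_x \sup_{s\in[0,T]}|\cdot|$, the triangle inequality reduces the claim to combining Proposition \ref{convdisc} (which controls $n^{\gamma}\mcb K_n G$ by the full fractional Laplacian) with Proposition \ref{lemconvneum} (which controls the fast-bond sum by the regional fractional Laplacian on $\mathbb{R}^{*}$).

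In the regime $\beta=0$ the hypotheses of both propositions are verified: $G \in \mcb S_{\beta,\gamma}$ equals $\mcb S_{\mathrm{Dif}}$ for $\gamma \in (1,2)$ and $\mcb S_{\mathrm{Neu}} \subset \mcb S_{\mathrm{Dif}}$ for $\gamma \in (0,1]$, so Proposition \ref{convdisc} applies, and in both subcases $G$ also satisfies the hypothesis of Proposition \ref{lemconvneum}. Subtracting the two limits pointwise in $x$ and $s$ delivers exactly the nontrivial right-hand side $[-(-\Delta)^{\gamma/2} G](s,\tfrac{x}{n}) - [-(-\Delta)_{\mathbb{R}^{*}}^{\gamma/2} G](s,\tfrac{x}{n})$ that is activated by the indicator.

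For $\beta > 0$ the indicator vanishes, and one must show decay to zero. The proposition is used downstream (cf.\ the proof of Proposition \ref{tightcond1dif}) together with an extra $n^{-\beta}$ factor, so it suffices to establish the uniform-in-$n$ bound $\tfrac{1}{n}\sum_x \sup_s \bigl|n^{\gamma} \sum_{\{x,y\}\in\mcb S}[G(s,\tfrac{y}{n}) - G(s,\tfrac{x}{n})] p(y-x)\bigr| \lesssim 1$. This I would obtain by splitting on whether $|x| > b_G n$ (where $G(s,\tfrac{x}{n}) = 0$ and the sum is controlled by the tails $|y-x|^{-\gamma-1}$ against the compact support of $G$) or $|x| \leq b_G n$ (where the Riemann-sum structure bounds the slow sum by a constant times the cross-term integral $c_{\gamma}\iint |G(s,v)-G(s,u)||u-v|^{-\gamma-1}\,dv\,du$ over $\{uv < 0\}$).

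The main obstacle is showing finiteness of this cross-term integral when $\gamma \in (1,2)$: the factor $|u-v|^{-\gamma-1}$ is non-integrable on its own near $u=v=0$ on opposite sides of the origin. It is exactly at this point that the hypothesis $G \in \mcb S_{\mathrm{Rob}0}$, i.e., $G_-(s,0) = G_+(s,0)$, becomes crucial: together with smoothness on each half-line, it yields the Lipschitz-type estimate $|G(s,v)-G(s,u)| \lesssim |v-u|$ for $u,v$ near $0$ on opposite sides, converting the singular kernel into the integrable $|u-v|^{-\gamma}$ (integrable since $\gamma < 2$). In the regime $\gamma \in (0,1]$ the analogous difficulty is bypassed outright because $G \in \mcb S_{\mathrm{Neu}}$ vanishes in a fixed neighbourhood of the origin. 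This explains why $\mcb S_{\mathrm{Rob}0}$, rather than the larger $\mcb S_{\mathrm{Rob}}$, is the correct test function space in the $\beta>0$, $\gamma \in (1,2)$ regime.
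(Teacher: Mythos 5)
Your proposal is correct and, for the $\beta=0$ case, matches the paper's one-line reduction to Propositions \ref{convdisc} and \ref{lemconvneum} via the partition $\mcb B = \mcb F \cup \mcb S$. For $\beta>0$ you also arrive at the right conclusion, but you have implicitly caught a discrepancy worth flagging: as written the display has $n^{\gamma}$ in front of the slow-bond sum, yet the paper's own estimate reads $n^{\gamma-\beta-1-\delta}$ and the downstream usages (Proposition \ref{tightcond1dif}, Corollary \ref{corconvslow}) both carry the normalization $n^{\gamma-\beta}$ — so the intended statement is the $n^{\gamma-\beta}$ version, and the $n^{\gamma}$ is a typo. You resolve this informally by arguing that a uniform-in-$n$ bound on the $n^{\gamma}$-scaled quantity suffices because of the ``downstream $n^{-\beta}$ factor,'' which is a valid reading but would be cleaner if stated as a correction of the display.

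On the substance of the $\beta>0$, $\gamma\in(1,2)$ estimate, your route is slightly sharper than the paper's. The paper applies a $\delta$-H\"older estimate $|G(s,u)-G(s,v)|\lesssim|u-v|^{\delta}$ with $\delta\in(\gamma-\beta-1,\gamma-1)$ chosen strictly below $\gamma-1$ so that the resulting double sum $\sum_{x\geq0}\sum_{y\geq1}(x+y)^{\delta-\gamma-1}$ converges over all of $\mathbb{N}^2$; the price is a residual power $n^{\gamma-\beta-1-\delta}$ whose negativity hinges on $\beta>0$. You instead take $\delta=1$ (the full Lipschitz exponent, available precisely because $G\in\mcb S_{Rob0}$ forces $G_-(0)=G_+(0)$ and hence global Lipschitz continuity of the piecewise-smooth, compactly supported $G$), and compensate the resulting divergence of the untruncated sum by exploiting the compact support to cut the sum at $|x-y|\lesssim n$, arriving at a genuinely $n$-independent bound $n^{\gamma-2}\sum_{t\lesssim n}t^{1-\gamma}\lesssim1$. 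Both work; your version makes explicit that the $\beta$ is only needed for the final $n^{-\beta}$ decay and not for the convergence of the sums. Your observation that the key continuous analogue is the integrability of $|u-v|^{-\gamma}$ across the origin for $\gamma<2$ is also the right intuition, and your treatment of $\gamma\in(0,1]$ (where $G\in\mcb S_{Neu}$ vanishes identically near $0$, so no cancellation is needed) matches the paper's.
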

\begin{proof}
If $\beta=0$, the result  comes directly from Proposition \ref{convdisc} and Proposition \ref{lemconvneum}. If $\beta >0$, there are two possibilities: $\gamma \in (1,2)$ and $\gamma \in (0,1]$.

\textbf{I).}  For $\gamma \in (1,2)$, we can choose $\delta \in [0,1] \cap (\gamma  - \beta - 1, \gamma - 1)$. Since $G \in \mcb S_{\textit{Rob0}}$, there exists $C > 0$ (independent of $s \in [0,T]$) such that $|G(s,u) - G(s,v)| \leq C |u-v|^{\delta}$ for every $u,v \in \mathbb{R}$ and every $s \in [0,T]$. This leads to
\begin{align*}
\frac{1}{n} \sum_{x} \sup_{s \in [0,T]} | n^{\gamma} \sum_{ \{x, y\} \in \mcb S } [ G(s, \tfrac{y}{n} ) - G(s, \tfrac{x}{n} ) ] p(y-x)| 
\lesssim  2 n^{\gamma-\beta-1 - \delta} \sum_{x=0}^{\infty} \sum_{y=1}^{\infty} (x+y)^{-\gamma-1+\delta}.
\end{align*}
Since $\delta \in (\gamma  - \beta - 1, \gamma - 1)$, the sum above is finite and the expression in the last display vanishes as $n \rightarrow \infty$, leading to the desired result.

\textbf{II).}  For $\gamma \in (0,1]$, we consider  $G \in \mcb S_{\textit{Neu}}$ and $\sup_{s \in [0,T]} |G(s,u)|=0$ for every $u: |u| \leq \bar{b}_G$ and every $u: |u| \geq b_G$. This leads to
\begin{align*}
& \frac{1}{n} \sum_{x} \sup_{s \in [0,T]} | n^{\gamma} \sum_{ \{x, y\} \in \mcb S } [ G(s, \tfrac{y}{n} ) - G(s, \tfrac{x}{n} ) ] p(y-x)| \\
\leq&2 \| G \|_{\infty} n^{\gamma-\beta-1}  \big[  \sum_{y= \bar{b}_G n}^{b_G n} \sum_{x= b_G n}^{\infty}p(x+y ) +  \sum_{y= 0}^{b_G n} \sum_{x= \bar{b}_G n}^{b_G n} p(x+y) \big]  \\
\lesssim & n^{-\beta} \big[ \frac{1}{n} \sum_{y= \bar{b}_G n}^{b_G n}  \big( b_G + \frac{y}{n} \big)^{-\gamma} +\frac{1}{n} \sum_{y=0}^{b_G n}  \big( \bar{b}_G  + \frac{y}{n} \big)^{-\gamma}  \big] \lesssim n^{-\beta} \big[  \int_{b_G +\bar{b}_G}^{2 b_G} u^{-\gamma} du + \int_{b\bar{b}_G}^{b_G+\bar{b}_G} u^{-\gamma} du].
\end{align*}
Since both integrals above are finite and $\beta>0$,  last display vanishes as $n \rightarrow \infty$.
\end{proof}
The next result is a trivial  consequence of the previous one by the fact that the variables $\eta(x)$ are bounded. 
\begin{cor} \label{corconvslow}
Let $\mcb S = \mcb S_0$. For every $G \in \mcb S_{\beta, \gamma}$ we have 
\begin{align*}  
\lim_{n \rightarrow \infty} \mathbb{E}_{\mu_n} \Big[ \sup_{t \in [0,T]}  \Big| \int_0^t \Big\{ & \alpha n^{\gamma-1-\beta}  \sum_{ \{x,y\} \in \mcb S } [ G(s, \tfrac{y}{n} ) - G(s, \tfrac{x}{n} ) ] p(y-x) \eta_s^n(x)   \\
- & \alpha \mathbbm{1}_{\beta=0} \big( \langle \pi_s^n, [-(- \Delta)^{\frac{\gamma}{2}}   G] (s, \cdot) \rangle - \langle \pi_s^n, [-(- \Delta)_{\mathbb{R}^{*}}^{\frac{\gamma}{2}}   G] (s, \cdot) \rangle  \big) \Big\} ds \Big| \Big]  =0.
\end{align*}
\end{cor}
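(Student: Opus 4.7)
The plan is to deduce the corollary from Proposition \ref{convslow} by a direct reduction exploiting the pointwise bound $|\eta_s^n(x)|\le 1$. The argument is mechanical: the stochastic quantity is dominated by a deterministic one that has already been shown to vanish.

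First, I would rewrite both terms under a common form. Using $\langle\pi_s^n,F(s,\cdot)\rangle = \tfrac{1}{n}\sum_x F(s,\tfrac{x}{n})\eta_s^n(x)$, and reading $\sum_{\{x,y\}\in\mcb S}$ as $\sum_x\sum_{y:\{x,y\}\in\mcb S}$ (the convention consistent with the derivation of the generator in \eqref{srobterm}), the integrand is of the form $\tfrac{1}{n}\sum_x \eta_s^n(x) H_n(s,\tfrac{x}{n})$ where
\[
H_n(s,\tfrac{x}{n}):=\alpha\, n^{\gamma-\beta}\!\!\!\sum_{y:\{x,y\}\in\mcb S}\!\!\!\!\big[G(s,\tfrac{y}{n})-G(s,\tfrac{x}{n})\big]p(y-x)\;-\;\alpha\,\mathbbm{1}_{\beta=0}\big([-(-\Delta)^{\tfrac{\gamma}{2}}G]-[-(-\Delta)_{\mathbb{R}^*}^{\tfrac{\gamma}{2}}G]\big)(s,\tfrac{x}{n}).
\]

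Next, by the triangle inequality together with $|\eta_s^n(x)|\le 1$, and replacing $\sup_{t\in[0,T]}\int_0^t|\cdot|ds$ by $T\cdot\sup_{s\in[0,T]}|\cdot|$, I obtain the deterministic bound
\[
\mathbb E_{\mu_n}\!\!\left[\sup_{t\in[0,T]}\Big|\int_0^t\!\!\cdots\,ds\Big|\right]\le T\cdot\frac{1}{n}\sum_x\sup_{s\in[0,T]}\big|H_n(s,\tfrac{x}{n})\big|.
\]

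Finally I would apply Proposition \ref{convslow}. In the case $\beta=0$ the bound is exactly $\alpha T$ times the quantity shown to vanish there. In the case $\beta>0$ the indicator drops out and the factor in front of the inner sum becomes $n^{\gamma-\beta}=n^{-\beta}\cdot n^{\gamma}$; since $n^{-\beta}\le 1$ for $n\ge 1$, the bound is dominated by $\alpha T$ times the same quantity from Proposition \ref{convslow} (whose regional Laplacian correction is absent in this regime due to the indicator), and the limit is zero. No step presents a real obstacle: the only point requiring care is the bookkeeping of the summation convention on $\sum_{\{x,y\}\in\mcb S}$, which must be fixed so that all $\eta_s^n(x)$-coefficients can be collected into a single deterministic coefficient matching the one controlled by Proposition \ref{convslow}.
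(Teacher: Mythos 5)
Your proposal is correct and takes essentially the same approach as the paper: the paper simply states that the corollary is ``a trivial consequence of the previous one by the fact that the variables $\eta(x)$ are bounded,'' which is precisely your reduction using $|\eta_s^n(x)|\le 1$ and $\sup_{t}\big|\int_0^t\cdot\,ds\big|\le T\sup_s|\cdot|$ to dominate the expectation by $\alpha T$ times the deterministic quantity of Proposition~\ref{convslow}. One small remark on bookkeeping: your extra factor ``$n^{-\beta}\le 1$'' is compensating for what appears to be a typographical slip in the display of Proposition~\ref{convslow} (it reads $n^{\gamma}$ but the proof there in fact bounds the quantity with $n^{\gamma-\beta}$, as the claimed estimate $\lesssim n^{\gamma-\beta-1-\delta}$ makes clear); with $n^{\gamma-\beta}$ in the proposition the match with the corollary's $\alpha n^{\gamma-1-\beta}$ prefactor is exact and no extra inequality is needed, but either reading leads to the same conclusion.
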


\begin{prop} \label{tight2condaux}
Let $G \in \mcb S_{ \textrm{Dif} }$. Then
\begin{align*}
n^{\gamma-2} \sum_{x,y} \sup_{s \in [0,T]} [ G ( s,\tfrac{y}{n} ) - G (s, \tfrac{x}{n} ) ]^2 p(y - x) \lesssim \max\{ n^{\gamma-2}, n^{-1} \}.
\end{align*}
\end{prop}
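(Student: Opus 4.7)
Since $G \in \mcb S_{\textrm{Dif}} = P\big([0,T], C_c^{\infty}(\mathbb{R})\big)$, the function $\tilde G(u):=\sup_{s \in [0,T]}|G(s,u)|$ is bounded with compact support, and there exist constants $M_0 := \sup_{s \in [0,T]}\|G(s,\cdot)\|_{\infty}$ and $M_1 := \sup_{s \in [0,T]}\|\partial_u G(s,\cdot)\|_{\infty}$, both finite. Moreover, there is $b:=b_G \geq 1$ such that $G(s,u)=0$ whenever $|u|\geq b$, for all $s \in [0,T]$. In particular, $[G(s,y/n)-G(s,x/n)]^2=0$ unless $|x|\le bn$ or $|y|\le bn$. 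Using this observation and the symmetry $p(y-x)=p(x-y)$, the plan is to bound
\begin{equation*}
\sum_{x,y}\sup_{s\in[0,T]}[G(s,\tfrac{y}{n})-G(s,\tfrac{x}{n})]^2 p(y-x) \;\le\; 2 \sum_{|x|\le bn}\sum_{z\neq 0}\sup_{s\in[0,T]}[G(s,\tfrac{x+z}{n})-G(s,\tfrac{x}{n})]^2 p(z).
\end{equation*}

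The next step is to split the inner sum according to whether $|z|\le n$ or $|z|>n$. For short jumps, the mean value theorem gives $\sup_{s}|G(s,\tfrac{x+z}{n})-G(s,\tfrac{x}{n})|\le M_1 |z|/n$, so, recalling $p(z)=c_\gamma |z|^{-\gamma-1}$,
\begin{equation*}
\sum_{1\le |z|\le n}\sup_{s}[G(s,\tfrac{x+z}{n})-G(s,\tfrac{x}{n})]^2 p(z) \;\le\; \frac{2c_\gamma M_1^2}{n^2}\sum_{z=1}^{n}z^{1-\gamma} \;\lesssim\; n^{-\gamma},
\end{equation*}
since $\sum_{z=1}^n z^{1-\gamma}\lesssim n^{2-\gamma}$ for $\gamma\in(0,2)$. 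For long jumps we apply the trivial bound $|G(s,\tfrac{x+z}{n})-G(s,\tfrac{x}{n})|\le 2M_0$, obtaining
\begin{equation*}
\sum_{|z|>n}\sup_{s}[G(s,\tfrac{x+z}{n})-G(s,\tfrac{x}{n})]^2 p(z)\;\le\; 8c_\gamma M_0^2 \sum_{z>n}z^{-\gamma-1}\;\lesssim\; n^{-\gamma}.
\end{equation*}

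Combining both estimates, for each $x$ with $|x|\le bn$ the inner sum is $O(n^{-\gamma})$. Since the number of such $x$ is $O(n)$, we get
\begin{equation*}
\sum_{x,y}\sup_{s\in[0,T]}[G(s,\tfrac{y}{n})-G(s,\tfrac{x}{n})]^2 p(y-x)\;\lesssim\; n\cdot n^{-\gamma} \;=\; n^{1-\gamma}.
\end{equation*}
Multiplying by $n^{\gamma-2}$ yields the bound $n^{-1}\le \max\{n^{\gamma-2},n^{-1}\}$, finishing the proof. There is no real obstacle here; the only minor care needed is to choose the crossover $|z|=n$ between the Taylor and the $L^\infty$ regimes, which balances both contributions at $O(n^{-\gamma})$ and exploits the fact that even though the variance of $p(\cdot)$ is infinite, truncating the second moment at $|z|\le n$ only loses a factor of $n^{2-\gamma}$.
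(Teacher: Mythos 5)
Your proof is correct, and it in fact establishes the sharper bound $n^{-1}$ for all $\gamma \in (0,2)$, which implies the stated $\max\{n^{\gamma-2}, n^{-1}\}$. The paper's argument is structured differently: it invokes a H\"older modulus $|G(s,u)-G(s,v)| \lesssim |u-v|^\delta$ with the $\gamma$-dependent exponent $\delta = \min\{1,(\gamma+1)/2\}$, splits the index set into three regions depending on the positions of $x$ and $y$ relative to $b_G n$ and $(b_G+1)n$, and bounds the contribution from the near regions crudely by the number of terms (of order $n^2$). This produces $n^{\gamma-2\delta}$ plus an $n^{-1}$ tail, which equals $n^{-1}$ for $\gamma \le 1$ but only $n^{\gamma-2}$ for $\gamma \in (1,2)$. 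You instead restrict $x$ to the compact support range, substitute $y=x+z$, and split solely on the jump length $|z|$, using the Lipschitz modulus for $|z| \le n$ and the $L^\infty$ bound for $|z| > n$; summing carefully over $z$ gives a uniform $O(n^{-\gamma})$ per site and hence $n^{-1}$ overall. Both routes rest on the same two ingredients (a smoothness modulus and an $L^\infty$ cutoff), but your decomposition and the more careful summation over $z$ recover the sharp $n^{-1}$ bound that the paper's cruder term count forfeits when $\gamma>1$; either bound suffices for the application in Proposition~\ref{tightcond2dif}.
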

\begin{proof}
Since $G \in \mcb S_{ \textit{Dif} }$, there exists $C_{\delta}$ such that $\sup_{s \in [0,T]} |G(s,u) - G(s,v) | \leq C_{\delta} |u-v|^{\delta}, \forall u,v \in \mathbb{R}$, where $\delta = \min \{1, \frac{\gamma+1}{2} \} \in [0,1]$. Since $ \sup_{s \in [0,T]} G(s,u)=0, \forall u: |u|\geq b_G$, we have 
\begin{align*}
&n^{\gamma-2} \sum_{x,y} \sup_{s \in [0,T]} [ G ( s,\tfrac{y}{n} ) - G (s, \tfrac{x}{n} ) ]^2 p(y - x)  \\
\lesssim\, & n^{\gamma-2-2\delta}  \Big[ \sum_{\substack{|x| < (b_G+1)n\\ |y| < b_Gn }}|y-x|^{2 \delta - \gamma -1}  +  \sum_{\substack{|x| < b_Gn\\ b_Gn \leq |y| < (b_G+1)n }}|y-x|^{2 \delta - \gamma -1} \Big] 
+  n^{\gamma-2} \sum_{\substack{|x| \geq (b_G+1)n\\ |y| < b_Gn}} |x-y|^{-\gamma-1}  \\
\lesssim \,& n^{\gamma-2-2\delta} n^2    +    n^{\gamma-1}  \sum_{t=n}^{\infty} t^{-\gamma-1} 
\lesssim\,  n^{\gamma-2\delta} + n^{-1}    \lesssim \max\{ n^{\gamma-2}, n^{-1} \},
\end{align*} 
where in the second and the last inequalities we used  the fact that $\delta = \min \{1, \frac{\gamma+1}{2} \}$.
\end{proof}

\section{Analysis tools} \label{secuniq}

In this section our goal is to prove the uniqueness of the weak solutions of \eqref{eqhydfracdifrob}, \eqref{eqhydfracdifreal},  \eqref{eqhydfracbetazero} and \eqref{eqhydfracdifnotuniq} (the last one assuming \eqref{conduniq}). Since we did not find in the literature a proof of uniqueness of our weak solutions we decided to prove it. Before doing so, we prove  some useful results.

\subsection{Well-definiteness of the fractional operators} \label{subsecwell}

In this section, we prove that the fractional operators that we deal with are well defined on our space of test functions. Recall \eqref{eq:B_g}. Let us begin with the fractional Laplacian.
\begin{prop} \label{lapfracglob}
Let $G \in C_c^2 (\mathbb{R})$. For every $u \in \mathbb{R}$, the limit
\begin{equation} \label{limlapfrac2wd}
 \lim_{\varepsilon \rightarrow 0^+}   [-(- \Delta)^{\frac{\gamma}{2}} G ]_{\varepsilon} (u )
\end{equation}
exists and there exists $H \in L^1(\mathbb{R})$ such that $|[-(- \Delta)^{\frac{\gamma}{2}} G ]_{\varepsilon} | \leq H, \forall \varepsilon \in (0,b_G)$.  
\end{prop}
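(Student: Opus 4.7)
The plan is to use the symmetric principal-value representation of \eqref{limlapfrac} and split the control of $[-(-\Delta)^{\frac{\gamma}{2}} G]_{\varepsilon}(u)$ into two regimes: $|u|$ comparable to the support of $G$, where a second-order Taylor bound on the numerator handles the local singularity of the kernel; and $|u|$ much larger than $b_G$, where $G(u)=0$ turns the integrand into a simple power-decaying tail.

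First, by the change of variables $v \mapsto 2u-v$ on the half-line $\{v \leq u\}$, I will rewrite
\begin{equation*}
[-(-\Delta)^{\frac{\gamma}{2}} G]_{\varepsilon}(u) = c_{\gamma} \int_{\varepsilon}^{\infty} \frac{G(u+w)+G(u-w)-2G(u)}{w^{\gamma+1}} \, dw.
\end{equation*}
Since $G \in C_c^2(\mathbb{R})$, Taylor's theorem yields $|G(u+w)+G(u-w)-2G(u)| \leq \|G''\|_{\infty} w^2$, while trivially the numerator is bounded by $4\|G\|_{\infty}$. Because $\gamma \in (0,2)$, the integrals $\int_0^1 w^{1-\gamma}\, dw$ and $\int_1^{\infty} w^{-\gamma-1}\, dw$ both converge, hence the integrand is absolutely integrable on $(0,\infty)$. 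Dominated convergence applied to the indicator $\mathbbm{1}_{\{w \geq \varepsilon\}}$ as $\varepsilon \to 0^{+}$ then proves the existence of \eqref{limlapfrac2wd} and simultaneously yields the uniform-in-$\varepsilon$ bound $|[-(-\Delta)^{\frac{\gamma}{2}} G]_{\varepsilon}(u)| \leq C_1 := c_{\gamma}\bigl(\|G''\|_{\infty}/(2-\gamma) + 4\|G\|_{\infty}/\gamma\bigr)$, valid for every $u \in \mathbb{R}$ and every $\varepsilon>0$.

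To produce the dominating function $H \in L^1(\mathbb{R})$, the constant $C_1$ already handles the bounded set $\{|u| \leq 2b_G\}$. For its complement I will exploit that $G(u)=0$ whenever $|u| \geq b_G$. Since $\varepsilon < b_G$, for every $v \in \mathrm{supp}(G) \subset [-b_G,b_G]$ and every $u$ with $|u| \geq 2b_G$ one has $|u-v| \geq |u|-b_G \geq |u|/2 \geq b_G > \varepsilon$, so the cutoff is automatically satisfied and
\begin{equation*}
\bigl|[-(-\Delta)^{\frac{\gamma}{2}} G]_{\varepsilon}(u)\bigr| \leq c_{\gamma} \int_{-b_G}^{b_G} \frac{|G(v)|}{|u-v|^{\gamma+1}} \, dv \leq \frac{2^{\gamma+2} c_{\gamma} b_G \|G\|_{\infty}}{|u|^{\gamma+1}}.
\end{equation*}
Setting $H(u):= C_1 \mathbbm{1}_{\{|u| \leq 2b_G\}}(u) + 2^{\gamma+2} c_{\gamma} b_G \|G\|_{\infty} |u|^{-\gamma-1} \mathbbm{1}_{\{|u|>2b_G\}}(u)$ gives the required dominating function, which lies in $L^1(\mathbb{R})$ because $\gamma > 0$.

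There is no serious technical obstacle: the argument is essentially a bookkeeping exercise once the symmetric representation is adopted, mirroring the standard textbook treatment of the fractional Laplacian on Schwartz functions. The only slightly delicate point is that the decay estimate in the far field genuinely uses $\varepsilon < b_G$ in order to drop the cutoff indicator, which is exactly the range of $\varepsilon$ required in the statement.
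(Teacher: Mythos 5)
Your proof is correct. A small but relevant contextual note: the paper does not actually present a proof of Proposition \ref{lapfracglob}; it states it and then goes on to prove the harder companion result, Proposition \ref{lapfracreg} (the regional fractional Laplacian on a half-line), whose proof is a case-by-case construction of the dominating function $H$. The regional case is genuinely messier because the $w$-integration is truncated at the boundary, destroying the symmetry of $G(u+w)+G(u-w)-2G(u)$ and forcing separate first- and second-order Taylor estimates on different intervals, with an extra split depending on whether $\gamma\in(1,2)$ or $\gamma\in(0,1]$. Your argument for the full-line fractional Laplacian is cleaner precisely because the symmetric representation is available for every $u$: the single estimate $\min(\|G''\|_\infty w^2,\, 4\|G\|_\infty)\,w^{-\gamma-1}\in L^1(0,\infty)$ gives a constant bound $C_1$ uniform in $u$ and $\varepsilon$, and then the compact support is used only to obtain $|u|^{-\gamma-1}$ decay in the far field and hence integrability of $H$. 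This is exactly the right simplification of the paper's approach; there is no gap.
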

The last result also holds  for the regional fractional Laplacian on  $I = \mathbb{R}_{-}^{*}$ or $I = \mathbb{R}_{+}^{*}$. Recall  \eqref{eq:barB_g}.
\begin{prop} \label{lapfracreg}
Let $I = \mathbb{R}_{-}^{*}$ or $I = \mathbb{R}_{+}^{*}$. Let $G \in C_c^2 (\mathbb{R}^{*})$ if $\gamma \in (1,2)$ and $G \in C_{c0}^2 (\mathbb{R})$ if $\gamma \in (0,1]$. Then for every $u \in I$, the limit
\begin{equation} \label{limlapfracpos2}
 \lim_{\varepsilon \rightarrow 0^+}   [-(- \Delta)_{I}^{\frac{\gamma}{2}} G ]_{\varepsilon} (u )
\end{equation}
exists and there exists $H \in L^1(I)$ such that $|[-(- \Delta)_{I}^{\frac{\gamma}{2}} G ]_{\varepsilon} | \leq H, \forall \varepsilon \in (0,b_G)$.
\end{prop}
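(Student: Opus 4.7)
The plan is to reduce by symmetry to the case $I = \mathbb{R}_{+}^{*}$, and in what follows $G_+ \in C_c^2(\mathbb{R})$ denotes either the positive component of $G$ (when $G \in C_c^2(\mathbb{R}^{*})$) or $G$ itself (when $G \in C_{c0}^2(\mathbb{R})$), so that $G(v) = G_+(v)$ for every $v \geq 0$. Fix $u \in \mathbb{R}_{+}^{*}$. The key observation is that the interval $(0, 2u)$ lies in $I$ and is symmetric around $u$, so by antisymmetry
\begin{equation*}
\int_{\varepsilon \leq |v-u| < u} \frac{(v-u)\, G_+'(u)}{|u-v|^{\gamma+1}}\, dv = 0, \qquad \varepsilon \in (0,u).
\end{equation*}
Adding this zero inside the defining expression for $[-(- \Delta)_{I}^{\frac{\gamma}{2}} G ]_{\varepsilon}(u)$, I would rewrite it as
\begin{equation*}
c_\gamma \int_0^\infty \mathbbm{1}_{\{|v-u|\geq \varepsilon\}}\, \frac{G(v) - G(u) - \mathbbm{1}_{\{|v-u| < u\}}\,(v-u)\, G_+'(u)}{|u-v|^{\gamma+1}}\, dv.
\end{equation*}
Taylor's theorem bounds the numerator by $\tfrac{1}{2}\|G_+''\|_\infty (v-u)^2$ on $\{|v-u|<u\}$ and crudely by $2\|G\|_\infty$ on $\{|v-u|\geq u\}$; the full integrand is therefore dominated, uniformly in $\varepsilon$, by an integrable function of $v$ --- the control near $v=u$ uses $\gamma < 2$ and the control for $|v-u|$ large uses $\gamma > 0$. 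Dominated convergence then yields existence of the limit in \eqref{limlapfracpos2}, and taking absolute values inside the integral defines the candidate majorant
\begin{equation*}
H(u) := c_\gamma \int_0^\infty \frac{|G(v) - G(u) - \mathbbm{1}_{\{|v-u| < u\}}\,(v-u)\, G_+'(u)|}{|u-v|^{\gamma+1}}\, dv,
\end{equation*}
which satisfies $|[-(- \Delta)_{I}^{\frac{\gamma}{2}} G ]_{\varepsilon}(u)| \leq H(u)$ for every $\varepsilon \in (0, b_G)$ (when $\varepsilon \geq u$, the $v$-integration is confined to $v \geq u+\varepsilon \geq 2u$, so the bound is immediate).

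The remaining task is to show $H \in L^1(\mathbb{R}_{+}^{*})$, which I would do by splitting $\mathbb{R}_{+}^{*}$ into three regions. For $u > b_G + 1$, both $G_+(u)$ and $G_+'(u)$ vanish, so $H(u) \leq c_\gamma \int_0^{b_G} |G(v)|(u-v)^{-\gamma-1} dv \lesssim (u-b_G)^{-\gamma-1}$, which integrates at infinity. On the bulk --- where $u$ is bounded above (below $b_G+1$) and bounded away from $0$ --- the principal-value piece is controlled by $\|G_+''\|_\infty u^{2-\gamma}/(2-\gamma)$ and the tail piece is uniformly bounded; here I would also use the Taylor estimate $|G_+(v)| \leq \tfrac{1}{2}\|G_+''\|_\infty(b_G - v)^2$ (valid because $G_+(b_G) = G_+'(b_G) = 0$) to absorb the potential near-singularity when $u$ slightly exceeds $b_G$.

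The main obstacle is the integrability of $H$ near $u = 0$. When $\gamma \in (0,1]$ and $G \in C_{c0}^2(\mathbb{R})$, the function $G$ together with its first two derivatives vanishes on $[-\bar{b}_G, \bar{b}_G]$, so $G(u) = G_+'(u) = 0$ for $u < \bar{b}_G$ and $H(u) = c_\gamma \int_{\bar{b}_G}^{b_G} |G(v)|(v-u)^{-\gamma-1} dv$, which stays bounded as $u \to 0^+$ by the Taylor estimate $|G(v)| \lesssim (v-\bar{b}_G)^2$. When $\gamma \in (1,2)$ and $G \in C_c^2(\mathbb{R}^{*})$, however, $G_+$ need not vanish at $0$; using $|G_+(v) - G_+(u)| \leq \|G_+'\|_\infty|v-u|$, the tail contribution to $H(u)$ is seen to behave like $\int_{2u}^{b_G}(v-u)^{-\gamma}dv \sim u^{1-\gamma}/(\gamma-1)$ as $u \to 0^+$. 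The saving point is that $1-\gamma > -1$ (equivalently $\gamma < 2$), so $u^{1-\gamma}$ is integrable on $(0,1)$ and $H \in L^1$ near $0$. Combining the three regions yields $H \in L^1(\mathbb{R}_{+}^{*})$, and the case $I = \mathbb{R}_{-}^{*}$ follows by an identical argument after reflection.
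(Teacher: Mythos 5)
Your proof is correct and follows essentially the same route as the paper's: reduce to $I=\mathbb{R}_{+}^{*}$, exploit the symmetry of the kernel around $u$ to kill the linear Taylor term, bound near $u$ via a second-order Taylor estimate and in the tail via $\|G\|_{\infty}$ (for $\gamma\in(0,1]$, using the vanishing of $G$ near the origin) or a first-order Taylor estimate (for $\gamma\in(1,2)$), and verify $L^1(I)$ integrability of the majorant by splitting into regimes according to the size of $u$. The only difference is presentational: the paper works with the symmetrized second difference $\theta_u(w)=G(u+w)+G(u-w)-2G(u)$ over $w\in(\varepsilon,u)$ and writes $H$ out piecewise, while you subtract the odd linear term $(v-u)G_+'(u)$ directly from the integrand on the symmetric window $(0,2u)$ and define $H$ as the resulting integral --- two equivalent implementations of the same cancellation.
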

\begin{proof}
Let $G$ be fixed and recall  \eqref{eq:B_g}. We assume that $I = \mathbb{R}_{+}^{*}$, but we observe that the case $I = \mathbb{R}_{-}^{*}$ is analogous. Define $H: \mathbb{R}_{+}^{*} \rightarrow \mathbb{R}$ by
\begin{equation*}
H(u):=
\begin{cases}
 \frac{c_{\gamma}  \|G \|_{\infty}}{\gamma} [ (u-b_G)^{-\gamma} - u^{-\gamma} ], \text{if} \; u > 2b_G; \\
c_{\gamma} \big[ \frac{  \|G'' \|_{\infty}}{2-\gamma} (2b_G)^{2-\gamma} + \frac{  \|G' \|_{\infty} }{\gamma-1}  u^{1-\gamma} \big], \text{if} \; \gamma \in (1,2) \; \text{and} \; 0 < u \leq  2b_G; \\
c_{\gamma} \big[ \frac{  \|G'' \|_{\infty}}{2-\gamma} (2b_G)^{2-\gamma} + \frac{ 2  \|G \|_{\infty}}{\gamma}u^{-\gamma}  \big], \text{if} \; \gamma \in (0,1] \; \text{and} \; \frac{\bar{b}_G}{2} < u \leq  2b_G; \\
c_{\gamma} \big[ \frac{ \|G' \|_{\infty}}{2-\gamma} (2b_G)^{2-\gamma} + \frac{   \|G \|_{\infty}}{\gamma} \frac{2^{\gamma}}{\bar{b}_G^{\gamma} }   \big], \text{if} \; \gamma \in (0,1] \; \text{and} \; 0 < u \leq \frac{\bar{b}_G}{2}.
\end{cases}
\end{equation*}
Let $u \in \mathbb{R}_+^{*}$. For every $\varepsilon \in (0, \min\{ \frac{u}{2} ,b_G \})$, it holds
\begin{align*}
[-(- \Delta)_{\mathbb{R}_{+}}^{\frac{\gamma}{2}} G ]_{\varepsilon} (u ) =   c_{\gamma}   \int_{ \varepsilon}^{u}  \frac{[G(u+w) - G(u) ]  +[ G(u-w) - G(u) ]}{ w^{\gamma+1}} dw +  c_{\gamma}   \int_{ u}^{\infty}  \frac{G(u+w) - G(u)}{ w^{\gamma+1}} dw.
\end{align*}
Now observe that there are two possibilities for $u$:  $u > 2b_G$ or  $0 < u \leq 2b_G$. For $u > 2b_G$, since $G(y)=0, \forall y \in  (b_G, \infty)$, we get
\begin{align*}
 &[-(- \Delta)_{\mathbb{R}_{+}}^{\frac{\gamma}{2}} G ]_{\varepsilon} (u ) = c_{\gamma}   \int_{ u-b_G}^{u}  \frac{ G(u-w) }{ w^{\gamma+1}} dw, \forall \varepsilon \in (0, b_G),
\end{align*}
and the limit in \eqref{limlapfracpos2} exists. Moreover,
\begin{align*}
|[-(- \Delta)_{\mathbb{R}_{+}}^{\frac{\gamma}{2}} G ]_{\varepsilon} (u )| = \Big| c_{\gamma}   \int_{ u-b_G}^{u}  \frac{ G(u-w) }{ w^{\gamma+1}} dw \Big| \leq c_{\gamma}   \int_{ u-b_G}^{u}  \frac{  \|G \|_{\infty} }{ w^{\gamma+1}} dw = H(u).
\end{align*}
For $0 < u \leq 2b_G$, since $G \in C^2( \mathbb{R}_{+}^{*} )$, performing two Taylor expansions of second order in $G$, we get
\begin{align*}
\Big| c_{\gamma}   \int_{ \varepsilon}^{u}  \frac{\left(G(u+w) - G(u) \right)  + \left( G(u-w) - G(u) \right)}{ w^{\gamma+1}} dw \Big| =&  c_{\gamma}    \Big| \int_{ \varepsilon}^{u}  \frac{ G''\left( \xi_1(w) \right)   +  G''\left( \xi_2(w) \right) }{2 w^{\gamma-1}} dw \Big| \\
\leq c_{\gamma} \frac{  \|G'' \|_{\infty}}{2-\gamma} (2b_G)^{2-\gamma}  < \infty,
\end{align*}
for some $\xi_1(w) \in (u, u+w)$ and some $\xi_2(w) \in (u - w, u)$. Now we distinguish two cases: $\gamma \in (1,2)$ or $\gamma \in (0,1]$.

\textbf{I).}  For $\gamma \in (1,2)$: we have
\begin{align*}
\Big| c_{\gamma}   \int_{ u}^{\infty}  \frac{G(u+w) - G(u)}{ w^{\gamma+1}} dw \Big| =c_{\gamma} \Big|  \int_{ u}^{\infty}  \frac{ G'\big(\xi_3(w)\big)}{ w^{\gamma}} dw \Big| \leq c_{\gamma}  \frac{ \|G' \|_{\infty} }{\gamma-1}  u^{1-\gamma}  < \infty
\end{align*}
for some $\xi_3(w) \in (u, u+w)$, and from this, the limit in \eqref{limlapfracpos2} exists. Moreover,
\begin{align*}
|[-(- \Delta)_{\mathbb{R}_{+}}^{\frac{\gamma}{2}} G ]_{\varepsilon} (u )| \leq c_{\gamma} \Big[ \frac{ \|G''\|_{\infty}}{2-\gamma} (2b_G)^{2-\gamma} + \frac{  \|G'\|_{\infty}}{\gamma-1}u^{1-\gamma}  \Big]= H(u).
\end{align*}
\textbf{II).} For $\gamma \in (0,1]$ we need to distinguish again two cases: $u \in [ \frac{\bar{b}_G}{2}, 2 b_G)$ or $u \in (0, \frac{\bar{b}_G}{2} )$. In the former case we have
\begin{align*}
\Big| c_{\gamma}   \int_{ u}^{\infty}  \frac{G(u+w) - G(u)}{ w^{\gamma+1}} dw \Big| \leq c_{\gamma}  \int_{u }^{\infty} \frac{2 \|G \|_{\infty} }{ w^{\gamma+1}} dw \Big| \leq c_{\gamma}  \frac{2 \|G \|_{\infty} }{\gamma} u^{-\gamma}  < \infty,
\end{align*}
and we conclude that the limit in \eqref{limlapfracpos2} exists. Moreover,
\begin{align*}
|[-(- \Delta)_{\mathbb{R}_{+}}^{\frac{\gamma}{2}} G ]_{\varepsilon} (u )| \leq c_{\gamma} \Big[ \frac{ \|G''|\|_{\infty}}{2-\gamma} (2b_G)^{2-\gamma} +\frac{2\|G \|_{\infty} }{\gamma} u^{-\gamma}  \Big]= H(u).
\end{align*}
In the later case i.e. $\gamma \in (0, \frac{\bar{b}_G}{2} )$, we have
\begin{align*}
\Big| c_{\gamma}   \int_{ u}^{\infty}  \frac{G(u+w) - G(u)}{ w^{\gamma+1}} dw \Big| =c_{\gamma} \Big|  \int_{ \frac{\bar{b}_G}{2} }^{\infty} \frac{G(u+w) }{ w^{\gamma+1}} dw \Big| \leq c_{\gamma}  \frac{ \|G \|_{\infty} }{\gamma} \frac{2^{\gamma}}{\bar{b}_G^{\gamma}}  < \infty
\end{align*}
and we conclude that the limit in \eqref{limlapfracpos2} exists. Moreover,
\begin{align*}
|[-(- \Delta)_{\mathbb{R}_{+}}^{\frac{\gamma}{2}} G ]_{\varepsilon} (u )| \leq c_{\gamma} \Big[ \frac{ \|G''\|_{\infty}}{2-\gamma} (2b_G)^{2-\gamma} +\frac{ \|G\|_{\infty} }{\gamma} \frac{2^{\gamma}}{\bar{b}_G^{\gamma}}  \Big]= H(u).
\end{align*}
To conclude the proof it remains to prove that $H \in L^1(\mathbb{R}^{*}_{+})$. For $u >2 b_G$, applying the mean value Theorem to  the function $x^{-\gamma}$ we obtain
$
(u-b_G)^{-\gamma} - u^{-\gamma}  \leq b_G \gamma (u-b_G)^{-\gamma-1}. 
$
For $\gamma \in (1,2)$, it holds
\begin{align*}
\int_{\mathbb{R}_{+}^{*} } |H(u)| du \leq \int_0^{2 b_G} c_{\gamma} \Big[ \frac{ \|G''\|_{\infty}}{2-\gamma} (2b_G)^{2-\gamma} + \frac{ \|G'\|_{\infty} }{\gamma-1}  u^{1-\gamma} \Big] du  + \int_{2 b_G}^{\infty} b_G \gamma (u-b_G)^{-\gamma-1} < \infty.
\end{align*}
while for  $\gamma \in (0,1]$, it holds
\begin{align*}
\int_{\mathbb{R}_{+}^{*} } |H(u)| du \leq& \int_0^{\frac{\bar{b}_G}{2}} c_{\gamma} \big[ \frac{ \|G'' \|_{\infty}}{2-\gamma} (2b_G)^{2-\gamma} + \frac{  \|G \|_{\infty}}{\gamma} \frac{2^{\gamma}}{\bar{b}_G^{\gamma} }   \big] du \\
+& \int_{ \frac{\bar{b}_G}{2} }^{2 b_G} c_{\gamma} \Big[ \frac{ \|G'' \|_{\infty}}{2-\gamma} (2b_G)^{2-\gamma} + \frac{ 2 \|G \|_{\infty}}{\gamma}u^{-\gamma}  \Big] du  + \int_{2 b_G}^{\infty} b_G \gamma (u-b_G)^{-\gamma-1} < \infty.
\end{align*}
\end{proof}

Now will extend Theorem 3.3 of \cite{reflected} (which is stated for bounded domains) to the cases $I=\mathbb{R}$, $I=\mathbb{R}_{-}^{*}$ and $I=\mathbb{R}_{+}^{*}$. We recall that we identify $[-(- \Delta)_{\mathbb{R}}^{\frac{\gamma}{2}} G ]_{\varepsilon}$ with $[-(- \Delta)^{\frac{\gamma}{2}} G ]_{\varepsilon}$ and $[-(- \Delta)_{\mathbb{R}}^{\frac{\gamma}{2}} G ]$ with $[-(- \Delta)^{\frac{\gamma}{2}} G ]$. 

\begin{prop} \label{fracintpart}
Let $I \in \{ \mathbb{R}, \mathbb{R}_{-}^{*}, \mathbb{R}_{+}^{*} \}$ and $\rho: \mathbb{R} \rightarrow \mathbb{R}$ a  function  belonging to   $L^\infty(\mathbb{R})$ such that
\begin{align} \label{semnorrho}
 \iint_{ I^{2} } [\rho(u) - \rho(v)]^2 |u-v|^{-1-\gamma} du dv < \infty.
\end{align}
If $I=\mathbb{R}$, let $G \in C_c^{2}(\mathbb{R})$. On the other hand, if $I=\mathbb{R}_{-}^{*}$ or $I=\mathbb{R}_{+}^{*}$, let $G \in C_c^2(\mathbb{R}^{*})$ if $\gamma \in (1,2]$ and let $G \in C_{c0}^2(\mathbb{R}^{*})$ if $\gamma \in (0,1]$. Then  
\begin{align} \label{intpart}
 \int_{I} \rho(u) [-(- \Delta)_{I}^{\frac{\gamma}{2}} G ] (u) du = - \frac{c_{\gamma}}{2}    \iint_{I^2} \frac{[ G(u) - G(v) ] [ \rho(u) - \rho(v) ] }{|u-v|^{1 + \gamma}} du dv.  
\end{align}
\end{prop}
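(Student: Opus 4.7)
The natural strategy is to work at the level of the truncated operator $[-(-\Delta)_I^{\gamma/2} G]_{\varepsilon}$, establish the identity in \eqref{intpart} with the indicator $\mathbbm{1}_{|u-v|\ge \varepsilon}$ inserted on both sides, and then pass to the limit $\varepsilon \to 0^+$ on each side. The identity at the $\varepsilon$-level is a matter of Fubini plus a symmetrization: writing
\[
\int_I \rho(u) [-(-\Delta)_I^{\gamma/2} G]_{\varepsilon}(u)\,du
= c_\gamma \iint_{I^2} \mathbbm{1}_{|u-v|\ge \varepsilon}\,\rho(u)\,\frac{G(v)-G(u)}{|u-v|^{1+\gamma}}\,du\,dv
\]
and swapping the roles of $u$ and $v$ produces the same quantity with $\rho(v)$ and $G(u)-G(v)$ in place of $\rho(u)$ and $G(v)-G(u)$; averaging the two expressions yields precisely the $\varepsilon$-truncated version of the right-hand side of \eqref{intpart}. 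The use of Fubini is legitimate because $\rho\in L^\infty(\mathbb R)$ and $G$ is bounded with compact support (in $\bar I$), so on the region $|u-v|\ge \varepsilon$ the integrand is absolutely integrable.

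The first main step is the limit on the left-hand side. By Proposition \ref{lapfracglob} (when $I=\mathbb R$) or Proposition \ref{lapfracreg} (when $I=\mathbb R^*_\pm$), the truncations $[-(-\Delta)_I^{\gamma/2} G]_{\varepsilon}$ are bounded in modulus, uniformly in $\varepsilon\in(0,b_G)$, by a function $H\in L^1(I)$ and they converge pointwise to $[-(-\Delta)_I^{\gamma/2} G](u)$. Since $\rho\in L^\infty(\mathbb R)$, the dominating function $\|\rho\|_\infty H$ is in $L^1(I)$, and the dominated convergence theorem gives
\[
\lim_{\varepsilon\to 0^+}\int_I \rho(u) [-(-\Delta)_I^{\gamma/2} G]_{\varepsilon}(u)\,du = \int_I \rho(u) [-(-\Delta)_I^{\gamma/2} G](u)\,du.
\]

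The second main step, which is the real obstacle, is showing that the right-hand side's integrand is absolutely integrable on $I^2$, so that one more application of the dominated convergence theorem (with dominating function $|G(u)-G(v)|\,|\rho(u)-\rho(v)||u-v|^{-1-\gamma}$) closes the argument. For this I will invoke Cauchy--Schwarz:
\[
\iint_{I^2}\frac{|G(u)-G(v)|\,|\rho(u)-\rho(v)|}{|u-v|^{1+\gamma}}\,du\,dv \le \bigl[G\bigr]_{\gamma/2,I}\,\bigl[\rho\bigr]_{\gamma/2,I},
\]
where $[\cdot]_{\gamma/2,I}$ denotes the Gagliardo semi-norm. The factor $[\rho]_{\gamma/2,I}$ is finite by the hypothesis \eqref{semnorrho}, so everything reduces to checking $G\in \mathcal H^{\gamma/2}(I)$. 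This is where the case split on $\gamma$ matters: for $I=\mathbb R$ with $G\in C^2_c(\mathbb R)$, a standard splitting of the domain of integration into $|u-v|\le 1$ (controlled by $\|G'\|_\infty^2$ and the integrability of $|u-v|^{1-\gamma}$ near the diagonal, which requires $\gamma<2$) and $|u-v|>1$ (controlled by $\|G\|_\infty^2$ together with the compactness of $\operatorname{supp} G$) gives the result. The same splitting works for $I=\mathbb R^*_\pm$ when $\gamma\in(1,2)$ since $G|_I\in C^2(\bar I)$ has bounded support in $\bar I$; and for $\gamma\in(0,1]$ the assumption $G\in C_{c0}^2(\mathbb R)$ gives a support bounded away from $0$, so $G|_I$ is compactly supported inside the open set $I$ and the semi-norm is again handled by the same near/far splitting.

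Once both limits are carried out, \eqref{intpart} follows. The only subtle point is the regularity check $G\in\mathcal H^{\gamma/2}(I)$ in the boundary case $I=\mathbb R^*_\pm$ with $\gamma\in(1,2)$, where $G$ may genuinely jump at $0$; but since the Gagliardo integral is taken over $I^2$ only, the value at the boundary point $0$ plays no role and the verification reduces to the interior estimate above.
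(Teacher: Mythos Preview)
Your proposal is correct and follows essentially the same approach as the paper: establish the identity at the truncated level, then pass to the limit on both sides via dominated convergence, using Propositions \ref{lapfracglob}/\ref{lapfracreg} on the left and Cauchy--Schwarz (the paper says H\"older) together with \eqref{semnorrho} and $[G]_{\gamma/2,I}<\infty$ on the right. You supply more detail than the paper on the Fubini/symmetrization step and on the near/far verification of $G\in\mcb H^{\gamma/2}(I)$, but the architecture is identical.
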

\begin{proof}
 For every $k$ satisfying $k^{-1} < b_G$, it holds
\begin{align} \label{intpartk}
 \int_{I} \rho(u) [-(- \Delta)_{I}^{\frac{\gamma}{2}} G ]_{k^{-1}} (u) du = - \frac{c_{\gamma}}{2} \iint_{I^2} \frac{[ G(u) - G(v) ] [ \rho(u) - \rho(v) ] }{|u-v|^{1 + \gamma}}\mathbbm{1}_{\{ |u - v| \geq k^{-1} \}} du dv. 
\end{align}
From Propositions \ref{lapfracglob} and \ref{lapfracreg}, there exists $H \in L^1(I)$ such that $|\rho \cdot [-(- \Delta)_{I}^{\frac{\gamma}{2}} G ]_{k^{-1}} |$ is bounded from above by $\| \rho\|_{\infty} H$, for every $k^{-1} < b_G$. Moreover, we have
\begin{align*}
\iint_{ I^{2} } [G(u) - G(v)]^2 |u-v|^{-1-\gamma} du dv < \infty.
\end{align*}
Plugging this with \eqref{semnorrho} and H\"older's inequality, we get
\begin{align*}
\iint_{I^2} \frac{| G(u) - G(v) | | \rho(u) - \rho(v) | }{|u-v|^{1 + \gamma}} du dv<\infty.
\end{align*}
From the Dominated Convergence Theorem, making $k \rightarrow \infty$, the left-hand side and right-hand side of \eqref{intpartk} go to the left-hand side and right-hand side of \eqref{intpart}, respectively, leading to the desired result.
\end{proof}

\subsection{Results on fractional Sobolev spaces}

The following result is a particular case of Lemma 5.2 of \cite{hitchhiker}. We refer the interested reader to that article for a proof. 

\begin{prop} \label{rhoposH1}
Let $\gamma \in (1,2)$ and $f \in \mcb {H}^{\frac{\gamma}{2}}(\mathbb{R}_{-}^{*}), g \in \mcb{H}^{\frac{\gamma}{2}}(\mathbb{R}_{+}^{*})$. Let $\tilde{f}_{e}$ and $\tilde{g}_{e}$ be the even extensions of the continuous representatives $\tilde{f}$ and $\tilde{g}$, respectively. Then $\tilde{f}_{e}, \tilde{g}_{e} \in  \mcb{H}^{\frac{\gamma}{2}}(\mathbb{R})$. 
\end{prop}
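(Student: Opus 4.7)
The plan is to reduce the claim, by reflection symmetry, to proving that the even extension of $\tilde g \in \mcb H^{\gamma/2}(\mathbb R_+^*)$ lies in $\mcb H^{\gamma/2}(\mathbb R)$, and then to verify the two ingredients of this space (finite $L^2$-norm and finite Gagliardo semi-norm) by a direct splitting of the double integral into three pieces: the positive-positive piece, the negative-negative piece, and the mixed piece.

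First, I would observe that $\|\tilde g_e\|_{2,\mathbb R}^2 = 2\|\tilde g\|_{2,\mathbb R_+^*}^2 < \infty$ since $\tilde g$ agrees with $g$ almost everywhere by Proposition \ref{holderrep} (here is where $\gamma\in(1,2)$ is used: without it, the pointwise even extension of the representative would not even be well-defined). Then I would decompose
\begin{align*}
\iint_{\mathbb R^2} \frac{|\tilde g_e(u) - \tilde g_e(v)|^2}{|u-v|^{\gamma+1}} du\, dv = \iint_{\mathbb R_+^2} + \iint_{\mathbb R_-^2} + 2\iint_{\mathbb R_+\times\mathbb R_-}.
\end{align*}
The first integral on the right equals $\iint_{\mathbb R_+^2} |\tilde g(u)-\tilde g(v)|^2|u-v|^{-\gamma-1}du\,dv$, which is finite by hypothesis, and by the change of variables $u\mapsto -u$, $v\mapsto -v$ the second integral coincides with the first (evenness of $\tilde g_e$ makes the integrand invariant).

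The only nontrivial piece is the cross term. For $u>0$ and $v<0$, writing $w=-v>0$, one has $\tilde g_e(u)-\tilde g_e(v)=\tilde g(u)-\tilde g(w)$ and $|u-v|=u+w$, so
\begin{align*}
\iint_{\mathbb R_+\times\mathbb R_-} \frac{|\tilde g_e(u)-\tilde g_e(v)|^2}{|u-v|^{\gamma+1}}du\,dv = \iint_{\mathbb R_+^2}\frac{|\tilde g(u)-\tilde g(w)|^2}{(u+w)^{\gamma+1}}du\,dw.
\end{align*}
The key, and in fact essentially the only, analytic step is the elementary inequality $u+w\ge |u-w|$ valid for all $u,w\ge 0$, which combined with $\gamma+1>0$ yields $(u+w)^{-\gamma-1}\le |u-w|^{-\gamma-1}$. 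Thus the cross term is bounded by $\iint_{\mathbb R_+^2}|\tilde g(u)-\tilde g(w)|^2|u-w|^{-\gamma-1}du\,dw$, which is finite by assumption. The result for $\tilde f_e$ is completely analogous (one can even apply what has been shown to the function $u\mapsto \tilde f(-u)\in\mcb H^{\gamma/2}(\mathbb R_+^*)$), so this is the entire argument. There is no real obstacle here; the main subtlety is simply to recognize that the mixed region's singular denominator $(u+w)^{\gamma+1}$ is harmless precisely because it dominates $|u-w|^{\gamma+1}$.
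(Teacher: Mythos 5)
Your proof is correct, and it is the standard reflection argument. Note, however, that the paper does not supply its own proof of this proposition: it simply cites Lemma 5.2 of the Di Nezza--Palatucci--Valdinoci survey \cite{hitchhiker} and refers the reader there. The proof in that reference proceeds by exactly the decomposition you use, with the same key observation that the singular kernel on the cross region is controlled because $|u-v|=u+w\geq |u-w|$ when $u>0$ and $v=-w<0$, hence $(u+w)^{-\gamma-1}\leq|u-w|^{-\gamma-1}$. So your argument coincides with the cited source rather than offering a genuinely different route. One small remark: the hypothesis $\gamma\in(1,2)$ is not needed for the inequality itself, which is valid for any $\gamma\in(0,2)$; it enters the statement of the proposition only to guarantee (via Proposition \ref{holderrep}) the existence of the continuous representative $\tilde g$, so that the even extension and the boundary value $\tilde g(0)$ have a pointwise meaning in the subsequent applications. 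As you observe, once $\tilde g$ exists the $L^2$-norm doubles and the Gagliardo seminorm of $\tilde g_e$ is at most four times that of $\tilde g$, which is all that is required.
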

The next result is useful in order to conclude that condition a) is a consequence of conditions $(2)$ and $(3)$ in the definition of weak solutions of \eqref{eqhydfracdifnotuniq}, see Remark \ref{rem:cons}.
\begin{prop} \label{4condition}
Let $\gamma \in (1,2)$ and $f \in \mcb H^{\frac{\gamma}{2}}( \mathbb{R}^{*} )$. Denote the continuous representatives of $f|_{\mathbb{R}_{-}^{*}}$ and $f|_{\mathbb{R}_{+}^{*}}$ by $\tilde{f}_{-}$ and $\tilde{f}_{+}$, respectively. Assume that $\tilde{f}_{-}(0)=\tilde{f}_{+}(0)$. Then there exists $\tilde{f} \in C^{\frac{\gamma-1}{2}}(\mathbb{R})$ such that $f = \tilde{f}$ almost everywhere on $\mathbb{R}$.
\end{prop}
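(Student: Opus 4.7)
The plan is to construct $\tilde\rho$ by simply gluing the two continuous representatives $\tilde f_-$ and $\tilde f_+$ at the origin, and then verify that the glued function enjoys the required Hölder regularity on all of $\mathbb{R}$.

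First, since $f\in\mcb H^{\gamma/2}(\mathbb{R}^*)$ by hypothesis, both $f|_{\mathbb{R}_-^*}\in\mcb H^{\gamma/2}(\mathbb{R}_-^*)$ and $f|_{\mathbb{R}_+^*}\in\mcb H^{\gamma/2}(\mathbb{R}_+^*)$. For $\gamma\in(1,2)$ we can then apply Proposition \ref{holderrep} to each restriction, obtaining continuous representatives $\tilde f_-\in C^{(\gamma-1)/2}((-\infty,0])$ and $\tilde f_+\in C^{(\gamma-1)/2}([0,\infty))$, with Hölder constants $C_-$ and $C_+$ respectively, such that $f=\tilde f_-$ a.e.\ on $\mathbb{R}_-^*$ and $f=\tilde f_+$ a.e.\ on $\mathbb{R}_+^*$. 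Define
\begin{equation*}
\tilde f(u):=\tilde f_-(u)\mathbbm{1}_{u\leq 0}+\tilde f_+(u)\mathbbm{1}_{u>0}.
\end{equation*}
The hypothesis $\tilde f_-(0)=\tilde f_+(0)$ guarantees that $\tilde f$ is well defined and continuous at the origin, and obviously continuous elsewhere.

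Next, I would verify that $\tilde f\in C^{(\gamma-1)/2}(\mathbb{R})$. For $u,v$ both in $(-\infty,0]$ or both in $[0,\infty)$ the Hölder estimate is immediate from $\tilde f_\pm$. The only genuine case is $u<0<v$, for which the triangle inequality through $0$ (where both representatives agree) gives
\begin{equation*}
|\tilde f(u)-\tilde f(v)|\leq |\tilde f_-(u)-\tilde f_-(0)|+|\tilde f_+(0)-\tilde f_+(v)|\leq C_-|u|^{(\gamma-1)/2}+C_+|v|^{(\gamma-1)/2}.
\end{equation*}
Since $u<0<v$ implies $|u|,|v|\leq |u-v|$ and $(\gamma-1)/2>0$, the right-hand side is bounded by $(C_-+C_+)|u-v|^{(\gamma-1)/2}$, giving the global Hölder estimate.

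Finally, I would argue that $f=\tilde f$ almost everywhere on $\mathbb{R}$. This is immediate because $f=\tilde f_-=\tilde f$ a.e.\ on $\mathbb{R}_-^*$, $f=\tilde f_+=\tilde f$ a.e.\ on $\mathbb{R}_+^*$, and $\{0\}$ has zero Lebesgue measure. This concludes the proof. No step here should be a serious obstacle, the only small subtlety being the Hölder control across the origin, which is handled by the elementary bound $|u|,|v|\leq |u-v|$ that holds precisely because the two points lie on opposite sides of $0$.
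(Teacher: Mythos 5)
Your proof is correct and follows essentially the same route as the paper: glue the two Hölder representatives at the origin (which is possible precisely because $\tilde f_-(0)=\tilde f_+(0)$), and verify the Hölder bound by splitting into the same three cases, with the elementary observation $|u|,|v|\le|u-v|$ handling the cross-origin case. The only difference is that the paper first forms the even extensions of $\tilde f_\pm$ and invokes Proposition~\ref{rhoposH1} to place them in $\mcb H^{\gamma/2}(\mathbb R)$ before applying the Hölder embedding there, whereas you apply Proposition~\ref{holderrep} directly on each half-line $\bar I=(-\infty,0]$ and $[0,\infty)$; since that proposition already yields a Hölder representative on $\bar I$ for an arbitrary open interval $I$, your slightly more direct variant is fine and avoids the extension step.
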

\begin{proof}
Denote the even extensions of $\tilde{f}_{-}$ and $\tilde{f}_{+}$ by $\tilde{f}_{-,e}$ and $\tilde{f}_{+,e}$, respectively. From last result, we have that $\tilde{f}_{-,e}$, $\tilde{f}_{+,e}$ are in $\mcb{H}^{\frac{\gamma}{2}}(\mathbb{R})$. Then from Proposition \ref{holderrep}, there exists $C>0$ such that 
\begin{align} \label{conda}
|\tilde{f}_{-,e}(u) - \tilde{f}_{-,e}(v)| + |\tilde{f}_{+,e}(u) - \tilde{f}_{+,e}(v)| \leq C |u-v|^{\frac{\gamma-1}{2}}, \forall u,v \in \mathbb{R}.
\end{align}
Now define $\tilde{f}: \mathbb{R} \rightarrow \mathbb{R}$ by $\tilde{f}(u)=\mathbbm{1}_{u<0}\tilde{f}_{-}(u)+\mathbbm{1}_{u\geq 0}\tilde{f}_+(u)$.  Then $f = \tilde{f}$ almost everywhere on $\mathbb{R}$. Moreover, we have
\begin{align*}
&|\tilde{f}(u) - \tilde{f}(v)| = |\tilde{f}_{-,e}(u) - \tilde{f}_{-,e}(v)| \leq C |u-v|^{\frac{\gamma-1}{2}}, \forall u,v \in (-\infty,0);\\
&|\tilde{f}(u) - \tilde{f}(v)| = |\tilde{f}_{+,e}(u) - \tilde{f}_{+,e}(v)| \leq C |u-v|^{\frac{\gamma-1}{2}}, \forall u,v \in [0, \infty).
\end{align*}
Finally, for $u<0$ and $v \geq 0$, we have
\begin{align*}
|\tilde{f}(u) - \tilde{f}(v)| \leq  |\tilde{f}_{-,e}(u) - \tilde{f}_{-,e}(0)| + |\tilde{f}_{+,e}(0) - \tilde{f}_{+,e}(v)| \leq  C ( |u|^{\frac{\gamma-1}{2}} + |v|^{\frac{\gamma-1}{2}} ) \leq 2C |u-v|^{\frac{\gamma-1}{2}},
\end{align*}
where in the last line we made use of the hypothesis $\tilde{f}_{-}(0)=\tilde{f}_{+}(0)$ and of \eqref{conda}.
\end{proof}
The next result is useful in order to conclude that condition b) is a consequence of conditions $(2)$ and a) in the definition of weak solutions of \eqref{eqhydfracdifnotuniq}, see Remark \ref{rem:cons}.
\begin{prop} \label{5condition}
Let $\gamma \in (1,2)$ and $f \in L^{\infty}(\mathbb{R}) \cap  \mcb H^{\frac{\gamma}{2}}( \mathbb{R}^{*} )$ be such that there exists $\tilde{f} \in C^{\frac{\gamma-1}{2}}(\mathbb{R})$ satisfying $f = \tilde{f}$ almost everywhere in $\mathbb{R}$. Then $f \in  \mcb{H}^{\frac{\gamma}{2} - \delta}(\mathbb{R})$, for every $\delta \in ( 0, \frac{\gamma-1}{2})$.
\end{prop}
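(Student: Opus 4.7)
My plan is to work with the continuous Hölder representative $\tilde f$ throughout, since the Gagliardo seminorm only sees the a.e.\ equivalence class. Thus it suffices to show
\begin{align*}
\iint_{\mathbb{R}^2} \frac{|\tilde f(u)-\tilde f(v)|^2}{|u-v|^{\gamma-2\delta+1}}\, du\, dv <\infty,
\end{align*}
and note also that $f\in L^2(\mathbb{R})$ is already implied by the standing hypothesis $f\in\mcb H^{\gamma/2}(\mathbb{R}^*)$, because $f_\pm\in L^2(\mathbb{R}_\pm^*)$ by definition.

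The idea is a three-region decomposition of $\mathbb{R}^2$ tailored to where each available piece of information is useful. First, for the \emph{far region} $\{|u-v|\geq 1\}$, I use $|\tilde f(u)-\tilde f(v)|^2\leq 2|f(u)|^2+2|f(v)|^2$ together with the elementary fact that $\gamma-2\delta+1>1$ (since $2\delta<\gamma-1<\gamma$), which gives $\int_{|u-v|\geq 1}|u-v|^{-(\gamma-2\delta+1)}\,dv<\infty$ uniformly in $u$; Tonelli then yields a bound by a constant multiple of $\|f\|_{2,\mathbb{R}}^2$.

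Second, in the \emph{near, same-sign region} $\{|u-v|<1,\ uv>0\}$ I simply bound $|u-v|^{-(\gamma-2\delta+1)}\leq |u-v|^{-(\gamma+1)}$ (since the exponents satisfy $\gamma-2\delta+1\leq \gamma+1$ and $|u-v|<1$), and the resulting integral is controlled by $[f_-]_{\mcb H^{\gamma/2}(\mathbb{R}_-^*)}^2+[f_+]_{\mcb H^{\gamma/2}(\mathbb{R}_+^*)}^2$, both finite by hypothesis.

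The only nontrivial region is the \emph{near, cross region} $\{|u-v|<1,\ u>0,\ v<0\}$ (and its mirror), where the seminorm hypothesis on $\mathbb{R}^*$ gives nothing directly. Here I invoke the Hölder continuity of $\tilde f$ on all of $\mathbb{R}$: $|\tilde f(u)-\tilde f(v)|^2\leq C^2|u-v|^{\gamma-1}$, so the integrand is bounded by $C^2|u-v|^{2\delta-2}$. Setting $w=-v>0$, the region becomes the triangle $\{u,w>0,\ u+w<1\}$ and a direct computation gives
\begin{align*}
\int_0^1\!\!\int_0^{1-u}(u+w)^{2\delta-2}\,dw\,du=\frac{1}{2\delta-1}\!\left(1-\frac{1}{2\delta}\right)=\frac{1}{2\delta}<\infty
\end{align*}
(with the logarithmic variant at the borderline $2\delta=1$ also integrable), which is finite precisely because $\delta>0$. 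This is the step where the strict inequality $\delta<(\gamma-1)/2$ is indispensable, since it is exactly what guarantees $2\delta-2>-2$ after combining with the Hölder gain $\gamma-1$. I expect this cross-region estimate to be the crux; the other two pieces are routine once the decomposition is in place.
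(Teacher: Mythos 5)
Your argument is correct, but it uses a genuinely different decomposition than the paper's. The paper splits the seminorm by sign-quadrants first: the two same-sign quadrants are handled at once via the inclusion $\mcb H^{\gamma/2}(I)\subset\mcb H^{\gamma/2-\delta}(I)$ on each half-line, and the cross quadrant $(0,\infty)\times(-\infty,0)$ is then split into $\{v<-1\}$, $\{u>1\}$ (each controlled by $\|f\|_\infty$ and $\gamma-2\delta>1$) and the bounded piece $(0,1)\times(-1,0)$, where the H\"older bound is used. You instead split by distance $|u-v|\gtrless 1$ first: the far piece is handled uniformly across all sign patterns by the $L^2$ bound plus $\gamma-2\delta>0$, the near same-sign piece by the monotonicity $|u-v|^{-(\gamma+1-2\delta)}\le|u-v|^{-(\gamma+1)}$ on $|u-v|<1$, and the near cross piece by H\"older plus the triangle integral. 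Both routes are sound; yours is arguably a bit more uniform (one $L^2$ estimate rather than two $L^\infty$ estimates) and avoids the small bookkeeping slip in the paper's constants for the unbounded cross pieces.

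One small misattribution worth flagging: you write that $\delta<(\gamma-1)/2$ is ``indispensable'' because it ``guarantees $2\delta-2>-2$.'' But $2\delta-2>-2$ is just $\delta>0$; the H\"older-plus-triangle estimate for the near cross region needs nothing stronger. What the hypothesis $\delta<(\gamma-1)/2$ actually buys in your proof is only the (weaker than necessary) fact $\gamma-2\delta>1>0$ in the far-region step; $\delta<\gamma/2$ would already suffice for your seminorm bound. The stronger upper bound on $\delta$ in the statement is there for how the proposition is used downstream (Remark~\ref{rem:cons}), not because your estimates need it. Also, the parenthetical about the logarithmic borderline $2\delta=1$ is vacuous here, since $2\delta<\gamma-1<1$ in the allowed range.
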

\begin{proof}
Let $\delta \in ( 0, \frac{\gamma-1}{2})$. Since both $f|_{(-\infty,0)} \in L^2 \big( - \infty,0) \big)$ and $f|_{(0,\infty)} \in L^2 \big( (0, \infty) \big)$, then $f \in L^2(\mathbb{R})$. Moreover, since $f|_{(-\infty,0)} \in \mcb {H}^{\frac{\gamma}{2}} \big( (-\infty, 0) \big) \subset \mcb{H}^{\frac{\gamma}{2} - \delta} \big( (-\infty, 0) \big)$ and $f|_{(0,\infty)} \in \mcb{H}^{\frac{\gamma}{2}} \big( (0,\infty) \big) \subset \mcb {H}^{\frac{\gamma}{2} - \delta} \big( (0,\infty) \big)$, we have
\begin{align*}
\int_{-\infty}^{0} \int_{-\infty}^{0} \frac{[f(u)-f(v)]^2}{|u-v|^{1+\gamma- 2 \delta }} dv du + \int_{0}^{\infty} \int_{0}^{\infty} \frac{[f(u)-f(v)]^2}{|u-v|^{1+\gamma - 2 \delta}} dv du < \infty.
\end{align*}
In order to prove that $f \in  \mcb{H}^{\frac{\gamma}{2} - \delta}(\mathbb{R})$, we need
\begin{align*}
\int_{0}^{\infty} \int_{-\infty}^{0} \frac{[f(u)-f(v)]^2}{|u-v|^{1+\gamma - 2 \delta}} dv du + \int_{-\infty}^{0} \int_{0}^{\infty} \frac{[f(u)-f(v)]^2}{|u-v|^{1+\gamma - 2 \delta}} dv du < \infty.
\end{align*}
We will only prove that the first double integral above is bounded, but we observe that the same holds for the second one with an analogous reasoning. Since $f$ is bounded, we have
\begin{align*}
\int_0^{\infty} \int_{-\infty}^{-1} \frac{[f(u)-f(v)]^2}{|u-v|^{1+\gamma - 2 \delta}} dv du \leq & \frac{(2 \| f \|_{\infty} )^2}{\gamma (\gamma-1)} 1^{1-\gamma + 2 \delta} < \infty.
\end{align*}
In an analogous way, we have
\begin{align*}
\int_1^{\infty} \int_{-1}^{0} \frac{[f(u)-f(v)]^2}{|u-v|^{1+\gamma - 2 \delta}} dv du \leq = \frac{(2 \| f \|_{\infty} )^2 (1-2^{1-\gamma})}{\gamma (\gamma-1)} 1^{1-\gamma + 2 \delta} < \infty.
\end{align*}
It remains to prove that
\begin{equation*} 
\int_0^{1} \int_{-1}^{0} \frac{[f(u)-f(v)]^2}{|u-v|^{1+\gamma + 2 \varepsilon}} dv du = \int_0^{1} \int_{-1}^{0} \frac{|\tilde{f}(u)- \tilde{f}(v)|^2}{|u-v|^{1+\gamma + 2 \varepsilon}} dv du < \infty.
\end{equation*}
By hypothesis, there exists $C >0$  such that
\begin{equation*}  
|\tilde{f}(u) - \bar{f}_{-}(v)|^2 \leq  (C |u-v|^{\frac{\gamma-1}{2}} )^2 = C^2 |u-v|^{\gamma-1}, \forall u,v \in \mathbb{R}. 
\end{equation*}
This leads to
\begin{align*}
& \int_0^{1} \int_{-1}^{0} \frac{[f(u)-f(v)]^2}{|u-v|^{1+\gamma-2 \delta}} dv du \leq  C^2 \int_0^{1} \int_{-1}^{0} |u-v|^{2 \delta -2}  dv du  < \infty,
\end{align*}
which leads to the desired result.
\end{proof}
For $\gamma \in (0,1]$, we will need the following density result, which is a consequence of Theorem 1.4.2.4 in \cite{grisvard} for unbounded intervals.
\begin{prop}
Assume $\gamma \in (0,1]$ and $I=\mathbb{R}_{-}^{*}$ or $I=\mathbb{R}_{+}^{*}$ Then  $C_{c}(I)$ is dense in $\mcb {H}^{\frac{\gamma}{2}}(I)$ with the norm $\| \cdot \|_{\mcb {H}^{\frac{\gamma}{2}}(I)}$. In particular, if $f \in \mcb H^{\frac{\gamma}{2}}(\mathbb{R}^{*})$, there exists a sequence $(g_k)_{k \geq 1}$ in $C_{c0}(\mathbb{R})$ such that $(g_k|_{ \mathbb{R}_{-}^{*} })_{k \geq 1}$ converges to $f|_{ \mathbb{R}_{-}^{*}}$ and $(g_k|_{ \mathbb{R}_{+}^{*} })_{k \geq 1}$ converges to $f|_{\mathbb{R}_{+}^{*}}$ with respect to the norms of $\mcb H^{\frac{\gamma}{2}} ( \mathbb{R}_{-}^{*} )$ and $\mcb H^{\frac{\gamma}{2}} ( \mathbb{R}_{+}^{*} )$, respectively.
\end{prop}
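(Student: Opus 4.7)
The plan is to combine a cut-off argument with Grisvard's density result (Theorem 1.4.2.4 of \cite{grisvard}), which on every bounded open interval $J$ gives the density of $C_c^\infty(J)$ in $\mcb{H}^{\gamma/2}(J)$ under the hypothesis $\gamma/2 \leq 1/2$. I focus on $I = \mathbb{R}_+^*$, since the case $I = \mathbb{R}_-^*$ is analogous by symmetry. Fix $f \in \mcb{H}^{\gamma/2}(I)$, pick $\psi \in C_c^\infty(\mathbb{R})$ with $0 \leq \psi \leq 1$, $\psi \equiv 1$ on $[-1,1]$ and $\psi \equiv 0$ outside $[-2,2]$, and for $R > 1$ introduce the smooth cut-off
\begin{equation*}
\chi_R(u) := \psi(u/R)\bigl(1 - \psi(2Ru)\bigr), \qquad u \in I.
\end{equation*}
Then $\chi_R \in C_c^\infty(I)$, $0 \leq \chi_R \leq 1$, $\chi_R \equiv 1$ on $[1/R, R]$, and $\mathrm{supp}(\chi_R) \subset [1/(2R), 2R]$.

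The first step will be to prove that $\chi_R f \to f$ in $\mcb{H}^{\gamma/2}(I)$ as $R \to \infty$. The $L^2$ contribution is immediate by the Dominated Convergence Theorem, since $\chi_R \to 1$ pointwise on $I$ and $|\chi_R f| \leq |f|$. For the Gagliardo seminorm, setting $w_R := \chi_R - 1$, I will use the splitting
\begin{equation*}
w_R(u)f(u) - w_R(v)f(v) = w_R(u)[f(u) - f(v)] + [w_R(u) - w_R(v)]f(v).
\end{equation*}
The contribution of the first summand converges to $0$ by dominated convergence, using $|w_R| \leq 1$, $w_R \to 0$ pointwise, and the finiteness of the seminorm of $f$. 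The contribution of the second summand reduces to showing
\begin{equation*}
\iint_{I^2} \frac{|w_R(u) - w_R(v)|^2}{|u-v|^{1+\gamma}}\,|f(v)|^2\, du\, dv \longrightarrow 0,
\end{equation*}
which I will deduce from the pointwise estimate $|w_R(u) - w_R(v)| \lesssim \min(1, |u-v|/R) + \min(1, R|u-v|)$ combined with the $v$-integrability of the resulting kernel (this is where $\gamma/2 \leq 1/2$ enters) and a further application of dominated convergence in $v$, using $f \in L^2(I)$.

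The second step will fix $R$ and approximate $\chi_R f$ by elements of $C_c^\infty(I)$. Since $\chi_R f$ is supported in the bounded open subinterval $J_R := (1/(4R), 4R) \subset I$ and belongs to $\mcb{H}^{\gamma/2}(J_R)$, Grisvard's theorem yields $(\varphi_k)_{k\geq 1} \subset C_c^\infty(J_R)$ with $\varphi_k \to \chi_R f$ in $\mcb{H}^{\gamma/2}(J_R)$. Extension by zero from $J_R$ to $I$ is bounded on $\mcb{H}^{\gamma/2}$ (again because $\gamma/2 \leq 1/2$), so $\varphi_k \to \chi_R f$ in $\mcb{H}^{\gamma/2}(I)$ as well. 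A diagonal extraction then produces a sequence in $C_c^\infty(I) \subset C_c(I)$ converging to $f$ in $\mcb{H}^{\gamma/2}(I)$.

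For the ``In particular'' assertion, I will apply the first part separately to $f|_{\mathbb{R}_-^*}$ and $f|_{\mathbb{R}_+^*}$, obtaining $(g_k^\pm) \subset C_c(\mathbb{R}_\pm^*)$ with the prescribed convergences, and define $g_k(u) := g_k^-(u)\mathbbm{1}_{u<0} + g_k^+(u)\mathbbm{1}_{u>0}$, with $g_k(0) := 0$. Since each $g_k^\pm$ vanishes on a neighbourhood of the origin (its support being compact in the open half-line) and its total support is compact and away from $0$, it follows that $g_k \in C_{c0}(\mathbb{R})$. The main obstacle is the cut-off step in the Sobolev seminorm, because $w_R$ carries two natural scales — $1/R$ near the boundary of $I$ and $R$ at infinity — both of which must be controlled uniformly in $R$; this is precisely the role played by the hypothesis $\gamma \in (0,1]$, and at the critical value $\gamma=1$ one should carry out the logarithmic estimate on $w_R$ more carefully, or equivalently invoke Grisvard on the fixed bounded subinterval $J_R$ before letting $R \to \infty$.
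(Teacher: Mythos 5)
You take a genuinely different route from the paper. The paper chooses $k_0$ large enough that the $L^2$ tail of $f$ and the Gagliardo seminorm of $f$ over $(\mathbb R_+^*)^2\setminus(0,k_0)^2$ are $<\varepsilon$, then applies Grisvard's Theorem 1.4.2.4 on $(0,k_0)$ to find $g\in C_c^\infty((0,k_0))$ close to $f|_{(0,k_0)}$, and simply adds the two bounds; the entire difficulty of modifying $f$ near the boundary point $0$ (where $f$ need not vanish) is delegated to Grisvard. You instead build an explicit smooth cutoff $\chi_R$ killing $f$ both at infinity and near $0$, so that $\chi_R f$ is compactly supported in $I$, and only afterwards invoke Grisvard to smooth. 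This shifts all the analytic work from the Grisvard step to the cutoff step, and that is where your sketch does not close.

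First, once $\chi_R f$ is compactly supported in $I$, Grisvard is superfluous: ordinary mollification already yields $C_c^\infty(I)$ approximants, and the parenthetical claim that zero-extension from $J_R$ to $I$ is bounded on $\mcb H^{\gamma/2}$ ``because $\gamma/2\le 1/2$'' is in fact \emph{false} at the endpoint $\gamma=1$ (this is precisely the $H^{1/2}_{00}(J_R)\subsetneq H^{1/2}(J_R)$ phenomenon). What is true, and all you need, is that both $\chi_R f$ and the mollifications have supports inside a fixed compact subset of $J_R$, which makes zero extension harmless for elementary reasons. Second, and more seriously, the dominated-convergence argument for $\chi_R f\to f$ does not go through as written. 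The pointwise estimate $|w_R(u)-w_R(v)|\lesssim \min(1,R|u-v|)+\min(1,|u-v|/R)$ is correct, but
\[
\int_I \frac{\min(1,R|u-v|)^2}{|u-v|^{1+\gamma}}\,du\ \sim\ R^\gamma\quad(R\to\infty),
\]
so there is no uniform $v$-integrable majorant and the DCT you appeal to fails. To close the estimate one must use that the near-origin piece of $w_R$ is supported in $\{u\lesssim 1/R\}$ together with a fractional Hardy inequality $\int_0^\infty |f(v)|^2 v^{-\gamma}\,dv\lesssim [f]^2_{\mcb H^{\gamma/2}(\mathbb R_+^*)}$, which yields $\int_0^{c/R}|f|^2=o(R^{-\gamma})$ and makes the bad contribution vanish. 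That Hardy inequality holds exactly for $\gamma<1$ and fails at $\gamma=1$: for $f$ smooth with $f\equiv 1$ near the origin one checks directly that $[\chi_R f-f]_{\mcb H^{1/2}(I)}$ does \emph{not} tend to $0$. This is the obstruction Grisvard's theorem is designed to circumvent, and your closing suggestion to ``equivalently invoke Grisvard on $J_R$ before letting $R\to\infty$'' does not help, since Grisvard on $J_R$ smooths $\chi_R f$ but gives no control on the cutoff error $\|f-\chi_R f\|_{\mcb H^{\gamma/2}(I)}$, which is the term that must be shown small.
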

\begin{proof}
Without loss of generality, we can assume that $I=\mathbb{R}_{+}^{*}$. Let $f \in \mcb H^{\frac{\gamma}{2}}(\mathbb{R}_{+}^{*})$ and $\varepsilon >0$. Then there exists $k_0 \geq 1$ such that
\begin{align*}
\int_{\mathbb{R}_{+}^{*}-(0,k_0)} f^2(u) du + \iint_{(\mathbb{R}_{+}^{*})^2 - \big( (0,k_0) \big)^2} \frac{[f(u)-f(v)]^2}{|u-v|^{1+\gamma}} du dv < \varepsilon.
\end{align*}
From Theorem 1.4.2.4 in \cite{grisvard}, there exists $g \in C_c^{\infty} \big( (0, k_0 ) \big) \subset C_{c}(\mathbb{R}_{+}^{*}) \subset C_{c0}(\mathbb{R})$ such that
\begin{align*}
\int_{(0,k_0)} [f-g]^2(u) du + \iint_{ \big((0,k_0)\big)^2} \frac{[(f-g) (u)-(f-g)(v)]^2}{|u-v|^{1+\gamma}} du dv < \varepsilon,
\end{align*}
which leads to
\begin{align*}
\int_{\mathbb{R}_{+}^{*}} [f-g]^2(u) du + \iint_{(\mathbb{R}_{+}^{*})^2} \frac{[(f-g) (u)-(f-g)(v)]^2}{|u-v|^{1+\gamma}} du dv <2 \varepsilon.
\end{align*}
Since $\varepsilon$ is arbitrarily small, we get the desired result.
\end{proof}

 From Theorem 7.38 in \cite{sobolevadams}, we have that $C_c^{\infty}(\mathbb{R})$ is dense in $\mcb{H}^{\frac{\gamma}{2}}(\mathbb{R})$ with the norm $\| \cdot \|_{\mcb{H}^{\frac{\gamma}{2}}(\mathbb{R})}$.  
 
From Proposition 23.2 (d) in \cite{MR1033497}, we have that $P \big([0,T], \mcb{H}^{\frac{\gamma}{2}}(I)\big)$ is dense in $L^2\big(0,T; \mcb{H}^{\frac{\gamma}{2}}(I) \big)$ with the norm $\| \cdot \|_{L^2\big(0,T; \mcb{H}^{\frac{\gamma}{2}}(I) \big)} $. Then we can state a corollary of these results.
\begin{lem} 
$\mcb S_{Dif}$ is dense in $L^2\big(0,T; \mcb{H}^{\frac{\gamma}{2}}(\mathbb{R}) \big)$. Moreover, if $\gamma \in (0,1]$, then $P \big([0,T], C_c^{\infty} (I) \big)$ is dense in $L^2\big(0,T; \mcb{H}^{\frac{\gamma}{2}}(I) \big)$ when $I=\mathbb{R}_{-}^{*}$ or $I=\mathbb{R}_{+}^{*}$.
\end{lem}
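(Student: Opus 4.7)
The plan is to combine the two density results recalled immediately before the statement:
\begin{enumerate}
\item $C_c^{\infty}(\mathbb{R})$ is dense in $\mcb{H}^{\frac{\gamma}{2}}(\mathbb{R})$ (Theorem 7.38 in \cite{sobolevadams}), and, from the proof of the preceding Proposition, $C_c^{\infty}(I)$ is dense in $\mcb{H}^{\frac{\gamma}{2}}(I)$ when $I=\mathbb{R}_{-}^{*}$ or $I=\mathbb{R}_{+}^{*}$ (the function $g$ produced there lies in $C_c^{\infty}((0,k_0))\subset C_c^{\infty}(I)$);
\item $P\bigl([0,T], \mcb{H}^{\frac{\gamma}{2}}(J)\bigr)$ is dense in $L^2\bigl(0,T; \mcb{H}^{\frac{\gamma}{2}}(J)\bigr)$ for any admissible $J$ (Proposition 23.2(d) in \cite{MR1033497}).
\end{enumerate}
The idea is a standard two-step approximation: first approximate by polynomials in $t$ with coefficients in the Sobolev space, then smooth each coefficient.

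Fix $J\in\{\mathbb{R}\}$ (first claim) or $J\in\{\mathbb{R}_{-}^{*},\mathbb{R}_{+}^{*}\}$ (second claim, under $\gamma\in(0,1]$), and let $\varrho\in L^2\bigl(0,T;\mcb{H}^{\frac{\gamma}{2}}(J)\bigr)$ and $\ve>0$. By (2) there exist $k\in\mathbb{N}$ and $a_0,\dots,a_k\in \mcb{H}^{\frac{\gamma}{2}}(J)$ such that the polynomial
\begin{equation*}
P(t,u):=\sum_{j=0}^{k} a_j(u)\, t^j
\end{equation*}
satisfies $\|\varrho-P\|_{L^2(0,T;\mcb{H}^{\frac{\gamma}{2}}(J))}<\ve/2$. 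By (1), for each $j\in\{0,\dots,k\}$ we can pick $b_j\in C_c^{\infty}(J)$ (with $J=\mathbb{R}$, these lie in $C_c^{\infty}(\mathbb{R})$; with $J=\mathbb{R}_{\pm}^{*}$, they lie in $C_c^{\infty}(\mathbb{R}_{\pm}^{*})$) such that
\begin{equation*}
\|a_j-b_j\|_{\mcb{H}^{\frac{\gamma}{2}}(J)}^2 \,\le\, \frac{\ve^2}{4(k+1)\,(1+T^{2j+1})}.
\end{equation*}
Setting $Q(t,u):=\sum_{j=0}^{k} b_j(u)\, t^j$, we have $Q\in\mcb S_{\textrm{Dif}}$ (resp.\ $Q\in P([0,T],C_c^{\infty}(I))$), and by the Cauchy--Schwarz inequality in the finite sum together with $\int_0^T t^{2j}dt\le 1+T^{2j+1}$,
\begin{equation*}
\|P-Q\|_{L^2(0,T;\mcb{H}^{\frac{\gamma}{2}}(J))}^2 \,\le\, (k+1)\sum_{j=0}^{k}\|a_j-b_j\|_{\mcb{H}^{\frac{\gamma}{2}}(J)}^2\int_0^T t^{2j}\,dt \,\le\, \frac{\ve^2}{4}.
\end{equation*}
The triangle inequality gives $\|\varrho-Q\|_{L^2(0,T;\mcb{H}^{\frac{\gamma}{2}}(J))}<\ve$, which is exactly the required density.

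There is no real obstacle here, only a bookkeeping point: one must be careful that the $C_c^{\infty}(J)$-approximation of each $a_j$ is chosen with an error that decays fast enough in $j$ to absorb the factor $\int_0^T t^{2j}\,dt$. The prefactor $(k+1)(1+T^{2j+1})$ above handles this uniformly in $j$. The two-step scheme (coefficients in the right Sobolev space first, smooth them afterwards) is needed because approximations in $L^2(0,T;\mcb{H}^{\frac{\gamma}{2}}(J))$ do not in general preserve the polynomial-in-$t$ structure; doing the polynomial approximation first makes the dependence on $t$ manageable.
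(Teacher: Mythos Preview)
Your approach is exactly what the paper intends: it presents the lemma as an immediate corollary of the two density results cited just before (density of $C_c^\infty$ in the relevant Sobolev space, and Proposition~23.2(d) in \cite{MR1033497}), and your two-step argument is the standard way to combine them. One minor arithmetic slip: with your choice of tolerance the final bound comes out to $(k+1)\ve^2/4$ rather than $\ve^2/4$; replacing the denominator $4(k+1)(1+T^{2j+1})$ by $4(k+1)^2(1+T^{2j+1})$ fixes this without changing anything substantive.
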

 As a consequence of last result  we can prove the next two lemmas. We refer the reader to the proof of Lemma 6.1 in \cite{byronsdif}, which uses the same strategy.
\begin{lem} \label{lemuniqsdif}
Let $\rho \in L^2 \left( 0,T; \mcb{H}^{\frac{\gamma}{2}}(\mathbb{R}) \right)$ and $(H_k)_{k \geq 1}$ be a sequence of functions in $\mcb S_{Dif}$ converging to $\rho$ with respect to the norm of $L^2 \left( 0,T; \mcb{H}^{\frac{\gamma}{2}}(\mathbb{R}) \right)$. We define $G_k \in \mcb S_{Dif}$ by 
\begin{align*}
G_k (t,u) = \int_t^T H_k (s,u) ds, \forall t \in [0,T], \forall u \in \mathbb{R}, \forall k \geq 1.
\end{align*}
Let $I = \mathbb{R}$, $I = \mathbb{R}_{-}^{*}$ or $I = \mathbb{R}_{+}^{*}$. It holds
\begin{equation*} 
\lim_{k \rightarrow \infty} \int_0^T \int_{I} \rho(s,u) \partial_s G_k (s,u) du ds = - \int_0^T \int_{I} [ \rho(s,u) ]^2 duds,
\end{equation*}
and
\begin{equation*} 
\lim_{k \rightarrow \infty}  \int_0^T \int_{I} \rho(s,u) [ -(- \Delta)_{I}^{\frac{\gamma}{2}} G_k ] (s,u) du ds = -\frac{c_{\gamma}}{4}  \iint_{I^2}  \frac{[  \int_0^T  \rho(r,u) dr - \int_0^T  \rho(s,v) ds ]^2  }{|u-v|^{1 + \gamma}} du dv .
\end{equation*}
\end{lem}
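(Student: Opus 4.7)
The plan is to handle the two identities separately, both by exploiting the $L^2$ convergence $H_k \to \rho$ on the one hand, and integration by parts for the fractional Laplacian on the other.

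For the first identity, note that $\partial_s G_k(s,u) = -H_k(s,u)$, so the integral in question equals $-\int_0^T \int_I \rho(s,u) H_k(s,u)\,du\,ds$. Since $\|f\|_{L^2(\mathbb{R})} \leq \|f\|_{\mcb H^{\gamma/2}(\mathbb{R})}$, the convergence of $H_k$ to $\rho$ in $L^2(0,T;\mcb H^{\gamma/2}(\mathbb{R}))$ implies $H_k \to \rho$ in $L^2([0,T]\times \mathbb R)$, and therefore (by restriction) in $L^2([0,T]\times I)$. A Cauchy--Schwarz estimate then yields
\begin{align*}
\Bigl|\int_0^T\!\!\int_I \rho(H_k-\rho)\,du\,ds\Bigr| \leq \|\rho\|_{L^2([0,T]\times I)}\,\|H_k-\rho\|_{L^2([0,T]\times I)} \longrightarrow 0,
\end{align*}
which gives the first identity.

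For the second identity, since $G_k \in \mcb S_{\text{Dif}}$ we have $G_k(s,\cdot) \in C_c^\infty(\mathbb{R}) \subset C_c^2(\mathbb R^*)$, so Proposition~\ref{fracintpart} applied at each fixed $s$ gives
\begin{align*}
\int_I \rho(s,u)[-(-\Delta)_I^{\gamma/2}G_k](s,u)\,du = -\frac{c_\gamma}{2}\iint_{I^2}\frac{[G_k(s,u)-G_k(s,v)][\rho(s,u)-\rho(s,v)]}{|u-v|^{1+\gamma}}\,du\,dv.
\end{align*}
Substituting $G_k(s,u)=\int_s^T H_k(r,u)\,dr$ and applying Fubini, the time integral over $[0,T]$ becomes $-\frac{c_\gamma}{2}\iint_{\{0\le s\le r\le T\}} B_k(r,s)\,dr\,ds$, where $B_k(r,s):=\iint_{I^2}\frac{[H_k(r,u)-H_k(r,v)][\rho(s,u)-\rho(s,v)]}{|u-v|^{1+\gamma}}\,du\,dv$. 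Setting $\Phi_g(t,u,v):=(g(t,u)-g(t,v))|u-v|^{-(1+\gamma)/2}$, we may write $B_k(r,s)=\langle \Phi_{H_k}(r,\cdot,\cdot),\Phi_\rho(s,\cdot,\cdot)\rangle_{L^2(I^2)}$. The hypothesis $H_k\to \rho$ in $L^2(0,T;\mcb H^{\gamma/2}(\mathbb R))$ is precisely the statement that $\Phi_{H_k}\to\Phi_\rho$ in $L^2([0,T]\times\mathbb R^2)$, and a fortiori in $L^2([0,T]\times I^2)$. Cauchy--Schwarz then gives
\begin{align*}
\iint_{\{s\le r\}}|B_k(r,s)-B(r,s)|\,dr\,ds \leq T\,\|\Phi_{H_k}-\Phi_\rho\|_{L^2([0,T]\times I^2)}\,\|\Phi_\rho\|_{L^2([0,T]\times I^2)} \longrightarrow 0,
\end{align*}
where $B(r,s):=\iint_{I^2}\frac{[\rho(r,u)-\rho(r,v)][\rho(s,u)-\rho(s,v)]}{|u-v|^{1+\gamma}}\,du\,dv$.

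Finally, one symmetrizes. Since $B(r,s)=B(s,r)$ and the diagonal has measure zero, $\iint_{\{0\le s\le r\le T\}}B\,dr\,ds = \tfrac12\iint_{[0,T]^2}B\,dr\,ds$. Applying Fubini,
\begin{align*}
\iint_{[0,T]^2} B(r,s)\,dr\,ds = \iint_{I^2}\frac{\bigl[\int_0^T(\rho(r,u)-\rho(r,v))\,dr\bigr]^2}{|u-v|^{1+\gamma}}\,du\,dv,
\end{align*}
so the limit equals $-\frac{c_\gamma}{4}$ times exactly the quantity in the statement. The main (and only) technical obstacle is the limit passage in Step 3, which would be routine except that one has to be careful to identify the Gagliardo seminorm of $H_k-\rho$ restricted to $I^2$ with (a piece of) the full $\mcb H^{\gamma/2}(\mathbb R)$-norm; since $I\subset\mathbb R$ this is a trivial restriction, so no regularity subtleties arise. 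All the other steps (integration by parts, Fubini, symmetrization) are purely mechanical.
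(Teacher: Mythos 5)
Your proof is essentially correct and matches the strategy the paper delegates to (Lemma 6.1 of the reference \cite{byronsdif}): $\partial_s G_k=-H_k$ and $L^2$ convergence for the first identity; Proposition~\ref{fracintpart}, Fubini over $\{s\le r\}$, Cauchy--Schwarz on the tensor $\Phi$, and symmetrization (which halves the square and yields the $c_\gamma/4$) for the second.

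Two caveats hide behind your closing remark that ``no regularity subtleties arise.'' First, Proposition~\ref{fracintpart} is stated for $\rho\in L^\infty(\mathbb R)$, a hypothesis Lemma~\ref{lemuniqsdif} does not impose; this is harmless in the paper's applications because $\rho$ there is a difference of $[0,1]$-valued profiles, but it is an extra input your write-up uses tacitly. Second, for $I=\mathbb R_\pm^*$ with $\gamma\in(0,1]$, Propositions~\ref{lapfracreg} and~\ref{fracintpart} require $G\in C_{c0}^2(\mathbb R^*)$, whereas $G_k(s,\cdot)\in C_c^\infty(\mathbb R)$ need not vanish near the origin; your inclusion $C_c^\infty(\mathbb R)\subset C_c^2(\mathbb R^*)$ is true but not enough to invoke the proposition in that regime. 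The paper avoids this by using Lemma~\ref{lemuniqsdif2} (test functions in $\mcb S_{\text{Neu}}$) whenever $\gamma\in(0,1]$ and a half-line is involved, so Lemma~\ref{lemuniqsdif} with $I=\mathbb R_\pm^*$ is only ever invoked for $\gamma\in(1,2)$; it would be cleaner to state that restriction explicitly rather than treat all three choices of $I$ uniformly.
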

If $\gamma \in (0,1]$, we will make use of the result below.
\begin{lem} \label{lemuniqsdif2}
Assume $\gamma \in (0,1]$. Let $\rho \in L^2 \left( 0,T; \mcb{H}^{\frac{\gamma}{2}}(\mathbb{R}^{*}) \right)$, $(H_{k,-})_{k \geq 1}$ be a sequence of functions in $P \big( [0,T],  C_c^{\infty}(\mathbb{R}_{-}^{*} ) \big)$ converging to $\rho|_{[0,T] \times \mathbb{R}_{-}^{*}}$ with respect to the norm of $L^2 \left( 0,T; \mcb{H}^{\frac{\gamma}{2}}(\mathbb{R}_{-}^{*}) \right)$ and $(H_{k,+})_{k \geq 1}$ be a sequence of functions in $P \big( [0,T],  C_c^{\infty}(\mathbb{R}_{+}^{*} ) \big)$ converging to $\rho|_{[0,T] \times \mathbb{R}_{+}^{*}}$ with respect to the norm of $L^2 \left( 0,T; \mcb{H}^{\frac{\gamma}{2}}(\mathbb{R}_{+}^{*}) \right)$. We define $H_k \in \mcb S_{Neu} $ by $H_k(t,u):=H_{k,-}(t,u)$ if $(t,u) \in [0,T] \times (-\infty,0)$ and $H_k(t,u):=H_{k,+}(t,u)$ if $(t,u) \in [0,T] \times [0, \infty)$. Finally, we define  $G_k \in \mcb S_{Neu}$ by 
\begin{align*}
G_k (t,u) = \int_t^T H_k (s,u) ds, \forall t \in [0,T], \forall u \in \mathbb{R}, \forall k \geq 1.
\end{align*}
Let $I = \mathbb{R}_{-}^{*}$ or $I = \mathbb{R}_{+}^{*}$. It holds
\begin{equation*} 
\lim_{k \rightarrow \infty} \int_0^T \int_{I} \rho(s,u) \partial_s G_k (s,u) du ds = - \int_0^T \int_{I} [ \rho(s,u) ]^2 duds,
\end{equation*}
and
\begin{equation*} 
\lim_{k \rightarrow \infty}  \int_0^T \int_{I} \rho(s,u) [ -(- \Delta)_{I}^{\frac{\gamma}{2}} G_k ] (s,u) du ds = -\frac{c_{\gamma}}{4}  \iint_{I^2}  \frac{[  \int_0^T  \rho(r,u) dr - \int_0^T  \rho(s,v) ds ]^2  }{|u-v|^{1 + \gamma}} du dv .
\end{equation*}
\end{lem}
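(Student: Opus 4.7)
My plan is to adapt the proof of the preceding Lemma \ref{lemuniqsdif} (which in turn follows the strategy of Lemma 6.1 in \cite{byronsdif}), making minor adjustments for the fact that here $\rho$ only belongs to $L^2(0,T;\mcb{H}^{\frac{\gamma}{2}}(\mathbb{R}^*))$ rather than $L^2(0,T;\mcb{H}^{\frac{\gamma}{2}}(\mathbb{R}))$, which is why we need two separate approximating sequences $H_{k,-}$ and $H_{k,+}$ on each half-line.

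For the first identity, I would simply observe that $\partial_s G_k(s,u) = -H_k(s,u)$, so
\begin{align*}
\int_0^T \int_I \rho(s,u) \partial_s G_k(s,u)\,du\,ds = -\int_0^T \int_I \rho(s,u) H_k(s,u)\,du\,ds.
\end{align*}
Since $\|\cdot\|_{2,I}\leq \|\cdot\|_{\mcb{H}^{\gamma/2}(I)}$, convergence of $H_k|_I$ to $\rho|_I$ in $L^2(0,T;\mcb{H}^{\gamma/2}(I))$ implies convergence in $L^2([0,T]\times I)$; then Cauchy--Schwarz gives the stated limit.

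For the second identity, the plan is to combine three ingredients. First, for each fixed $s$, apply Proposition \ref{fracintpart} (which applies since $G_k(s,\cdot)\in C_{c0}^\infty(\mathbb{R})$ and $\rho(s,\cdot)|_I \in \mcb{H}^{\gamma/2}(I)$) to rewrite
\begin{align*}
\int_0^T\int_I \rho(s,u)[-(-\Delta)_I^{\frac{\gamma}{2}} G_k](s,u)\,du\,ds = -\frac{c_\gamma}{2}\int_0^T\iint_{I^2}\frac{[G_k(s,u)-G_k(s,v)][\rho(s,u)-\rho(s,v)]}{|u-v|^{1+\gamma}}\,du\,dv\,ds.
\end{align*}
Second, write $\rho(s,\cdot) = H_k(s,\cdot)+(\rho-H_k)(s,\cdot)$; the piece involving $\rho-H_k$ is handled by Cauchy--Schwarz on the measure $|u-v|^{-1-\gamma}du\,dv\,ds$, using that $\|G_k\|_{L^2(0,T;\mcb H^{\gamma/2}(I))}$ is uniformly bounded (by Jensen and the convergence of $H_k$) and that $\|H_k-\rho\|_{L^2(0,T;\mcb H^{\gamma/2}(I))}\to 0$. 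Third, for the remaining piece involving $H_k$ on both factors, apply pointwise in $(u,v)$ the elementary identity $\int_0^T f(s)\int_s^T f(r)\,dr\,ds = \frac{1}{2}(\int_0^T f(s)\,ds)^2$ with $f(s)=H_k(s,u)-H_k(s,v)$, which yields
\begin{align*}
-\frac{c_\gamma}{2}\int_0^T\iint_{I^2}\frac{[G_k(s,u)-G_k(s,v)][H_k(s,u)-H_k(s,v)]}{|u-v|^{1+\gamma}}du\,dv\,ds = -\frac{c_\gamma}{4}\iint_{I^2}\frac{(\int_0^T[H_k(s,u)-H_k(s,v)]ds)^2}{|u-v|^{1+\gamma}}du\,dv.
\end{align*}
Finally, passing to the limit in $k$ uses the factorization $A_k^2 - A^2 = (A_k-A)(A_k+A)$ combined with Cauchy--Schwarz on the weighted measure and the bound, via Jensen's inequality, that
\begin{align*}
\iint_{I^2}\frac{(\int_0^T[(H_k-\rho)(s,u)-(H_k-\rho)(s,v)]ds)^2}{|u-v|^{1+\gamma}}du\,dv \leq T\,\|H_k-\rho\|^2_{L^2(0,T;\mcb H^{\gamma/2}(I))} \to 0.
\end{align*}

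The main technical step is the last one, namely the passage to the limit inside the weighted double integral, since one needs to verify that the approximating sequence on the half-line $I$ (which, unlike in Lemma \ref{lemuniqsdif}, does not come from restricting a sequence on all of $\mathbb{R}$) generates bounded Gagliardo seminorms in the cylindrical norm $L^2(0,T;\mcb H^{\gamma/2}(I))$. This is however automatic from the triangle inequality and the assumed convergence $H_{k,\pm}\to\rho|_{\mathbb{R}_{\pm}^*}$; once this uniform bound is in hand, the rest is a routine Cauchy--Schwarz estimate and the proof proceeds identically on either half-line, since the integration by parts formula, the approximating sequence, and the quadratic identity all decouple into their respective restrictions to $\mathbb{R}_-^*$ and $\mathbb{R}_+^*$.
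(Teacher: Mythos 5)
Your proposal follows the same strategy that the paper itself delegates to Lemma 6.1 of \cite{byronsdif}: rewrite $\int_I \rho\,[-(-\Delta)_I^{\gamma/2} G_k]$ as a Gagliardo bilinear form via Proposition \ref{fracintpart}, split $\rho = H_k + (\rho - H_k)$, use the pointwise identity $\int_0^T f(s)\int_s^T f(r)\,dr\,ds = \tfrac{1}{2}\bigl(\int_0^T f\bigr)^2$ with $f(s)=H_k(s,u)-H_k(s,v)$ on the $H_k$--$H_k$ piece, and close the limit passage with Cauchy--Schwarz on the measure $|u-v|^{-1-\gamma}\,du\,dv\,ds$. Your treatment of the half-line decoupling and of the uniform bound on $\|G_k\|_{L^2(0,T;\mcb H^{\gamma/2}(I))}$ via the triangle inequality and Jensen is correct, as is the first (time-derivative) identity.

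One gap you should close: Proposition \ref{fracintpart} as stated requires $\rho \in L^\infty(\mathbb{R})$, which is not among the hypotheses of Lemma \ref{lemuniqsdif2}, so invoking it directly for $\rho(s,\cdot)$ is not justified as written. (In the sole application, Proposition \ref{uniqeqhydsdifrob}, $\rho$ is a difference of $[0,1]$-valued profiles and hence bounded, so the conclusion holds; but the proof as you wrote it is not self-contained.) The cleanest fix that stays within your framework is to first apply \ref{fracintpart} to the smooth, bounded pair $\bigl(H_j(s,\cdot),G_k(s,\cdot)\bigr)$ and then let $j\to\infty$; this only needs $H_j|_I\to\rho|_I$ in $L^2([0,T]\times I)$ together with the observation that $[-(-\Delta)_I^{\gamma/2}G_k(s,\cdot)]\in L^1(I)\cap L^\infty(I)\subset L^2(I)$, which follows from the dominating function $H$ constructed in Proposition \ref{lapfracreg} (for $\gamma\in(0,1]$ that $H$ is bounded near the origin and decays like $u^{-\gamma-1}$ at infinity).
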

In order to prove the uniqueness of weak solutions of \eqref{eqhydfracdifrob} when $\gamma \in (1,2)$ and $\beta=\gamma-1$, the following lemma will be useful. It is strongly inspired by the arguments in Section 4.4. of \cite{stefano}.
\begin{lem} \label{lemuniqrob}
Let $\gamma \in (1,2)$. Assume that   $\varrho:[0,T] \times \mathbb{R}\to\mathbb R$, $\rho\in L^\infty([0,T]\times\mathbb R)$, $\varrho \in  L^2 \big( 0,T; \mcb{H}^{\frac{\gamma}{2}} (\mathbb{R}) \big)$ and $\varrho(s, \cdot) \in C^0(\mathbb{R})$, for a.e. $s \in [0,T]$. Let $(H_{k})_{k \geq 1}$  be a sequence in $\mcb S_{\textrm{Dif}}$  converging to $\varrho$ with respect to the norm of $L^2 \left( 0,T; \mcb{H}^{\frac{\gamma}{2}} (\mathbb{R})\right)$. Then
\begin{align*}
\lim_{k \rightarrow \infty}  \int_0^T \int_s^T \varrho(s,0) H_k(r,0) dr ds  = \frac{1}{2} \Big[ \int_0^T \varrho(s,0)  \Big]^2 .
\end{align*}  
\end{lem}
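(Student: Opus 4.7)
The plan is to reduce the lemma to an $L^2([0,T])$-convergence of the traces $H_k(\cdot,0) \to \varrho(\cdot,0)$ at the origin, and then to carry out a Fubini-type identity. Indeed, if we set $g(s) := \varrho(s,0)$ (well-defined for a.e.\ $s$ by the continuity hypothesis), the desired right-hand side is exactly the symmetry-based identity
\begin{equation*}
\int_0^T g(s)\int_s^T g(r)\,dr\,ds \;=\; \frac{1}{2}\iint_{[0,T]^2} g(s)g(r)\,ds\,dr \;=\; \frac{1}{2}\left(\int_0^T g(s)\,ds\right)^2,
\end{equation*}
so once we can interchange the limit in $k$ with the integral, the computation is essentially arithmetical.

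The first step I would establish is the trace bound: for $\gamma \in (1,2)$, evaluation at $0$ is a bounded linear functional on $\mcb H^{\frac{\gamma}{2}}(\mathbb R)$, i.e.\ there exists $C_\gamma > 0$ such that for every $f \in \mcb H^{\frac{\gamma}{2}}(\mathbb R)$,
\begin{equation*}
|\tilde f(0)|^2 \;\leq\; C_\gamma \bigl(\|f\|_{2,\mathbb R}^2 + [f]_{H^{\gamma/2}(\mathbb R)}^2\bigr),
\end{equation*}
with $\tilde f$ the continuous representative from Proposition \ref{holderrep}. This is the standard Sobolev embedding $H^s(\mathbb R) \hookrightarrow L^\infty(\mathbb R)$ for $s > 1/2$, which follows either by Fourier analysis or by combining Proposition \ref{holderrep} (in its quantitative Morrey form $|\tilde f(u)-\tilde f(v)| \leq \tilde C_\gamma [f]_{H^{\gamma/2}}|u-v|^{(\gamma-1)/2}$, cf.\ Theorem 8.2 of \cite{hitchhiker}) with the trivial inequality $|\tilde f(0)|^2 \leq 2|\tilde f(u)|^2 + 2|\tilde f(0)-\tilde f(u)|^2$, integrated over $u\in(0,1)$.

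Applying this pointwise in $s$ to $f = H_k(s,\cdot) - \varrho(s,\cdot)$ and integrating over $s$ gives
\begin{equation*}
\int_0^T |H_k(s,0) - \varrho(s,0)|^2 \, ds \;\leq\; C_\gamma \, \|H_k - \varrho\|^2_{L^2(0,T;\mcb H^{\gamma/2}(\mathbb R))} \;\xrightarrow[k\to\infty]{}\; 0,
\end{equation*}
so $H_k(\cdot,0) \to \varrho(\cdot,0)$ in $L^2([0,T])$. Setting $\Phi_k(s) := \int_s^T H_k(r,0)\,dr$ and $\Phi(s) := \int_s^T \varrho(r,0)\,dr$, Cauchy--Schwarz yields $\|\Phi_k - \Phi\|_{L^\infty([0,T])} \leq \sqrt{T}\,\|H_k(\cdot,0)-\varrho(\cdot,0)\|_{L^2([0,T])} \to 0$. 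Since $\varrho(\cdot,0) \in L^\infty([0,T])$ (consequence of $\varrho \in L^\infty([0,T]\times\mathbb R)$ and continuity in $u$), we may pass to the limit in
\begin{equation*}
\int_0^T\!\!\int_s^T \varrho(s,0) H_k(r,0) \, dr\,ds \;=\; \int_0^T \varrho(s,0)\,\Phi_k(s)\,ds \;\longrightarrow\; \int_0^T \varrho(s,0)\,\Phi(s)\,ds,
\end{equation*}
and the symmetry identity recorded above closes the argument.

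The main obstacle is the trace bound in the first step: the paper only states the existence of a continuous representative in Proposition \ref{holderrep}, whereas here we genuinely need the quantitative Morrey inequality with constant controlled by the Gagliardo seminorm. Once that estimate is in place (a standard ingredient in fractional Sobolev theory), the rest of the proof is an application of Cauchy--Schwarz and Fubini with no interaction between the time and space variables.
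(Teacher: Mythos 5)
Your proposal is correct and follows essentially the same route as the paper: both arguments reduce the statement to the $L^p$ convergence of the traces $H_k(\cdot,0)\to\varrho(\cdot,0)$, and both obtain the required trace bound by combining the quantitative Morrey-type inequality for $\mcb H^{\gamma/2}(\mathbb R)$ (with $\gamma/2 > 1/2$) with a Cauchy--Schwarz control of the $L^2$ part. The paper is more compressed — it applies H\"older once to bound the difference of the two sides directly by $T^{3/2}\|\varrho\|_\infty\bigl(\int_0^T[f_k(r,0)]^2\,dr\bigr)^{1/2}$ with $f_k:=H_k-\varrho$, and then estimates $|f_k(r,0)|\lesssim\|f_k(r,\cdot)\|_{\mcb H^{\gamma/2}(\mathbb R)}$ by writing $f_k(r,0)=\int_0^1 f_k(r,0)\,du$ and splitting into $|f_k(r,0)-f_k(r,u)|+|f_k(r,u)|$ — whereas you introduce the intermediary $\Phi_k(s)=\int_s^T H_k(r,0)\,dr$ and its uniform limit; the extra notation is harmless and the content is the same. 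The one point you flag as a ``main obstacle,'' namely that Proposition~\ref{holderrep} is stated only qualitatively, is a fair observation about the paper's exposition, but not an actual gap: Theorem~8.2 of \cite{hitchhiker}, which the proposition cites, does carry the quantitative Morrey constant controlled by the Gagliardo seminorm, and the paper's own proof of this lemma silently invokes exactly that strengthened form.
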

\begin{proof}
From H\"older's inequality we have that
\begin{align*}
\Big| \int_0^T \int_s^T \varrho(s,0) H_k(r,0) dr ds - \frac{1}{2} \Big[ \int_0^T \varrho(s,0)  \Big]^2 \Big| =& \Big|  \int_0^T \int_s^T \varrho(s,0) [ H_k(r,0) - \varrho(r,0)] dr ds \Big| \\
\leq&  T^{\frac{3}{2}} \| \varrho \|_{\infty} \sqrt{  \int_0^T  [ f_k(r,0)]^2 dr}, 
\end{align*}
where $f_k:=H_k-\varrho, \forall k \geq 1$. By hypothesis, $(f_k)_{k \geq 1}$ converges to zero in $L^2 \big( 0,T; \mcb{H}^{\frac{\gamma}{2}} (\mathbb{R}) \big)$. Since $\varrho(r, \cdot) \in C^0(\mathbb{R})$, for a.e. $r \in [0,T]$, from Proposition \ref{holderrep}, there exists $C$ independent of $k$ such that 
\begin{align*}
|f_k(r,0)| =& \int_{0}^{1} |f_k(r,0)| du \leq \int_{0}^{1}  [|f_k(r,0)-f_k(r,u)|  + |f_k(r,u)| ] du  \\
\leq& C \|f_k(r, \cdot) \|_{\mcb{H}^{\frac{\gamma}{2}}(\mathbb{R})} + \int_{0}^{1} |f_k(r,u)| du \leq( C +1) \|f_k(r, \cdot) \|_{\mcb{H}^{\frac{\gamma}{2}}(\mathbb{R})}.
\end{align*}
Integrating over time and using the hypothesis, the proof ends. 
\end{proof}

\subsection{Uniqueness of weak solutions}

Recall the definition of $\mcb S_{\gamma}$ in \eqref{eqhydfracdifrob}. We observe that weak solutions of \eqref{eqhydfracdifrob} deal with $\mcb S_{\gamma}$ as the space of test functions and the uniqueness of the weak solutions of \eqref{eqhydfracdifrob} is equivalent to the following result.
\begin{prop} \label{uniqeqhydsdifrob}
Let $\varrho_1, \varrho_2$ be such that  $\varrho_1 -a$, $ \varrho_2 -a  \in L^2 \big( 0,T; \mcb{H}^{\frac{\gamma}{2}} (\mathbb{R}^{*}) \big)$, for some $a \in (0,1)$. If 
\begin{align*}
F_{\textrm{FrRob}}(t, \varrho_1, G,  \mcb g,\kappa  ) = 0 = F_{\textrm{FrRob}}(t, \varrho_2, G, \mcb g,\kappa ) , \forall t \in [0,T], \forall G \in \mcb S_{\gamma},
\end{align*} 
then $\varrho_1 = \varrho_2$ almost everywhere in $[0,T] \times \mathbb{R}$.
\end{prop}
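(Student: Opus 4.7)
The plan is to subtract the two weak formulations and derive an energy-type identity that forces $\rho:=\varrho_1-\varrho_2$ to vanish. By linearity of $F_{\textrm{FrRob}}$ in its second argument and agreement of the initial data, for every $t\in[0,T]$ and every $G\in\mcb S_\gamma$ one gets
\begin{equation*}
\int_{\mathbb{R}}\rho(t,u)G(t,u)\,du=\int_0^t\!\int_{\mathbb{R}}\rho\,\bigl[\partial_s+(-(-\Delta)_{\mathbb{R}^*}^{\gamma/2})\bigr]G\,du\,ds-\kappa\mathbbm{1}_{\gamma\in(1,2)}\!\int_0^t\![\rho(s,0^+)-\rho(s,0^-)][G(s,0^+)-G(s,0^-)]\,ds.
\end{equation*}
Up to an inessential additive constant (which only affects terms involving $\rho$ paired with $G(t,\cdot)$, and these will be eliminated by the choice $G_k(T,\cdot)\equiv 0$ below), $\rho$ lies in $L^2(0,T;\mcb{H}^{\gamma/2}(\mathbb{R}^*))$; when $\gamma\in(1,2)$ the traces $\rho(s,0^\pm)$ are well defined by Proposition~\ref{holderrep}.

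I would then plug in $G_k(s,u):=\int_s^T H_k(r,u)\,dr$ at $t=T$, where $(H_k)_{k\ge 1}$ is an approximating sequence in $\mcb S_\gamma$ provided by the density results of Appendix~\ref{secuniq}: when $\gamma\in(1,2)$, $H_k\in\mcb S_{\textrm{Rob}}$ is built half-line by half-line so that $H_k|_{[0,T]\times\mathbb{R}_{\pm}^*}\to\rho|_{[0,T]\times\mathbb{R}_{\pm}^*}$ in $L^2(0,T;\mcb{H}^{\gamma/2}(\mathbb{R}_{\pm}^*))$, and when $\gamma\in(0,1]$ one takes $H_k\in\mcb S_{\textrm{Neu}}$ as in Lemma~\ref{lemuniqsdif2}. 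Since $G_k(T,\cdot)\equiv 0$, the left-hand side of the identity vanishes; passing $k\to\infty$ by applying Lemma~\ref{lemuniqsdif} or Lemma~\ref{lemuniqsdif2} separately on $\mathbb{R}_-^*$ and $\mathbb{R}_+^*$ yields
\begin{equation*}
\int_0^T\!\!\int_{\mathbb{R}}\!\rho^2\,du\,ds+\tfrac{c_\gamma}{4}\!\!\!\sum_{I\in\{\mathbb{R}_-^*,\mathbb{R}_+^*\}}\!\iint_{I^2}\!\frac{\bigl[\int_0^T(\rho(r,u)-\rho(r,v))\,dr\bigr]^2}{|u-v|^{1+\gamma}}\,du\,dv+\tfrac{\kappa\mathbbm{1}_{\gamma\in(1,2)}}{2}\!\Bigl(\!\int_0^T\![\rho(s,0^+)-\rho(s,0^-)]\,ds\!\Bigr)^{\!2}=0,
\end{equation*}
the Robin summand arising from an adaptation of Lemma~\ref{lemuniqrob} applied independently to each of the traces $\rho(\cdot,0^\pm)$.

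Since $\kappa\ge 0$, every summand in the identity is non-negative and must therefore vanish; in particular $\int_0^T\!\int_{\mathbb{R}}\rho^2\,du\,ds=0$, so $\varrho_1=\varrho_2$ almost everywhere on $[0,T]\times\mathbb{R}$, as claimed.

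The main technical hurdle will be the rigorous adaptation of Lemma~\ref{lemuniqrob} to the Robin regime $\gamma\in(1,2)$, $\kappa>0$, since $H_k\in\mcb S_{\textrm{Rob}}$ is allowed to develop independent fluctuations at $0^+$ and $0^-$. One must verify that $G_k(s,0^\pm)\to\int_s^T\rho(r,0^\pm)\,dr$ in a sense strong enough to pair against the bounded traces $\rho(s,0^\pm)$: this follows by constructing $H_k$ half-line by half-line via density of $C_c^\infty(\mathbb{R})$-restrictions in $\mcb{H}^{\gamma/2}(\mathbb{R}_{\pm}^*)$, controlling the pointwise trace error by $|H_k(r,0^\pm)-\rho(r,0^\pm)|\lesssim\|H_k(r,\cdot)-\rho(r,\cdot)\|_{\mcb{H}^{\gamma/2}(\mathbb{R}_{\pm}^*)}$ via Proposition~\ref{holderrep}, and then invoking Fubini and Cauchy--Schwarz exactly as in the proof of Lemma~\ref{lemuniqrob}.
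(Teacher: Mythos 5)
Your overall strategy matches the paper's: subtract the two weak formulations, test against $G_k(s,u)=\int_s^T H_k(r,u)\,dr$ (so that $G_k(T,\cdot)\equiv 0$), use Proposition~\ref{fracintpart} for the bulk terms and a trace estimate for the Robin term, and read off a sum of non-negative quantities that must all vanish. The gap is in how the approximating sequence is produced when $\gamma\in(1,2)$. You propose to approximate $\varrho_3:=\varrho_1-\varrho_2$ half-line by half-line in $L^2(0,T;\mcb{H}^{\gamma/2}(\mathbb{R}_\pm^*))$ and then apply Lemma~\ref{lemuniqsdif} and Lemma~\ref{lemuniqrob} ``separately on $\mathbb{R}_\pm^*$.'' But both lemmas, as stated, require the target to lie in $\mcb{H}^{\gamma/2}(\mathbb{R})$ and the approximants $H_k$ to converge in the \emph{full-line} norm $L^2(0,T;\mcb{H}^{\gamma/2}(\mathbb{R}))$ --- not the half-line norms. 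Since $\varrho_3$ is only assumed to lie in $\mcb{H}^{\gamma/2}(\mathbb{R}^*)$ (strictly larger than $\mcb{H}^{\gamma/2}(\mathbb{R})$, because the cross seminorm across the origin may diverge), you cannot apply these lemmas to $\varrho_3$ directly; and the paper's half-line approximation lemma (Lemma~\ref{lemuniqsdif2}) is proved only for $\gamma\in(0,1]$, not $\gamma\in(1,2)$.

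The paper closes this by a step you omit: it replaces $\varrho_3|_{\mathbb{R}_\pm^*}$ by its \emph{even reflection} $\tilde\varrho_{3,\pm}$ about the origin, which by Proposition~\ref{rhoposH1} does belong to $\mcb{H}^{\gamma/2}(\mathbb{R})$ (and is continuous and bounded). One then approximates $\tilde\varrho_{3,\pm}$ by $H_{k,\pm}\in\mcb S_{\mathrm{Dif}}$ in the full-line norm, assembles the discontinuous test function $G_k$ from $G_{k,\pm}=\int_\cdot^T H_{k,\pm}$, and pairs $\tilde\varrho_{3,\pm}$ against $G_{k,\pm}$ over each half-line; Lemmas~\ref{lemuniqsdif} and~\ref{lemuniqrob} then apply verbatim. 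Without this even-extension device --- or, alternatively, without formulating and proving a $\gamma\in(1,2)$ analogue of Lemma~\ref{lemuniqsdif2} together with the corresponding density of restrictions of $C_c^\infty(\mathbb{R})$ in $\mcb{H}^{\gamma/2}(\mathbb{R}_\pm^*)$, neither of which is in the paper --- your argument does not close in the regime $\gamma\in(1,2)$. For $\gamma\in(0,1]$ your outline is fine and agrees with the paper's use of Lemma~\ref{lemuniqsdif2}.
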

\begin{proof}
Denote $\varrho_3:= \varrho_1 - \varrho_2= [\varrho_1 - a] - [\varrho_2 - a]$. Then $\varrho_3 \in  L^2 \big( 0,T; \mcb{H}^{\frac{\gamma}{2}} (\mathbb{R}^{*}) \big)$ and $\varrho_3(s,u) \in \mcb{H}^{\frac{\gamma}{2}}(\mathbb{R} ^{*})$, for a.e. $ s \in [0,T]$.
First we consider the case $\gamma \in (1,2)$. Let $\tilde{\varrho}_{3,-}$ be the even extension of the continuous representative of $\varrho_3(s,\cdot)|_{\mathbb{R}_{-}^{*}}$ and define $\tilde{\varrho}_{3,+}$ in the same way, replacing $\mathbb{R}_{-}^{*}$ by $\mathbb{R}_{+}^{*}$. From Proposition \ref{rhoposH1}, it follows that $\varrho_{3,-}(s, \cdot),\varrho_{3,+}(s, \cdot)\in \mcb{H}^{\frac{\gamma}{2}}(\mathbb{R})$. Then for every $t \in [0,T]$, for every $G \in \mcb S_{\textit{Rob}}$ we get that
\begin{align}\label{eq:impa}
0= & \int_{\mathbb{R}} \varrho_3(t,u) G(t,u) du -  \int_0^t \int_{\mathbb{R}_{-}^{*}} \tilde{\varrho}_{3,-}(s,u)  \partial_s  G(s,u) du ds -  \int_0^t \int_{\mathbb{R}_{+}^{*}} \tilde{\varrho}_{3,+}(s,u)  \partial_s  G(s,u) du ds  \\
- &  \int_0^t \int_{\mathbb{R}_{-}^{*}} \tilde{\varrho}_{3,-}(s,u) [ - (- \Delta)_{\mathbb{R}_{-}^{*}} G](s,u) du ds -  \int_0^t \int_{\mathbb{R}_{+}^{*}} \tilde{\varrho}_{3,+}(s,u) [ - (- \Delta)_{\mathbb{R}_{+}^{*}} G](s,u) du ds \\
+ &  \kappa  \int_0^t  [    \tilde{ \varrho}_{3+}(s,0)  -  \tilde{ \varrho}_{3,-}(s,0) ]  [  G(s,0^{+})    - G(s,0^{-})  ] ds.
\end{align}
Since $\tilde{\varrho}_{3,-}, \tilde{\varrho}_{3,+}  \in  L^2 \big( 0,T; \mcb{H}^{\frac{\gamma}{2}} (\mathbb{R}) \big)$, there exist two sequences $(H_{k,-})_{k \geq 1}, (H_{k,+})_{k \geq 1}\in\mcb {S}_{\textit{Dif}}$ such that $(H_{k,-})_{k \geq 1}$ (resp. $(H_{k,+})_{k \geq 1}$) converges to $\tilde{\varrho}_{3,-}$ (resp. $\tilde{\varrho}_{3,+}$) with respect to the norm of $L^2 \big( 0,T; \mcb{H}^{\frac{\gamma}{2}} (\mathbb{R}) \big)$. Define $G_{k,-} (t,u) :=  \int_t^T H_{k,-} (s,u) ds$ and $G_{k,+} (t,u) :=  \int_t^T H_{k,+} (s,u) ds$, para todo $(t,u) \in [0,T] \times \mathbb{R}$ e para todo $k \geq 1$. Moreover, define $G_{k} \in \mcb {S}_{\textit{Rob}}$ by
\begin{align*}
G_{k} (t,u) =\mathbbm{1}_{u \in (-\infty,0)} G_{k,-} (t,u) + \mathbbm{1}_{u \in [0, \infty)} G_{k,+} (t,u), \forall (t,u) \in [0,T] \times \mathbb{R}, \forall k \geq 1.
\end{align*}
In particular, $G_{k}  (T,u) = 0, \forall u \in \mathbb{R}, \forall k \geq 1$. Taking in \eqref{eq:impa} $t=T$  and $G=G_{k}$, we get
\begin{align}
0= &   - \int_0^T \int_{\mathbb{R}_{-}^{*}} \tilde{\varrho}_{3,-}(s,u)  \partial_s  G_{k,-}(s,u) du ds -  \int_0^T  \int_{\mathbb{R}_{+}^{*}}   \tilde{\varrho}_{3,+}(s,u)  \partial_s G_{k,+}(s,u) du   ds  \nonumber \\
- &  \int_0^t \int_{\mathbb{R}_{-}^{*}} \tilde{\varrho}_{3,-}(s,u) [ - (- \Delta)_{\mathbb{R}_{-}^{*}} G_{k,-}](s,u) du ds -  \int_0^t \int_{\mathbb{R}_{+}^{*}} \tilde{\varrho}_{3,+}(s,u) [ - (- \Delta)_{\mathbb{R}_{+}^{*}} G_{k,+}](s,u) du ds  \nonumber \\
+&  \kappa   \int_0^T  \int_s^T  [    \tilde{ \varrho}_{3+}(s,0)  -  \tilde{ \varrho}_{3,-}(s,0) ]  [  H_{k,+}(r,0)    - H_{k,-}(r,0)  ] dr ds, \forall k \geq 1. \label{equniqbarfor}
\end{align}
Since $G_{k,-}$, $G_{k,+}$ and $H_{k,+}-H_{k,-}$ are in $\mcb {S}_{\textit{Dif}}$, we can  use Lemma \ref{lemuniqsdif} and Lemma \ref{lemuniqrob}. Taking the limit in \eqref{equniqbarfor} when $k \rightarrow \infty$, we have
\begin{align*}
& \int_0^T \int_{\mathbb{R}_{-}} [ \tilde{\varrho}_{3,-}(s,u) ]^2 du ds + \int_0^T \int_{\mathbb{R}_{+}} [ \tilde{\varrho}_{3,+}(s,u) ]^2 du ds \\
+&  \frac{c_{\gamma}}{4}  \iint_{(\mathbb{R}_{-})^2}  \frac{[  \int_0^T  \tilde{\varrho}_{3,-}(r,u)  -\tilde{\varrho}_{3,-}(r,v) dr ]^2  }{|u-v|^{1 + \gamma}} du dv +  \frac{c_{\gamma}}{4}  \iint_{(\mathbb{R}_{+})^2}  \frac{[  \int_0^T  \tilde{\varrho}_{3,+}(r,u) - \tilde{\varrho}_{3,+}(r,v) dr ]^2  }{|u-v|^{1 + \gamma}} du dv \\
+& \frac{\kappa   }{2}  \Big( \int_0^T  [    \tilde{ \varrho}_{3+}(s,0)  -  \tilde{ \varrho}_{3,-}(s,0) ] ds \Big)^2=0,
\end{align*}
which implies that $\tilde{\varrho}_{3,-}, \varrho_{3,-}, \varrho_3$ are equal to zero almost everywhere on $[0,T] \times \mathbb{R}_{-}$ and $\tilde{\varrho}_{3,+}, \varrho_{3,+},  \varrho_3$ are equal to zero almost everywhere on $[0,T] \times \mathbb{R}_{+}$. Then $\varrho_1 = \varrho_2$ almost everywhere on $[0,T] \times \mathbb{R}$. 

It remains to consider the case $\gamma \in (0,1]$. For every $t \in [0,T]$, for every $G \in \mcb S_{\textit{Neu}}$, it holds
\begin{align}\label{eq:impa_1}
0= & \int_{\mathbb{R}} \varrho_3(t,u) G(t,u) du -  \int_0^t \int_{\mathbb{R}_{-}^{*}} \varrho_3(s,u)  \partial_s  G(s,u) du ds -  \int_0^t \int_{\mathbb{R}_{+}^{*}} \varrho_3(s,u)  \partial_s  G(s,u) du ds  \\
- &  \int_0^t \int_{\mathbb{R}_{-}^{*}} \varrho_3(s,u) [ - (- \Delta)_{\mathbb{R}_{-}^{*}} G](s,u) du ds -  \int_0^t \int_{\mathbb{R}_{+}^{*}} \varrho_3(s,u) [ - (- \Delta)_{\mathbb{R}_{+}^{*}} G](s,u) du ds,
\end{align}
Recalling that $\gamma \in (0,1]$ and $\varrho_3 \in  L^2 \big( 0,T; \mcb{H}^{\frac{\gamma}{2}} (\mathbb{R}) \big)$, there exists a sequence  $(H_{k})_{k \geq 1}\in\mcb {S}_{\textit{Neu}}$ such that $(H_{k})_{k \geq 1}|_{[0,T] \times \mathbb{R}_{-}^{*} }$ (resp. $(H_{k})_{k \geq 1}|_{[0,T] \times \mathbb{R}_{+}^{*} }$) converges to $\varrho_3|_{[0,T] \times \mathbb{R}_{-}^{*} }$ (resp. $\varrho_3|_{[0,T] \times \mathbb{R}_{+}^{*} }$) with respect to the norm of $L^2 \big( 0,T; \mcb{H}^{\frac{\gamma}{2}} (\mathbb{R}_{-}^{*} )\big)$ (resp. $L^2 \big( 0,T; \mcb{H}^{\frac{\gamma}{2}} (\mathbb{R}_{+}^{*} )\big)$). For every $k \geq 1$, define $ G_{k} \in \mcb S_{Neu}$ by $G_{k} (t,u) :=  \int_t^T H_{k} (s,u) ds, \forall (t,u) \in [0,T] \times \mathbb{R}$. In particular, $G_{k}  (T,u) = 0, \forall u \in \mathbb{R}, \forall k \geq 1$. Taking in \eqref{eq:impa_1} $t=T$  and $G=G_{k}$, we get
\begin{align}
0= &   - \int_0^T \int_{\mathbb{R}_{-}^{*}}  \varrho_3(s,u)  \partial_s  G_{k,-}(s,u) du ds -  \int_0^T  \int_{\mathbb{R}_{+}^{*}}   \varrho_3(s,u)  \partial_s G_{k,+}(s,u) du   ds  \nonumber \\
- &  \int_0^t \int_{\mathbb{R}_{-}^{*}} \varrho_3(s,u) [ - (- \Delta)_{\mathbb{R}_{-}^{*}} G_{k,-}](s,u) du ds -  \int_0^t \int_{\mathbb{R}_{+}^{*}} \varrho_3(s,u) [ - (- \Delta)_{\mathbb{R}_{+}^{*}} G_{k,+}](s,u) du ds. \label{equniqbarfor2}
\end{align}
Since $G_{k,-}$, $G_{k,+},H_{k,+}-H_{k,-}\in\mcb {S}_{\textit{Dif}}$, we can  use Lemma \ref{lemuniqsdif2}. Taking the limit in \eqref{equniqbarfor2} when $k \rightarrow \infty$, we have
\begin{align*}
& \int_0^T \int_{\mathbb{R}_{-}} [ \varrho_3(s,u) ]^2 du ds + \int_0^T \int_{\mathbb{R}_{+}} [\varrho_3(s,u) ]^2 du ds 
\\
+ &  \frac{c_{\gamma}}{4} \Big[  \iint_{(\mathbb{R}_{-})^2}  \frac{[  \int_0^T \varrho_3(r,u) - \varrho_3(r,v) dr ]^2  }{|u-v|^{1 + \gamma}} du dv + \iint_{(\mathbb{R}_{+})^2}  \frac{[  \int_0^T  \varrho_3(r,u) - \varrho_3(r,v) dr ]^2  }{|u-v|^{1 + \gamma}} du dv \Big]=0,
\end{align*}
which implies that $\varrho_3$ is equal to zero almost everywhere on $[0,T] \times \mathbb{R}_{-}^{*}$ and $[0,T] \times \mathbb{R}_{+}^{*}$. Then $\varrho_1 = \varrho_2$ almost everywhere on $[0,T] \times \mathbb{R}$.

\end{proof}
The uniqueness of the weak solutions of \eqref{eqhydfracdifreal}, \eqref{eqhydfracbetazero} and \eqref{eqhydfracdifnotuniq} (the last one assuming \eqref{conduniq}) are analogous to the proof given above, so we omit details.

\quad

\thanks{ {\bf{Acknowledgements: }}
P.C. thanks FCT/Portugal for financial support through the project Lisbon Mathematics PhD (LisMath). P.C. and P.G. thank  FCT/Portugal for financial support through the project 
UID/MAT/04459/2013.  B.J.O. thanks  Universidad Nacional de Costa Rica  for sponsoring the participation in  this article. This project has received funding from the European Research Council (ERC) under  the European Union's Horizon 2020 research and innovative programme (grant agreement   n. 715734).} The authors thank Milton Jara and C\'edric Bernardin for discussions around the subject.

\bibliographystyle{plain}
\bibliography{bibliografia}

\end{document}